\def\dOi{10(2:15)2014}
\newcommand{\justifiedBy}[1]{\using{\textsc{#1}}\justifies}
\renewcommand{\struct}{\ensuremath{\mathrel{\prec}}}
\newtheorem*{lemm}{Sublemma}
\begin{document}

\title[Compositional Reasoning for Channel-Based Concurrent Resource Management]{Compositional Reasoning for Explicit Resource Management in Channel-Based Concurrency\rsuper*}

%

\author[A.~Francalanza]{Adrian Francalanza\rsuper a}	
\address{{\lsuper a}ICT, University of Malta}	
\email{adrian.francalanza@um.edu.mt}  

\author[E.~DeVries]{Edsko DeVries\rsuper b}	
\address{{\lsuper b}Well-Typed LLP, UK}	
\email{edsko@well-typed.com}  

\author[M.~Hennessy]{Matthew Hennessy\rsuper c}	
\address{{\lsuper c}Trinity College Dublin, Ireland}	
\email{matthew.hennessy@cs.tcd.ie}  
\thanks{{\lsuper c}Supported by SFI project SFI 06 IN.1 1898.}



\keywords{\pic, concurrency, memory management, coinductive reasoning}

\titlecomment{{\lsuper*}An extended abstract of a preliminary version of the paper has appeared in \cite{DevFraHen09}}

\maketitle


\begin{abstract}
We define a  \pic variant with a costed semantics where channels are treated as
resources that must explicitly be allocated before they are used and can be
deallocated when no longer required. We use a substructural type system
tracking permission transfer to construct coinductive proof techniques for
comparing behaviour and resource usage efficiency of concurrent processes. We establish full abstraction results between our coinductive definitions and a contextual behavioural preorder describing a notion of process efficiency \wrt its management of resources.  We also justify these definitions and respective proof techniques through numerous examples and a case study comparing two concurrent implementations of an extensible buffer.
\end{abstract}


\section{Introduction}
\label{sec:introduction}

We investigate the \emph{behaviour} and \emph{space efficiency} of concurrent programs with
\emph{explicit}  
\emph{resource-management}.  In particular, our study focuses on \emph{channel-passing concurrent programs}: we define a \pic variant, called \picr, where
the only resources available are channels; these channels must explicitly be allocated before they can be used, and can be deallocated when no longer required.  As part of the operational model of the language, channel allocation and deallocation have costs associated with them, reflecting the respective resource usage.  

Explicit resource management is typically desirable in settings where 
resources are \emph{scarce}.  Resource management programming constructs such as explicit deallocation provide fine-grained control over how these resources are used and recycled. By comparison, in automated mechanisms such as garbage collection, unused resources (in this case, memory) tend to remain longer in an unreclaimed state \cite{JonesGC96,GC2011}.    Explicit resource 
management constructs such as memory deallocation also carry advantages over automated mechanisms such as garbage collection techniques
when it comes to \emph{interactive} and \emph{real-time} programs \cite{Bulka:1999:ECP:320041,JonesGC96,GC2011}.  In particular, garbage collection techniques require additional computation to determine otherwise explicit information as to which parts of the memory to reclaim and at what stage of the computation; the associated overheads may lead to uneven performance and intolerable pause periods where the system becomes unresponsive \cite{Bulka:1999:ECP:320041}.  

In the case of channel-passing concurrency  with explicit memory-management, the analysis of the relative behaviour and efficiency of programs is non-trivial for a number of reasons. Explicit memory-management introduces the risk of either premature or multiple deallocation of resources along separate threads of execution; these are more difficult to detect than in single-threaded programs and potentially result in problems such as wild pointers or corrupted heaps which may, in turn, lead to unpredictable, even catastrophic, behaviour \cite{JonesGC96,GC2011}.  It also increases the possibility of memory leaks, which are often not noticeable in short-running, terminating programs but subtly eat up resources over the course of  long-running programs.  In a concurrent settings such as ours, complications relating to the assessment and comparison of resource consumption is further compounded by the fact that the runtime execution of channel-passing concurrent programs can have \emph{multiple interleavings}, is sometimes \emph{non-deterministic} and often \emph{non-terminating}.




\subsection{Scenario:}
\label{sec:scenario}

Consider a setting with two servers, $\ptit{S}_1$  and $\ptit{S}_2$, which repeatedly listen for service requests on channels $\ctit{srv}_1$ and $\ctit{srv}_2$, respectively. Requests send a \emph{return} channel on  $\ctit{srv}_1$ or $\ctit{srv}_2$ which is then used by the servers to service the requests and send back answers, $\textit{v}_1$ and $\textit{v}_2$. A possible implementation for these servers is given in \eqref{eq:3} below, where \piRecX{P} denotes a process $P$ recursing at $w$, $\piIn{\ctit{c}}{x}{P}$ denotes a process inputting on channel \ctit{c} some value that is bound to the variable $x$ in the continuation $P$, and $\piOut{\ctit{c}}{v}{P}$ outputs a value $v$ on channel \ctit{c} and continues as $P$:
\begin{equation}\label{eq:3}
  \ptit{S}_i  \deftri \piRecX{\;\piIn{\ctit{srv}_i}{x}{\; \piOut{x}{\textit{v}_i}}{\;w}} \qquad\qquad \text{for $i \in \sset{1,2}$} 
\end{equation}
 

Clients that need to request service from \emph{both} servers, so as to report back the outcome of both server interactions on some channel, \ctit{ret}, can be programmed in a variety of ways:
\begin{equation}\label{eq:clients}
\begin{split}
  \ptit{C}_0 &\deftri \piRecX{\;\piAll{x_1}{\piAll{x_2}\;\piOut{\ctit{srv}_1}{x_1}\,\piIn{x_1}{y}{\;\piOut{\ctit{srv}_2}{x_2}\,\piIn{x_2}{z}{\;\piOut{\ctit{ret}}{(y,z)}{\;w}}}}} \\
  \ptit{C}_1 &\deftri \piRecX{\,\piAll{x}{\;\piOut{\ctit{srv}_1}{x}\,\piIn{x}{y}{\;\piOut{\ctit{srv}_2}{x}\,\piIn{x}{z}{\piOut{\ctit{ret}}{(y,z)}{\;w}}}}} \\ 
  \ptit{C}_2 &\deftri \piRecX{\piAll{x}{\;\piOut{\ctit{srv}_1}{x}\,\piIn{x}{y}{\;\piOut{\ctit{srv}_2}{x}\,\piIn{x}{z}{\;\piFree{x}{\;\piOut{\ctit{ret}}{(y,z)}{\;w}}}}}} 
\end{split}
\end{equation}
$\ptit{C}_0$ corresponds to an idiomatic \pic client. In order to ensure that it is the sole recipient of the service requests, it creates \emph{two} new 
return channels to communicate with $\ptit{S}_1$  and $\ptit{S}_2$ on  $\ctit{srv}_1$ and $\ctit{srv}_2$, using the command $\piAll{x}{P}$; this command allocates a \emph{new} channel \ctit{c} and binds it to the variable $x$ in the continuation $P$. Allocating a new channel for each service request ensures that the return channel used between the client and server is \emph{private} for the duration of the service, preventing interferences from other parties executing in parallel. 

One important difference between the computational model considered in this paper and  that of
the standard \pic is that channel allocation is an expensive operation \ie it incurs an additional  \emph{(spatial)} cost compared to the other operations. Client $\ptit{C}_1$ attempts to address the inefficiencies of $\ptit{C}_0$ by allocating only \emph{one} additional new channel, and \emph{reusing} this channel for both interactions with the servers.   Intuitively, this channel reuse is valid, \ie it preserves the client-server behaviour  $\ptit{C}_0$ had with servers $\ptit{S}_1$  and $\ptit{S}_2$, because the server implementations above use the received return-channels  \emph{only once}.  This single channel usage guarantees that return channels remain private during the duration of the service,   despite  the reuse from client $\ptit{C}_1$.
 
Client $\ptit{C}_2$ attempts to be more efficient still. More precisely, since our computational model does not assume implicit resource reclamation,  the previous two clients can be deemed as having \emph{memory leaks}: at every iteration of the client-server interaction sequence, $\ptit{C}_0$ and $\ptit{C}_1$  allocate new channels that are not disposed of, even though these channels are never used again in subsequent iterations.  By contrast, $\ptit{C}_2$  deallocates unused channels at the end of each iteration using the construct $\piFree{\ctit{c}}{P}$.

In this work we  develop a formal framework for comparing the behaviour of concurrent processes that explicitly allocate and deallocate channels.  For instance, processes consisting of the servers  $\ptit{S}_1$ and $\ptit{S}_2$ together with any of the clients $\ptit{C}_0$, $\ptit{C}_1$ or $\ptit{C}_2$ should be \emph{related}, on the basis that they exhibit the same behaviour.   In addition, we would like to \emph{order} these systems, based on their relative efficiencies \wrt the (channel) resources used.  We note that there are various, at times contrasting, notions of efficiency that one may consider. For instance, one notion may consider acquiring memory for long periods to be less efficient than repeatedly allocating and deallocating memory; another notion of efficiency could instead focus on minimising the allocation and deallocation operations used, as these as considerably more expensive than other operations.  In this work, we mainly focus on a notion of efficiency that accounts for the relative memory allocations required to carry out the necessary computations. Thus, we would intuitively like to develop a framework yielding  the following preorder, where $\sqsubsetsim$ reads "more efficient than":
\begin{equation} \label{eq:1a}
  \ptit{S}_1 \piParal \ptit{S}_2 \piParal \ptit{C}_2  \quad\sqsubsetsim\quad \ptit{S}_1 \piParal \ptit{S}_2 \piParal \ptit{C}_1 \quad\sqsubsetsim\quad \ptit{S}_1 \piParal \ptit{S}_2 \piParal \ptit{C}_0
\end{equation}
A pleasing property of this preorder would be 
\emph{compositionality}, which  implies that orderings are preserved under larger contexts, \ie for all (valid) contexts $\mathcal{C}[-]$, $P \,\sqsubsetsim\, Q$ implies $\mathcal{C}[P] \,\sqsubsetsim\, \mathcal{C}[Q]$.  
Dually, compositionality would also improve the scalability of our formal framework  since, to show that  $\mathcal{C}[P] \,\sqsubsetsim\, \mathcal{C}[Q]$ (for some context $\mathcal{C}[-]$), it suffices to obtain $P \,\sqsubsetsim\, Q$.  For instance, in the case of \eqref{eq:1a}, compositionality would allow us to factor out the common code, \ie the servers $\ptit{S}_1$ and $\ptit{S}_2$ as the context $\ptit{S}_1 \piParal \ptit{S}_2 \piParal [-]$, and focus on showing that 
\begin{equation}\label{eq:1b}
\ptit{C}_2  \;\sqsubsetsim\;  \ptit{C}_1 \;\sqsubsetsim\; \ptit{C}_0
\end{equation}

\smallskip

\subsection{Main Challenges:}
\label{sec:main-challenges}

The details are however far from straightforward.    To begin with, we need to assess relative program cost over potentially infinite computations.  Thus, rudimentary aggregate measures such as adding up the total computation cost of  processes and comparing this total at the end of the computation is insufficient for system comparisons such as \eqref{eq:1a}. In such cases, a preliminary attempt at a solution 
would be to compare the \emph{relative cost} for \emph{every} server interaction (action): in the sense of \cite{Arun-Kumar:1992}, the preorder would then ensure that every \emph{costed} interaction by the inefficient clients must be matched by a corresponding \emph{cheaper} interaction by the more efficient client (and, dually, costed interactions by the efficient client must be matched by interactions from the inefficient client that are as costly or more). 
\begin{equation}\label{eq:clients-complicated}
  \begin{split}
    \!\!\!\ptit{C}_3 &\deftri \piRecX{\piAll{x_1}{\piAll{x_2}{\;\,\piOut{\ctit{srv}_1}{x_1}\,\piIn{x_1}{y}{\;\,\piOut{\ctit{srv}_2}{x_2}\,\piIn{x_2}{z}{\;\,\piFree{x_1}{\piFree{x_2}{\piOut{\ctit{ret}}{(y,z)}{w}}}}}}}}
  \end{split}
\end{equation}
There are however problems with this approach.  Consider, for instance, $\ptit{C}_3$ defined in \eqref{eq:clients-complicated}.  Even though this client allocates two channels for every  iteration of server interactions, it does not exhibit any memory leaks 
since it deallocates them both at the end of the iteration.  It may therefore be sensible for our preorder to equate $C_3$ with client $C_2$ of \eqref{eq:clients} by having  $\ptit{C}_2  \;\sqsubsetsim\;  \ptit{C}_3$ as well as $\ptit{C}_3  \;\sqsubsetsim\;  \ptit{C}_2$.  However showing $\ptit{C}_3  \;\sqsubsetsim\;  \ptit{C}_2$ would not be possible using the preliminary strategy discussed above, since, $\ptit{C}_3$ must engage in more expensive computation (allocating two channels as opposed to 1) by the time the interaction with the first server 
is carried out.


Worse still, an analysis strategy akin to \cite{Arun-Kumar:1992} would not be applicable for a comparison involving the clients $\ptit{C}_1$ and $\ptit{C}_3$. 
 In spite of the fact that over the course of its entire computation $\ptit{C}_3$ requires less resources than $\ptit{C}_1$, \ie it is more efficient, 
  client $\ptit{C}_3$ appears to be \emph{less efficient} than  $\ptit{C}_1$ after the interaction with the first server on channel  $\ctit{srv}_1$ since, at that stage, it has allocated two new channels as opposed to one.  
However,
$\ptit{C}_1$ becomes  less efficient 
for the remainder of the iteration  since it never deallocates the channel it allocates whereas $\ptit{C}_3$ deallocates both channels.
To summarise, for  comparisons $\ptit{C}_3  \;\sqsubsetsim\;  \ptit{C}_2$ and $\ptit{C}_3  \;\sqsubsetsim\;  \ptit{C}_1$, we need our analysis to allow a process to be \emph{temporarily inefficient} as long as it can  recover later on.




In this paper, we use a costed semantics to define an efficiency preorder to reason about the relative cost of processes over potentially infinite computation, based on earlier work by \cite{Kiehn05,LuttgenV06}.  In particular, we adapt the concept of \emph{cost amortisation} to our setting, 
used  by our preorders to compare  processes that are eventually more efficient than others over the course of their entire computation, but are temporarily less efficient at certain stages of the computation.    

\smallskip
Issues concerning cost assessment are however not the only obstacles tackled in this work; there are also complications associated with the compositionality aspects of our proposed framework.  More precisely, we want to limit our analysis to \emph{safe} contexts, \ie contexts that use resources in a sensible  way, \eg  not deallocating channels while they are still in use.   In addition, we also want to consider behaviour \wrt a subset of the possible safe contexts.  For instance, our clients from \eqref{eq:clients} only exhibit the same behaviour  \wrt servers that $(i)$ accept \emph{(any number of)} requests on channels  $\ctit{srv}_1$ and  $\ctit{srv}_2$ containing a return channel,  which then $(ii)$ use this channel at most \emph{once} to return the requested answer.  We can characterise the interface between the servers and the clients using  fairly standard channel type descriptions adapted from \cite{KobayashiPT:linearity} in \eqref{eq:2a}, where $\chantypW{\tV}$ describes a channel than can be used \emph{any} number of times (\ie the channel-type attribute $\unres$) to communicate values of type \tV, whereas $\chantypO{\tV}$ denotes an \emph{affine} channel (\ie a channel type with attribute $\affine$) that can be used \emph{at most} once to communicate values of type \tV:
\begin{equation}
  \label{eq:2a}
  \ctit{srv}_1 : \chantypW{\chantypO{\tV_1}}, \quad \ctit{srv}_2 : \chantypW{\chantypO{\tV_2}} 
\end{equation}
In the style of \cite{Yoshida07:linearity, hennessy04behavioural}, we could then use this interface to abstract away from the actual server implementations described in \eqref{eq:3} and state that,  \wrt contexts that observe the channel mappings of \eqref{eq:2a}, client $\ptit{C}_2$ is more efficient than $\ptit{C}_1$ which is, in turn, more efficient than $\ptit{C}_0$.  These can be expressed as:
\begin{align}
  \label{eq:4}
  \ctit{srv}_1 : \chantypW{\chantypO{\tV_1}}, \ctit{srv}_2 : \chantypW{\chantypO{\tV_2}} &\,\models\;  \ptit{C}_2  \,\sqsubsetsim\,  \ptit{C}_1 \hspace{3cm}\\
  \label{eq:5}
  \ctit{srv}_1 : \chantypW{\chantypO{\tV_1}}, \ctit{srv}_2 : \chantypW{\chantypO{\tV_2}} &\,\models\;  \ptit{C}_1  \,\sqsubsetsim\,  \ptit{C}_0
\end{align}

Unfortunately, the machinery of \cite{Yoshida07:linearity, hennessy04behavioural} cannot be easily extended  
to our costed analysis because of two main reasons.  First, in order to limit our analysis to safe computation, we would need to show that clients $\ptit{C}_0$, $\ptit{C}_1$ and $\ptit{C}_2$ adhere to the channel usage stipulated 
by the type associations in \eqref{eq:2a}.  However, the channel reuse in  $\ptit{C}_1$ and $\ptit{C}_2$ (an essential feature to attain space efficiency)  requires our analysis to associate potentially different types (\ie $\chantypO{\tV_1}$ and $\chantypO{\tV_2}$) to the same return channel; 
   this channel reuse at different types amounts to a form of \emph{strong update}, a degree of flexibility  not supported by \cite{Yoshida07:linearity, hennessy04behavioural}.   

 Second, the equivalence reasoning mechanisms used in \cite{Yoshida07:linearity, hennessy04behavioural} would be substantially limiting for processes with channel reuse.  More specifically, consider  the slightly tweaked client implementation of $\ptit{C}_2$ below:
\begin{align}\label{eq:6}
  \ptit{C}'_2 & \deftri \piRecX{\piAll{x}{\bigl(\piOutA{\ctit{srv}_1}{x}\,\piParal\,\piIn{x}{y}{(\piOutA{\ctit{srv}_2}{x}\,\piParal\,\piIn{x}{z}{\piFree{x}{\piOut{\ctit{c}}{(y,z)}{X}}})}\bigr)}}
\end{align}
The only difference between the client in \eqref{eq:6} and the original one in \eqref{eq:clients} is that  $\ptit{C}_2$ \emph{sequences} the service requests before the service inputs, \ie $\ldots\piOut{\ctit{srv}_1}{x}\,\piInA{x}{y}{\ldots}$ and $\ldots\piOut{\ctit{srv}_2}{x}\,\piInA{x}{z}{\ldots}$, whereas   $\ptit{C}'_2$ parallelises them, \ie $\ldots\piOutA{\ctit{srv}_1}{x}\,\piParal\,\piInA{x}{y}{\ldots}$ and $\ldots\piOutA{\ctit{srv}_2}{x}\,\piParal\,\piInA{x}{z}{\ldots}$.   Resource-centric type disciplines such as \cite{EFH:uniqueness:journal:12,AliasTypes:SmithWM00} preclude  name matching for a particular resource once all the permissions to use that resource have been used up; this feature is essential to statically reason about a number of basic design patterns for reuse.     For such type settings, it turns out that the client implementations $\ptit{C}_2$  and $\ptit{C}'_2$ exhibit the same behaviour because the return channel used by both clients for \emph{both} server interactions is private, \ie unknown to the respective servers; as a result, the servers cannot answer the service on that channel before it is receives it on  either $\ctit{srv}_1$ or $\ctit{srv}_2$.\footnote{Analogously, in the \pic,
  $\piRes{d}{(\piOutA{c}{d} \piParal \piIn{d}{x}{P})}$
is indistinguishable from
  $\piRes{d}{(\piOut{c}{d}{\piIn{d}{x}{P}})}$}  Through \emph{scope extrusion}, theories such as \cite{Yoshida07:linearity, hennessy04behavioural} can reason adequately about the first server interaction, and relate $\ldots\piOut{\ctit{srv}_1}{x}\,\piInA{x}{y}{\ldots}$ of  $\ptit{C}_2$ with   $\ldots\piOutA{\ctit{srv}_1}{x}\,\piParal\,\piIn{x}{y}{\ldots}$ of $\ptit{C}_2$.   However, they have no mechanism for tracking channel locality post scope extrusion, thereby recovering the information that the return channel \emph{becomes private again} to the client  after the first server interaction (since the servers use up the permission to use the return channel once they reply on it). This prohibits \cite{Yoshida07:linearity, hennessy04behavioural}  from determining that the second server interaction is just an instance of the first server interaction, 
thus failing to relate these two implementations.

In \cite{EFH:uniqueness:journal:12} we developed a substructural type system based around a type attribute describing channel \emph{uniqueness}, and this was used  to statically ensure safe computations for \picr.  In this work, we  weave  this type information into our framework, imbuing it with an operational permission-semantics  to reason compositionally about the costed behaviour of (safe) processes.  More specifically, in \eqref{eq:clients}, when $\ptit{C}_2$ allocates channel $x$, no other process knows about $x$: from a typing perspective, but also operationally, $x$ is \emph{unique} to $\ptit{C}_2$. Client $\ptit{C}_2$ then sends $x$ on $\ctit{srv}_1$ at an \emph{affine} type, which (by definition) limits the server to use $x$ at most once. At this point, from an operational perspective, $x$ is to $\ptit{C}_2$, the entity previously ``owning'' it,  \emph{unique-after-1} (communication) use.  This means that after one communication step on $x$, (the derivative of) $\ptit{C}_2$ recognises that all the other processes apart from it must have used up the single affine permission for $x$, and hence $x$  becomes once again \emph{unique}  to $\ptit{C}_2$. This also means that $\ptit{C}_2$ can safely \emph{reuse} $x$, possibly at a different object type (strong update), or else safely deallocate it. 
 
The concept of affinity is well-known in the process calculus
community.  By contrast, uniqueness (and its duality to affinity)  is used far less. 
In a compositional framework, uniqueness can be used to record 
the guarantee at one end
of a channel corresponding to the restriction 
associated with affine channel usage 
at
the other;  an operational semantics can be defined,  tracking the \emph{permission transfer} of  affine permissions back and forth between processes as a result of communication, addressing the aforementioned complications associated with idioms such as channel reuse.   We employ such an operational (costed) semantics  to define our efficiency preorders for concurrent processes with explicit resource management, based on the notion of amortised cost discussed above.


\subsection{Paper Structure:}
\label{sec:structure}

Section~\ref{sec:language} introduces our language with constructs for explicit memory management and defines a costed semantics for it.  We illustrate issues relating to resource usage in this language through a case study in Section~\ref{sec:case-study}, discussing different implementations for an unbounded buffer. Section~\ref{sec:cost-bisim} develops a labelled-transition system for our language that takes into consideration some representation of the observer and the permissions that are exchanged between the program and the observer; it is a typed transition system similar to \cite{PierceS96,hennessy04behavioural,Hennessy07}, nuanced to the resource-focussed type system of \cite{EFH:uniqueness:journal:12}. Based on this transition system, the section also defines a coinductive cost-based preorder and proves a number of properties about it.  Section~\ref{sec:characterisation} justifies the cost-based preorder by relating it with a behavioural contextual preorder defined in terms of  the reduction semantics of Section~\ref{sec:language}.  Section~\ref{sec:proofs-relat-effic} applies the theory of Section~\ref{sec:cost-bisim} to reason about the efficiency of the unbounded buffer implementations of Section~\ref{sec:case-study}.  Finally, Section~\ref{sec:RelatedWork} surveys related work and Section~\ref{sec:conclusion} concludes.





\section{The Language}
\label{sec:language}

\begin{display}{\picr Syntax}{fig:syntax-ext}
\begin{equation*}
\begin{array}{l@{\hspace{1ex}}r@{\hspace{1ex}}lllllllllll}
P, Q & \bnfdef & \piOut{u}{\vec{v}}{P} & \textsl{(output)}    & \bnfsep & \piIn{u}{\vec{x}}{P} & \textsl{(input)}    \\ 
     & \bnfsep & \piNil                & \textsl{(nil)}       & \bnfsep & \piIf{u=v}{P}{Q}   & \textsl{(match)}    \\       
     & \bnfsep & \piRecX{P}              & \textsl{(recursion)} & \bnfsep & x                    & \textsl{(process variable)} \\              
     & \bnfsep & P \piParal Q            & \textsl{(parallel)}  & \bnfsep & \piAll{x}{P}          & \textsl{(allocate)} \\
      & \bnfsep & \piFree{u}{P}          & \textsl{(deallocate)}\\ 
      
\end{array}
\end{equation*}
\end{display}

\figref{fig:syntax-ext} shows the syntax for our language, the resource \pic, or \picr for short. It has
the standard \pic constructs with the exception of scoping, which is replaced
with primitives for explicit channel allocation, \piAll{x}{P}, and deallocation, \piFree{x}{P}.  The syntax
assumes two separate denumerable sets of channel names $c,d \in \Chans$, and variables
$x, y, z, w \in \Vars$, and lets identifiers $u,\,v$ range over both sets, $\Chans \cup \Vars$.  The input construct,  \piIn{c}{x}{P}, recursion construct, \piRecX{P},  and channel
allocation construct, \piAll{x}{P}, are binders whereby free occurrences of the variables $x$  and $w$ in $P$ are bound.    As opposed to more standard versions of the \pic, we \emph{do not} use name scoping to bind and bookkeep the visibility of names; we shall however use alternative mechanisms to track name knowledge and usage in subsequent development.


\begin{display}{\picr  Reduction Semantics}{fig:reduction-semantics}
\textbf{Contexts}\\
\begin{mathpar}
  \begin{array}{rl}
 \ctxt &\bnfdef \quad \ctxtEmp{-} \quad | \quad \piCtxtPar{\ctxt}{P} \quad | \quad \piCtxtPar{P}{\ctxt}\\
\end{array}\\
\begin{array}{rll}
  \ctxtEmp{\sysSP} & \deftxt \sysSP \\
  \ctxtPar{\ctxtGenN{\context}{\sysSP}}{Q} & \deftxt \sys{\sV'}{(P'\piParal Q)} \quad& \text{if } \ctxtGenN{\context}{\sysSP}=\sys{\sV'}{P'}\\
  \ctxtPar{Q}{\ctxtGenN{\context}{\sysSP}} & \deftxt \sys{\sV'}{(Q\piParal P')} \quad& \text{if } \ctxtGenN{\context}{\sysSP}=\sys{\sV'}{P'}\\\\
\end{array}
\end{mathpar}
\textbf{Structural Equivalence}\\
\begin{equation*}
\begin{array}{l@{\hspace{2ex}}l@{\piStructS}l@{\hspace{4ex}}l@{\hspace{2ex}}r@{\piStructS}l@{\hspace{4ex}}l@{\hspace{2ex}}l@{\piStructS}l}
\rtit{sCom}  & P \piParalL Q       & Q\piParalL P & \rtit{sAss} & P \piParalL (Q \piParalL R) & (P \piParalL Q) \piParalL R & 
\rtit{sNil} & P \piParalL \piNil & P       \\\\ 
\end{array}
\end{equation*}
\textbf{Reduction Rules}\\
\begin{mathpar}
\begin{prooftree}
\justifiedBy{\rtit{rCom}}
\sys{\sV,c 
}{\piOut{c}{\vec{d}}{P} \piParal \piIn{c}{\vec{x}}{Q}}\piRedCost{0}  \sys{\sV,c 
}{P\piParal Q\subC{\,\vec{d}\,}{\,\vec{x}\,}}
\end{prooftree}
\\
\begin{prooftree}
\justifiedBy{\rtit{rThen}}
\sys{\sV,c 
}{\piIf{c=c}{P}{Q}}\piRedCost{0} \sys{\sV,c 
}{P}
\end{prooftree}
\\
\begin{prooftree}
\justifiedBy{\rtit{rElse}}
\sys{\sV,c,d 
}{\piIf{c=d}{P}{Q}}\piRedCost{0} \sys{\sV,c,d 
}{Q}
\end{prooftree}
 \\
\begin{prooftree}
\strut
\justifiedBy{\rtit{rRec}}
\sysS{\piRecX{P}}\piRedCost{0} \sysS{P\subC{\piRecX{P}}{w}}
\end{prooftree} \qquad
\begin{prooftree}
    P \piStruct P' \quad \sysS{P'} \piRedCost{k}  \sysS{Q'} \quad Q'\piStruct Q
    \justifiedBy{\rtit{rStr}}
    \sysSP\piRedCost{k} \sysS{Q} 
  \end{prooftree}
\\
  \begin{prooftree}
    \justifiedBy{\rtit{rAll}}
    \sys{\sV
    }{\piAll{x}{P}}\piRedCost{+1} \sys{\sV,c 
   }{P\subC{c}{x}}
  \end{prooftree}
\qquad
  \begin{prooftree}
    \strut
    \justifiedBy{\rtit{rFree}}
    \sys{\sV,c 
    }{\piFree{c}{P}}\piRedCost{-1} \sys{\sV
    }{P} 
  \end{prooftree}\\
\end{mathpar}
\textbf{Reflexive Transitive Closure}\\
\begin{equation*}
\begin{prooftree}
\strut
\justifies
\sysSP \piRedCost{0}^\ast \sysSP
\end{prooftree}
\qquad
\begin{prooftree}
\sysSP \piRedCost{k}^\ast \sys{\sV'}{P'} \qquad
\sys{\sV'}{P'} \piRedCost{l} \sys{\sV''}{P''}
\justifies
\sysSP \piRedCost{{k+l}}^\ast \sys{\sV''}{P''}
\end{prooftree}
\end{equation*}
\end{display}

\picR processes run in a resource environment, ranged over by $\sV, \sVV$, representing predicates over channel names stating whether a channel is allocated or not.  We find it convenient to denote such functions as a list  of channels representing the set channels that are allocated, \eg the list $c,d$ denotes the set $\sset{c,d}$, representing the resource environment returning \textit{true} for channels $c$ and $d$ and \textit{false} otherwise - in this representation, the order of the channels in the list is unimportant, but duplicate channels are disallowed; as shorthand, we also write $\sV,c$ to denote $\sV\cup\sset{c}$ whenever $c\not\in\sV$. 
 In this paper we consider only resource environments with an
  \emph{infinite} number of deallocated channels, \ie $\sV$ is a total
  function.  Models with finite resources can be easily accommodated
  by making $\sV$ partial; this also would entail a slight change in
  the semantics of the allocation construct, which could either block
  or fail whenever there are no deallocated resources left.  Although
  interesting in its own right, we focus on settings with infinite
  resources as it lends itself better to the analysis of resource
  efficiency that follows.

We refer to the pair \sysSP, consisting of a resource environment \sV\ and a \emph{closed} process\footnote{A closed process has no free variables.  Note that the absence of name binders \ie no name scoping, means that all names are free.} $P$ as a
\emph{system}; note that \emph{not all} free names in $P$ need to be allocated \ie present in \sV: intuitively, any name $c$ used by $P$ and $c \not\in \sV$ represents a \emph{dangling pointer}. Contexts consist of parallel composition of processes; they are however defined over systems, through the grammar and the respective definition at the top of  \figref{fig:reduction-semantics}. The reduction relation is
defined as the least \emph{contextual} relation over systems satisfying the rules in
\figref{fig:reduction-semantics}.   More specifically our reduction relation leaves the following  rule implicit:
\begin{mathpar}
  \begin{prooftree}
      \sysS{P} \;\piRedCost{k}\;  \sysS{Q} 
    \justifiedBy{\rtit{rCtx}}
    \ctxtGen{\sysSP}\;\piRedCost{k}\; \ctxtGen{\sysS{Q}} 
  \end{prooftree}
\end{mathpar}
 Rule (\rtit{rStr}) extends reductions to structurally equivalent processes, $P\piStruct Q$, \ie processes that are identified up to superfluous \piNil\ processes, and commutativity/associativity of parallel composition (see the structural equivalence rules \figref{fig:reduction-semantics}).  

Most rules follow those of the standard \pic, \eg  (\rtit{rRec}), with the exception of those involving resource handling. For instance, the rule for communication (\rtit{rCom}) requires the
communicating channel to be \emph{allocated}. Allocation (\rtit{rAll}) chooses a deallocated 
channel, allocates it, and substitutes it for the bound variable of the
allocation construct.\footnote{The expected side-condition $c \!\not\in\! \sV$ is implicit in the notation $(\sV,c)$ used in the system $\sys{\sV,c}{P\subC{c}{x}}$ to which it reduces, since $c$ cannot be present in \sV\ for $\sV,c$ to be valid.}  
Deallocation (\rtit{rFree}) changes the states of a channel from allocated to
deallocated, making it available for future allocations. The rules are
annotated with a cost reflecting resource usage; allocation has a cost of $+1$, 
deallocation has a (negative) cost of $-1$ while the other reductions carry no cost, \ie $0$. \figref{fig:reduction-semantics} also shows the natural definition of the reflexive
transitive closure of the costed reduction relation.  In what follows, we use $k, l\in \Ints$ as 
integer
metavariables to range over costs. 

\begin{exa}\label{ex:bad-behaviour} The following reduction sequence illustrates potential unwanted behaviour resulting from resource mismanagement:
\begin{align}
  \label{eq:54}
   &\sys{M, c 
  }{\piFree{c}{(\piOutA{c}{\num{1}} \piParal \piIn{c}{x}{P})} \;\piParal\; \piAll{y}{ (\piOutA{y}{\num{42}} \piParal \piIn{y}{z}{Q})}} & \piRedCost{-1}  \\
   \label{eq:1}
   &\sys{M\phantom{, c}
   }{\piOutA{c}{\num{1}} \piParal \piIn{c}{x}{P} \;\piParal\; \piAll{y}{ (\piOutA{x}{\num{42}} \piParal \piIn{x}{z}{Q})}} & \piRedCost{+1} \\
   \label{eq:2}
   &\sys{M, c 
   }{\piOutA{c}{\num{1}} \piParal \piIn{c}{x}{P} \;\piParal\; \piOutA{c}{\num{42}} \piParal \piIn{c}{z}{Q}}
\end{align}
Intuitively, allocation should yield ``fresh'' channels \ie channels that are not in use by any active process. This assumption is used by the right process in system \eqref{eq:54}, $\piAll{y}{ (\piOutA{y}{\num{42}} \piParal \piIn{y}{z}{Q})}$,  to carry out a \emph{local} communication, sending the value  \num{42} on some local channel $y$ that no other process is using.  However,  the premature deallocation of the channel $c$ by the left process in \eqref{eq:54}, $\piFree{c}{(\piOutA{c}{\num{1}} \piParal \piIn{c}{x}{P})}$,  allows channel $c$ to be reallocated by the right process in the subsequent reduction, \eqref{eq:1}. This may then lead to unintended behaviour since we may end up with interferences when communicating on $c$ in the residuals of the left and right processes, \eqref{eq:2}.\footnote{Operationally, we do not describe errors that may result from attempted communications on deallocated channels (we do not have error values).  This may occur after  reduction \eqref{eq:54}, if the residual of the left process communicate on channel $c$. Rather, communications on deallocated channels are blocked.}
\hfill$\Box$
\end{exa}

\begin{display}{Type Attributes and Types}{fig:types}
%
\begin{gather*}
\begin{array}{lrllllll}
\aV & \bnfdef & \unrestricted & \text{(unrestricted)} 
    &\bnfsepp   \affine       & \text{(affine)}
     & \bnfsepp \unique{i}    & \text{(unique after $i$ steps)}\\[1em]  
\tV & \bnfdef &  \tVV & \text{(channel type)} 
     &\bnfsepp \proctyp                 & \text{(process type)} \\
\tVV & \bnfdef & \chantyp{\vec{\tVV}}{\aV}  & \text{(channel)}   &\bnfsepp  \chanrec{X}{\tVV}  & \text{(recursion)}   &\bnfsepp  X  & \text{(variable)}  
\end{array}
\end{gather*}
\end{display}

In \cite{EFH:uniqueness:journal:12} we defined a type system that precludes unwanted behaviour
such as in \exref{ex:bad-behaviour}.  The type syntax is shown in \figref{fig:types}. 
 The main type entities are \emph{channel types}, denoted as
$\chantyp{\tVVlst}{a}$, where \emph{type attributes} $a$ range over
\begin{itemize}
\item \affine, for affine, imposing a restriction/obligation on
  usage;
\item $\unique{i}$, for unique-after-$i$ usages ($i \in
  \mathbb{N}$), providing guarantees on usage; 
\item $\unrestricted$, for unrestricted channel usage without restrictions or guarantees.
\end{itemize}
Uniqueness typing can be seen as dual to affine typing \cite{harrington:uniquenesslogic}, and in \cite{EFH:uniqueness:journal:12} we make use of this duality to keep track of uniqueness across channel-passing parallel processes: an attribute $\unique{i}$ typing an endpoint of a channel $c$ accounts for (at most) $i$ instances of affine attributes typing endpoints of that same channel.   

A channel type $\chantyp{\tVVlst}{a}$ also describes the type of the values that can be communicated on that channel, \tVVlst, which denotes a list of types $\tVV_1,\ldots,\tVV_n$ for $n\in\Nats$; when $n=0$, the type list is an empty list and we simply write $\chantyp{}{a}$.   Note
the difference between $\chantyp{\tVVlst}{\affine}$, \ie a channel with an affine usage restriction,  and
$\chantyp{\tVVlst}{\unique{1}}$,  \ie a channel with a unique-after-1 usage guarantee. We denote fully unique
channels as $\chantyp{\tVVlst}{\uniqueNow}$ in lieu of
$\chantyp{\tVVlst}{\unique{0}}$. 

The type syntax also assumes a denumerable set of
type variables $X,Y$, bound by the recursive type construct \chanrecX{\tVV}.  In what follows, we restrict our attention to \emph{closed, contractive} types, where every type variable is bound and appears within a channel constructor $\chantyp{-}{\aV}$; this ensures that channel types such as $\chanrec{X}{X}$ are avoided.  We assume an equi-recursive interpretation for our recursive types \cite{Pierce:2002:TPL} (see \rtit{tEq} in \figref{fig:typingrules}), characterised as the least type-congruence satisfying rule \rtit{eRec} in   \figref{fig:typingrules}.

\begin{display}{Typing processes}{fig:typingrules}
\textbf{Logical rules}\\[1em]
\begin{tabular}{llll}
\quad

&
\begin{prooftree}
\tprocP{\env, \envmap{u}{\chantyp{\tVlst}{\aV-1}}}  
\justifiedBy{tOut}
\tproc{\env, \envmap{u}{\chantyp{\tVlst}{\aV}}, \overrightarrow{\envmap{v}{\tV}}}{\piOut{u}{\vec{v}}{P}}
\end{prooftree}
&
\begin{prooftree}
\tprocP{\env, \envmap{u}{\chantyp{\tVlst}{\aV-1}}, \overrightarrow{\envmap{x}{\tV}}} 
\justifiedBy{tIn}
\tproc{\env, \envmap{u}{\chantypA{\tVlst}}}{\piIn{u}{\vec{x}}{P}}  
\end{prooftree}\;\quad
 &
\begin{prooftree}
\tprocP{\env_1} \quad
\tproc{\env_2}{Q} 
\justifiedBy{tPar}
\tproc{\env_1, \env_2\,}{\,P \piParal Q}
\end{prooftree}
\\[2em]
&
\begin{prooftree}
u, v \in \Gamma \quad
\tprocE{P} \quad
\tprocE{Q}
\justifiedBy{tIf}
\tprocE{\piIf{u=v}{P}{Q}}
\end{prooftree} \;\quad
 &
\begin{prooftree}
\tprocP{\env^\unrestricted, \envmap{x}{\proctyp}} 
\justifiedBy{tRec}
\tproc{\env^\unrestricted}{\piRecX{P}} 
\end{prooftree} 
 &
\begin{prooftree}
\strut
\justifiedBy{tVar}
\tproc{\envmap{x}{\proctyp}}{x}
\end{prooftree}
\\[2em]
&
\begin{prooftree}
\tprocEP
\justifiedBy{tFree}
\tproc{\env,\envmap{u}{\chantyp{\vec{\tV}}{\uniqueNow}}}{\piFree{u}{P}}
\end{prooftree}
&
\begin{prooftree}
\tprocP{\env,\envmap{x}{\chantyp{\vec{\tV}}{\uniqueNow}}}
\justifiedBy{tAll}
\tprocE{\piAll{x}{P}}
\end{prooftree}
&
\begin{prooftree}
\strut
\justifiedBy{tNil}
\tproc{\emptyset}{\piNil}
\end{prooftree}   
\\[2em]
&
\begin{prooftree}
\tproc{\env'}{P} \qquad
\env \struct \env'
\justifiedBy{tStr}
\tproc{\env}{P}
\end{prooftree}   
 \\[2em]
\end{tabular}

where $\env^\unrestricted$ can only contain unrestricted
assumptions and all bound variables are fresh.\\[1em]
\textbf{Structural rules}
$(\struct)$ is the least reflexive transitive relation satisfying \\
\begin{mathpar}
\begin{prooftree}
\tV = \tV_1 \circ \tV_2 
\justifiedBy{tCon}
  \env, \envmap{u}{\tV}
\struct
  \env, \envmap{u}{\tV_1}, \envmap{u}{\tV_2}
\end{prooftree} \qquad
\begin{prooftree}
\tV = \tV_1 \circ \tV_2 
\justifiedBy{tJoin}
  \env, \envmap{u}{\tV_1}, \envmap{u}{\tV_2} 
\struct 
  \env, \envmap{u}{\tV}
\end{prooftree}
\qquad
\begin{prooftree}
\tV_1 \sim \tV_2
\justifiedBy{tEq}
  \env,\envmap{u}{\tV_1}
\struct
  \env,\envmap{u}{\tV_2}
\end{prooftree}
\\ 
\begin{prooftree}
\strut
\justifiedBy{tWeak}
  \env, \envmap{u}{\tV}
\struct
  \env
\end{prooftree} 
\qquad
\begin{prooftree}
\tV_1 \subtype \tV_2
\justifiedBy{tSub}
  \env,\envmap{u}{\tV_1}
\struct
  \env,\envmap{u}{\tV_2}
\end{prooftree} \qquad
\begin{prooftree}
\strut
\justifiedBy{tRev}
  \env,\envmap{u}{\chantyp{\vec{\tV_1}}{\uniqueNow}}
\struct
  \env,\envmap{u}{\chantyp{\vec{\tV_2}}{\uniqueNow}}
\end{prooftree}
\end{mathpar}\\[1em]
$\begin{array}{cc}
\textbf{Equi-Recursion}
& \textbf{Counting channel usage}\\[0.5em]
\qquad\begin{prooftree}
\phantom{\aV_1 \subtype \aV_2}
\justifiedBy{eRec}
\chanrecX{\tVV} \sim \tVV\subC{\chanrecX{\tVV}}{X}
\end{prooftree} \qquad
&
  \qquad\envmap{c}{\chantyp{\vec{\tV}}{\aV-1}} \deftxt
  \begin{cases}
    \varepsilon \qquad \textit{ (empty list)}
    & \text{if }\,\aV=\affine\\
    \envmap{c}{\chantyp{\vec{\tV}}{\unrestricted}} & \text{if }\,\aV=\unrestricted\\
    \envmap{c}{\chantyp{\vec{\tV}}{\unique{i}}} &  \text{if }\,\aV=\unique{i + 1}
  \end{cases}
\end{array}$
\\[1em]
\textbf{Type splitting}
\begin{equation*}
\begin{prooftree}
\justifiedBy{pUnr}
\chantyp{\tVlst}{\unrestricted} = \chantyp{\tVlst}{\unrestricted} \circ \chantyp{\tVlst}{\unrestricted}
\end{prooftree} \qquad
\begin{prooftree}
\justifiedBy{pProc}
\proctyp = \proctyp \circ \proctyp
\end{prooftree} \qquad 
\begin{prooftree}
\justifiedBy{pUnq}
\chantyp{\tVlst}{\unique{i}} = \chantyp{\tVlst}{\affine} \circ \chantyp{\tVlst}{\unique{i+1}}
\end{prooftree}
\end{equation*}\\[1em]
\textbf{Subtyping}
\begin{equation*}
\begin{prooftree}
\phantom{\aV_1 \subtype \aV_2}
\justifiedBy{sIndx}
\unique{i} \subtype \unique{i+1}
\end{prooftree} \qquad 
\begin{prooftree}
\phantom{\aV_1 \subtype \aV_2}
\justifiedBy{sUnq}
\unique{i} \subtype \unrestricted
\end{prooftree} \qquad 
\begin{prooftree}
\phantom{\aV_1 \subtype \aV_2}
\justifiedBy{sAff}
\unrestricted \subtype \affine 
\end{prooftree} \qquad 
\begin{prooftree}
\aV_1 \subtype \aV_2
\justifiedBy{sTyp}
\chantyp{\tVlst}{\aV_1} \subtype \chantyp{\tVlst}{\aV_2}
\end{prooftree} 
\end{equation*}
\end{display}

\begin{gather*}
\begin{prooftree}
\tprocEP \qquad
\dom(\env) \subseteq \sV \qquad
\Gamma \text{ is consistent}
\justifiedBy{tSys}
\tproc{\env}{\sysSP}
\end{prooftree}  
\end{gather*}

  The rules for typing processes  are given in \figref{fig:typingrules} and take the usual shape $\tproc{\env}{P}$ stating that
process $P$ is well-typed with respect to the environment $\env$, a list of pairs of identiﬁers and
types.  Systems are typed according to (\rtit{tSys}) above: a system $\sys{M}{P}$ is
well-typed under $\env$ if $P$ is well-typed \wrt $\env$, $\env \vdash P$, and $\env$ only contains assumptions for
channels that have been allocated, $\dom(\env) \subseteq \sV$. This restricts channel usage in $P$ to allocated channels and is key for ensuring safety. 

\smallskip

In \cite{EFH:uniqueness:journal:12}, typing environments are multisets of pairs
of identifiers and types; we do not require them to be partial functions.  However, 
the (top-level) typing rule for systems (\rtit{tSys}) requires that the typing
environment is \emph{consistent}.  A typing environment is consistent if
whenever it contains multiple assumptions about a channel, then these
assumptions can be derived from a \emph{single assumption} using the structural rules
of the type system (see the structural rule \rtit{tCon} and the splitting rule \rtit{pUnq} in \figref{fig:typingrules}).

\begin{defi}[Consistency]\label{def:consistency}
A typing environment $\env$ is \emph{consistent} if there is a partial map
$\env'$ such that $\env' \struct \env$. 
\end{defi}

The environment structural rules, $\env_1\struct\env_2$, defined in \figref{fig:typingrules},  govern the way type environments are syntactically manipulated.  For instance, rules \rtit{tCon} and \rtit{tJoin} state that type assumptions for the same identifier can be split or joined according to the type splitting relation $\tV = \tV_1 \circ \tV_2$, also defined in \figref{fig:typingrules}: apart from standard splitting of unrestricted channels, \rtit{pUnr}, and process types, \rtit{pProc}, we note that a unique-after-$i$ channel may be split into a unique-after-$(i+1)$ channel and an affine channel; we also note that affine channels are \emph{never} split. The environment structural rules also allow for weakening, \rtit{tWeak}, equi-recursive manipulation of types, \rtit{tEq} and \rtit{eRec}, and subtyping, \rtit{tSub}; the latter rule is defined in terms of the subtyping relation also stated in \figref{fig:typingrules} (bottom) where, for instance, an unrestricted channel can be used instead of an affine channel (that can be used at most once).  The key novel structural rule is however \rtit{tRev}, which allows us to change (revise) the object type of a channel whenever we are guaranteed that the type assumption for that identifier is unique.  These rules are recalled from \cite{EFH:uniqueness:journal:12} and the reader is encouraged to consult that document for more details.   

The consistency condition of \defref{def:consistency} ensures that there is no mismatch in the duality between the guarantees of unique types and the restrictions of affine types, which allows sound compositional type-checking by our type system.  For instance, consistency rules out environments such as
\begin{equation}
\envmap{c}{\chantyp{\tVV}{\uniqueNow}}, \envmap{c}{\chantypO{\tVV}}\label{eq:1:lang}
\end{equation}
where a process typed under the guarantee that a channel $c$ is unique now, \envmap{c}{\chantyp{\tVV}{\uniqueNow}},   contradicts the  fact that  some other process may be typed under the affine usage  allowed by the assumption  $\envmap{c}{\chantypO{\tVV}}$. For similar reasons, consistency also rules out environments such as 
\begin{equation}
\envmap{c}{\chantyp{\tVV}{\uniqueNow}}, \envmap{c}{\chantypW{\tVV}}\label{eq:2:lang}
\end{equation}
However, it does not rule out environments such as  \eqref{eq:3:lang}
even though the guarantee provided by \envmap{c}{\chantypUU{\tVV}{2}} is too conservative: it states that channel $c$ will become unique after \emph{two} uses but, in actual fact,  it becomes unique after one use since  the (top-level) environment contains only \emph{one}  other affine type assumption, \envmap{c}{\chantypO{\tVV}}, that other processes can be typed at.
\begin{equation}
\envmap{c}{\chantypUU{\tVV}{2}}, \envmap{c}{\chantypO{\tVV}}\label{eq:3:lang}
\end{equation}
A less conservative uniqueness typing guarantee would therefore be \envmap{c}{\chantypUU{\tVV}{1}} as shown in \eqref{eq:3a:lang} below; this environment constitutes another case of a consistent environment allowed by Definition~\ref{def:consistency}.  
\begin{equation}
\envmap{c}{\chantypUU{\tVV}{1}}, \envmap{c}{\chantypO{\tVV}}\label{eq:3a:lang}
\end{equation}

\medskip

The type system is \emph{substructural}, implying that typing assumptions can be used \emph{only once} during typechecking \cite{Pierce:2004:ATT}.
This is clearly manifested in the output and input rules, \rtit{tOut} and \rtit{tIn} in  \figref{fig:typingrules}. In fact, using the operation $\envmap{c}{\chantyp{\vec{\tV}}{\aV-1}}$ (see\footnote{This operation on type assumptions, $\envmap{c}{\chantyp{\vec{\tV}}{\aV-1}}$, defined in \figref{fig:typingrules}, describes the cases where, when using an affine type assumption to typecheck a process, the continuation of the process in the rule premise is typed without that assumption (the operation returns no type assumption), whereas when using an unrestricted or unique-after-$i$ assumptions, the  premise judgement use \wrt (new) unrestricted and unique-after-$(i-1)$ assumptions, respectively. Note that the operation $\envmap{c}{\chantyp{\vec{\tV}}{\aV-1}}$ is not defined for $\aV=\uniqueNow$. See \cite{EFH:uniqueness:journal:12} for more detail.} \figref{fig:typingrules}), rule \rtit{tOut} collapses three different possibilities for typing output processes, which could alternatively have been expressed as the three separate typing rules in \eqref{eq:7:lang}.  
\begin{equation}\label{eq:7:lang}
  \begin{split}
    &\begin{prooftree}
      \tprocEP \justifiedBy{tOutA} \tproc{\env,\,
        \envmap{u}{\chantypO{\tVlst}},\,
        \overrightarrow{\envmap{v}{\tV}}\;}{\;\piOut{u}{\vec{v}}{P}}
    \end{prooftree}
    \qquad\qquad
    \begin{prooftree}
      \tprocP{\env,\, \envmap{u}{\chantypW{\tVlst}}}
      \justifiedBy{tOutW} \tproc{\env,\,
        \envmap{u}{\chantypW{\tVlst}},\,
        \overrightarrow{\envmap{v}{\tV}}\;}{\;\piOut{u}{\vec{v}}{P}}
    \end{prooftree}
    \\[0.5em]
    &\hspace{3cm}\begin{prooftree}
      \tprocP{\env,\, \envmap{u}{\chantyp{\tVlst}{\unique{i}}}}
      \justifiedBy{tOutU} \tproc{\env,\,
        \envmap{u}{\chantyp{\tVlst}{\unique{i+1}}},\,
        \overrightarrow{\envmap{v}{\tV}}\;}{\;\piOut{u}{\vec{v}}{P}}
    \end{prooftree}
  \end{split}
\end{equation}
Rule \rtit{tOutA} states that an output of values $\vec{v}$ on channel $u$ is allowed if  the type environment has an \emph{affine} channel-type assumption for that channel, \envmap{u}{\chantypO{\tVlst}}, and  the corresponding type assumptions for the values communicated, $\overrightarrow{\envmap{v}{\tV}}$, match the object type of the affine channel-type assumption, \tVlst;  in the rule premise, the continuation $P$ must also be typed \wrt the \emph{remaining} assumptions in the environment,  \emph{without} the assumptions consumed by the conclusion.  Rule \rtit{tOutW} is similar, but permits outputs on $u$ for  environments with an \emph{unrestricted} channel-type assumption for that channel, \envmap{u}{\chantypW{\tVlst}}. The continuation  $P$ is typechecked \wrt the remaining assumptions and a \emph{new} assumption, $\envmap{u}{\chantypW{\tVlst}}$; this assumption is identical to the one consumed in the conclusion, so as to model the fact that uses of channel $u$ are unrestricted. Rule \rtit{tOutU} is again similar,  but it allows outputs on channel $u$ for a  \emph{``unique after $i\!+\!1$''} channel-type assumption;  in the premise of the rule,  $P$ is typechecked \wrt the remaining assumptions and a \emph{new} assumption $\envmap{u}{\chantyp{\tVlst}{\unique{i}}}$, where $u$ is now \emph{unique after $i$} uses. Analogously, the input rule,  \rtit{tIn}, also encodes three input cases (listed below):
\begin{equation} \label{eq:8:lang}
\begin{split}
  &\begin{prooftree}
    \tprocP{\env, 
      \overrightarrow{\envmap{x}{\tV}}} \justifiedBy{tInO} \tproc{\env,
      \envmap{u}{\chantypO{\tVlst}}}{\piIn{u}{\vec{x}}{P}}
    \end{prooftree}
    \qquad\;
    \begin{prooftree}
    \tprocP{\env, \envmap{u}{\chantypW{\tVlst}},
      \overrightarrow{\envmap{x}{\tV}}} 
    \justifiedBy{tInW} 
    \tproc{\env,
      \envmap{u}{\chantypW{\tVlst}}}{\piIn{u}{\vec{x}}{P}}
  \end{prooftree}
 \qquad
  \begin{prooftree}
    \tprocP{\env, \envmap{u}{\chantyp{\tVlst}{\unique{i}}},
      \overrightarrow{\envmap{x}{\tV}}} 
    \justifiedBy{tInU} 
    \tproc{\env,
      \envmap{u}{\chantyp{\tVlst}{\unique{i+1}}}}{\piIn{u}{\vec{x}}{P}}
  \end{prooftree}\quad
\end{split}
\end{equation}
Parallel composition (\rtit{tPar}) enforces the substructural treatment of type assumptions, by ensuring that type assumptions are used by either the left
process or the right, but not by both. However,  some type assumption can be \emph{split} using contraction, \ie rules (\rtit{tStr}) and (\rtit{tCon}). For example, an
assumption $c : \chantyp{\tVlst}{\unique{i}}$ can be split as $c
: \chantyp{\tVlst}{\affine}$ and $c :
\chantyp{\tVlst}{\unique{i+1}}$---see (\rtit{pUnq}).

The rest of the rules in Figure~\ref{fig:typingrules}
are fairly straightforward. Even though these typing rules do not require \env to be consistent,  the consistency requirement at the top level typing judgement (\rtit{tSys})  ensures that whenever a process is typed \wrt a unique assumption for a
channel, $\chantypU{\tVlst}$, no other process has access to that channel. It
can therefore safely deallocate it (\rtit{tFree}), or change the object type of
the channel (\rtit{tRev}). Dually, when a channel is newly allocated it
is assumed unique (\rtit{tAll}).  Note also that name matching is only
permitted when channel permissions are owned, $u, v \in \Gamma $ in
(\rtit{tIf}).  Uniqueness can therefore also be thought of as ``freshness'', a
claim we substantiate further in Section~\ref{sec:Bisimulation}.

In \cite{EFH:uniqueness:journal:12} we prove the usual subject reduction and progress lemmas
for this type system, given an (obvious) error relation. 

\begin{exa}  All client implementations discussed in Section~\ref{sec:introduction} typecheck \wrt the type environment $$\env=\emap{\ctit{srv}_1}{\chantypW{\chantypO{\tV_1}}}, \emap{\ctit{srv}_2}{\chantypW{\chantypO{\tV_2}}}, \emap{\ctit{ret}}{\chantypW{\tV_1,\tV_2}}.$$  For instance, to typecheck $\ptit{C}_2$ from \eqref{eq:clients}, we can apply the typing rules \rtit{tRec} and \rtit{tAll} from Figure~\ref{fig:typingrules} to obtain the typing sequent:
  \begin{equation}\label{eq:7:ts}
    \tproc{\env,\,\emap{w}{\proctyp},\, \emap{x}{\chantypU{\tV_1}}\;}{\;\piOut{\ctit{srv}_1}{x}\,\piIn{x}{y}{\;\piOut{\ctit{srv}_2}{x}\,\piIn{x}{z}{\;\piFree{x}{\;\piOut{\ctit{ret}}{(y,z)}{\;w}}}}}
  \end{equation}
  Using the environment structural rules (\ie \rtit{tCon}) we can split the type assumption for $x$:
  \begin{equation*}
     \env,\,
      \emap{w}{\proctyp},\, \emap{x}{\chantypU{\tV_1}} \quad\struct\quad
     \env,\,
      \emap{w}{\proctyp},\, \emap{x}{\chantypO{\tV_1}},\,
      \emap{x}{\chantypUU{\tV_1}{1}}
  \end{equation*}
  Using \rtit{tStr} and \rtit{tOut} we can type \eqref{eq:7:ts} to obtain
  \begin{equation*}
    \tproc{\env,\,\emap{w}{\proctyp},\, \emap{x}{\chantypUU{\tV_1}{1}}\;}{\;\piIn{x}{y}{\;\piOut{\ctit{srv}_2}{x}\,\piIn{x}{z}{\;\piFree{x}{\;\piOut{\ctit{ret}}{(y,z)}{\;w}}}}}
  \end{equation*}\,
  After applying \rtit{tIn} to typecheck the input, we are left with the sequent
  \begin{equation*}
    \tproc{\env,\, \emap{w}{\proctyp},\, \emap{x}{\chantypU{\tV_1}},\, \emap{y}{\tV_1} \;}{\piOut{\ctit{srv}_2}{x}\,\piIn{x}{z}{\;\piFree{x}{\;\piOut{\ctit{ret}}{(y,z)}{\;w}}}}
  \end{equation*}
  In particular, we note that the input typing rule stipulates that the input continuation process needs to type \wrt the following type assumption for $\emap{x}{\chantyp{\tV_1}{{\unique{1} - 1}}}$ which is equal to $\emap{x}{\chantypU{\tV_1}}$.   Since $x$ is unique now, we can change the object type from $\tV_1$ to $\tV_2$ using \rtit{tRev}, which allows us to type the interactions with  $\ctit{srv}_2$ in analogous fashion. This leaves us with    
\begin{equation*}
    \tproc{\env,\, \emap{w}{\proctyp},\, \emap{x}{\chantypU{\tV_2}},\, \emap{y}{\tV_1},\, \emap{z}{\tV_2} \;}{\piFree{x}{\;\piOut{\ctit{ret}}{(y,z)}{\;w}}}
  \end{equation*}
which we can discharge using rules \rtit{tFree}, \rtit{tOut} and \rtit{tVar}.
\end{exa}

\newcommand{\tVR}{\ensuremath{\tV_\text{\rm rec}}\xspace}
\newcommand{\pBuf}{\ensuremath{\text{\rm Buff}}\xspace}
\newcommand{\pBufE}{\ensuremath{\text{\rm eBuff}}\xspace}
\newcommand{\pF}{\ensuremath{\text{\rm Frn}}\xspace}
\newcommand{\pFf}{\ensuremath{\text{\rm Frn'}}\xspace}
\newcommand{\pFff}[1]{\ensuremath{\text{\rm Frn''}(#1)}\xspace}
\newcommand{\pFfff}[2]{\ensuremath{\text{\rm Frn'''}(#1,#2)}\xspace}
\newcommand{\pB}{\ensuremath{\text{\rm Bck}}\xspace}
\newcommand{\pBb}{\ensuremath{\text{\rm Bck'}}\xspace}
\newcommand{\pBbb}[1]{\ensuremath{\text{\rm Bck''}(#1)}\xspace}
\newcommand{\pBbbb}[2]{\ensuremath{\text{\rm Bck'''}(#1,#2)}\xspace}
\newcommand{\pBE}{\ensuremath{\text{\rm eBk}}\xspace}
\newcommand{\pBEe}{\ensuremath{\text{\rm eBk'}}\xspace}
\newcommand{\pBEee}[1]{\ensuremath{\text{\rm eBk''}(#1)}\xspace}
\newcommand{\pBEeee}[3]{\ensuremath{\text{\rm eBk'''}(#1,#2,#3)}\xspace}
\newcommand{\pBEeeee}[2]{\ensuremath{\text{\rm eBk''''}(#1,#2)}\xspace}
\newcommand{\envExt}{\ensuremath{\env_\text{ext}}\xspace}

\section{A Case Study}
\label{sec:case-study}

Resource management is particularly relevant to programs manipulating (unbounded) 
regular structures. We consider the concurrent implementation of an unbounded buffer, \pBuf, receiving values to queue on channel \ctit{in} and dequeuing values by outputting on channel \ctit{out}.  
\begin{align*}
  \pBuf & \;\deftxt\;  \piIn{\ctit{in}}{y}{\;\piAll{z}{\;\bigl(\pF\piParal \piOutA{b}{z} \piParal \piOutA{c_1}{(y,z)}\bigr)}} \;\;\piParalS\;\; \piIn{c_1}{(y,z)}{\;{\piOut{\ctit{out}}{y}{\;\bigl(\pB \piParal \piOutA{d}{z}\bigr)}}}\\
  \pF & \;\deftxt\; \piRec{w}{\;\piIn{b}{x}{\;\piIn{\ctit{in}}{y}{\piAll{z}{\;\bigl(w\piParal \piOutA{b}{z} \piParal \piOutA{x}{(y,z)}\bigr)}}}}\\
  \pB & \;\deftxt\; \piRec{w}{\;\piIn{d}{x}{\;\piIn{x}{(y,z)}{\;\piOut{\ctit{out}}{y}{\;\bigl(w \piParal \piOutA{d}{z}\bigr)}}}}
\end{align*}
In order to  decouple input requests from output requests while still preserving the order of inputted values, the process handling inputs in \pBuf, $\piIn{\ctit{in}}{y}{\piAll{z}{\bigl(\pF\piParal \piOutA{b}{z} \piParal \piOutA{c_1}{(y,z)}\bigr)}}$, stores inputted values $v_1,\ldots,v_n$ as a queue of interconnected outputs
\begin{equation}
\piOutA{c_1}{(v_1,c_2)}  \piParalS  \ldots \piParalS \piOutA{c_n}{(v_n,c_{n+1})}\label{eq:4cs}
\end{equation}
on the internal\footnote{Subsequent allocated channels are referred to as $c_2,c_3,$ \etc.} channels $c_1,\ldots,c_{n+1}$.  The process handling the outputs, $\piIn{c_1}{(y,z)}{{\piOut{\ctit{out}}{y}{\bigl(\pB \piParal \piOutA{d}{z}\bigr)}}}$, then reads from the head of this queue, \ie the output on channel $c_1$, so as to obtain the first value inputted, $v_1$, and the next head of the queue, $c_2$. The input and output processes are defined in terms of the recursive processes, \pF  and \pB  \resp, which are parameterised by the channel to output (\resp input) on next through the channels $b$ and $d$.\footnote{This models parametrisable process definitions \pF(x) and \pB(x) within our language.}  

Since the buffer is \emph{unbounded}, the number of internal channels used for the queue of interconnected outputs, \eqref{eq:4cs}, is not fixed and these channels cannot therefore be created up front. Instead, they are created on demand by the input process  for every value inputted, using the \picr construct \piAll{z}{P}.  The newly allocated channel $z$ is then passed on the next iteration of \pF through channel $b$, \piOutA{b}{z},  and communicated as the next head of the queue when adding the subsequent queue item; this is received by the output process when it inputs the value at the head of the chain and  passed on the next iteration of \pB through channel $d$, \piOutA{d}{z}.

\subsection{Typeability and behaviour of the Buffer}
\label{sec:typab-behav-pbuf}

Our unbounded buffer implementation, \pBuf,  can be typed \wrt the type environment
\begin{equation} \label{eq:8cs}
\env_\text{int} \;\deftxt\; \emap{\ctit{in}}{\chantypW{\tV}},\, \emap{\ctit{out}}{\chantypW{\tV}},\, \emap{b}{\chantypW{\tVR}}, \,\emap{d}{\chantypW{\tVR}}, \,\emap{c_1}{\chantypU{\tV,\tVR}}
\end{equation}
where \tV\ is the type of the values stored in the buffer and \tVR\ is a recursive type defined as
$$\tVR  \;\deftxt\; \chanrecX{\chantypUU{\tV, X}{1}}.$$
This recursive type is used to type the internal channels $c_1,\ldots,c_{n+1}$ --- recall that in \eqref{eq:4cs} these channels carry channels of the same kind in order to link to one another as a chain of outputs.  In particular, using the typing rules of \secref{sec:language} we can prove the following typing judgements:
\begin{align}
  \label{eq:9cs}
  \tproc{\emap{\ctit{in}}{\chantypW{\tV}},\,  \emap{b}{\chantypW{\tVR}},  \,\emap{c_1}{\chantypO{\tV,\tVR}}\,}{}&\;\piIn{\ctit{in}}{y}{\;\piAll{z}{\;\bigl(\pF\piParal \piOutA{b}{z} \piParal \piOutA{c_1}{(y,z)}\bigr)}}\\
  \label{eq:10cs}
   \tproc{\emap{\ctit{out}}{\chantypW{\tV}},\, \emap{d}{\chantypW{\tVR}}, \,\emap{c_1}{\chantypUU{\tV,\tVR}{1}}\,}{}&\;\piIn{c_1}{(y,z)}{\;{\piOut{\ctit{out}}{y}{\;\bigl(\pB \piParal \piOutA{d}{z}\bigr)}}}
\end{align}
From the perspective of a user of the unbounded buffer, \pBuf implements the interface defined by the environment
$$\envExt\;\deftxt\; \emap{\ctit{in}}{\chantypW{\tV}}, \emap{\ctit{out}}{\chantypW{\tV}}$$
abstracting away from the implementation channels $b, d$ and $c_1$.

\subsection{A resource-conscious Implementation of the Buffer}
\label{sec:more-effic-impl}


When the buffer implementation of \pBuf\ retrieves values from the head of the internal queue, \eg (\ref{eq:4cs}), the channel holding the initial value, \ie $c_1$ in (\ref{eq:4cs}), is never reused again even though it is left allocated in memory.    
This fact will repeat itself for every value that is stored and retrieved from the buffer and amounts to the equivalent of a \emph{``memory leak''}.   A more resource-conscious implementation of the unbounded buffer is  \pBufE, defined in terms of the previous input process  used for \pBuf, and a modified output process,  $\piIn{c_1}{(y,z)}{\piFree{c_1}{\piOut{\ctit{out}}{y}{\bigl(\pBE \piParal \piOutA{d}{z}\bigr)}}}$, which uses the tweaked recursive process, \pBE.   
\begin{align*}
    \pBufE & \deftxt  \piIn{\ctit{in}}{y}{\piAll{z}{\bigl(\pF\piParal \piOutA{b}{z} \piParal \piOutA{c_1}{(y,z)}\bigr)}} \piParal \piIn{c_1}{(y,z)}{\piFree{c_1}{\piOut{\ctit{out}}{y}{\bigl(\pBE \piParal \piOutA{d}{z}\bigr)}}}\\
  \pBE & \deftxt \piRec{w}{\;\piIn{d}{x}{\;\piIn{x}{(y,z)}{\;\piFree{x}{\;\piOut{\ctit{out}}{y}{\bigl(w \piParal \piOutA{d}{z}\bigr)}}}}}
\end{align*}
The main difference between \pBuf and \pBufE is that the latter deallocates the channel at the head of the internal chain once it is consumed.   We can typecheck \pBufE as safe since no other process uses the internal channels making up the chain after deallocation.  More specifically, the typeability of  \pBufE  \wrt $\env_\text{int}$ of \eqref{eq:8cs} follows from  \eqref{eq:9cs} and the type judgement below:
\begin{equation*}
  \tproc{\emap{\ctit{out}}{\chantypW{\tV}},\, \emap{d}{\chantypW{\tVR}}, \,\emap{c_1}{\chantypUU{\tV,\tVR}{1}}\,}{\;\piIn{c_1}{(y,z)}{\;\piFree{c_1}{\;{\piOut{\ctit{out}}{y}{\;\bigl(\pB \piParal \piOutA{d}{z}\bigr)}}}}}
\end{equation*}
Note that by the typing rule \rtit{tIn} of  \figref{fig:typingrules}, we need to typecheck the continuation of the input process, \piFree{c_1}{\;{\piOut{\ctit{out}}{y}{\;\bigl(\pB \piParal \piOutA{d}{z}\bigr)}}} \wrt the type environment 
\begin{equation*}
  \emap{\ctit{out}}{\chantypW{\tV}},\, \emap{d}{\chantypW{\tVR}}, \,\emap{c_1}{\chantypU{\tV,\tVR}},\, \emap{y}{\tV},\, \emap{z}{\tVR}
\end{equation*}
where, in particular, $c_1$ is now assigned a \emph{unique} channel type.  According to the typing rule \rtit{tFree}, this suffices to safely type the respective deallocation of $c_1$.



\section{A Cost-Based Preorder}
\label{sec:cost-bisim}

We define our cost-based preorder as a \emph{bisimulation relation} that relates two
systems $\sys{\sV}{P}$ and $\sys{\sVV}{Q}$ whenever they have equivalent behaviour and when, in addition, 
$\sys{\sV}{P}$ is more efficient than $\sys{\sVV}{Q}$. We are interested in reasoning about \emph{safe} computations, aided by the type system described in Section~\ref{sec:language}.  For this reason, we  limit our analysis to instances of $\sys{\sV}{P}$ and $\sys{\sVV}{Q}$ that are \emph{well-typed}, \ie that there exist (consistent) environments $\envv,\envv'$ such that $\tproc{\envv}{\sys{\sV}{P}}$ and $\tproc{\envv'}{\sys{\sVV}{Q}}$.    In order to preserve safety, we also need to reason under the assumption of safe contexts. Again, we employ the type system described in Section~\ref{sec:language} and characterise the (safe) context through a type environment that typechecks it, $\env_\mathit{obs}$.  Thus our bisimulation relations take the form of a typed relation, indexed by type environments \cite{hennessy04behavioural}:
\begin{align}\label{eq:typed-relation}
  \env_\mathit{obs} & \vDash (\sys{\sV}{P}) \;\relR\; (\sys{\sVV}{Q})
\end{align}
Behavioural reasoning  for safe systems is achieved by ensuring that the overall type environment $(\env_\mathit{sys}, \env_\mathit{obs})$, consisting of the environment typing $\sys{\sV}{P}$ and $\sys{\sVV}{Q}$, say $\env_\mathit{sys}$, and the observer environment $\env_\mathit{obs}$,  is \emph{consistent} according to  Definition~\ref{def:consistency}. This means that there exists a global environment,
$\env_\mathit{global}$, which can be decomposed  into  $\env_\mathit{obs}$ and  $\env_\mathit{sys}$;  it also means that the observer process, which is universally quantified by our semantic interpretation \eqref{eq:typed-relation}, typechecks when composed in parallel with $P$, \resp $Q$ (see \rtit{tPar} of \figref{fig:typingrules}). 

There is one other complication worth highlighting regarding \eqref{eq:typed-relation}:  although  both systems $\sys{\sV}{P}$ and $\sys{\sVV}{Q}$ are related \wrt the
same \emph{observer}, $\env_\mathit{obs}$, they can each be typed under
\emph{different} typing environments.  For instance, consider the two clients
$\ptit{C}_0$ and $\ptit{C}_1$ we would like to relate from the introduction:
\begin{equation} \label{eq:1:bisim}
\begin{split}
  \ptit{C}_0 &\deftri \piRecX{\;\piAll{x_1}{\piAll{x_2}\;\piOut{\ctit{srv}_1}{x_1}\,\piIn{x_1}{y}{\piOut{\ctit{srv}_2}{x_2}\,\piIn{x_2}{z}{\piOut{\ctit{c}}{(y,z)}{w}}}}} \\
  \ptit{C}_1 &\deftri \piRecX{\;\piAll{x}{\;\piOut{\ctit{srv}_1}{x}\,\piIn{x}{y}{\piOut{\ctit{srv}_2}{x}\,\piIn{x}{z}{\piOut{\ctit{c}}{(y,z)}{w}}}}} 
\end{split}
\end{equation}

Even though, initially, they may be  typed by the same type environment, after a few steps, the derivatives of $\ptit{C}_0$ and $\ptit{C}_1$ must be typed under
different typing environments, because $\ptit{C}_0$ allocates two channels,
while $\ptit{C}_1$ only allocates a single channel. Our typed relations  allows for this by \emph{existentially quantifying} over the type environments typing the respective systems.   All this is achieved indirectly through the use of \emph{configurations}.  

\begin{defi}[Configuration]\label{def:configuration}
The triple \confESP\ is a configuration if and only if $\dom(\env) \subseteq \sV$ and there exist
some \envv such that  $(\env, \envv)$ is consistent and $\tproc{\envv}{\sysSP}$.
\end{defi}

\noindent Note that, in a configuration \confESP\ (where \env types some implicit observer): 
\begin{itemize}
\item $c\in(\dom(\env)\cup \names(P))$ implies $c\in\sV$ \ie \sV\ is a global resource environment accounting for both $P$ and $\env$.  
\item $c \in \sV$ and $c\not\in(\dom(\env)\cup \names(P))$ denotes a
  resource leak for channel $c$.
\item $c\not\in\dom(\env)$ implies that channel $c$ is not known to the observer; in some sense, this mimics name scoping in more standard \pic settings.
\end{itemize}

\begin{defi}[Typed Relation]\label{def:typed-relation}
A type-indexed relation $\relR$ relates systems under a observer
characterized by a context $\env$; we write
\begin{equation*}
\env \vDash \sys{M}{P} \; \relR \; \sys{N}{Q}
\end{equation*}
if $\relR$ relates $\confE{\sys{M}{P}}$ and $\confE{\sys{N}{Q}}$, and both
$\confE{\sys{M}{P}}$ and $\confE{\sys{N}{Q}}$ are configurations.
\end{defi}

\subsection{Labelled Transition System}
\label{sec:LTS}

In order to be able to reason coinductively over our typed relations, we define a labelled transition system (LTS) over configurations.  Apart from describing the behaviour of the system \sysSP in a configuration \confESP,  the LTS also models interactions between the system and an observer typed under \env.  Our LTS is also \emph{costed}, assigning a cost to each form of transition.

The costed LTS, whose actions take the form $\piRedDecCost{\;\mu\;}{k}$, is defined in \figref{fig:LTS}, in terms of a top-level rule, \rtit{lRen}, and a pre-LTS, denoted as $\piRedDecCostPre{\;\mu\;}{k}$. The rule \rtit{lRen} allows us to rename channels for transitions derived in the pre-LTS, as long as this renaming is invisible to the observer, and is  comparable to alpha-renaming of scoped bound names in the standard
\pic.  It relies on the renaming-modulo (observer) type environments given in Definition~\ref{def:renaming}.

\begin{defi}[Renaming  Modulo \env]\label{def:renaming}
Let  $\sigma_\env :\Names \mapsto \Names$ range over bijective name substitutions 
satisfying the constraint that
\begin{math}
  c\in\dom(\env) \text{ implies } c\sigma_\env = c\sigma_\env^{-1} = c
\end{math}.
 \end{defi}

The renaming introduced by \rtit{lRen}  allows us to relate the clients  $\ptit{C}_0$ and $\ptit{C}_1$ of \eqref{eq:1:bisim} \wrt an observer environment such as $\ctit{srv}_1 : \chantypW{\chantypO{\tV_1}},  \ctit{srv}_2 : \chantypW{\chantypO{\tV_2}}$ of \eqref{eq:2a} and some appropriate common set of resources \sV\ even when, after the initial channel allocations, the two clients communicate potentially different (newly allocated) channels on $\ctit{srv}_1$.  The rule is particularly useful when, later on, we need to also match the output of a new allocated channel on $\ctit{srv}_2$  from $\ptit{C}_0$ with the output on the previously allocated channel from $\ptit{C}_1$ on $\ctit{srv}_2$. The renaming-modulo observer environments function can be used for $\ptit{C}_1$ at that stage --- even though the client reuses a channel previously communicated to the observer --- because the respective observer information relating to that channel is lost,  \ie it is not in the domain of the observer environment; see discussion for \rtit{lOut} and \rtit{lIn} below for an explanation of how observers lose information.   This mechanism differs from standard scope-extrusion techniques for \pic which assume that,  once a name has been extruded, it remains forever known to the observer.  As a result, there are more opportunities for renaming in our calculus than there are in the standard \pic.

\begin{display}{LTS Process Moves}{fig:LTS}
 \small
\textbf{Costed Transitions and pre-Transitions}\\
\begin{mathpar}
\begin{prooftree}
  \confE{\bigl(\sysSP\bigr)\sigma_\env} \;\piRedDecCostPre{\;\mu\;}{k}\; \conf{\env'}{\sys{\sV'}{P'}}
  \justifiedBy{lRen}
  \confESP \;\piRedDecCost{\;\mu\;}{k}\; \conf{\env'}{\sys{\sV'}{P'}}
\end{prooftree}
 \\
\begin{prooftree}
\justifiedBy{lOut}
  \conf{\env,\emap{c}{\chantypA{\tVlst}}\;}{\;\sys{\sV
      \;}{\;\piOut{c}{\vec{d}}{P}}} 
\;\piRedDecCostPre{\actout{c}{\vec{d}}}{0}\;
  \conf{\env,\emap{c}{\chantyp{\tVlst}{\aV-1}}, \envmap{\vec{d}}{\vec{\tV}}\;}{\;\sys{\sV
      \;}{\;P}}
\end{prooftree}
\\
\begin{prooftree}
\justifiedBy{lIn}
  \conf{\env,\emap{c}{\chantypA{\tVlst}}, \emap{\vec{d}}{\vec{\tV}}\;}{\;\sys{\sV
      \;}{\;\piIn{c}{\vec{x}} {P}}} 
\;\piRedDecCostPre{\actin{c}{\vec{d}}}{0}\;
  \conf{\env,\emap{c}{\chantyp{\tVlst}{\aV-1}}\;}{\;\sys{\sV
      \;}{\;P \subC{\vec{d}}{\vec{x}}}}
\end{prooftree}
\\
\begin{prooftree}
  \conf{\env_1}{\sysS{P}} 
\;\piRedDecCostPre{\actout{c}{\vec{d}}}{0}\;
  \conf{\env'_1}{\sysS{P'}} 
\qquad 
  \conf{\env_2}{\sysS{Q}} 
\;\piRedDecCostPre{\actin{c}{\vec{d}}}{0}\;
  \conf{\env'_2}{\sysS{Q'}} 
\justifiedBy{lCom-L}
  \conf{\env}{\sysS{P \piParal Q}} 
\;\piTauCostPre{0}\;
  \conf{\env} {\sysS{P' \piParal Q'}} 
\end{prooftree}
\\
\begin{prooftree}
  \confESP 
\;\piRedDecCostPre{\;\mu\;}{k}\;
  \conf{\env'}{\sys{\sV'}{P'}} 
\justifiedBy{lPar-L}
  \confE{\sysS{P \piParal Q}} 
\;\piRedDecCostPre{\;\mu\;}{k}\;
  \conf{\env'}{\sys{\sV'}{P' \piParal Q}} 
\end{prooftree}
\\
\begin{prooftree}
  \env \struct \env'
\justifiedBy{lStr}
  \conf{\env}{\sysSP}
\;\piRedDecCostPre{\;\actenv\;}{0}\;
  \conf{\env'}{\sysSP}
\end{prooftree} 
\qquad\qquad
\begin{prooftree}
\strut
\justifiedBy{lRec}
  \confE{\sysS{\piRecX{P}}}
\;\piTauCostPre{0}\;
  \confE{\sysS{P\subC{\piRecX{P}}{w}}}
\end{prooftree}
\\
\begin{prooftree}
\strut
\justifiedBy{lThen}
  \confE{\sys{\sV,c}{\piIf{c=c}{P}{Q}}}
 \;\piTauCostPre{0}\;
  \confEP{\sV,c
  }
\end{prooftree}
\\
\begin{prooftree}
\strut
\justifiedBy{lElse}
  \confE{\sys{\sV,c,d}{\piIf{c=d}{P}{Q}}} 
\;\piTauCostPre{0}\;
  \confE{\sys{\sV,c,d
    }{Q}}
\end{prooftree}\\
  \begin{prooftree}
    \strut  
    \justifiedBy{lAll}
    \confE{\sys{\sV}{\piAll{x}{P}}}
\;\piTauCostPre{+1}\;
\confE{\sys{\sV,c}{P\subC{c}{x}}}
  \end{prooftree}
\qquad\qquad
\begin{prooftree}
\strut
\justifiedBy{lAllE}
  \conf{\env}{\sys{\sV}{P}}
\;\piRedDecCostPre{\actall}{+1}\;
  \conf{\env, \envmap{c}{\chantypU{\tVlst}}}{\sys{\sV,c}{P}}
\end{prooftree}
\\
  \begin{prooftree}
    \strut
    \justifiedBy{lFree}
    \confE{\sys{\sV,c}{\piFree{c}{P}}}
\;\piTauCostPre{-1}\;
\confE{\sys{\sV}{P}} 
  \end{prooftree}
\qquad\qquad
\begin{prooftree}
\strut
\justifiedBy{lFreeE}
  \conf{\env, \envmap{c}{\chantypU{\tV}}}{\sys{\sV,c}{P}}
\;\piRedDecCostPre{\actfree{c}}{-1}\;
  \conf{\env}{\sys{\sV}{P}}
\end{prooftree}
\\
\end{mathpar}
\textbf{Weak (Cost-Accumulating) Transitions}\\
\begin{mathpar}
 \begin{prooftree}
\confESP \piRedDecCostPad{\mu}{k} \conf{\envv}{\sys{\sVV}{Q}}
\justifiedBy{wTra}
\confESP \piRedWDecCostPad{\mu}{k} \conf{\envv}{\sys{\sVV}{Q}}
\end{prooftree}
\qquad\qquad
\begin{prooftree}
  \confESP 
\piRedDecCostPad{\tau}{l} 
  \conf{\env'}{\sys{\sV'}{P}} 
\piRedWDecCostPad{\mu}{k} 
  \conf{\env''}{\sys{\sVV}{Q}}
\justifiedBy{wLeft}
\confESP \piRedWDecCostPad{\mu}{(l+k)} \conf{\env''}{\sys{\sVV}{Q}}
\end{prooftree}
\\
\begin{prooftree}
  \confESP 
\piRedWDecCostPad{\mu}{l} 
  \conf{\env'}{\sys{\sV'}{P}} 
\piRedDecCostPad{\tau}{k} 
  \conf{\env''}{\sys{\sVV}{Q}}
\justifiedBy{wRight}
\confESP \piRedWDecCostPad{\mu}{(l+k)} \conf{\env''}{\sys{\sVV}{Q}}
\end{prooftree}
\end{mathpar}
\end{display}

To ensure that only safe interactions are specified, the (pre-)LTS must 
be able to reason compositionally  about resource usage between the  process, $P$, and the observer, \env.   We therefore imbue our type assumptions from \secref{sec:language} with a \emph{permission semantics}, in the style of \cite{TerauchiAiken08,FraRatSas11}. Under this interpretation, type assumptions constitute  \emph{permissions}  describing the respective usage of resources.  Permissions are woven into the behaviour of configurations giving them an \emph{operational} role: they may either restrict usage or privilege processes to use resources in special ways.  In a configuration, the observer and the process each \emph{own} a set of permissions and 
may \emph{transfer} them
to one another  during communication.  The consistency requirement of a configuration ensures that the guarantees given by permissions owned by the observer are not in conflict with those given by permissions owned by the configuration process, and viceversa.

To understand how the pre-LTS deals with permission transfer and compositional resource usage, consider the rule for output, (\rtit{lOut}).  Since we employ the type system of \secref{sec:language} to ensure safety, this rule models the typing rule for output (\rtit{tOut}) on the part of the process, and  the typing rule for input (\rtit{tIn}) on the part of the observer.  Thus, apart from describing the communication of values $\vec{d}$ from the configuration process to the observer on channel $c$, it also captures permission transfer between the two parties, mirroring the type assumption usage in \rtit{tOut} and \rtit{tIn}. More specifically, rule (\rtit{lOut}) employs  the operation $\envmap{c}{\chantyp{\vec{\tV}}{\aV-1}}$ of \figref{fig:typingrules} so as to concisely describe the three variants of the output rule:
\begin{equation}\label{eq:7:bisim}
  \begin{split}
    &\begin{prooftree}
      \justifiedBy{lOutU}
      \conf{\env,\emap{c}{\chantypUU{\tVlst}{i+1}}\;}{\;\sys{\sV
          \;}{\;\piOut{c}{\vec{d}}{P}}}
      \quad\piRedDecCostPre{\;\actout{c}{\vec{d}}\;}{0}\quad
      \conf{\env,\emap{c}{\chantypUU{\tVlst}{i}},
        \envmap{\vec{d}}{\vec{\tV}}\;}{\;\sys{\sV
          \;}{\;P}}
    \end{prooftree}\\[0.5em]
    &\begin{prooftree}
      \justifiedBy{lOutA}
      \conf{\env,\emap{c}{\chantypO{\tVlst}}\qquad}{\;\sys{\sV
          \;}{\;\piOut{c}{\vec{d}}{P}}}
      \quad\piRedDecCostPre{\;\actout{c}{\vec{d}}\;}{0}\quad
      \conf{\env,\phantom{\emap{c}{\chantypUU{\tVlst}{i}},}
        \envmap{\vec{d}}{\vec{\tV}}\;}{\;\sys{\sV
          \;}{\;P}}
    \end{prooftree}\\[0.5em]
    &\begin{prooftree}
      \justifiedBy{lOutW}
      \conf{\env,\emap{c}{\chantypW{\tVlst}}\quad\;\;}{\;\sys{\sV
          \;}{\;\piOut{c}{\vec{d}}{P}}}
      \quad\piRedDecCostPre{\;\actout{c}{\vec{d}}\;}{0}\quad
      \conf{\env,\emap{c}{\chantypW{\tVlst}},\;\;
        \envmap{\vec{d}}{\vec{\tV}}\;}{\;\sys{\sV
          \;}{\;P}}
    \end{prooftree}\\[0.5em]
  \end{split}
\end{equation}
The first output rule variant, \rtit{lOutU}, deals with the case where the observer owns a unique-after-$(i\!+\!1)$ permission for channel $c$.  \defref{def:configuration} implies that the process in the configuration is well-typed (\wrt some environment) and, since the process is in a position to output on channel $c$,  rule \rtit{tOut} must have been used to type it.  This typing rule, in turn, states that the type assumptions relating to the values communicated,  $\envmap{\vec{d}}{\vec{\tV}}$, must have been owned by the process and consumed by the output operation. Dually, since the observer is capable of inputting on $c$, rule \rtit{tIn} must have been used to type it,\footnote{More specifically, \rtit{tInU} of \eqref{eq:8:lang}.} 
which states that the continuation (after the input) assumes the use the assumptions  $\envmap{\vec{d}}{\vec{\tV}}$.  Rule  \rtit{lOutU} models these two usages  operationally as the \emph{explicit transfer} of the permissions $\envmap{\vec{d}}{\vec{\tV}}$ from the process to the observer.   

The rule also models the \emph{implicit transfer} of permissions between the observer and the output 
process.  More precisely,  \defref{def:configuration} requires that the process is typed \wrt an environment that \emph{does not conflict with} the observer environment, which  implies that the process environment must have (necessarily) used an affine permission, $\emap{c}{\chantypO{\tVlst}}$, for outputting on channel $c$.\footnote{This implies that \rtit{tOutA} of \eqref{eq:7:lang} was used when typing the process} In fact,  any other type of permission 
 would conflict with the unique-after-$(i\!+\!1)$ permission for channel $c$ owned by the observer.  Moreover, through the guarantee given by the permission used, \emap{c}{\chantypUU{\tVlst}{i+1}},  the observer knows that, after the communication, it is one step closer towards gaining exclusive permission for  channel $c$. Rule \rtit{lOutU} models all this as the (implicit)  transfer of the affine permission  $\emap{c}{\chantypO{\tVlst}}$ from the process to the observer, updating the observer's permission for $c$ to $\chantypUU{\tVlst}{i}$ --- note that two permissions $\emap{c}{\chantypUU{\tVlst}{i+1}},\emap{c}{\chantypO{\tVlst}}$ can be consolidated as $\emap{c}{\chantypUU{\tVlst}{i}}$ using the structural rules \rtit{tJoin} and \rtit{pUnq} of \figref{fig:typingrules}. 

The second output rule variant of \eqref{eq:7:bisim}, \rtit{lOutA}, is similar to the first when  modelling the explicit transfer of permissions $\envmap{\vec{d}}{\vec{\tV}}$ from the process to the observer. However, it describes a different implicit transfer of permissions, since the observer uses an affine permission to input from the configuration process on channel $c$.  The rule caters for two possible subcases. In the first case, the process could have used  a unique-after-$(i\!+\!1)$
permission when typed using \rtit{tOut}: this constitutes a  dual case to that of rule \rtit{lOutU}, and the rule models the implicit transfer of the affine permission $\emap{c}{\chantypO{\tVlst}}$ in the \emph{opposite} direction, \ie from the observer to the process. In the second case, the process could have used an affine or an unrestricted permission instead,  which does not result in any implicit permission transfer, but merely the consumption of  affine permissions. 
Since the environment on the process side is existentially quantified in a configuration, this difference is abstracted away and the two subcases are handled by the same rule variant. Note that, in the
extreme case where the observer affine permission is the only one relating to channel $c$, 
the observer loses all knowledge of channel $c$.

The explicit permission transfer for \rtit{lOutW} of \eqref{eq:7:bisim},  is identical to the other two rule variants. The use of an unrestricted permission for $c$ from the part of the observer,  $\emap{c}{\chantypW{\tVlst}}$, implies that the output process could have either used an affine or an unrestricted permission---see \eqref{eq:2:lang}.  In either case, there is no implicit permission transfer involved.  Moreover, the observer permission is not consumed since it is unrestricted.

The pre-LTS rule \rtit{lIn} can also be expanded into three rule variants, and models analogous permission transfer between the observer and the input process. Importantly, however, the \emph{explicit} permission transfer described is \emph{in the opposite direction} to that of \rtit{lOut}, namely from the observer to the input process.  As in the case of \rtit{lOutA} of \eqref{eq:7:bisim}, the permission transfer from the observer to the input process may result in the  observer losing all knowledge relating  to the channels communicated, $\vec{d}$.

In order to allow an internal communication step  through either \rtit{lCom-L}, or its dual \rtit{lCom-R} (elided), the left
process should be considered to be part of the ``observer'' of the right
process, and vice versa.  However, it is not necessary to be quite so precise;
we can follow \cite{Hennessy07} and consider an arbitrary observer instead.  More explicitly, the rule states that if we can find observer environments ($\Gamma_1$ and $\Gamma_2$) to induce the respective input and output actions from separate constituent processes making up the system,  we can then express these separate interactions as a single synchronous interaction; since this interaction is internal, it is independent of the environment representing the observer in the conclusion,  $\Gamma$.  See \cite{Hennessy07} for more justification.


In our LTS, both the process (\rtit{lAll, \rtit{lFree}}) and the observer (\rtit{lAllE},
\rtit{lFreeE}) can allocate and deallocate memory. Finally, since the observer
is modelled exclusively by the permissions it owns, we must allow the observer
to split these permissions when necessary (\rtit{lStr}). The only rules that may alter the observer environment are those corresponding to external actions \ie \rtit{\rtit{lIn}}, \rtit{\rtit{lOut}}, \rtit{lAllE}, \rtit{lFreeE} and \rtit{lStr}.  The remaining axioms in the pre-LTS model reduction rules from Figure~\ref{fig:reduction-semantics} and should be self-explanatory; note that, as in the reduction semantics, the only actions carrying a cost are those describing allocation and
deallocation, where the respective costs associated are inherited directly from the reduction semantics of \secref{sec:language}.

In \figref{fig:LTS} we also specify weak costed transitions for configurations, based on the transitions of our LTS (rule \rtit{wTra}).  As is standard, the relation denotes actions padded by $\tau$-transitions to the left and right.  However, it also \emph{accumulates} the costs of the respective transitions into one aggregate cost for the entire weak action (rules \rtit{wLeft} and \rtit{wRight}).

\medskip

Technically, the pre-LTS is defined over triples $\env, M, P$ rather than
configurations $\confESP$, but we can prove that the pre-LTS rules preserve
the requirements for such triples to be configurations; see Lemma~\ref{lem:subject-reduction}.  

\begin{lem}[Transition and Structure]\label{lem:transition-structure}
  \begin{math}
  \confESP \piRedDecCostPre{\;\;\mu\;\;}{k} \conf{\env'}{\sys{\sV'}{P'}} 
  \text{ and for } \envv \text{ consistent } \\\tproc{\envv}{\sysSP} \;\text{ implies the cases:}
\end{math}

\begin{description}
\item[If $\mu = \actout{c}{\vec{d}}$]  $\sV=\sV'$,  $k=0$, $P\piStruct \piOut{c}{\vec{d}}P_1 \piParal P_2$,  $P'\piStruct P_1 \piParal P_2$\; and\;  
\begin{math}
\env= (\env'', \emap{c}{\chantypA{\tVlst}}), \\ \env' = (\env'', \emap{c}{\chantyp{\tVlst}{\aV-1}}, \lst{\envmap{d}{\tV}}) 
\end{math}   \;and\; 
\begin{math} 
\envv  \struct (\envv', \envmap{c}{\chantyp{\tVlst}{\aVV}}, \lst{\envmap{d}{\tV}}),\;  
\end{math} 
  \begin{math}  
{\tproc{(\envv', \envmap{c}{\chantyp{\tVlst}{\aVV-1}})}{P'
  }}
\end{math}\\  for some $P_1, P_2, \env'', b, \tVlst$ and  $\envv'$.

\item[If $\mu = \actin{c}{\vec{d}}$]  $\sV=\sV'$, $k=0$, $P\piStruct \piIn{c}{\vec{x}}P_1 \piParal P_2$, $P'\piStruct P_1\subC{\vec{d}}{\vec{x}} \piParal P_2$ \; and\\
\begin{math}
\env  = (\env'', \emap{c}{\chantypA{\tVlst}}, \lst{\emap{d}{\tV}}), \env' = (\env'',\emap{c}{\chantyp{\tVlst}{\aV-1}})
\end{math} \; and\;
\begin{math}
\envv  \struct (\envv', \envmap{c}{\chantyp{\tVlst}{\aVV}}),
\end{math}\\   
\begin{math}   
 {\tproc{(\envv', \envmap{c}{\chantyp{\tVlst}{\aVV-1}}, \lst{\envmap{d}{\tV}})}{P'}}
\end{math}
\quad  for some $P_1, P_2, \env'', b, \tVlst$ and  $\envv'$.

\item[If $\mu = \tau$] Either of three cases hold :
  \begin{itemize}
    \item  $\sV=\sV'$, $k=0$  \;and\; $\env=\env'$ \;and\; $\tproc{\envv}{P'}$ or;
    \item $\sV=(\sV',c)$,\; $k=-1$ and $P \piStruct \piFree{c}{P_1}\piParal P_2$, $P'\piStruct P_1\piParal P_2$, $\env=\env'$ and ${\envv\struct \envv',\emap{c}{\chantypU{\tVlst}}}$ \; where\;$\tproc{\envv'}{P'}$ (for some $P_1, P_2, \tVlst$ and  $\envv'$)  or;
     \item$\sV'=(\sV,c)$,\; $k=+1$\; and\;$P \piStruct \piAll{x}{P_1}\piParal P_2$, $P'\piStruct P_1\subC{c}{x}\piParal P_2$ \; and\;$\env=\env'$ and  $\envv\struct \envv'$ and $\tproc{\envv',\emap{c}{\chantypU{\tVlst}}}{P'}$ (for some $P_1, P_2, \tVlst$ and  $\envv'$)
  \end{itemize}
\item[If $\mu = \actfree{c}$]  $\sV=(\sV',c)$, $k=-1$  \; and\;$\env = \env',\emap{c}{\chantypU{\tVlst}}$ \;and\;   $P=P'$ for some \tVlst.
\item[If $\mu = \actall$]  $\sV'=(\sV,c)$, $k=+1$  \; and\; $\env,\emap{c}{\chantypU{\tVlst}} = \env'$ \;and\;   $P=Q$  for some \tVlst.
\item[If $\mu=\actenv$] $\env \struct \env'$, $\sV =\sV'$, $k=0$ and $P=P'$
\end{description}
\end{lem}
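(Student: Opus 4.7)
The plan is to proceed by structural induction on the derivation of the pre-transition $\confESP \piRedDecCostPre{\mu}{k} \conf{\env'}{\sys{\sV'}{P'}}$, with case analysis on the last rule applied. The six clauses of the conclusion are organised by the shape of $\mu$, and each clause essentially records what can be read off from the axiom that introduces $\mu$. The inductive cases --- (\rtit{lPar-L}), (\rtit{lPar-R}), (\rtit{lCom-L}), (\rtit{lCom-R}) --- then contribute only the rearrangement of this information under parallel contexts, up to the structural equivalence $\piStruct$.

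First I would dispatch the axiomatic cases. The rules (\rtit{lFreeE}), (\rtit{lAllE}) and (\rtit{lStr}) validate the corresponding clauses by direct inspection. The $\tau$-axioms (\rtit{lRec}), (\rtit{lThen}), (\rtit{lElse}) fall into the first $\tau$-subcase (with $\env=\env'$, $\sV=\sV'$, $k=0$), where $\tproc{\envv}{P'}$ follows by inverting the typing of $P$ (for recursion, substituting $\piRecX{P}$ for $w$ in a judgement derivable from \rtit{tRec}). Rules (\rtit{lAll}) and (\rtit{lFree}) populate the third and second $\tau$-subcases respectively, with the extended (resp.\ contracted) $\envv$ obtained by inverting \rtit{tAll}/\rtit{tFree}. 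The two substantive base cases are (\rtit{lOut}) and (\rtit{lIn}). For output, the shape of $P$ is exactly as required (taking $P_1$ to be the continuation and $P_2 \deftri \piNil$), and the evolution of $\env$ matches the statement by definition of the $(\aV{-}1)$ operation. The decomposition of $\envv$ is obtained by inverting $\tproc{\envv}{\piOut{c}{\vec{d}}P_1}$: only the three variants of (\rtit{tOut}) displayed in \eqref{eq:7:lang} apply, each yielding the required $\envv \struct (\envv', \envmap{c}{\chantyp{\tVlst}{\aVV}}, \lst{\envmap{d}{\tV}})$ via (\rtit{tStr}) and the structural rules (\rtit{tJoin})/(\rtit{pUnq}); the input case is dual.

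For the inductive cases, (\rtit{lPar-L}) applied to $\sysS{P \piParal Q}$ first requires splitting $\tproc{\envv}{P \piParal Q}$ via (\rtit{tPar}) as $\envv \struct \envv_1,\envv_2$ with $\tproc{\envv_1}{P}$ and $\tproc{\envv_2}{Q}$. The induction hypothesis then supplies a decomposition $P \piStruct \piOut{c}{\vec{d}}P_1 \piParal P_2$ (or the analogue for the other labels); I set the new right component to $P_2 \piParal Q$ and invoke \rtit{sAss}/\rtit{sCom} of \figref{fig:reduction-semantics} to match the required shape, recombining the residual typings by (\rtit{tPar}) together with (\rtit{tJoin})/(\rtit{tCon}). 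The case (\rtit{lCom-L}) targets the first $\tau$-subcase: applying the IH to both premises produces complementary $\piOut{c}{\vec{d}}$ and $\piIn{c}{\vec{x}}$ summands inside $P$ and $Q$, and the explicit transfer of $\lst{\envmap{d}{\tV}}$ cancels internally when the two typings are recombined, leaving $\env=\env'$.

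The main obstacle I anticipate lies in the typing bookkeeping. The output, input and composition cases all require decomposing $\envv$ along the shared channel $c$ into a permission used by the subject term and a residual that must be reconciled with the observer environment $\env$ so as to preserve consistency. This reconciliation is only valid up to $\struct$, so piecing the typing of $P'$ back together after each inductive call demands careful chaining of (\rtit{tStr}), (\rtit{tJoin}), (\rtit{tCon}) and (\rtit{pUnq}), together with the subtyping and revision rules where necessary. Beyond this administrative layer, the argument is essentially a rule-by-rule verification.
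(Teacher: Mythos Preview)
Your proposal is correct and follows exactly the approach the paper takes: the paper's proof is simply ``By rule induction on $\confESP \piRedDecCostPre{\;\mu\;}{k} \conf{\env'}{\sys{\sV'}{P'}}$'', and your outline is a faithful (and considerably more detailed) unpacking of that rule induction, including the correct handling of the inductive cases via environment splitting for \rtit{tPar} and reassembly through the structural rules.
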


\begin{proof} By rule induction on $\confESP \piRedDecCostPre{\;\mu\;}{k} \conf{\env'}{\sys{\sV'}{P'}}$
\end{proof}

\begin{lem}[Subject reduction]  
\label{lem:subject-reduction}
If $\confESP$ is a configuration and $\confESP \;\piRedDecCostPre{\;\mu\;}{k}\; \conf{\envv}{\sys{\sVV}{Q}}$ then $\conf{\envv}{\sys{\sVV}{Q}}$ is also a configuration.
\end{lem}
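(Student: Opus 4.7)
The plan is to proceed by rule induction on the derivation of the pre-transition $\confESP \piRedDecCostPre{\;\mu\;}{k} \conf{\envv}{\sys{\sVV}{Q}}$, leveraging Lemma~\ref{lem:transition-structure} at each step to extract the precise shape of $\env'$, $\sV'$ and $Q$, together with a concrete form for a witness process environment. For each case, being a configuration imposes two obligations that must be discharged: (i) $\dom(\env') \subseteq \sV'$, and (ii) there exists a consistent $\envv'$ with $\tproc{\envv'}{\sys{\sV'}{Q}}$.

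The internal computation cases (lRec, lThen, lElse, lPar-L/R) and the structural rule (lStr) are largely bookkeeping. Here $\sV'=\sV$ and $\env' \piStruct \env$, so (i) follows immediately; for (ii) the process-side subject reduction already established in~\cite{EFH:uniqueness:journal:12} for the reduction semantics applies verbatim, since these pre-LTS rules are mirror images of the reductions from \figref{fig:reduction-semantics}. For lRen, bijective renaming by $\sigma_\env$ that fixes $\dom(\env)$ preserves both consistency of environments and well-typedness of the process, so the result transfers back across the renaming.

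The resource-handling cases are where the resource set $\sV$ and the observer/process domains shift in lockstep. For lAll, $\sV'=(\sV,c)$ and $\env'=\env$, and Lemma~\ref{lem:transition-structure} delivers $\tproc{(\envv',\emap{c}{\chantypU{\tVlst}})}{P'}$; since $c\notin\sV\supseteq\dom(\env)$, adjoining a unique permission for $c$ on the process side is consistent with $\env$ (no conflicting assumption exists). Rule lFree is dual: a unique assumption for $c$ is removed from the process environment, and $c$ leaves $\sV$, preserving both (i) and (ii). Rules lAllE and lFreeE are the observer-side analogues: they move a unique permission into or out of $\env$, synchronised with an update to $\sV$, so consistency is preserved because unique permissions in $\env$ can only coexist with the absence of any conflicting permissions on the process side (by splitting rule \rtit{pUnq} and Definition~\ref{def:consistency}).

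The main obstacle, and the case requiring most care, is communication (lOut, lIn, and by composition lCom-L/R), where permissions cross the process/observer boundary. Consider lOut: Lemma~\ref{lem:transition-structure} gives $\env=(\env'',\emap{c}{\chantypA{\tVlst}})$, $\env'=(\env'',\emap{c}{\chantyp{\tVlst}{\aV-1}},\lst{\envmap{d}{\tV}})$, and a witness $\envv \piStruct \envv',\emap{c}{\chantyp{\tVlst}{\aVV}},\lst{\envmap{d}{\tV}}$ with $\tproc{(\envv',\emap{c}{\chantyp{\tVlst}{\aVV-1}})}{P'}$. For (ii), the new joint global environment $(\env',\envv',\emap{c}{\chantyp{\tVlst}{\aVV-1}})$ can be re-assembled, using \rtit{tJoin} and \rtit{pUnq}, into the same (up to $\struct$) partial map that witnessed consistency of $(\env,\envv)$: the decrements on the two sides together account for exactly one consumption of $\emap{c}{\chantypO{\tVlst}}$ from the original splitting, while the $\lst{\envmap{d}{\tV}}$ permissions merely migrate from the process side to $\env'$. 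For (i), each $d_j$ either already lay in $\dom(\env)\subseteq\sV$ or in $\names(P)$, in which case the consistent global environment typing the original configuration had $d_j\in\sV$ too. Case lIn is fully symmetric, with the explicit transfer going the opposite way. Finally, lCom-L is obtained by pairing an lOut- and an lIn-derivation through \rtit{tPar}, so the resulting configuration inherits the required properties from the two sub-configurations supplied by the induction hypothesis.
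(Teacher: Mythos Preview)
Your proposal is essentially on the same track as the paper's proof, which is a terse ``follows from Lemma~\ref{lem:transition-structure} by case analysis on $\mu$''. Your more detailed case-by-case treatment is a reasonable expansion of that sketch. Two points are worth tightening, however.

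First, the lemma concerns the \emph{pre}-LTS only, so \rtit{lRen} is not among the rules you need to consider; it is the single top-level rule that builds the full LTS from the pre-LTS, and it does not appear in any derivation of $\piRedDecCostPre{\mu}{k}$.

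Second, your handling of \rtit{lCom-L} via the ``induction hypothesis'' on the two sub-derivations does not go through as stated: the premises of \rtit{lCom-L} use \emph{arbitrary} environments $\env_1,\env_2$, so the triples $\conf{\env_1}{\sysS{P}}$ and $\conf{\env_2}{\sysS{Q}}$ need not be configurations, and the hypothesis does not apply to them. The clean route, which is what the paper's appeal to Lemma~\ref{lem:transition-structure} encodes, is to observe that for $\mu=\tau$ with $k=0$ the lemma already delivers $\tproc{\envv}{P'}$ directly (this in turn rests on the process-level subject reduction of \cite{EFH:uniqueness:journal:12}), and since $\env'=\env$ and $\sV'=\sV$ both configuration conditions are immediate. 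More generally, once you invoke Lemma~\ref{lem:transition-structure} you no longer need rule induction at all: a case analysis on the label $\mu$ suffices, because the lemma has already absorbed the inductive structure (including \rtit{lPar}). Framing the argument as rule induction and then appealing to Lemma~\ref{lem:transition-structure} at each node is redundant and, in the \rtit{lCom} case, actively misleading.
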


\begin{proof}
   We assume that $\dom(\env)\subseteq \sV$ and that there exists $\envv$ such that  $\env,\envv$ is consistent and that $\tproc{\envv}{\sysSP}$.  The rest of the proof  follows from Lemma~\ref{lem:transition-structure} (Transition and Structure), by case analysis of $\mu$. 
\end{proof}

As a consistency  check, we can also show that our LTS semantics is in accordance with the reduction semantics presented in \ref{sec:language}.  In particular, $\tau$-transitions correspond to reductions modulo renaming and process structural equivalence. 

\begin{lem}[Reduction and Silent Transitions] \label{lem:reduc-and-transitions}\quad
  \begin{enumerate}
  \item $\sys{\sV}{P}\piRed_k \sys{\sV'}{P'}$ implies $\confE{\sys{\sV}{P}} \piRedDecCost{\tau}{k} \confE{\sys{\sV'}{P''}}$ for arbitrary $\env$ where $P''\piStruct P'$.
  \item  $\confE{\sys{\sV}{P}} \piRedDecCost{\tau}{k} \conf{\envv}{\sys{\sV'}{P'}}$ implies  $(\sys{\sV}{P})\sigma_\env \piRed_k \sys{\sV'}{P'}$ for some $\sigma_\env$.
  \end{enumerate}
\end{lem}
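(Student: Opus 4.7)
The plan is to prove both parts by structural induction on the derivation of the respective transition, invoking Lemma~\ref{lem:transition-structure} to recover the shape of processes and environments from pre-LTS steps whenever needed.

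For part (1), I induct on the derivation of the reduction $\sysSP \piRedCost{k} \sys{\sV'}{P'}$ and case-split on the last rule applied. The axioms \rtit{rRec}, \rtit{rThen}, \rtit{rElse}, \rtit{rAll}, \rtit{rFree} are matched directly by the pre-LTS rules \rtit{lRec}, \rtit{lThen}, \rtit{lElse}, \rtit{lAll}, \rtit{lFree} with identical cost annotations, and the resulting pre-LTS step is lifted to the LTS by \rtit{lRen} with the identity substitution. For \rtit{rCom}, I construct a \rtit{lCom-L} derivation by choosing observer environments $\env_1$ and $\env_2$ that carry appropriate affine or unrestricted permissions for the communicating channel $c$ on the output and input premises respectively; this split is always feasible from the process typing environment supplied by Definition~\ref{def:configuration} via the structural rule \rtit{tCon}. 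The implicit context rule \rtit{rCtx} is mirrored by iterated application of \rtit{lPar-L} or \rtit{lPar-R}.

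The main obstacle in part (1) is rule \rtit{rStr}, since the LTS has no primitive that absorbs structural congruence on processes at an arbitrary point in a derivation. This is precisely why the conclusion only asserts $P'' \piStruct P'$ rather than equality. The inductive hypothesis applied to the inner reduction yields an LTS transition to a process $\piStruct$-equivalent to the inner residual, which I then compose transitively with the outer equivalences supplied by the premises of \rtit{rStr}. Here I use that $\piStruct$ is a congruence for parallel composition so that the slack propagates through any surrounding parallel context introduced by \rtit{rCtx}.

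For part (2), I induct on the derivation of $\confE{\sysSP} \piRedDecCost{\tau}{k} \conf{\envv}{\sys{\sV'}{P'}}$. Every such derivation ends in \rtit{lRen} wrapping a pre-LTS $\tau$-derivation, and the bijection supplied by \rtit{lRen} is the required witness $\sigma_\env$. Inducting on the pre-LTS derivation, the cases \rtit{lRec}, \rtit{lThen}, \rtit{lElse}, \rtit{lAll}, \rtit{lFree} each pair with \rtit{rRec}, \rtit{rThen}, \rtit{rElse}, \rtit{rAll}, \rtit{rFree} with identical costs. For \rtit{lCom-L} (dually \rtit{lCom-R}), Lemma~\ref{lem:transition-structure} applied to the two premise pre-transitions yields matching output and input subprocesses at the top level of the parallel operands up to $\piStruct$, so that \rtit{rCom} combined with \rtit{rCtx} and \rtit{rStr} produces the required reduction of cost $0$. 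The rule \rtit{lPar-L} is mirrored by \rtit{rCtx}, and \rtit{lStr} is labelled by $\actenv$ and therefore cannot occur in a $\tau$-derivation. A final technical verification is that $\sigma_\env$ commutes with all the syntactic and resource operations exercised by the pre-LTS derivation, which follows from it being bijective and the identity on $\dom(\env)$, ensuring that the reduction $(\sysSP)\sigma_\env \piRedCost{k} \sys{\sV'}{P'}$ extracted from the pre-LTS derivation is well-defined.
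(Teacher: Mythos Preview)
Your approach matches the paper's own proof (rule induction on the reduction and on the LTS derivation respectively), and your handling of the base axioms, \rtit{rCtx}/\rtit{lPar}, and \rtit{lCom} is sound. There is, however, a genuine gap in your treatment of \rtit{rStr} in part~(1).

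You correctly identify \rtit{rStr} as the obstacle, but your remedy---``compose transitively with the outer equivalences''---only closes the \emph{target} side. Concretely: from premises $P \piStruct P_1$, $\sysS{P_1} \piRedCost{k} \sys{\sV'}{Q_1}$, and $Q_1 \piStruct P'$, the inductive hypothesis gives you a transition $\confE{\sysS{P_1}} \piRedDecCost{\tau}{k} \confE{\sys{\sV'}{Q_2}}$ with $Q_2 \piStruct Q_1$. Transitivity then yields $Q_2 \piStruct P'$, which is fine. But the transition you have still \emph{starts} at $P_1$, not at $P$; nothing in the LTS of \figref{fig:LTS} lets you rewrite the source process modulo $\piStruct$ (the rule \rtit{lStr} manipulates the observer environment, not the process). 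To bridge this you need the separate fact that $P \piStruct P_1$ and $\confE{\sysS{P_1}} \piRedDecCost{\tau}{k} \conf{\env'}{\sys{\sV'}{Q_2}}$ imply $\confE{\sysS{P}} \piRedDecCost{\tau}{k} \conf{\env'}{\sys{\sV'}{Q_3}}$ with $Q_3 \piStruct Q_2$. This is exactly (one direction of) Lemma~\ref{lem:bisim-struct-equiv}, proved independently by induction on the derivation of $\piStruct$; you should either invoke it explicitly or inline that induction.

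A minor remark on your \rtit{rCom} case: appealing to Definition~\ref{def:configuration} and \rtit{tCon} is unnecessary and slightly misleading, since the statement is for \emph{arbitrary} $\env$ and the triple need not be a configuration. The point is simpler: in \rtit{lCom-L} the environments $\env_1,\env_2$ in the premises are unconstrained by the conclusion's $\env$, so you may freely pick any $\env_1$ containing an assumption $\envmap{c}{\chantypA{\tVlst}}$ for the output premise (and similarly for the input), regardless of typing.
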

\begin{proof}
  By rule induction on $\sys{\sV}{P}\piRed_k \sys{\sV'}{P'}$ and $\confE{\sys{\sV}{P}} \piRedDecCostPad{\tau}{k} \conf{\envv}{\sys{\sV'}{P'}}$.   
\end{proof}


\begin{exa}\label{ex:buff-trans}  Recall the buffer implementation \pBuf from \secref{sec:case-study} and the respective external environment \envExt defined in \secref{sec:typab-behav-pbuf}.
  The transition rules of \figref{fig:LTS} allow us to derive the
  following behaviour for the configuration
  \conf{\envExt}{\sys{\sV,c_1}{\pBuf}} (where $\ctit{in}, \ctit{out},
  b, d \in \sV$):
  \begin{align}
    \label{eq:1cs}
    \conf{\envExt}{\sys{\sV,c_1}{\pBuf}} &
    \;\piDRedDecCost{\actin{\ctit{in}}{v_1}}{0}\;
    \conf{\envExt}{\sys{\sV,c_1}{ \left(\!\begin{array}{l}
            \piAll{z}{\bigl(\pF\piParal \piOutA{b}{z} \piParal  \piOutA{c_1}{(v_1,z)}\bigr)} \\
            \piParalS\;
            \piIn{c_1}{(y,z)}{{\piOut{\ctit{out}}{y}{\bigl(\pB
                  \piParal \piOutA{d}{z}\bigr)}}}
          \end{array}\!\right)
      }}\\
    \label{eq:2cs}
    &\;\piTauTauCost{+1}\; \conf{\envExt}{\sys{\sV,c_1,c_2}{
        \left(\!\begin{array}{l}
            \bigl(\pF  \piParal \piOutA{b}{c_2} \piParal  \piOutA{c_1}{(v_1,c_2)}\bigr) \\
            \piParalS\;
            \piIn{c_1}{(y,z)}{{\piOut{\ctit{out}}{y}{\bigl(\pB
                  \piParal \piOutA{d}{z}\bigr)}}}
          \end{array}\!\right)
      }}\\
    \nonumber
    &\;\;=\;\; \conf{\envExt}{\sys{\sV,c_1,c_2}{
        \left(\!\begin{array}{l}
            \piRec{w}{\;\piIn{b}{x}{\;\piIn{\ctit{in}}{y}{\piAll{z}{\;\bigl(w\piParal \piOutA{b}{z} \piParal \piOutA{x}{(y,z)}\bigr)}}}}  \\
            \piParalS \piOutA{b}{c_2} \piParalS  \piOutA{c_1}{(v_1,c_2)} \\
            \piParalS\; \piIn{c_1}{(y,z)}{{\piOut{\ctit{out}}{y}{\bigl(\pB \piParal \piOutA{d}{z}\bigr)}}}
          \end{array}\!\right)
      }}\\
    \label{eq:3cs}
    &\;\piTauTauCost{0}\; \conf{\envExt}{\sys{\sV,c_1,c_2}{
        \left(\!\begin{array}{l}
            \piIn{b}{x}{\;\piIn{\ctit{in}}{y}{\piAll{z}{\;\bigl(\pF\piParal \piOutA{b}{z} \piParal \piOutA{x}{(y,z)}\bigr)}}}  \\
            \piParalS \piOutA{b}{c_2} \piParalS  \piOutA{c_1}{(v_1,c_2)} \\
            \piParalS\;
            \piIn{c_1}{(y,z)}{{\piOut{\ctit{out}}{y}{\bigl(\pB
                  \piParal \piOutA{d}{z}\bigr)}}}
          \end{array}\!\right)
      }}\\
    \label{eq:5cs}
    &\;\piTauTauCost{0}\; \conf{\envExt}{\sys{\sV,c_1,c_2}{
        \left(\!\begin{array}{l}
            \piIn{\ctit{in}}{y}{\piAll{z}{\bigl(\pF\piParal \piOutA{b}{z} \piParal  \piOutA{c_2}{(y,z)}\bigr)}} \\
            \piParalS\; \piOutA{c_1}{(v_1,c_2)} \\
            \piParalS\;
            \piIn{c_1}{(y,z)}{{\piOut{\ctit{out}}{y}{\bigl(\pB
                  \piParal \piOutA{d}{z}\bigr)}}}
          \end{array}\!\right)
      }}\\
    \label{eq:6cs}
    &\;\piRedWDecCost{\actin{\ctit{in}}{v_2}}{+1}\;
    \conf{\envExt}{\sys{\sV,c_1,c_2,c_3}{ \left(\!\begin{array}{l}
            \piIn{\ctit{in}}{y}{\piAll{z}{\bigl(\pF\piParal \piOutA{b}{z} \piParal  \piOutA{c_3}{(y,z)}\bigr)}} \\
            \piParalS\; \piOutA{c_1}{(v_1,c_2)} \piParalS
            \piOutA{c_2}{(v_2,c_3)} 
            \\
            \piParalS\;
            \piIn{c_1}{(y,z)}{{\piOut{\ctit{out}}{y}{\bigl(\pB
                  \piParal \piOutA{d}{z}\bigr)}}}
          \end{array}\!\right)
      }}\\
    \label{eq:7cs}
    & \;\piRedWDecCost{\actout{\ctit{out}}{v_1}}{0}\;
    \conf{\envExt}{\sys{\sV,c_1,c_2,c_3}{ \left(\!\begin{array}{l}
            \piIn{\ctit{in}}{y}{\piAll{z}{\bigl(\pF\piParal \piOutA{b}{z} \piParal  \piOutA{c_3}{(y,z)}\bigr)}} \\
            \piParalS\;
            \piOutA{c_2}{(v_2,c_3)} 
            \\
            \piParalS\;
            \piIn{c_2}{(y,z)}{{\piOut{\ctit{out}}{y}{\bigl(\pB
                  \piParal \piOutA{d}{z}\bigr)}}}
          \end{array}\!\right)
      }}
  \end{align}
  Transition \eqref{eq:1cs} describes an input from the user whereas
  \eqref{eq:2cs} allocates a new internal channel, $c_2$, followed by
  a recursive process unfolding,~\eqref{eq:3cs}, and the instantiation
  of the unfolded process with the newly allocated channel $c_2$,
  \eqref{eq:5cs}, through a communication on channel $b$. The weak
  transition \eqref{eq:6cs} is an aggregation of 4 analogous
  transitions to the ones just presented, this time relating to a
  second input of value $v_2$. This yields an internal output chain of
  length 2, \ie $ \piOutA{c_1}{(v_1,c_2)} \piParalS
  \piOutA{c_2}{(v_2,c_3)} $. Finally, \eqref{eq:7cs} is an aggregation
  of 4 transitions relating to the consumption of the first item in
  the chain, $\piOutA{c_1}{(v_1,c_2)}$, the subsequent output of $v_1$
  on channel \ctit{out}, and the unfolding and instantiation of the
  recursive process \pB with $c_2$ --- see definition for \pB.
\end{exa}

\subsection{Costed Bisimulation}
\label{sec:Bisimulation}
We define a cost-based preorder over systems as a \emph{typed relation}, \cf Definition~\ref{def:typed-relation}, ordering systems that exhibit the same external behaviour at a less-than-or-equal-to cost. We require the preorder to consider client $\ptit{C}_1$ as more efficient than $\ptit{C}_0$ \wrt an appropriate resource environment \sV\ and observers characterised by the type environment stated in \eqref{eq:2a}  but also that, \wrt the same resource and observer environments, client   $\ptit{C}_3$ of \eqref{eq:clients-complicated} is more efficient than $\ptit{C}_1$.  This latter ordering is harder to establish since client  $\ptit{C}_1$ is at times \emph{temporarily} more efficient than $\ptit{C}_3$. 

In order to handle this aspect we define our preorder as an \emph{amortized} bisimulation \cite{Kiehn05}.  Amortized bisimulation uses a \emph{credit} $n$ to compare  a system $\sys{M}{P}$ with a less efficient 
 system $\sys{N}{Q}$ while allowing $\sys{M}{P}$ to do a more expensive action than
$\sys{N}{Q}$, as long as the credit can make up for the difference.  Conversely,
whenever $\sys{M}{P}$ does a cheaper action than $\sys{N}{Q}$, then the difference
gets \emph{added} to the credit.\footnote{Stated otherwise, $\sys{M}{P}$ can do a more
expensive action than $\sys{N}{Q}$ now, as long as it makes up for it later.} Crucially, however, the amortisation credit is \emph{never allowed} to become \emph{negative} \ie $n \in \Nats$. In
general, we refine Definition~\ref{def:typed-relation} to amortized typed relations with the following structure:

\begin{defi}[Amortised Typed Relation]\label{def:typed-relation-amm}
An amortized type-indexed relation $\relR$ relates systems under an observer
characterized by a context $\env$, with credit $n$ ($n \in \Nats$); we write
\begin{equation*}
\env \vDash \sys{M}{P} \; \relR^n \; \sys{N}{Q}
\end{equation*}
if $\relR^n$ relates $\confE{\sys{M}{P}}$ and $\confE{\sys{N}{Q}}$, and both
$\confE{\sys{M}{P}}$ and $\confE{\sys{N}{Q}}$ are configurations.
\end{defi}

\begin{defi}[Amortised Typed Bisimulation]
\label{def:amortized-typed-bisim} 
An amortized type-indexed relation over processes \relR\ is a bisimulation at
$\env$ with credit $n$ if, whenever  ${\env \vDash (\sys{M}{P}) \,\relR^n\, (\sys{N}{Q})}$,
  \begin{itemize}
  \item If $\confE{\sysS{P}} \piRedDecCostPad{\mu}{k} \conf{\env'}{\sys{\sV'}{P'}}$ then there exist $\sVV'$ and $Q'$ such that \\
    $\confE{\sys{\sVV}{Q}} \piRedWDecCostPad{\hat{\mu}}{l} \conf{\env'}{\sys{N'}{Q'}}$ where $\env' \vDash (\sys{\sV'}{P'}) \,\relR^{n+l-k}\, (\sys{\sVV'}{Q'})$ 
  \item If $\confE{\sys{\sVV}{Q}} \piRedDecCostPad{\mu}{l} \conf{\env'}{\sys{\sVV'}{Q'}}$ then there exist $\sV'$ and $P'$ such that \\
    $\confE{\sys{\sV}{P}} \piRedWDecCostPad{\hat{\mu}}{k} \conf{\env'}{\sys{\sV'}{P'}}$ where $\env' \vDash (\sys{\sV'}{P'}) \,\relR^{n+l-k}\, (\sys{\sVV'}{Q'})$
\end{itemize}
where $\hat{\mu}$ is the empty string if $\mu = \tau$ and $\mu$ otherwise.

Bisimilarity at $\env$ with credit $n$, denoted 
  $\env \vDash \sys{M}{P} \,\piCostBisAmm{n}  \sys{N}{Q}$, 
is the largest amortized typed bisimulation at $\env$ with credit $n$. We sometimes existentially quantify over the credit and write 
  ${\env \vDash \sys{M}{P}  \,\piCostBis\,  \sys{N}{Q}}$.
We write $\env \vDash \sys{M}{P} \piCostBisEq  \sys{N}{Q}$ to denote the kernel of the preorder (\ie whenever we have both
$\env \vDash \sys{M}{P}  \,\piCostBis\,  \sys{N}{Q}$ and $\env \vDash  \sys{N}{Q} \,\piCostBis\,  \sys{M}{P}$), and write $\env \vDash \sys{M}{P}  \piCostBiss \sys{N}{Q}$ whenever $\env \vDash \sys{M}{P}  \,\piCostBis\,  \sys{N}{Q}$ but $\env \vDash  \sys{N}{Q} \,\not\!\piCostBis\,  \sys{M}{P}$.
\end{defi}


\begin{exa}[Assessing Client Efficiency]
  \label{eg:Clients-Bisim}
For the (observer) type  environment
\begin{equation}\label{eq:env-sys-example:bis}
\env_1 \deftxt \envmap{\ctit{srv}_1}{\chantypW{\chantypO{\tV_1}}}, \; \envmap{\ctit{srv}_2}{\chantypW{\chantypO{\tV_2}}}, \;\envmap{c}{\chantypW{\tV_1,\tV_2}} 
\end{equation}
and clients $\ptit{C}_0$ and $\ptit{C}_1$ defined earlier in \eqref{eq:clients}, 
we can show that $\env_1 \vDash (\sysS{\ptit{C}_1}) \; \piCostBisAmm \; (\sysS{\ptit{C}_0})$ by constructing the witness bisimulation (family of) relation(s) \relR\ for $\env_1 \vDash (\sysS{\ptit{C}_1}) \; \piCostBisAmm{0} \; (\sysS{\ptit{C}_0})$ stated below:\footnote{In families of relations ranging over systems indexed by type environments and amortisation credits, such as \relR, we represent $\env \vDash (\sysS{P}) \; \piCostBisAmm{n} \; (\sys{\envv}{Q})$ as the quadruple $\langle \env, n, (\sysS{P}), (\sys{\envv}{Q})\rangle$.}
\begin{equation*}
    \relR \!\deftxt\!\left\{
      \begin{array}{l|@{\;}l}
        \langle\env,\;  n,\;  \sys{\sV'}{\ptit{C}_1}, \;\sys{\sVV'}{\ptit{C}_0}\rangle & n \geq 0  \\[0.1em]
        \left\langle\!\!
          \begin{array}{l}
            \env, n,\sys{\sV'}{\piAll{x}{\;\piOut{\ctit{srv}_1}{x}\,\piIn{x}{y}{\piOut{\ctit{srv}_2}{x}\,\piIn{x}{z}{\piOut{\ctit{c}}{(y,z)}{\ptit{C}_1}}}}}\\
             \qquad\quad , \sys{\sVV'}{\piAll{x_1}{\piAll{x_2}\;\piOut{\ctit{srv}_1}{x_1}\,\piIn{x_1}{y}{\piOut{\ctit{srv}_2}{x_2}\,\piIn{x_2}{z}{\piOut{\ctit{c}}{(y,z)}{\ptit{C}_0}}}}}
          \end{array}
           \!\!\right\rangle  & d \!\not\in\! \dom(\env) \\[0.8em]
         \left\langle\!\!
          \begin{array}{l}
            \env,\; n, \;\sys{(\sV',d)}{\piOut{\ctit{srv}_1}{d}\,\piIn{d}{y}{\piOut{\ctit{srv}_2}{d}\,\piIn{d}{z}{\piOut{\ctit{c}}{(y,z)}{\ptit{C}_1}}}}\\
             \qquad\quad , \sys{(\sVV',d')}{\piAll{x_2}\;\piOut{\ctit{srv}_1}{d'}\,\piIn{d'}{y}{\piOut{\ctit{srv}_2}{x_2}\,\piIn{x_2}{z}{\piOut{\ctit{c}}{(y,z)}{\ptit{C}_0}}}}
          \end{array}
           \!\!\right\rangle  &d' \!\not\in\! \dom(\env) \\[0.8em]
           \left\langle\!\!
           \begin{array}{l}
            \env,\; n+1, \;\sys{(\sV',d)}{\piOut{\ctit{srv}_1}{d}\,\piIn{d}{y}{\piOut{\ctit{srv}_2}{d}\,\piIn{d}{z}{\piOut{\ctit{c}}{(y,z)}{\ptit{C}_1}}}}\\
             \qquad\quad , \sys{(\sVV',d',d'')}{\piOut{\ctit{srv}_1}{d'}\,\piIn{d'}{y}{\piOut{\ctit{srv}_2}{d''}\,\piIn{d''}{z}{\piOut{\ctit{c}}{(y,z)}{\ptit{C}_0}}}}
          \end{array}
           \right\rangle &d'' \!\not\in\! \dom(\env) \\[0.8em]
           \left\langle\!\!
           \begin{array}{l}
            (\env,\emap{d}{\chantypO{\tV_1}}),\; n+1, \;\sys{(\sV',d)}{\piIn{d}{y}{\piOut{\ctit{srv}_2}{d}\,\piIn{d}{z}{\piOut{\ctit{c}}{(y,z)}{\ptit{C}_1}}}}\\
             \qquad\qquad\qquad , \sys{(\sVV',d,d'')}{\piIn{d}{y}{\piOut{\ctit{srv}_2}{d''}\,\piIn{d''}{z}{\piOut{\ctit{c}}{(y,z)}{\ptit{C}_0}}}}
          \end{array}
           \right\rangle & \sV' \subseteq \sVV' \\[0.8em]
           \left\langle\!\!
           \begin{array}{l}
            \env,\; n+1, \;\sys{(\sV',d)}{\piOut{\ctit{srv}_2}{d}\,\piIn{d}{z}{\piOut{\ctit{c}}{(v,z)}{\ptit{C}_1}}}\\
             \qquad\quad , \sys{(\sVV',d,d'')}{\piOut{\ctit{srv}_2}{d''}\,\piIn{d''}{z}{\piOut{\ctit{c}}{(v,z)}{\ptit{C}_0}}}
          \end{array}
           \right\rangle & \dom(\env) \subseteq \sV'\\[0.4em]
           \langle
            (\env,\emap{d}{\chantypO{\tV_2}}),\; n+1, \;\sys{(\sV',d)}{\piIn{d}{z}{\piOut{\ctit{c}}{(v,z)}{\ptit{C}_1}}},\;
            \sys{(\sVV',d',d)}{\piIn{d}{z}{\piOut{\ctit{c}}{(v,z)}{\ptit{C}_0}}}
          \rangle & \\
          \langle
            \env,\; n+1, \;\sys{(\sV',d)}{\piOut{\ctit{c}}{(v,v')}{\ptit{C}_1}},\;
            \sys{(\sVV',d',d)}{\piOut{\ctit{c}}{(v,v')}{\ptit{C}_0}}
          \rangle\\
      \end{array}
    \right\}
  \end{equation*}  
It is not hard to see that \relR\ contains the quadruple $\langle \env_1, 0, \sysS{\ptit{C}_1}, \sysS{\ptit{C}_0}\rangle$.  One can also show that it is closed \wrt the transfer property of Definition~\ref{def:amortized-typed-bisim}.  The key moves are:
\begin{itemize}
\item a single channel allocation by $\ptit{C}_1$ is matched by two channel allocations by  $\ptit{C}_0$ --- from the second up to the fourth quadruple in the definition of \relR.  Since channel allocations carry a positive cost, the amortisation credit increases from $n$ to $n+2-1$, \ie  $n+1$, but this still yields a quadruple that is in the relation.  One thing to note is that the first channel allocated by both systems is allowed to be different, \eg $d$ and $d'$, as long as it is not allocated already.  
\item Even though the internal channels allocated may be different, rule \rtit{lRen} allows us to rename the \resp names of the allocated channels (not known to the observer) so as match the channels communicated on $\ctit{srv}_1$  by the other system (fourth and fifth quadruples). Since these channels are not known to the observer, \ie they are not in $\dom(\env)$, they all amount to \emph{fresh} names, akin to scope extrusion \cite{Milner99,Hennessy07}.
\item Communicating on the previously communicated channel on $\ctit{srv}_1$ consumes all of the observer's permissions for that channel (fifth quadruple), which allows rule \rtit{lRen} to be applied again so as to match  the channels communicated on $\ctit{srv}_2$ (sixth quadruple). 
\end{itemize}

We cannot however prove  that
  $\env_1 \vDash (\sysS{\ptit{C}_0}) \; \piCostBisAmm{n} \; (\sysS{\ptit{C}_1})$
for any $n$ because  we  would need an \emph{infinite} amortisation credit to account for additional cost incurred by  $\ptit{C}_0$ when it performs the channel extra allocation at every iteration; recall that this credit cannot become negative, and thus no finite credit is large enough to cater for all the additional cost incurred by  $\ptit{C}_0$ over sufficiently large transition sequences.

Similarly, from \eqref{eq:clients}, we can show  that $\env_1 \vDash (\sysS{\ptit{C}_2}) \; \piCostBiss \; (\sysS{\ptit{C}_1})$ but also, from \eqref{eq:clients-complicated}, that  ${\env_1 \vDash (\sysS{\ptit{C}_3}) \; \piCostBiss \; (\sysS{\ptit{C}_1})}$.  In particular, we can show $\env_1 \vDash (\sysS{\ptit{C}_3}) \; \piCostBis \; (\sysS{\ptit{C}_1})$ even though $\sysS{\ptit{C}_1}$ is temporarily more efficient than $\sysS{\ptit{C}_3}$, \ie during the course of the first iteration.  Our framework handles this through the use of the amortisation credit whereby, in this case, it suffices to use a credit of value $1$ and show  $\env_1 \vDash (\sysS{\ptit{C}_3}) \; \piCostBisAmm{1} \; (\sysS{\ptit{C}_1})$; we leave the details to the interested reader.  Using an amortisation credit of $1$ we can also show  $\env_1 \vDash (\sysS{\ptit{C}_3}) \; \piCostBisAmm{1} \; (\sysS{\ptit{C}_2})$ through the bisimulation family-of-relations $\relR'$ below --- it is easy to check that  it observes the transfer property of Definition~\ref{def:amortized-typed-bisim}; by constructing a similar relation, one can also show that $\env_1 \vDash (\sysS{\ptit{C}_2}) \; \piCostBisAmm{0} \; (\sysS{\ptit{C}_3})$ which implies that $\env_1 \vDash (\sysS{\ptit{C}_2}) \; \piCostBisEq \; (\sysS{\ptit{C}_3})$.  We just note that in $\relR'$, the amortisation credit $n$ can be capped $0 \leq n \leq 1$ and  revisit this point again  in Section~\ref{sec:bisim-results}. 

\begin{equation*}
    \relR' \deftxt\left\{
      \begin{array}{l|@{\;}l}
        \langle\env,\;  1,\;  \sysS{\ptit{C}_3}, \;\sys{\sV}{\ptit{C}_2}\rangle \\[0.1em]
        \left\langle\!\!
          \begin{array}{l}
            \env,\; 1, \;
              \sysS{
               \left(\begin{array}{l}
\piAll{x_1}{\piAll{x_2}\;\piOut{\ctit{srv}_1}{x_1}\,\piIn{x_1}{y}{}}\\
\quad\piOut{\ctit{srv}_2}{x_2}\,\piIn{x_2}{z}{\piFree{x_1}{\piFree{x_2}{\piOut{\ctit{c}}{(y,z)}{\ptit{C}_3}}}}
\end{array}\right)
},\\
             \qquad\quad \sys{\sV}{\piAll{x}{\;\piOut{\ctit{srv}_1}{x}\,\piIn{x}{y}{\piOut{\ctit{srv}_2}{x}\,\piIn{x}{z}{\piFree{x}{\piOut{\ctit{c}}{(y,z)}{\ptit{C}_2}}}}}}
          \end{array}
           \right\rangle  &  \\[0.8em]
        \left\langle\!\!
          \begin{array}{l}
            \env,\; 1, \;
              \sys{(\sV,d)}{
               \left(\begin{array}{l}
\piAll{x_2}\;\piOut{\ctit{srv}_1}{d}\,\piIn{d}{y}{}\\
\quad\piOut{\ctit{srv}_2}{x_2}\,\piIn{x_2}{z}{\piFree{d}{\piFree{x_2}{\piOut{\ctit{c}}{(y,z)}{\ptit{C}_3}}}}
\end{array}\right)
},\\
             \qquad\quad \sys{(\sV,d')}{{\;\piOut{\ctit{srv}_1}{d'}\,\piIn{d'}{y}{\piOut{\ctit{srv}_2}{d'}\,\piIn{d'}{z}{\piFree{d'}{\piOut{\ctit{c}}{(y,z)}{\ptit{C}_2}}}}}}
          \end{array}
           \right\rangle  &d \!\not\in\! \dom(\env) \\[0.8em]
           \left\langle\!\!
          \begin{array}{l}
            \env,\; 0, \;
              \sys{(\sV,d,d'')}{
               \left(\begin{array}{l}
\piOut{\ctit{srv}_1}{d}\,\piIn{d}{y}{\piOut{\ctit{srv}_2}{d''}\,\piIn{d''}{z}{}}\\
\quad\piFree{d}{\piFree{d''}{\piOut{\ctit{c}}{(y,z)}{\ptit{C}_3}}}
\end{array}\right)
},\\
             \qquad\quad \sys{(\sV,d')}{{\piOut{\ctit{srv}_1}{d'}\,\piIn{d'}{y}{\piOut{\ctit{srv}_2}{d'}\,\piIn{d'}{z}{\piFree{d'}{\piOut{\ctit{c}}{(y,z)}{\ptit{C}_2}}}}}}
          \end{array}
           \right\rangle
         &d' \!\not\in\! \dom(\env)\\[0.8em]
         \left\langle\!\!
          \begin{array}{l}
            (\env,\emap{d}{\chantypO{\tV_1}}),\; 0, \;
              \sys{(\sV,d,d'')}{
               \left(\begin{array}{l}
\piIn{d}{y}{\piOut{\ctit{srv}_2}{d''}\,\piIn{d''}{z}{}}\\
\quad\piFree{d}{\piFree{d''}{\piOut{\ctit{c}}{(y,z)}{\ptit{C}_3}}}
\end{array}\right)
},\\
             \qquad\quad \sys{(\sV,d)}{{\piIn{d}{y}{\piOut{\ctit{srv}_2}{d}\,\piIn{d}{z}{\piFree{d}{\piOut{\ctit{c}}{(y,z)}{\ptit{C}_2}}}}}}
          \end{array}
           \right\rangle
            &d'' \!\not\in\! \dom(\env) \\[0.8em]
         \left\langle\!\!
          \begin{array}{l}
            \env,\; 0, \;
              \sys{(\sV,d,d'')}{
\piOut{\ctit{srv}_2}{d''}\,\piIn{d''}{z}{}\piFree{d}{\piFree{d''}{\piOut{\ctit{c}}{(v,z)}{\ptit{C}_3}}}
},\\
             \qquad\quad \sys{(\sV,d)}{\piOut{\ctit{srv}_2}{d}\,\piIn{d}{z}{\piFree{d}{\piOut{\ctit{c}}{(v,z)}{\ptit{C}_2}}}}
          \end{array}
           \right\rangle & \\[0.8em]
                     \left\langle\!\!
          \begin{array}{l}
             (\env,\emap{d'}{\chantypO{\tV_2}}),\; 0, \;
              \sys{(\sV,d,d')}{
\piIn{d'}{z}{}\piFree{d}{\piFree{d'}{\piOut{\ctit{c}}{(v,z)}{\ptit{C}_3}}}
},\\
             \qquad\quad \sys{(\sV,d')}{\piIn{d'}{z}{\piFree{d'}{\piOut{\ctit{c}}{(v,z)}{\ptit{C}_2}}}}
          \end{array}
           \right\rangle
 & \dom(\env) \subseteq \sV\\[0.8em]
             \left\langle\!\!
          \begin{array}{l}
             \env,\; 0, \;
              \sys{(\sV,d,d')}{
\piFree{d}{\piFree{d'}{\piOut{\ctit{c}}{(v,z)}{\ptit{C}_3}}}
},\\
             \qquad\qquad
             \sys{(\sV,d')}{{\piFree{d'}{\piOut{\ctit{c}}{(v,v')}{\ptit{C}_2}}}}
          \end{array}
           \right\rangle  & \\
             \left\langle\!\!
          \begin{array}{l}
             \env,\; 0, \;
              \sys{(\sV,d')}{
\piFree{d'}{\piOut{\ctit{c}}{(v,z)}{\ptit{C}_3}}
},\;
             \sys{\sV}{{\piOut{\ctit{c}}{(v,v')}{\ptit{C}_2}}}
          \end{array}
           \right\rangle  & \\
           \left\langle\!\!
          \begin{array}{l}
             \env,\; 1, \;
              \sys{\sV}{
\piOut{\ctit{c}}{(v,z)}{\ptit{C}_3}
},\;
             \sys{\sV}{{\piOut{\ctit{c}}{(v,v')}{\ptit{C}_2}}}
          \end{array}
           \right\rangle  
      \end{array}
    \right\}
  \end{equation*}
\hfill$\Box$
\end{exa}

\subsection{Alternatives}
\label{sec:alternatives}

The cost model we adhere to in Section~\ref{sec:cost-bisim} is not the only plausible one, but is intended to  follow that described by costed reductions of Section~\ref{sec:language}. There may however be other valid alternatives, some of which can be easily accommodated through minor tweaking to our existing framework.  

For instance, an alternative cost model may focus on assessing the runtime execution of programs, whereby operations that access memory such as \piAll{x}{P} and \piFree{c}{P} have a runtime cost that far exceeds that of other operations.  We can model this by considering an LTS that assigns a cost of $1$ to both of these operations, which can be attained as a derived LTS from our existing LTS of Section~\ref{sec:LTS} through the rule 
\begin{equation*}
  \begin{prooftree}
    \confESP \;\piRedDecCost{\;\mu\;}{k}\; \conf{\env'}{\sys{\sV'}{P'}}
  \justifiedBy{lDer1}
  \confESP \;\piRedDecCostPost{\;\mu\;}{|k|}\; \conf{\env'}{\sys{\sV'}{P'}}
\end{prooftree}
\end{equation*}
where $|k|$ returns the absolute value of an integer. \defref{def:amortized-typed-bisim} extends in straightforward fashion to work with the derived costed LTS $\piRedDecCostPost{\;\mu\;}{k}$. This new preorder would allow us to conclude  ${\env_1 \vDash (\sysS{\ptit{C}_1}) \; \piCostBis \; (\sysS{\ptit{C}_2})}$ because, according to the new cost model, for every server-interaction iteration, client  $\ptit{C}_1$ uses less expensive memory operations than $\ptit{C}_2$.

Another cost model may require us to refine our existing preorder. For instance, consider another client $\ptit{C}_4$, defined below, that creates a single channel and keeps on reusing it for all iterations:
\begin{equation*}
  \ptit{C}_4 \deftri \piAll{x}{\;\piRecX{\,\piOut{\ctit{srv}_1}{x}\,\piIn{x}{y}{\;\piOut{\ctit{srv}_2}{x}\,\piIn{x}{z}{\piOut{\ctit{ret}}{(y,z)}{\;w}}}}}
\end{equation*} 
At present, we are able to equate this client with $\ptit{C}_2$ and $\ptit{C}_3$ from \eqref{eq:clients} and \eqref{eq:clients-complicated} \resp, on the  basis that neither client carries any memory leaks. 
\begin{equation*}
  \env_1 \vDash (\sysS{\ptit{C}_4}) \; \piCostBisEq \; (\sysS{\ptit{C}_3}) \; \piCostBisEq \; (\sysS{\ptit{C}_2})
\end{equation*}

However, we may want a finer preorder where $\ptit{C}_4$ is considered to be (strictly) more efficient than $\ptit{C}_2$, which is in turn more efficient than $\ptit{C}_3$. The underlying reasoning for this would be that $\ptit{C}_4$ uses the least amount of expensive operations; by contrast $\ptit{C}_2$ keeps on allocating (and deallocating) new channels for each iteration, and   $\ptit{C}_3$ allocates (and deallocates) two new channels for every iteration.  We can characterise this preorder as follows. First we generate the derived costed LTS using the rule \rtit{lDer2} below --- $\lfloor k \rfloor$ maps all negative integers to $0$, leaving positive integers unaltered. 
\begin{equation*}
  \begin{prooftree}
    \confESP \;\piRedDecCost{\;\mu\;}{k}\; \conf{\env'}{\sys{\sV'}{P'}}
  \justifiedBy{lDer2}
  \confESP \;\piRedDecCostPost{\;\mu\;}{\lfloor k \rfloor }\; \conf{\env'}{\sys{\sV'}{P'}}
\end{prooftree}
\end{equation*}
Then, after adapting \defref{def:amortized-typed-bisim} to this derived LTS, denoting such a bisimulation relation as \piCostBisTwo, we can define the refined preorder, denoted as \piCostBisTre, as follows:
\begin{equation*}
  \env \vDash \sysSP \;\piCostBisTre\; \sys{\sVV}{Q} \quad\deftxt\quad
  \begin{cases}
    \env \vDash \sysSP\, \piCostBis\, \sys{\sVV}{Q}
    \text{
      and } \\
  \env \vDash \sys{\sVV}{Q} \,\piCostBis\, \sysSP
    \text{ implies }\env \vDash \sysSP\, \piCostBisTwo\,
    \sys{\sVV}{Q}
  \end{cases}
\end{equation*}
The new refined preorder $\piCostBisTre$ above requires that \sysSP is at least as efficient as \sys{\sVV}{Q} (possibly more) when it comes to memory leaks, \ie \piCostBis, and moreover, whenever they are equally efficient \wrt these leaks, \sysSP must also be as efficient (possibly more) \wrt memory allocations, \ie \piCostBisTwo.

\subsection{Properties of \,\texorpdfstring{\piCostBis}{piCostBis}}
\label{sec:bisim-results}

We show that our bisimulation relation of Definition~\ref{def:amortized-typed-bisim} observes a number of properties that are useful when reasoning about resource efficiency; see Example~\ref{eg:Clients-Bisim-continued} below.  Lemmas~\ref{lem:reflexivity} and \ref{lem:transitivity} prove that the relation is in fact a preorder, whereas Lemma~\ref{lem:symmetry-bound} outlines conditions where symmetry can be recovered.  Finally, Theorem~\ref{thm:costed-bisim-compositionality} shows that this preorder is preserved under (valid) context; this is the main result of the section. 

First off, we show that $\piCostBis$ is a preorder following Lemma~\ref{lem:reflexivity} (where $\sigma_\env$ would be the identity) and Lemma~\ref{lem:transitivity}.

\begin{lem}[Reflexivity upto Renaming]\label{lem:reflexivity} Whenever the triple \confESP\ is a configuration, then
  ${\env \vDash (\sysSP)\sigma_\env \piCostBisEq \sysSP}$
\end{lem}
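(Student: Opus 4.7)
The plan is to exhibit a single amortised typed bisimulation that witnesses both inclusions of the kernel in one go. Define
\[
\relR \;\deftxt\; \left\{\langle\envv, n, (\sys{\sVV}{Q})\sigma, \sys{\sVV}{Q}\rangle,\; \langle\envv, n, \sys{\sVV}{Q}, (\sys{\sVV}{Q})\sigma\rangle \mid n \geq 0,\; \sigma \text{ a renaming modulo } \envv\right\}
\]
where it is implicit that both $\conf{\envv}{\sys{\sVV}{Q}}$ and $\conf{\envv}{(\sys{\sVV}{Q})\sigma}$ are configurations. The target statement then reduces to two tasks: (i) checking that $\langle\env, 0, (\sysSP)\sigma_\env, \sysSP\rangle \in \relR$, and (ii) verifying that $\relR$ satisfies the transfer property of Definition~\ref{def:amortized-typed-bisim}. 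The manifest symmetry of $\relR$ then delivers both $\env \vDash (\sysSP)\sigma_\env \piCostBis \sysSP$ and its converse, hence $\piCostBisEq$.

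For the transfer property, the central observation is that rule \rtit{lRen} already builds a renaming layer into every transition. Suppose $\conf{\envv}{(\sys{\sVV}{Q})\sigma} \piRedDecCost{\mu}{k} \conf{\envv'}{\sys{\sVV'}{Q'}}$; by inversion of \rtit{lRen}, this transition derives from some pre-transition $\conf{\envv}{((\sys{\sVV}{Q})\sigma)\sigma'} \piRedDecCostPre{\mu}{k} \conf{\envv'}{\sys{\sVV'}{Q'}}$ with $\sigma'$ a renaming modulo $\envv$. Since $((\sys{\sVV}{Q})\sigma)\sigma' = (\sys{\sVV}{Q})(\sigma\sigma')$ and the composition $\sigma\sigma'$ is itself bijective and fixes $\dom(\envv)$ pointwise, a single application of \rtit{lRen} with the renaming $\sigma\sigma'$ yields $\conf{\envv}{\sys{\sVV}{Q}} \piRedDecCost{\mu}{k} \conf{\envv'}{\sys{\sVV'}{Q'}}$, which promotes to a weak transition via \rtit{wTra}. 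This is a strong match at identical cost, so the credit is preserved as $n + k - k = n$, and the resulting pair $\langle \envv', n, \sys{\sVV'}{Q'}, \sys{\sVV'}{Q'}\rangle$ sits in $\relR$ via the identity renaming. The dual direction is symmetric: the matching renaming there is $\sigma^{-1}\sigma'$, so that $((\sys{\sVV}{Q})\sigma)(\sigma^{-1}\sigma') = (\sys{\sVV}{Q})\sigma'$; closure of bijections fixing $\dom(\envv)$ under both composition and inversion is the only algebraic fact required.

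The main obstacle I anticipate lies not in the bisimulation game itself but in the typing/consistency bookkeeping needed to justify the ``both sides are configurations'' side conditions. Establishing that $\conf{\env}{(\sysSP)\sigma_\env}$ is a configuration whenever $\conf{\env}{\sysSP}$ is requires two checks: $\dom(\env) \subseteq \sV\sigma_\env$ is immediate since $\sigma_\env$ fixes $\dom(\env)$ pointwise and $\dom(\env) \subseteq \sV$; and, given a witness $\envv$ with $\tproc{\envv}{\sysSP}$ and $(\env,\envv)$ consistent, a standard renaming lemma for the type system of Section~\ref{sec:language} supplies $\tproc{\envv\sigma_\env}{(\sysSP)\sigma_\env}$, while $(\env, \envv\sigma_\env)$ remains consistent because $\sigma_\env$ leaves $\env$ itself pointwise fixed. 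Analogously, closure of configurations along transitions is exactly Lemma~\ref{lem:subject-reduction} applied componentwise, so no new work is required there. The $\tau$ case of the transfer property poses no additional difficulty, since matching a $\tau$ by a single $\tau$ via \rtit{wTra} is already a valid $\piRedWDecCostPad{\hat{\tau}}{k}$-derivation.
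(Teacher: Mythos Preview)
Your proposal is correct and follows essentially the same approach as the paper: the paper also proceeds by coinduction, exhibiting the family $\{\langle\env,0,(\sysSP)\sigma_\env, \sysSP \rangle \mid \confESP \text{ is a configuration}\}$ as a bisimulation. Your relation is a mild enlargement (arbitrary credit $n$ and both orderings made explicit), and you spell out the key step---absorbing the \rtit{lRen} renaming into the composite $\sigma\sigma'$ (resp.\ $\sigma^{-1}\sigma'$)---that the paper leaves implicit; this is exactly the intended argument.
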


\begin{proof}
  By coinduction, by showing that the family of relations
  \begin{displaymath}
    \sset{\langle\env,0,(\sysSP)\sigma_\env, \sysSP \rangle \;|\; \confESP \text{ is a configuration} }
  \end{displaymath}
 is a bisimulation.
\end{proof}

\begin{lem}[Transitivity]\label{lem:transitivity}
Whenever   $\env \vDash \sysSP \,\piCostBis\, \sys{\sV'}{P'}$ and  $\env \vDash \sys{\sV'}{P'} \,\piCostBis\, \sys{\sV''}{P''}$ then  ${\env \vDash \sysSP \,\piCostBis\, \sys{\sV''}{P''}}$
\end{lem}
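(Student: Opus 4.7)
The approach is to unfold the existential credits in the two hypotheses, exhibit a composite candidate relation parametrised by an intermediate system, and verify that it is an amortised typed bisimulation by bridging its single-action transfer obligations with an auxiliary weak-transfer lemma. Concretely, I would first pick witnesses $n_1, n_2 \geq 0$ so that $\env \vDash \sysSP \,\piCostBisAmm{n_1}\, \sys{\sV'}{P'}$ and $\env \vDash \sys{\sV'}{P'} \,\piCostBisAmm{n_2}\, \sys{\sV''}{P''}$, aiming to establish $\env \vDash \sysSP \,\piCostBisAmm{n_1+n_2}\, \sys{\sV''}{P''}$; this suffices for the existential statement of $\piCostBis$. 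The candidate family of relations is
\begin{equation*}
  \relR \;\deftxt\; \left\{\, \langle \env, n_1{+}n_2, \sysSP, \sys{\sV''}{P''} \rangle \;\mid\; \exists \sys{\sV'}{P'}.\;\; \env \vDash \sysSP \,\piCostBisAmm{n_1}\, \sys{\sV'}{P'} \;\text{ and }\; \env \vDash \sys{\sV'}{P'} \,\piCostBisAmm{n_2}\, \sys{\sV''}{P''} \,\right\}
\end{equation*}
and I would verify it satisfies the transfer property of Definition~\ref{def:amortized-typed-bisim}, which closes the argument coinductively.

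The central auxiliary result is a \emph{weak-transfer} lemma lifting the single-action bisimulation clause to weak transitions: if $\env \vDash \sysS{R} \,\piCostBisAmm{n}\, \sys{\sVV}{S}$ and $\confE{\sysS{R}} \,\piRedWDecCostPad{\mu}{k}\, \conf{\env'}{\sys{\sV'}{R'}}$, then there exist $\sVV', S', l$ with $\confE{\sys{\sVV}{S}} \,\piRedWDecCostPad{\hat\mu}{l}\, \conf{\env'}{\sys{\sVV'}{S'}}$ and $\env' \vDash \sys{\sV'}{R'} \,\piCostBisAmm{n+l-k}\, \sys{\sVV'}{S'}$. I would prove it by rule induction on the derivation of the weak transition via \rtit{wTra}, \rtit{wLeft} and \rtit{wRight}, chaining the single-action clause along the $\tau$-prefixes and $\tau$-suffixes while letting costs accumulate through the telescoping identity $(n+l_1-k_1)+(l_2-k_2) = n + (l_1+l_2) - (k_1+k_2)$. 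Lemma~\ref{lem:subject-reduction} guarantees that all intermediate triples remain configurations.

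Equipped with this lemma, closure of $\relR$ becomes a calculation. Given a quadruple in $\relR$ and a transition $\confESP \,\piRedDecCostPad{\mu}{k}\, \conf{\env'}{\sys{\sV'_\ast}{P'_\ast}}$, the first bisimulation produces a match $\confE{\sys{\sV'}{P'}} \,\piRedWDecCostPad{\hat\mu}{l_1}\, \conf{\env'}{\sys{\sV'_\dagger}{P'_\dagger}}$ with $\env' \vDash \sys{\sV'_\ast}{P'_\ast} \,\piCostBisAmm{n_1+l_1-k}\, \sys{\sV'_\dagger}{P'_\dagger}$. Applying the weak-transfer lemma to the second bisimulation on this weak action yields $\confE{\sys{\sV''}{P''}} \,\piRedWDecCostPad{\hat\mu}{l_2}\, \conf{\env'}{\sys{\sV''_\ast}{P''_\ast}}$ with $\env' \vDash \sys{\sV'_\dagger}{P'_\dagger} \,\piCostBisAmm{n_2+l_2-l_1}\, \sys{\sV''_\ast}{P''_\ast}$. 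Summing the two fresh credits gives $(n_1+n_2)+l_2-k$, which is exactly the value that $\relR$ demands for the residual quadruple $\langle \env', (n_1+n_2)+l_2-k, \sys{\sV'_\ast}{P'_\ast}, \sys{\sV''_\ast}{P''_\ast}\rangle$, and this quadruple is visibly in $\relR$ by construction. The symmetric clause, starting from a transition of $\sys{\sV''}{P''}$, is handled identically.

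The main obstacle is the weak-transfer lemma. Because individual transitions may carry negative cost (deallocation), the subtlety is to verify that accumulated credit never dips below zero while iterating single-action transfers along arbitrarily long $\tau$-padded derivations, and to handle the degenerate case where $\hat\mu$ collapses to the empty string (so the matching side performs no transitions at all). Both reduce at each inductive step to the non-negativity guarantee already built into the single-action clause of $\piCostBisAmm{}$, so the invariant propagates; it is however the only point in the proof where the amortised accounting requires genuine care.
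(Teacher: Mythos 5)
Your proposal is correct and follows essentially the same route as the paper: the paper's proof is a brief sketch that fixes witness credits $n$ and $m$ for the two hypotheses and constructs from the two witness bisimulations a composite bisimulation at credit $n+m$, which is precisely the relation-composition-with-summed-credits construction you spell out. Your weak-transfer lemma and the telescoping credit calculation are just the details the paper leaves implicit.
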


\begin{proof}
  $\env \vDash \sysSP \,\piCostBis\, \sys{\sV'}{P'}$ implies that there exists some $n\geq 0$ and corresponding bisimulation relation justifying $\env \vDash \sysSP \,\piCostBisAmm{n}\, \sys{\sV'}{P'}$.  The same applies for $\env \vDash \sys{\sV'}{P'} \,\piCostBis\, \sys{\sV''}{P''}$ and some $m\geq 0$.  From these two relations, one can construct a corresponding bisimulation  justifying  $\env \vDash \sysSP \,\piCostBisAmm{{n+m}}\, \sys{\sV''}{P''}$.
\end{proof}

\begin{cor}[Preorder]\label{cor:preorder}
  $\piCostBis$ is a preorder.
\end{cor}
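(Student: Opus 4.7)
The plan is to observe that a preorder requires exactly two properties, reflexivity and transitivity, and that both are essentially in hand from the immediately preceding lemmas. Transitivity is delivered directly by Lemma~\ref{lem:transitivity}, which states that $\env \vDash \sysSP \,\piCostBis\, \sys{\sV'}{P'}$ and $\env \vDash \sys{\sV'}{P'} \,\piCostBis\, \sys{\sV''}{P''}$ together imply $\env \vDash \sysSP \,\piCostBis\, \sys{\sV''}{P''}$, which is exactly the transitivity clause. So no further work is required there.

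For reflexivity, I would derive it as an immediate corollary of Lemma~\ref{lem:reflexivity}. That lemma establishes $\env \vDash (\sysSP)\sigma_\env \piCostBisEq \sysSP$ for an arbitrary renaming $\sigma_\env$ permitted by Definition~\ref{def:renaming}. The key step is to specialise this to the identity substitution $\sigma_\env = \mathrm{id}$. The identity trivially satisfies the constraint of Definition~\ref{def:renaming}, namely that $c\sigma_\env = c\sigma_\env^{-1} = c$ for every $c \in \dom(\env)$ (in fact for every $c$), so it is a legitimate renaming-modulo-$\env$. Under this choice $(\sysSP)\sigma_\env$ is syntactically $\sysSP$, and therefore $\env \vDash \sysSP \piCostBisEq \sysSP$. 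Unfolding the definition of $\piCostBisEq$ (the kernel of the preorder) in particular yields $\env \vDash \sysSP \,\piCostBis\, \sysSP$, which is reflexivity.

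Combining these two observations, $\piCostBis$ is reflexive and transitive on the class of configurations, and hence a preorder. I do not anticipate any significant obstacle here; the only minor care needed is to check that the identity substitution genuinely meets the side-condition in Definition~\ref{def:renaming}, which is immediate, and to note that reflexivity is obtained at credit $0$ (since Lemma~\ref{lem:reflexivity} is stated at credit $0$), so the existential quantification over credits in the definition of $\piCostBis$ is satisfied.
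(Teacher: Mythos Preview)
Your proposal is correct and matches the paper's own proof essentially verbatim: the paper simply states that the result follows from Lemma~\ref{lem:reflexivity} specialised to the identity renaming together with Lemma~\ref{lem:transitivity}. Your additional remarks (that the identity satisfies Definition~\ref{def:renaming} and that reflexivity is obtained at credit $0$) are accurate elaborations of what the paper leaves implicit.
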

\begin{proof}
  Follows from Lemma~\ref{lem:reflexivity} (for the special case where $\sigma_\env$ is the identity) and Lemma~\ref{lem:transitivity}.
\end{proof}

\medskip

We can define a restricted form of amortised typed bisimulation, in analogous fashion to Definition~\ref{def:amortized-typed-bisim}, whereby the credit is \emph{capped at some upper bound}, \ie some natural number $m$. We refer to such relations as \emph{Bounded Amortised Typed-Bisimulations} and write $$\env \vDash^m \sysSP \,\piCostBisAmm{n}\, \sys{\sVV}{Q}$$ to denote that \confESP\ and \confE{\sys{\sVV}{Q}} are related by some amortised typed-indexed bisimulation at index \env and credit $n$, and where every credit in this relation is less than or equal to $m$; whenever the precise credit $n$ is not important we elide it and simply write $\env \vDash^m \sysSP \,\piCostBis\, \sys{\sVV}{Q}$.  We can show that bounded amortised typed-bisimulations are symmetric.

\begin{lem}[Symmetry]\label{lem:symmetry-bound}
  $\env \vDash^m \sysSP \,\piCostBis\, \sys{\sVV}{Q}$ implies $\env \vDash^m \sys{\sVV}{Q} \,\piCostBis\, \sysSP$
\end{lem}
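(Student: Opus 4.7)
The plan is to exhibit the witness for the conclusion as the pointwise swap of the hypothesised bisimulation, with credits complemented through $m$. Specifically, if $\relR$ is a bounded amortised typed-bisimulation at bound $m$ witnessing the hypothesis, I would define
\begin{equation*}
  \relR' \;\deftxt\; \sset{\langle \env', m-n, \sys{\sVV'}{Q'}, \sys{\sV'}{P'} \rangle \;|\; \langle \env', n, \sys{\sV'}{P'}, \sys{\sVV'}{Q'}\rangle \in \relR}.
\end{equation*}
Because $\relR$ is bounded by $m$, each credit $n$ it mentions lies in $[0,m]$, so $m-n$ also lies in $[0,m]$; hence $\relR'$ is well-defined and bounded by $m$. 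Configuration-hood transfers immediately since we are merely swapping the two configurations within each quadruple, and the observer environment $\env'$ is unchanged.

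The next step is to verify that $\relR'$ satisfies the transfer property of Definition~\ref{def:amortized-typed-bisim}. Suppose the left system of some $\relR'$-quadruple (which is a ``$Q$-side'' system) performs a transition with cost $k'$ via some label $\mu$. Since that same system sits on the \emph{right} of the corresponding $\relR$-quadruple, the second clause of the bisimulation property of $\relR$ supplies a matching weak move on the ``$P$-side'' with some cost $k$, landing in a pair $\relR$-related at credit $n + k' - k$. Swapping yields a $\relR'$-quadruple at credit $m - (n + k' - k)$, which equals $(m-n) + k - k'$, precisely the credit demanded by the transfer clause of $\relR'$ when the left moves with cost $k'$ and the right matches with cost $k$. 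The dual case (a transition by the right system of $\relR'$) is symmetric and invokes the first clause of $\relR$.

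The whole argument pivots on the arithmetic identity $(m-n) + k - k' = m - (n + k' - k)$. Non-negativity of the new $\relR'$-credit follows because $\relR$ is bounded by $m$, so $n + k' - k \leq m$; and the new credit remains bounded by $m$ because credits in $\relR$ live in $\Nats$, giving $n + k' - k \geq 0$. The main conceptual point worth highlighting --- and indeed the very reason we restrict attention to \emph{bounded} amortised bisimulations for this lemma --- is that this symmetry argument relies essentially on the fixed ceiling $m$: without such a ceiling the reverse relation would require arbitrarily large ``co-credits'' to absorb past savings by $\sysSP$ over $\sys{\sVV}{Q}$, and the natural-number constraint on credits would break the construction. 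No other step is subtle; the rest is a direct coinductive check.
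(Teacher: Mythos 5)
Your proposal is correct and matches the paper's own proof essentially verbatim: the paper also forms the swapped relation with complemented credits $m-n$ and verifies the transfer property via the identity $m-(n+l-k)=(m-n)+k-l$, using the bound $m$ for non-negativity and the natural-number credits for the upper bound. Nothing further to add.
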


\begin{proof}
  If \relR\ is the bounded amortised typed relation justifying $\env \vDash^m \sysSP \,\piCostBis\, \sys{\sVV}{Q}$, we define the amortised typed relation
  \begin{displaymath}
    \relR_\text{sym} = \sset{
      \langle \env, (m-n), \sys{\sVV}{Q}, \sysSP \rangle \;|\; \langle \env, n, \sysSP, \sys{\sVV}{Q} \rangle \in \relR
     }
  \end{displaymath} 
  and show that it is a bounded amortised typed bisimulation as well.  Consider an arbitrary pair of configurations $\env \vDash \sys{\sVV}{Q} \,\relR_\text{sym}^{\,m-n}\, \sysSP$:
  \begin{itemize}
  \item Assume $\confE{\sys{\sVV}{Q}} \piRedDecCost{\mu}{l} \conf{\env'}{\sys{\sVV'}{Q'}}$.  From the definition of  $\relR_\text{sym}$, it must be the case that  $ \langle \env, n, \sysSP, \sys{\sVV}{Q} \rangle \in \relR$. Since \relR\ is a bounded amortised typed bisimulation, we know that $\confE{\sys{\sV}{P}} \piRedWDecCost{\hat{\mu}}{l} \conf{\env'}{\sys{\sV'}{P'}}$ where  $ \langle \env', n+l-k, \sys{\sV'}{P'}, \sys{\sVV'}{Q'} \rangle \in \relR$.  We however need to show that $ \langle \env', ((m - n) +k-l), \sys{\sVV'}{Q'}, \sys{\sV'}{P'} \rangle \in \relR_\text{sym}$, which follows from the definition of $\relR_\text{sym}$  and the fact that $\bigl(m - (n + l - k)\bigr) =  (m-n) + k - l$.  

What is left  to show is that $\relR_\text{sym}$  is an amortised typed bisimulation bounded by $m$, \ie we need to show that $0 \leq  (m-n) + k - l \leq m$.  Since \relR\ is an $m$-bounded amortised typed bisimulation, we know that $ 0 \leq (n + l -k) \leq m$ from which we can drive $- m \leq -(n + l -k) \leq 0$ and, by adding $m$ throughout we obtain $0 \leq \bigl( m -(n + l -k) =  (m-n) + k - l\bigr) \leq m $ as required.
\item The dual case for $\confE{\sys{\sV}{P}} \piRedDecCost{\mu}{l} \conf{\env'}{\sys{\sV'}{P'}}$ is analogous. \qedhere
  \end{itemize}
\end{proof}

\medskip

\emph{Contextuality} is an important property for any behavioural relation.  In our case, this means that two systems \sysSP and \sys{\sVV}{Q} related by \piCostBis under \env, remain related when extended with an additional process, $R$, whenever this process runs safely over the respective resource environments \sV\ and \sVV, and observes the type restrictions and guarantees assumed by \env (and dually, those of the respective existentially-quantified type environments for \sysSP and \sys{\sVV}{Q}). Following Definition~\ref{def:configuration}, for these conditions to hold, contextuality requires $R$ to typecheck \wrt a sub-environment of \env, say $\env_1$ where $\env = \env_1,\env_2$, and correspondingly strengthens the relation of  \sysS{P\piParal R} and \sys{\sVV}{Q\piParal R} in \piCostBis under the remaining sub-environment, $\env_2$.  Stated otherwise, contextuality requires the transfer of the respective permissions associated with the observer sub-process $R$ from the observer environment \env; this is crucial in order to preserve consistency, thus safety, in the respective configurations.  The formulation of Theorem~\ref{thm:costed-bisim-compositionality}, proving contextuality for \piCostBis, follows this reasoning.  It relies on a list of lemmas outlined below.

\begin{lem}[Weakening]\label{lem:weakening} 
If
  $\confESP \;\piRedDecCostPre{\;\mu\;}{k}\; \conf{\env'}{\sys{\sV'}{P'}}$
then 
  $\conf{(\env,\envv)}{\sysSP} \;\piRedDecCostPre{\;\mu\;}{k}\; \conf{(\env',\envv)}{\sys{\sV'}{P'}}$.
(These may or may not be configurations.)
\end{lem}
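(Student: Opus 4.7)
The plan is to proceed by rule induction on the derivation of $\confESP \;\piRedDecCostPre{\;\mu\;}{k}\; \conf{\env'}{\sys{\sV'}{P'}}$ in the pre-LTS of \figref{fig:LTS}, showing in each case that the same rule can be instantiated (possibly after invoking the inductive hypothesis on sub-derivations) with $\env$ replaced by $\env,\envv$ and $\env'$ replaced by $\env',\envv$.

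First I would dispatch the axiom cases. For \rtit{lOut} and \rtit{lIn}, the rule modifies only the assumption on the communicating channel $c$ and the value permissions $\envmap{\vec{d}}{\vec{\tV}}$; since these concern identifiers drawn from the original $\env$ (the process side of the rule), the same rule instance fires with the context $\envv$ carried along untouched on both sides. The same observation applies to \rtit{lAllE}, \rtit{lFreeE}, \rtit{lAll}, \rtit{lFree}, \rtit{lThen}, \rtit{lElse}, \rtit{lRec}, which either leave the environment fixed or add/remove a single channel assumption; in each case $\envv$ is simply an inert spectator. For \rtit{lStr}, I need the auxiliary fact that $\env \struct \env'$ implies $(\env,\envv) \struct (\env',\envv)$; this follows directly from the definition of $\struct$ in \figref{fig:typingrules} since each of the generating rules (\rtit{tCon}, \rtit{tJoin}, \rtit{tEq}, \rtit{tWeak}, \rtit{tSub}, \rtit{tRev}) is a local rewrite on a single assumption, so appending $\envv$ preserves the derivation.

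The inductive cases are \rtit{lPar-L} (and its symmetric variant) and \rtit{lCom-L} (and its symmetric variant). For \rtit{lPar-L}, a single inductive hypothesis on the premise $\confESP \;\piRedDecCostPre{\;\mu\;}{k}\; \conf{\env'}{\sys{\sV'}{P'}}$ yields $\conf{(\env,\envv)}{\sysSP} \;\piRedDecCostPre{\;\mu\;}{k}\; \conf{(\env',\envv)}{\sys{\sV'}{P'}}$, and reapplying \rtit{lPar-L} gives the result. For \rtit{lCom-L}, the premises are quantified over \emph{independently chosen} witness environments $\env_1,\env_2$ with $\env_1,\env_2 = \env$ (up to $\struct$); one simply extends, say, $\env_2$ by $\envv$ (so that $(\env_1,\env_2,\envv) = (\env,\envv)$), applies the inductive hypothesis to the input-side premise to thread $\envv$ through, and reassembles via \rtit{lCom-L}. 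The cost of the communication is $0$, independent of the environment, so no bookkeeping on $k$ is needed.

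Finally, the top-level rule \rtit{lRen} requires a small check: renamings modulo $\env$ are a subset of renamings modulo $\env,\envv$ (every name fixed by $\env$ is still fixed when enlarging the domain with $\envv$, provided we suitably extend $\sigma_\env$ to also fix the new names), so the renaming used to derive the original transition can be reused, and the inductive hypothesis applies to the pre-LTS premise. The only mildly delicate step is the handling of \rtit{lStr} and the commutation of $\struct$ with environment extension; this is the main place where the structure of the type system intervenes, but it is immediate from the pointwise nature of the structural rules. All remaining cases are mechanical.
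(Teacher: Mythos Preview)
Your approach---rule induction on the pre-LTS derivation---is exactly what the paper does, and the bulk of your case analysis is fine. Two points are off, though neither is fatal.

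First, \rtit{lRen} is not a case at all: the lemma is stated for the pre-LTS relation $\piRedDecCostPre{\;\mu\;}{k}$, whereas \rtit{lRen} is the single top-level rule that defines the full LTS $\piRedDecCost{\;\mu\;}{k}$ in terms of the pre-LTS. So your final paragraph should simply be dropped. (Incidentally, the inclusion you state there is backwards: enlarging the environment \emph{shrinks} the set of admissible renamings, since more names must be fixed.)

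Second, your treatment of \rtit{lCom-L} misreads the rule. There is no constraint tying $\env_1,\env_2$ in the premises to $\env$ in the conclusion; the paper is explicit that the conclusion's observer environment is \emph{independent} of the witness environments used to justify the input and output sub-derivations. Hence no inductive hypothesis is needed for this case: you keep the original premises unchanged and simply instantiate the conclusion with $(\env,\envv)$ in place of $\env$. Your proposed detour (threading $\envv$ through $\env_2$ via the IH) happens to still produce a valid derivation, but it rests on a false premise about the rule's shape.

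The paper's proof is a one-liner (``by rule induction''), singling out only \rtit{lAllE}/\rtit{lAll} to observe that the freshness side-condition $c \notin \sV$ is on the resource environment, not on $\env$, so weakening the observer environment cannot block the allocation. Your axiom-case discussion covers this implicitly.
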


\begin{proof}  
  By rule induction on $ \confESP \piRedDecCostPre{\;\mu\;}{k} \conf{\env'}{\sys{\sV'}{P'}}$.   Note that, in the case of $\actall$, the action can still be performed. 
\end{proof}

\begin{lem}[Strengthening]\label{lem:strenghtening}
If
  $\conf{(\env,\envv)}{\sysSP}\piRedDecCostPre{\;\mu\;}{k}\conf{(\env',\envv)}{\sys{\sV'}{P'}}$
then
  $\conf{\env}{\sysSP}\piRedDecCostPre{\;\mu\;}{k}\conf{\env'}{\sys{\sV'}{P'}}$.
\end{lem}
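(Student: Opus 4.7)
I would prove Strengthening by rule induction on the derivation of the pre-transition $\conf{(\env,\envv)}{\sysSP}\piRedDecCostPre{\mu}{k}\conf{(\env',\envv)}{\sys{\sV'}{P'}}$, case-splitting on the last rule of Figure~\ref{fig:LTS} applied. The key observation that drives every case is that the observer environment $\envv$ appears \emph{unchanged} on both sides of the transition; thus whichever assumptions are inspected, consumed, or produced by the rule can be attributed to the $\env$ portion without altering $\envv$.

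The straightforward cases are the axioms that do not inspect the observer environment at all (namely \rtit{lRec}, \rtit{lThen}, \rtit{lElse}, \rtit{lAll}, \rtit{lFree}) and the rules that merely propagate transitions of sub-processes (\rtit{lPar-L}, \rtit{lPar-R}). For the axioms, observing that $\env = \env'$ and $(\env,\envv) = (\env',\envv)$, I re-apply the same axiom to the stripped environment $\env$. For the parallel rules, the premise transition has the form $\conf{(\env,\envv)}{\sysS{P}}\piRedDecCostPre{\mu}{k}\conf{(\env',\envv)}{\sys{\sV'}{P'}}$, so the induction hypothesis yields $\conf{\env}{\sysS{P}}\piRedDecCostPre{\mu}{k}\conf{\env'}{\sys{\sV'}{P'}}$ and I reapply the parallel rule. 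The communication rules \rtit{lCom-L} and \rtit{lCom-R} are trivial since they use arbitrary (existentially quantified) premise environments and leave the conclusion environment unchanged; I simply reuse the same premise derivations.

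For the rules that actually inspect the environment, namely \rtit{lOut}, \rtit{lIn}, \rtit{lAllE}, and \rtit{lFreeE}, I argue that the matching pattern forces the modified assumption(s) to live in $\env$. For \rtit{lAllE}, the output environment is $(\env,\envv,\emap{c}{\chantypU{\tVlst}})$, which equals $(\env',\envv)$ exactly when $\env' = \env,\emap{c}{\chantypU{\tVlst}}$, so \rtit{lAllE} applies directly to $\env$. The case for \rtit{lFreeE} is symmetric. For \rtit{lOut} and \rtit{lIn}, the rule pulls a channel assumption $\emap{c}{\chantyp{\tVlst}{\aV}}$ from the environment, replacing it (via the $\aV-1$ operation) and possibly transferring further assumptions; by comparing the multiset decompositions of the input $(\env,\envv)$ and output $(\env',\envv)$ and using that $\envv$ occurs identically on both sides, I can always attribute the affected assumptions to $\env$, yielding the stripped transition.

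The main obstacle will be the \rtit{lStr} case, where the premise is $(\env,\envv) \struct (\env',\envv)$, and I must produce $\env \struct \env'$. This is delicate because individual structural rules like \rtit{tJoin} and \rtit{tCon} can split or merge an assumption whose two halves lie on different sides of the $\env$/$\envv$ boundary, so the derivation witnessing $(\env,\envv) \struct (\env',\envv)$ need not restrict cleanly to $\env$ alone. I would handle this by a secondary induction on the length of the $\struct$-derivation, showing at each step that any manipulation touching the $\envv$-portion must eventually be undone (because $\envv$ is the same on both sides); a standard rearrangement/cancellation argument over the (local) structural rules then extracts a derivation $\env \struct \env'$ that only manipulates the $\env$-portion, after which a single application of \rtit{lStr} closes the case.
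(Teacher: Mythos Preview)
Your overall strategy—rule induction on the pre-transition derivation—is exactly the paper's approach, and your handling of the rules that leave the observer environment untouched (\rtit{lRec}, \rtit{lThen}, \rtit{lElse}, \rtit{lAll}, \rtit{lFree}, \rtit{lPar}, \rtit{lCom}, \rtit{lAllE}, \rtit{lFreeE}) is sound. The paper offers nothing beyond ``by rule induction'' together with the remark that $\envv$ appearing identically on both sides is what makes \rtit{lOut}/\rtit{lIn} work, so in that respect your outline is strictly more explicit than the paper's.

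Where you go beyond the paper, however, you run into trouble. Your treatment of \rtit{lStr} relies on a cancellation property---that $(\env,\envv)\struct(\env',\envv)$ implies $\env\struct\env'$---which fails. Take $\env=\emptyset$ and $\envv=\emap{c}{\chantypW{\tV}}$. One application of \rtit{tCon} via \rtit{pUnr} gives $\emap{c}{\chantypW{\tV}}\struct\emap{c}{\chantypW{\tV}},\emap{c}{\chantypW{\tV}}$, i.e.\ $(\env,\envv)\struct(\env',\envv)$ with $\env'=\emap{c}{\chantypW{\tV}}$; yet $\emptyset\struct\emap{c}{\chantypW{\tV}}$ is underivable, since no structural rule manufactures an assumption from nothing. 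The manipulation touching $\envv$ is therefore \emph{not} ``eventually undone'': its duplicate migrates into $\env'$ and cannot be replayed on $\env$ alone. The same phenomenon undercuts your multiset argument for \rtit{lOut}/\rtit{lIn} when the channel assumption is unrestricted and sits solely in $\envv$: the rule consumes and immediately reproduces $\emap{c}{\chantypW{\tVlst}}$, so $\envv$ is unchanged, but the stripped transition would still require a $c$-assumption in $\env$. The paper's one-line proof does not address these corners either, so this is less a divergence from the paper than a sign that the lemma, read literally over multiset environments, needs an additional hypothesis (disjointness of supports, or a restriction to the $\tau$-transitions for which it is actually invoked in Theorem~\ref{thm:costed-bisim-compositionality}) that neither you nor the paper spells out.
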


\begin{proof}
  By rule induction on $ \conf{\env,\envv}{P}\piRedDecCostPre{\;\mu\;}{k}\conf{\env',\envv}{P'}$.    Note that strengthening is restricted to the part of the environment that remains unchanged ($\envv$ is the same on the left and right hand side) --- otherwise the property does not hold for actions \actout{c}{\vec{d}} and \actin{c}{\vec{d}}.
\end{proof}

\begin{lem}\label{prop:consistency-env-struct}
  If $\env,\envv$ is consistent and $\envv\struct \envv'$ then $\env,\envv'$ is consistent and $\env,\envv\struct \env,\envv'$
\end{lem}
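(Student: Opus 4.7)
The plan is to split the statement into its two conjuncts and prove them in order, deducing the consistency claim from the structural claim together with transitivity of $\struct$.

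I would first prove the second conjunct $\env,\envv \struct \env,\envv'$ by induction on the derivation of $\envv \struct \envv'$. Observe that each base rule for environment structural equivalence in \figref{fig:typingrules} (that is, \rtit{tCon}, \rtit{tJoin}, \rtit{tEq}, \rtit{tWeak}, \rtit{tSub}, and \rtit{tRev}) is \emph{schematic in a prefix environment}: each rule has the form $\env_0, \Delta \struct \env_0, \Delta'$ for some arbitrary $\env_0$. Consequently, if a single rule application yields $\envv = \env_0, \Delta$ and $\envv' = \env_0, \Delta'$, then the very same rule instance with the enlarged prefix $\env, \env_0$ in place of $\env_0$ derives $\env, \envv \struct \env, \envv'$. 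The reflexivity case is immediate, and the transitivity case follows from the inductive hypothesis applied to the two sub-derivations and one further use of transitivity. This step is essentially a context-extension lemma and carries no real obstacle.

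For the first conjunct, I would unfold \defref{def:consistency} to obtain a partial map $\envG$ with $\envG \struct \env, \envv$ witnessing the consistency of $\env,\envv$. Combining this with the derivation $\env, \envv \struct \env, \envv'$ obtained in the previous paragraph, and using transitivity of $\struct$ (which is explicitly part of its definition as the least reflexive, transitive relation satisfying the base rules), yields $\envG \struct \env, \envv'$. Since $\envG$ is already a partial map, this witnesses the consistency of $\env, \envv'$, as required.

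The only subtle point worth double-checking is that each base rule really is context-polymorphic in the expected sense, i.e.\ that the displayed $\env$ in rules such as \rtit{tCon}, \rtit{tRev}, etc.\ may be instantiated by any environment (including one that contains further assumptions about the same identifiers manipulated by the rule). The rules as stated in \figref{fig:typingrules} impose no side-condition restricting this prefix, so the induction goes through without complication; in particular, no appeal to consistency of the intermediate environments is needed for the structural derivation itself.
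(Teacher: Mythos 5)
Your argument is correct: the paper itself gives no proof of this lemma (it simply defers to \cite{EFH:uniqueness:journal:12}), and your two-step route---a context-extension argument by induction on the derivation of $\envv\struct\envv'$, exploiting that every structural axiom in \figref{fig:typingrules} is parametric in its ambient environment, followed by transitivity of $\struct$ composed with the partial-map witness of \defref{def:consistency}---is exactly the standard argument that citation stands for. Your closing check is also the right one to make: none of \rtit{tCon}, \rtit{tJoin}, \rtit{tEq}, \rtit{tWeak}, \rtit{tSub}, \rtit{tRev} places a side-condition on the prefix environment, and since environments are multisets the enlarged prefix $\env,\env_0$ is a legitimate instantiation, so the induction and the final transitivity step go through without any appeal to consistency of intermediate environments.
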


\begin{proof}
  As in \cite{EFH:uniqueness:journal:12}.
\end{proof}

\begin{lem}[Typing Preserved by \piStruct]\label{prop:steq-typing}
   $\env\vdash P$  and $P \piStruct Q$ implies $\env\vdash Q$
\end{lem}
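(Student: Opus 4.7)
The plan is to proceed by induction on the derivation of $P \piStruct Q$, where $\piStruct$ is the least equivalence relation on processes (closed under parallel-context congruence) generated by the three axioms \rtit{sCom}, \rtit{sAss}, \rtit{sNil} of \figref{fig:reduction-semantics}. The reflexive, transitive, and congruence cases are handled by direct appeal to the induction hypothesis (the congruence case additionally uses a single inversion step and a reapplication of \rtit{tPar}), so the real content lies in proving the three axioms together with their converses (needed for symmetry of $\piStruct$).

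Before attacking the axioms, I would establish an auxiliary inversion lemma: if $\env\vdash P \piParal Q$ then there exist $\env_1,\env_2$ such that $\env\struct \env_1,\env_2$ with $\env_1\vdash P$ and $\env_2\vdash Q$. This is proved by a subsidiary induction on the typing derivation, handling \rtit{tPar} directly and \rtit{tStr} by transitivity of $\struct$ (using Proposition~\ref{prop:consistency-env-struct}). This lets me ``peel off'' any interleaved applications of \rtit{tStr} between the outermost logical rule and the root of the derivation, which is the main technical nuisance given the flexibility of the environment structural rules (splitting, joining, weakening, subtyping, revision, and equi-recursion).

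With the inversion lemma in hand, the arguments for \rtit{sCom} and \rtit{sAss} are routine: invert to extract an environment split, reorder or regroup the pieces (environments are multisets, so reordering is for free), recombine with \rtit{tPar}, and close via \rtit{tStr} to restore the original $\env$. For \rtit{sNil}, the forward direction reasons as follows: from $\env\vdash P \piParal \piNil$, inversion gives $\env\struct \env_1,\env_2$ with $\env_1\vdash P$ and $\env_2\vdash\piNil$; repeated application of \rtit{tWeak} yields $\env_1,\env_2\struct\env_1$, so by transitivity of $\struct$ we get $\env\struct\env_1$, and hence $\env\vdash P$ by \rtit{tStr}. The converse direction is easier: from $\env\vdash P$, use the split $\env,\emptyset$ in \rtit{tPar} together with $\emptyset\vdash\piNil$ (from \rtit{tNil}) to derive $\env\vdash P\piParal\piNil$.

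The main obstacle is the pervasive role of \rtit{tStr} in the type system, which can intervene at any point in a derivation and can substantially rearrange the environment. This is what forces the preliminary inversion lemma to be stated modulo $\struct$ rather than as exact equality of environments, and it is what one must keep track of when repackaging the pieces in each axiom case. Once inversion is in place, the remaining bookkeeping is entirely standard multiset and structural manipulation.
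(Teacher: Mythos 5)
Your proposal is correct and matches the intended argument: the paper does not spell out a proof at all, simply deferring to \cite{EFH:uniqueness:journal:12}, where the result is established by exactly this kind of induction on the derivation of $P \piStruct Q$ together with an inversion lemma for parallel composition stated modulo $\struct$, followed by the routine repackaging of environment splits via \rtit{tPar}, \rtit{tWeak} and \rtit{tStr} that you describe. The only cosmetic remark is that in the \rtit{tStr} case of your inversion lemma you do not need Lemma~\ref{prop:consistency-env-struct}: transitivity is already built into the definition of $\struct$ as the least reflexive transitive relation closed under the listed rules.
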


\begin{proof}
 As in \cite{EFH:uniqueness:journal:12}.
\end{proof}

\begin{lem}[Environment Structural Manipulation Preserves Bisimulation]\label{prop:env-struct-rules-bisim}
  \begin{displaymath}
  \env \vDash S\;\piCostBisAmm{n}\; T \text{ and }\env \struct \env' \text{ implies } \env' \vDash S\;\piCostBisAmm{n}\; T 
\end{displaymath}
\end{lem}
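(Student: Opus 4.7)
The plan is to coinductively exhibit an amortised typed bisimulation at $\env'$ that contains $(S, T)$ at credit $n$. Let $\relR$ be the bisimulation witnessing $\env \vDash S\;\piCostBisAmm{n}\; T$ and define
\begin{displaymath}
\relR' \;\deftxt\; \{\langle \env_2, m, U, V\rangle \;|\; \exists \env_1.\; \env_1 \struct \env_2 \;\text{and}\; \langle \env_1, m, U, V\rangle \in \relR\}.
\end{displaymath}
Reflexivity of $\struct$ yields $\relR \subseteq \relR'$, and the hypothesis $\env \struct \env'$ places $\langle \env', n, S, T\rangle$ in $\relR'$, reducing the goal to verifying the transfer property of Definition~\ref{def:amortized-typed-bisim} for $\relR'$. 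Configuration validity at $\env_2$ follows from Lemma~\ref{prop:consistency-env-struct} (consistency is preserved by $\struct$) together with the observation that every environment-structural rule satisfies $\dom(\env_2) \subseteq \dom(\env_1)$, so $\dom(\env_2) \subseteq \sV$.

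Fix a pair $\langle \env_2, m, U, V\rangle \in \relR'$ justified by $\env_1 \struct \env_2$ and $\langle \env_1, m, U, V\rangle \in \relR$, and consider a transition $\conf{\env_2}{U} \piRedDecCostPad{\mu}{k} \conf{\env_3}{U'}$. The case $\mu = \actenv$ is immediate: by Lemma~\ref{lem:transition-structure}, $U' = U$, $k = 0$, and $\env_2 \struct \env_3$; the match $\conf{\env_2}{V} \piRedDecCostPad{\actenv}{0} \conf{\env_3}{V}$ is derivable by \rtit{lStr}, and $\env_1 \struct \env_3$ follows by transitivity, so the resulting pair sits in $\relR'$ at credit $m$. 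For $\mu \neq \actenv$, I prepend the given transition with $\conf{\env_1}{U} \piRedDecCostPad{\actenv}{0} \conf{\env_2}{U}$ via \rtit{lStr} and apply the bisimulation property of $\relR$ twice: first to match the $\actenv$-step, yielding $\conf{\env_1}{V} \piRedWDecCostPad{\actenv}{l_1} \conf{\env_2}{V_1}$ with $\langle \env_2, m+l_1, U, V_1\rangle \in \relR$; then, using this new pair, to match the $\mu$-step, yielding $\conf{\env_2}{V_1} \piRedWDecCostPad{\hat\mu}{l_2} \conf{\env_3}{V'}$ with $\langle \env_3, m+l_1+l_2-k, U', V'\rangle \in \relR \subseteq \relR'$.

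The main obstacle is that the weak $\actenv$-match drives $V$ to a possibly different derivative $V_1$, whereas the bisimulation game at $\env_2$ requires the matching move to start from $V$ itself. I would overcome this via a brief sublemma asserting that pre-LTS $\tau$-transitions are independent of the observer environment; this is immediate by inspection of the rules \rtit{lCom}, \rtit{lRec}, \rtit{lThen}, \rtit{lElse}, \rtit{lAll}, and \rtit{lFree}, which all leave $\env$ untouched in the conclusion. Decomposing $\conf{\env_1}{V} \piRedWDecCostPad{\actenv}{l_1} \conf{\env_2}{V_1}$ into $\tau$-prefix, single \rtit{lStr} step, and $\tau$-suffix, and replaying the two $\tau$-phases at $\env_2$ produces $\conf{\env_2}{V} \piTauCostPad{l_1} \conf{\env_2}{V_1}$, which splices with the $\mu$-match into $\conf{\env_2}{V} \piRedWDecCostPad{\hat\mu}{l_1+l_2} \conf{\env_3}{V'}$ at the correct accumulated cost. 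The amortisation credit $m + l_1 + l_2 - k$ coincides with what $\relR$ delivers, and the symmetric direction (where $V$ moves first) is handled analogously by swapping the roles of $U$ and $V$.
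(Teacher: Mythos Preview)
Your proposal is correct and follows essentially the same approach as the paper: both define the $\struct$-closure of the bisimulation as the candidate relation and verify the transfer property coinductively. The paper's proof is a one-line sketch that stops at naming this relation; you supply the verification details, notably the trick of prepending an \rtit{lStr} step to reduce the $\env_2$-move to an $\env_1$-move matched by $\relR$, together with the observation that $\tau$-transitions are environment-independent so the padding produced by the weak $\actenv$-match can be replayed at $\env_2$.
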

\begin{proof}
  By coinduction.  We define the quarternary relation $$\sset{\langle \env',n,S,T \rangle | \; \env \vDash S\;\piCostBisAmm{n}\; T \text{ and }\env \struct \env' }$$ and show that it observes the transfer property of \defref{def:amortized-typed-bisim}.
\end{proof}

\begin{lem}[Bisimulation and Structural Equivalence]\label{lem:bisim-struct-equiv}
  \begin{displaymath}
    P\piStruct Q \text{ and }\confE{\sys{\sV}{P}} \piRedDecCostPad{\mu}{k} \conf{\envv}{\sys{\sV'}{P'}}  \text{ implies } \confE{\sys{\sV}{Q}} \piRedDecCostPad{\mu}{k} \piCfg{\envv}{\sys{\sV'}{Q'}} \text{ and } P'\piStruct Q' 
  \end{displaymath}
\end{lem}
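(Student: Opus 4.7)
Plan: The proof proceeds by induction on the derivation of $P \piStruct Q$. Since $\piStruct$ is the least equivalence relation generated by the three axioms \rtit{sCom}, \rtit{sAss}, and \rtit{sNil} (extended as a congruence under parallel composition), the induction decomposes into reflexivity (trivial, take $Q' = P'$), symmetry and transitivity (routine, applying the inductive hypothesis appropriately), congruence under parallel composition (handled by an auxiliary induction that propagates the transition through the surrounding parallel context using \rtit{lPar-L} or \rtit{lPar-R}), and the three axiomatic base cases.

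Before attacking the base cases I use rule \rtit{lRen} to reduce the claim to its analogue for pre-transitions: since structural equivalence commutes with channel substitution, $P \piStruct Q$ gives $P\sigma_\env \piStruct Q\sigma_\env$, so matching a pre-transition of $P\sigma_\env$ by one of $Q\sigma_\env$ lifts back to a full transition match by re-applying \rtit{lRen}. For each axiom I then case-split on the pre-transition rule derivable for the left-hand side. For \rtit{sNil}, the transition of $P \piParal \piNil$ must derive from \rtit{lPar-L} (because \piNil\ admits no pre-transition), and the matching residual is $Q' = P'$, noting $P' \piStruct P' \piParal \piNil$. For \rtit{sCom}, the transition of $P \piParal Q$ arises from one of \rtit{lPar-L}, \rtit{lPar-R}, \rtit{lCom-L}, or \rtit{lCom-R}; the symmetric rule applied to $Q \piParal P$ yields the matching transition, and the residual remains related by \rtit{sCom}. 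For \rtit{sAss}, the case-split identifies which top-level constituent (or pair of constituents, for a communication) originates the transition, and the transition is re-derived on the regrouped composition using the appropriate combination of parallel and communication rules.

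The main obstacle will be the communication sub-case of \rtit{sAss}: a synchronisation inside $P \piParal (Q \piParal R)$ occurring between $P$ and a sub-component of $Q \piParal R$ must be re-derived in $(P \piParal Q) \piParal R$, which requires reorganising nested applications of \rtit{lPar-L}, \rtit{lPar-R}, \rtit{lCom-L}, and \rtit{lCom-R}. The bookkeeping is manageable once one observes that \rtit{lPar-L} and \rtit{lPar-R} leave the observer environment \env\ unchanged and that the cost $k$ of an internal communication is always $0$; hence the permission transfer recorded in \env, the resource environment $\sV'$, and the cost $k$ all agree across the matching derivations by construction, and the final residual is related to the original by a single application of \rtit{sAss} together with the inductive hypothesis.
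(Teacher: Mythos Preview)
Your proposal is correct and follows essentially the same approach as the paper: rule induction on the derivation of $P \piStruct Q$ followed by a case analysis on the transition rules. The paper's own proof is the one-line sketch ``By rule induction on $P\piStruct Q$ and then a case analysis of the rules permitting $\confE{\sys{\sV}{P}} \piRedDecCostPad{\mu}{k} \conf{\envv}{\sys{\sV'}{P'}}$,'' and your write-up is a faithful elaboration of precisely that strategy, including the reduction to pre-transitions via \rtit{lRen} and the reassociation bookkeeping for the communication sub-case of \rtit{sAss}.
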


\begin{proof}
  By rule induction on $P\piStruct Q$ and then a case analysis of the  rules permitting $\confE{\sys{\sV}{P}} \piRedDecCostPad{\mu}{k} \conf{\envv}{\sys{\sV'}{P'}}$.
\end{proof}

\begin{cor}[Structural Equivalence and Bisimilarity]\label{cor:Struct-eq-implies-bisim}
  \begin{math}
    P\piStruct Q \text{ implies } \env \vDash \sys{\sV}{P} \,\piCostBisAmm{n}\, \sys{\sV}{Q}   
  \end{math}
for arbitrary $n$ and \env where \confE{\sys{\sV}{P}}  and \confE{\sys{\sV}{Q}} are configurations.
\end{cor}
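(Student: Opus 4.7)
The plan is to exhibit a single amortised typed bisimulation that works uniformly for every credit $n$, and then invoke Definition~\ref{def:amortized-typed-bisim} to conclude. Concretely, I would define the candidate family
\begin{equation*}
  \relR \;\deftri\; \bigl\{\;\langle \env, n, \sys{\sV}{P}, \sys{\sV}{Q}\rangle \;\bigm|\; P \piStruct Q,\;\; \confE{\sys{\sV}{P}} \text{ and } \confE{\sys{\sV}{Q}} \text{ are configurations}\;\bigr\}
\end{equation*}
indexed simultaneously by all credits $n \in \Nats$, and show that $\relR$ satisfies the transfer property of Definition~\ref{def:amortized-typed-bisim}. The desired statement then follows because $\langle \env, n, \sys{\sV}{P}, \sys{\sV}{Q}\rangle \in \relR$ for every $n$ whenever $P \piStruct Q$ and both triples are configurations.

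Verifying the transfer property is essentially immediate from Lemma~\ref{lem:bisim-struct-equiv}. Suppose $\confE{\sys{\sV}{P}} \piRedDecCostPad{\mu}{k} \conf{\env'}{\sys{\sV'}{P'}}$. Since $P \piStruct Q$, Lemma~\ref{lem:bisim-struct-equiv} supplies a matching transition $\confE{\sys{\sV}{Q}} \piRedDecCostPad{\mu}{k} \conf{\env'}{\sys{\sV'}{Q'}}$ with $P' \piStruct Q'$, which in turn yields the weak transition $\confE{\sys{\sV}{Q}} \piRedWDecCostPad{\hat{\mu}}{k} \conf{\env'}{\sys{\sV'}{Q'}}$ via \rtit{wTra} (and, when $\mu = \tau$, reading off a zero-length weak move padded with the single $\tau$ via \rtit{wLeft}/\rtit{wRight} as needed). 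The costs cancel so the credit update is $n + k - k = n$, leaving us at exactly the same amortisation level. Subject reduction (Lemma~\ref{lem:subject-reduction}) guarantees that $\conf{\env'}{\sys{\sV'}{P'}}$ and $\conf{\env'}{\sys{\sV'}{Q'}}$ are again configurations, so together with $P' \piStruct Q'$ we obtain $\langle \env', n, \sys{\sV'}{P'}, \sys{\sV'}{Q'}\rangle \in \relR$, as required. The dual case starting from a transition of $\sys{\sV}{Q}$ is symmetric, using $Q \piStruct P$ (structural equivalence being symmetric by inspection of the axioms in \figref{fig:reduction-semantics}).

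The only subtle points are bookkeeping. First, one has to confirm that the matching transition furnished by Lemma~\ref{lem:bisim-struct-equiv} ends in the \emph{same} observer environment $\env'$ and resource environment $\sV'$ as the original transition; this is part of the statement of that lemma and does not require extra work. Second, in the $\tau$-case the weak action $\piRedWDecCostPad{}{k}$ is realised by a single $\tau$-transition, which is indeed admitted because the weak-transition rules treat $\tau$ as an internal action with accumulated cost. No other obstacles arise: the argument is genuinely just packaging Lemma~\ref{lem:bisim-struct-equiv} as a coinductive witness, and the arbitrariness of $n$ is handled by parametrising $\relR$ over all credits at once rather than building a separate witness for each $n$.
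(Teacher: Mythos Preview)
Your proposal is correct and follows essentially the same approach as the paper, which simply says ``By coinduction and Lemma~\ref{lem:bisim-struct-equiv}.'' You have spelled out the coinductive witness and the bookkeeping details that the paper leaves implicit, but the underlying argument is identical.
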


\begin{proof}
  By coinduction and Lemma~\ref{lem:bisim-struct-equiv}.
\end{proof}

\begin{lem}[Renaming]\label{lem:costed-bisim-renaming} 
  If
\begin{math}
\env, \envv \vDash (\sys{M}{P}) \; \piCostBisAmm{n} \; (\sys{N}{Q})
\end{math}
 then
\begin{math}
\env, (\envv\sigma_\env) \vDash (\sys{M}{P})\sigma_\env \; \piCostBisAmm{n} \; (\sys{N}{Q})\sigma_\env
\end{math}
\end{lem}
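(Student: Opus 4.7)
The plan is to proceed by coinduction with candidate family
\begin{equation*}
\mathcal{R} = \sset{\langle \env, \envv\sigma_\env, n, (\sysSP)\sigma_\env, (\sys{\sVV}{Q})\sigma_\env\rangle \mid \env, \envv \vDash \sysSP \piCostBisAmm{n} \sys{\sVV}{Q}}
\end{equation*}
where $\sigma_\env$ ranges over renamings modulo $\env$ per Definition~\ref{def:renaming}, and to show that $\mathcal{R}$ satisfies the transfer property of Definition~\ref{def:amortized-typed-bisim}. Well-formedness of configurations is preserved by the renaming since $\sigma_\env$ is bijective and fixes $\dom(\env)$, so typeability and consistency transfer from the unrenamed side to the renamed one (a routine adaptation of standard results for the type system of \cite{EFH:uniqueness:journal:12}).

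The technical heart is an equivariance property of the costed LTS: whenever $\conf{\env, \envv}{\sysSP} \piRedDecCost{\mu}{k} \conf{\env_1, \envv_1}{\sys{\sV_1}{P_1}}$ and $\sigma$ is a bijective renaming modulo $\env$, then $\conf{\env, \envv\sigma}{(\sysSP)\sigma} \piRedDecCost{\mu\sigma}{k} \conf{\env_1, \envv_1\sigma}{\sys{\sV_1\sigma}{P_1\sigma}}$, with any fresh name introduced by the transition on the right taken to be the $\sigma$-image of the corresponding fresh name on the left. I would prove this by rule induction on the pre-LTS, relying on \rtit{lRen} as the top-level renaming mechanism and on the fact that the remaining axioms commute with substitution in a routine way; the freshness side-conditions in \rtit{lAll} and \rtit{lAllE} transfer because $\sigma$ is a bijection on names.

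Given equivariance, the transfer property follows: a transition $\conf{\env, \envv\sigma_\env}{(\sysSP)\sigma_\env} \piRedDecCost{\mu}{k} \conf{\env_1, \envv_1}{\sys{\sV_1}{P_1}}$ translates via $\sigma_\env^{-1}$ into a corresponding transition $\conf{\env, \envv}{\sysSP} \piRedDecCost{\mu\sigma_\env^{-1}}{k} \conf{\env_1\sigma_\env^{-1}, \envv_1\sigma_\env^{-1}}{\sys{\sV_1\sigma_\env^{-1}}{P_1\sigma_\env^{-1}}}$, to which the bisimulation hypothesis supplies a matching weak transition from $\sys{\sVV}{Q}$ at some cost $l$; re-applying $\sigma_\env$ via equivariance lifts this to the required matching weak transition from $(\sys{\sVV}{Q})\sigma_\env$, yielding a pair that lies in $\mathcal{R}$ by construction with credit $n+l-k$. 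The main obstacle is formulating and proving the equivariance lemma rigorously for rules that modify the observer environment: \rtit{lAllE} introduces a fresh unique-after channel whose identity must be coordinated with the renaming, and \rtit{lCom-L} requires aligning two pre-LTS derivations under a shared renaming. Once this bookkeeping is pinned down---exploiting the fact that \picr\ has no name-scoping binders beyond those already folded into \rtit{lRen}---the rest of the coinductive verification is routine.
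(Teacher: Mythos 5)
Your proposal matches the paper's proof, which is stated simply as ``By coinduction'': the candidate relation of renamed pairs together with an equivariance lemma for the costed LTS under bijective renamings modulo $\env$ is exactly the natural elaboration of that coinductive argument. The bookkeeping you flag (composing with \rtit{lRen}, coordinating fresh names in \rtit{lAllE}, and closing the candidate relation under arbitrary bijections by allowing the fixed part of the environment to be empty so the residual pairs stay in the relation) is the right set of details to pin down, and nothing in it fails.
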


\begin{proof}
  By coinduction.
\end{proof}

\smallskip
\begin{thm}[Contextuality]\label{thm:costed-bisim-compositionality}
  If
\begin{math}
\env, \envv \vDash (\sys{M}{P}) \; \piCostBisAmm{n} \; (\sys{N}{Q})
\end{math}
and $\Delta \vdash R$ then
\begin{displaymath}
  \env \vDash (\sys{M}{P \piParal R}) \; \piCostBisAmm{n} \; (\sys{N}{Q \piParal R})
\quad\text{ and }\quad
  \env \vDash (\sys{M}{R \piParal P}) \; \piCostBisAmm{n} \; (\sys{N}{R \piParal Q})
\end{displaymath}
\end{thm}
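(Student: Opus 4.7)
The plan is to prove contextuality coinductively by exhibiting a suitable family of relations and verifying the transfer property of Definition~\ref{def:amortized-typed-bisim}. Specifically, I would define
$$\relR = \left\{\, \langle \env, n, \sys{M}{P \piParal R}, \sys{N}{Q \piParal R}\rangle \;\mid\; \env, \envv \vDash \sys{M}{P} \piCostBisAmm{n} \sys{N}{Q},\; \envv \vdash R \,\right\}$$
together with the symmetric variant where $R$ appears on the left of the parallel composition, and then show that $\relR$ is an amortised typed bisimulation at $\env$ with credit $n$. First I would confirm that the quadruples in $\relR$ denote configurations under $\env$: if $\Gamma$ types $\sys{M}{P}$ with $\Gamma, \env, \envv$ consistent, then $\Gamma, \envv$ types $P \piParal R$ via \rtit{tPar}, and $(\Gamma, \envv), \env$ is consistent as a rearrangement of the original consistent environment.

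For the transfer property, I would analyse each transition $\confE{\sys{M}{P \piParal R}} \piRedDecCostPad{\mu}{k} \conf{\env'}{\sys{M'}{S'}}$ by case analysis on the pre-LTS derivation, after stripping off an application of \rtit{lRen}. The easy cases are those derived by \rtit{lPar-L}, where the action originates from $P$ alone: Weakening (Lemma~\ref{lem:weakening}) extends the observer to $(\env, \envv)$, the bisimulation hypothesis yields a matching weak transition of $\sys{N}{Q}$, and Strengthening (Lemma~\ref{lem:strenghtening}) restores the observer to $\env$, leaving $R$ untouched. Transitions derived from $R$ alone via \rtit{lPar-R} are matched identically on the right-hand side, with the observer permissions invoked by $R$'s action justified by consistency between $\envv$ (the typing of $R$) and $\env$. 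Observer-side actions (\rtit{lStr}, \rtit{lAllE}, \rtit{lFreeE}) are discharged using Lemma~\ref{prop:env-struct-rules-bisim} and the obvious matching moves on the other side.

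The main obstacle will be the synchronisation case (\rtit{lCom-L} and its dual \rtit{lCom-R}), where $P$ and $R$ communicate. I would instantiate the existentially-quantified observer environments in \rtit{lCom-L} using a split of $(\env, \envv)$ that exposes $R$'s input permission on the communicated channel, thereby deriving an external output action of $P$ under observer $(\env, \envv)$. Applying the bisimulation hypothesis yields a matching weak output from $\sys{N}{Q}$, which I then recombine with $R$'s input via \rtit{lCom-L} to reconstruct a silent synchronisation of $\sys{N}{Q \piParal R}$. The delicate point is ensuring that $R$ evolves to the same derivative on both sides: since $P$ and $Q$ may extrude freshly allocated (and thus potentially distinct) channel names to $R$, I would apply \rtit{lRen} together with the Renaming Lemma~\ref{lem:costed-bisim-renaming} to align these names before recombining. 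Throughout, the Transition and Structure Lemma~\ref{lem:transition-structure} and Subject Reduction (Lemma~\ref{lem:subject-reduction}) provide the bookkeeping needed to re-establish that the residual triples are configurations and that $R'$ remains typeable under the residual $\envv'$, so that the post-step pair lies again in $\relR^{n+l-k}$ with exactly the credit adjustment dictated by the bisimulation hypothesis.
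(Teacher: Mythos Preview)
Your overall architecture matches the paper's: define a candidate family $\relR$ by closing bisimilarity under parallel composition with a typed context, then verify the transfer property by case analysis on which component drives the transition, using Weakening/Strengthening to shuttle $\envv$ in and out of the observer. The synchronisation case is also handled essentially as you describe.

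There is, however, a genuine gap in your treatment of the case where $R$ acts alone. You write that such transitions ``are matched identically on the right-hand side''. This works when $R$'s move is a pure rewrite (recursion unfolding, conditional, internal communication) or an external I/O, but it fails when $R$ allocates or deallocates a channel. Suppose $R \piStruct \piFree{c}{R_1}\piParal R_2$ fires, so $\sys{M}{P\piParal R}$ steps with cost $-1$ to $\sys{M'}{P\piParal R'}$ where $M = M',c$. If you match by letting $R$ perform the same deallocation on the $Q$-side, you obtain $\sys{N'}{Q\piParal R'}$ with $N = N',c$. To close up in $\relR$ you now need
\[
  \env,\envv' \;\vDash\; \sys{M'}{P} \;\piCostBisAmm{n}\; \sys{N'}{Q}
\]
where $\envv'$ types $R'$, but the hypothesis only gives the relation at the \emph{larger} resource environments $M,N$. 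Shrinking (or growing) the shared resource set is not covered by any of the observer-environment lemmas you cite; Lemma~\ref{prop:env-struct-rules-bisim} only manipulates the type environment, not $M$ and $N$.

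The paper's fix is to internalise $R$'s resource action as an \emph{observer} action on the underlying $P$--$Q$ bisimulation. Since $\envv \struct \envv',\emap{c}{\chantypU{\tVlst}}$, Lemma~\ref{prop:env-struct-rules-bisim} gives the bisimulation at $\env,\envv',\emap{c}{\chantypU{\tVlst}}$, and then \rtit{lFreeE} yields
\[
  \conf{\env,\envv',\emap{c}{\chantypU{\tVlst}}}{\sys{M}{P}} \;\piRedDecCostPad{\actfree{c}}{-1}\; \conf{\env,\envv'}{\sys{M'}{P}}.
\]
The bisimulation hypothesis produces a matching weak $\actfree{c}$ from $\sys{N}{Q}$ --- crucially, this may be padded by $\tau$-moves of $Q$ carrying their own cost $k$ --- and the paper then reassembles the composite matching move by replacing the single \rtit{lFreeE} step with $R$'s actual \rtit{lFree} (via \rtit{lPar-R}), interleaved with $Q$'s $\tau$-moves (via \rtit{lPar-L}). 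The resulting credit is $n+k-(-1)$, not $n$. Allocation by $R$ is dual via \rtit{lAllE}/$\actall$. You need this manoeuvre; ``matched identically'' does not suffice.
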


\begin{proof}  We define the family of relations $\relR^{\env,n}$ to be the least one satisfying the rules
{\small
\begin{equation*}
\begin{prooftree}
\env \vDash (\sys{M}{P}) \piCostBisAmm{n} (\sys{N}{Q})
\justifies
\env \vDash (\sys{M}{P}) \; \relR^n \; (\sys{N}{Q})
\end{prooftree}
\qquad
\begin{prooftree}
\env,\envv \vDash (\sys{M}{P}) \; \relR^n \; (\sys{N}{Q}) \quad
\envv\vdash R
\justifies
\env \vDash (\sys{M}{P \piParal R}) \; \relR^n \; (\sys{N}{Q \piParal R})
\end{prooftree}
\qquad
\begin{prooftree}
\env,\envv \vDash (\sys{M}{P}) \; \relR^n \; (\sys{N}{Q}) \quad
\envv\vdash R
\justifies
\env \vDash (\sys{M}{R \piParal P}) \; \relR^n \; (\sys{N}{R \piParal Q})
\end{prooftree}
\end{equation*}
}
and then show that $\relR^{\env,n}$ is a costed typed bisimulation at \env and $n$  (up to $\piStruct$).  Note that the premise of the first rule implies that both 
  $\conf{\env,\envv}{\sys{M}{P}}$ 
and 
  $\conf{\env,\envv}{\sys{N}{Q}}$ 
are configurations.  We consider only the transitions of the left hand configurations for second case of the relation; the first
is trivial and the third is analogous to the second.  Although  the relation is not
symmetric, the transition of the right hand configurations are analogous to those of the left hand configurations.  There are three cases  to consider.
\begin{enumerate}

\item Case the action was instigated by $P$, \ie we have:  
\begin{equation}
\begin{prooftree}
  \[\confE{(\sys{\sV}{P})\sigma_\env} \;\piRedDecCostPre{\;\mu\;}{l}\; \conf{\env'}{\sys{\sV'}{P'}}
\justifiedBy{lPar-L}
  \confE{(\sys{\sV}{P\piParal R})\sigma_\env} \;\piRedDecCostPre{\;\mu\;}{l}\; \conf{\env'}{\sys{\sV'}{P'\piParal R\sigma_\env}}\]
\justifiedBy{lRen}
    \confE{\sys{\sV}{P\piParal R}} \;\piRedDecCost{\;\mu\;}{l}\; \conf{\env'}{\sys{\sV'}{P'\piParal R\sigma_\env}}
\end{prooftree}
\label{eq:27:bis}
\end{equation}
By Lemma~\ref{lem:weakening} (Weakening), \rtit{lRen} and \eqref{eq:27:bis} we obtain
\begin{align} \label{eq:27:biss}
\conf{\env,(\envv\sigma_\env)}{(\sysSP)\sigma_\env} \piRedDecCostPad{\mu}{l} \conf{\env',(\envv\sigma_\env)}{\sys{\sV'}{P'}}
\end{align}
Lemma~\ref{lem:costed-bisim-renaming} can be extended to $\relR^n$ is straightforward fashion, and from the case assumption $\env,\envv \vDash \,\sysSP\, \relR^n\, \sys{\sVV}{Q}$ (defining $\relR^{\env,n}$)  and the extension of Lemma~\ref{lem:costed-bisim-renaming} to $\relR^n$ we obtain:
\begin{equation}\label{eq:27a:bis}
\env,(\envv\sigma_\env) \vDash (\sysSP)\sigma_\env \relR^n (\sys{\sVV}{Q})\sigma_\env
\end{equation}
Hence by  \eqref{eq:27a:bis}, \eqref{eq:27:biss} and I.H. 
there exists a 
$\sys{N'}{Q'}$ such that 
\begin{align}
\label{eq:28:bis}
&\conf{\env,(\envv\sigma_\env)}{(\sys{\sVV}{Q})\sigma_{\env}} \;\piRedWDecCost{\hat{\mu}}{k}\; \conf{\env',(\envv\sigma_\env)}{\sys{\sVV'}{Q'}} \\
\label{eq:28b:bis}
\quad\text{where}& \quad \env',(\envv\sigma_\env)\vDash (\sys{\sV'}{P'})\;\relR^{n+k-l}\; (\sys{\sVV'}{Q'})
\end{align}
By \eqref{eq:28:bis} and \rtit{lRen}, were $\env_1 = \env,(\envv\sigma_\env)$, we obtain
\begin{equation}\label{eq:28c:bis}
  \conf{\env,(\envv\sigma_\env)}{\bigl((\sys{\sVV}{Q})\sigma_{\env})\bigr)\sigma'_{\env_1}} \quad\bigl(\piRedTauDecCostPre{k_1}\bigr)\piRedDecCostPre{\hat{\mu}}{k_2}\bigl(\piRedTauDecCostPre{k_3}\bigr)\quad \conf{\env',(\envv\sigma_\env)}{\sys{\sVV'}{Q'}}
\end{equation}
where $k=k_1+k_2+k_3$. By \rtit{lPar} Lemma~\ref{lem:strenghtening} (Strengthening) and \eqref{eq:28c:bis} we deduce 
\begin{equation}
\label{eq:41:bis}
\conf{\env}{\bigl((\sys{\sVV}{Q})\sigma_{\env})\bigr)\sigma'_{\env_1} \piParal R\sigma_\env} \quad\bigl(\piRedTauDecCostPre{k_1}\bigr)\piRedDecCostPre{\hat{\mu}}{k_2}\bigl(\piRedTauDecCostPre{k_3}\bigr)\quad \conf{\env'}{\sys{\sVV'}{Q'\piParal R\sigma_\env}}
\end{equation}
From \tproc{\envv}{R} we know
\begin{equation}\label{eq:41a:bis}
\tproc{\envv\sigma_\env}{R\sigma_\env}
\end{equation}
and, from $\env_1 = \env,(\envv\sigma_\env)$ and \defref{def:renaming} (Renaming Modulo Environments), we know that $(R\sigma_\env)\sigma'_{\env_1} = R\sigma_\env$ since the renaming does not modify any of the names in the domain of $\env_1$, hence of  $\envv\sigma_\env$.  Also, from \defref{def:renaming},  $\sigma'_{\env_1}$ is also a substitution modulo $\env$ and can therefore refer to it as $\sigma'_\env$, thereby rewriting \eqref{eq:41:bis} as
\begin{align}
\label{eq:42:bis}
\conf{\env}{\bigl(\sys{\sVV}{Q}\piParal R \bigr)\sigma_{\env}\sigma'_{\env} } \quad\bigl(\piRedTauDecCostPre{k_1}\bigr)\piRedDecCostPre{\hat{\mu}}{k_2}\bigl(\piRedTauDecCostPre{k_3}\bigr)\quad \conf{\env'}{\sys{\sVV'}{Q'\piParal R\sigma_\env}}
\end{align}
From \eqref{eq:42:bis} and \rtit{lRen} we thus obtain
\begin{align*}
\conf{\env}{\sys{\sVV}{Q \piParal R}} \quad\piRedWDecCost{\;\hat{\mu}\;}{k}\quad \conf{\env'}{\sys{\sVV'}{(Q'\piParal R\sigma_\env)}}
\end{align*}
This is our matching move since and by \eqref{eq:28b:bis}, \eqref{eq:41a:bis} and the definition of \relR\ we obtain 
  $\env'\vDash (\sys{\sV'}{P'\piParal R\sigma_\env})\;\relR^{n+l-k}\; (\sys{\sVV'}{Q'\piParal R\sigma_\env})$.

\item Case the action was instigated by $R$, \ie we have:
\begin{equation}
\begin{prooftree}
  \[\confE{(\sys{\sV}{R})\sigma_\env} \quad\piRedDecCostPre{\;\mu\;}{l}\quad \conf{\env'}{\sys{\sV'}{R'}}
\justifiedBy{lPar-R}
  \confE{\bigl(\sys{\sV}{P\piParal R}\bigr)\sigma_\env} \quad\piRedDecCostPre{\;\mu\;}{l}\quad \conf{\env'}{\sys{\sV'}{P\piParal R'}}\]
\justifiedBy{lRen}
  \confE{\sys{\sV}{P\piParal R}} \quad\piRedDecCost{\;\mu\;}{l}\quad \conf{\env'}{\sys{\sV'}{P\piParal R'}}
\end{prooftree}
\label{eq:62:bis}
\end{equation}
The proof proceeds by case analysis of $\mu$ whereby the most interesting cases are when $l=+1$ or $l=-1$.  We here show the case for when $l=-1$ (the other case is analogous).  By Lemma~\ref{lem:transition-structure} we know that either $\mu = \actfree{c} $  and
\begin{align*}
  &\sV\sigma_\env=\sV',c 
  &  & R' \piStruct R\sigma_\env
  &  & \env = \env',\emap{c}{\chantypUnq{\tVlst}}
\end{align*}
or else that $\mu = \tau$ and
    \begin{align} 
      \label{eq:63:bis}
      &\sV\sigma_\env=\sV',c \\
      \label{eq:76:bis}
      &R\sigma_\env \piStruct \piFree{c}{R_1}\piParal R_2 \;\text{ and }\; R'\piStruct R_1\piParal R_2 \\
      \label{eq:77:bis}
      & \env=\env' \\
      \label{eq:78:bis}
      &\envv\sigma_\env\struct \envv',\emap{c}{\chantypU{\tVlst}} \; \text{ and }\;\tproc{\envv'}{R'}.
    \end{align}
   We here focus on the latter case, \ie when $\mu = \tau$.  The main complication in finding a matching move for this subcase is that of inferring a pair of resultant systems (one of which is \conf{\env}{\sys{\sV'}{P\piParal R'}}) that are related by \relR\ by using the inductive nature of the relation definition. To be able to do so, we need to mimic the effect of $R$'s deallocation transition on \sV\ in the corresponding system \sys{\sVV}{Q}; we do this with the help of an appropriate external deallocation transition \actfree{c}.        

 By the extension of Lemma~\ref{lem:costed-bisim-renaming} to $\relR^n$ we know ${\env,\envv\sigma_\env \vDash  (\sysS{P})\sigma_\env\;\relR^n\; (\sys{\sVV}{Q})\sigma_\env}$, and by \eqref{eq:78:bis} and a straightforward extension of Lemma~\ref{prop:env-struct-rules-bisim} to $\relR$  we obtain
    \begin{align}
      \label{eq:79:bis}
      &\env,\envv',\emap{c}{\chantypU{\tVlst}} \vDash (\sysS{P})\sigma_\env\;\relR^n\; (\sys{\sVV}{Q})\sigma_\env
    \end{align}
    and by \eqref{eq:63:bis} and \rtit{lFreeE} we deduce
    \begin{align*}
      &\conf{\env,\envv',\emap{c}{\chantypU{\tVlst}}}{\bigl(\sys{\sV}{P}\bigr)\sigma_\env}  \piRedDecCostPad{{\actfree{c}}}{-1} \conf{\env,\envv'}{\bigl(\sys{\sV'}{P}\bigr)\sigma_\env}
    \end{align*}
     and by \eqref{eq:79:bis} and I.H. there exists a matching move
     \begin{align}
       \label{eq:81:bis}
        &\conf{\env,\envv',\emap{c}{\chantypU{\tVlst}}}{(\sys{\sVV}{Q})\sigma_\env } \piRedWDecCostPad{\actfree{c}}{k} \conf{\env,\envv'}{\sys{\sVV'}{Q'}} \\
        \label{eq:82a:bis}
        \text{ and }& \env,\envv' \vDash \sys{\sV'}{P'} \;\relR^{n+k-(-1)}\; \sys{\sVV'}{Q'}
     \end{align}
     By \eqref{eq:81:bis} and \rtit{lRen}, for $k = k_{1}-1+k_{2}$, we know
     \begin{align} \label{eq:82b:bis}
       &\conf{\env,\envv',\emap{c}{\chantypU{\tVlst}}}{\bigl((\sys{\sVV}{Q})\sigma_\env \bigr)\sigma'_{\env_2}} \quad\piRedTauDecCostPre{k_1}\quad \conf{\env,\envv',\emap{c}{\chantypU{\tVlst}}}{\sys{\sVV''}{Q''}}\\
       \label{eq:82:bis}
        \text{where }&  \env_2=\env,\envv',\emap{c}{\chantypU{\tVlst}} \text{ (used in $\sigma'_{\env_2}$ above)} \\
      \label{eq:82c:bis}
      & \conf{\env,\envv',\emap{c}{\chantypU{\tVlst}}}{\sys{\sVV''}{Q''}}\quad\piRedDecCostPre{\actfree{c}}{-1}\quad\conf{\env,\envv'}{\sys{\sVV'''}{Q''}}\\
      \label{eq:82d:bis}
     &\conf{\env,\envv'}{\sys{\sVV'''}{Q''}}\quad\piRedTauDecCostPre{k_2} \quad\conf{\env,\envv'}{\sys{\sVV'}{Q'}}
     \end{align}
From \eqref{eq:82b:bis},~\eqref{eq:82d:bis}, \rtit{lPar-L} and Lemma~\ref{lem:strenghtening} (Strengthening) we obtain:
\begin{align}
  \label{eq:83a:bis}
       &\conf{\env}{\bigl((\sys{\sVV}{Q})\sigma_\env \bigr)\sigma'_{\env_2}\piParal R\sigma_\env} \quad\piRedTauDecCostPre{k_1}\quad \conf{\env}{\sys{\sVV''}{Q'' \piParal (R\sigma_\env)}}\\
   \label{eq:83b:bis}
      &\conf{\env}{\sys{\sVV'''}{Q''\piParal R'}}\quad\piRedTauDecCostPre{k_2} \quad\conf{\env}{\sys{\sVV'}{Q'\piParal R'}}
\end{align}
Also, from \eqref{eq:82c:bis} and Lemma~\ref{lem:transition-structure} (Transition and Structure) we deduce that $\sVV''=\sVV''',c$ and thus, from \eqref{eq:76:bis}, \rtit{lFree}, \rtit{lPar-R} we obtain:
\begin{equation}
  \label{eq:83c:bis}
  \conf{\env}{\sys{\sVV''}{Q''\piParal R\sigma_\env}}\quad\piRedDecCostPre{\;\tau\;}{-1}\quad\conf{\env}{\sys{\sVV'''}{Q''\piParal R'}}
\end{equation}
By \eqref{eq:78:bis} and \eqref{eq:82:bis}, we know that we can find an alternative renaming function $\sigma''_{\env_3}$, where $\env_3 = \env,(\envv\sigma_\env)$, in a way that, from \eqref{eq:83a:bis}, we can obtain
\begin{equation}
  \label{eq:83d:bis}
  \conf{\env}{\bigl((\sys{\sVV}{Q})\sigma_\env \bigr)\sigma''_{\env_3}\piParal R\sigma_\env} \quad\piRedTauDecCostPre{k_1}\quad \conf{\env}{\sys{\sVV''}{Q'' \piParal (R\sigma_\env)}}
\end{equation}
Now, by $\tproc{\envv}{R}$ we know $\tproc{\envv\sigma_\env}{R\sigma_\env}$ and subsequently, by \defref{def:renaming} and \eqref{eq:82:bis} we know $(R\sigma_\env)\sigma''_{\env_3} = R\sigma_\env$. Thus, we can rewrite  $\bigl((\sys{\sVV}{Q})\sigma_\env \bigr)\sigma''_{\env_3}\piParal R\sigma_\env$ in \eqref{eq:83d:bis} as ${\bigl((\sys{\sVV}{Q}\piParal R)\sigma_\env \bigr)\sigma''_{\env_3}}$.  
Merging \eqref{eq:83d:bis},~\eqref{eq:83c:bis} and ~\eqref{eq:83b:bis} we obtain:
\begin{equation*}
  \conf{\env}{\bigl((\sys{\sVV}{Q}\piParal R)\sigma_\env \bigr)\sigma''_{\env_3}} \quad\piRedTauDecCostPre{k_1}\piRedDecCostPre{\;\tau\;}{-1}\piRedTauDecCostPre{k_2} \quad\conf{\env}{\sys{\sVV'}{Q'\piParal R'}}
\end{equation*}
By \defref{def:renaming} we know that $\sigma''_{\env_3}$ can be rewritten as $\sigma''_{\env}$ and thus by \rtit{lRen} we obtain the matching move
\begin{equation*}
  \conf{\env}{\sys{\sVV}{Q}\piParal R} \piRedWDecCostPad{\tau}{k} \conf{\env}{\sys{\sVV'}{Q'\piParal R'}}
\end{equation*}
because by \eqref{eq:82a:bis}, \eqref{eq:78:bis} and the definition of $\relR$ we know that $$\env \vDash \sys{\sV'}{P'\piParal R'} \;\relR^{n+k-(-1)}\; \sys{\sVV'}{Q'\piParal R'}.$$

\item Case the action resulted from an interaction between $P$ and $R$, \ie we have:
\begin{equation}
\begin{prooftree}
\[
\conf{\env_1}{(\sysSP)\sigma_\env} \piRedDecCostPre{\actout{c}{\vec{d}}}{0} \conf{\env'_1}{\sys{\sV'}{P'}} \qquad 
\conf{\env_2}{(\sysS{R})\sigma_\env} \piRedDecCostPre{\actin{c}{\vec{d}}}{0} \conf{\env'_2}{\sys{\sV'}{R'}}
\justifiedBy{lCom-L}
\confE{(\sysS{P\piParal R})\sigma_\env} \piRedDecCostPre{\;\;\tau\;\;}{0} \confE{\sys{\sV'}{P'\piParal R'}}
\]
\justifiedBy{lRen}
\confE{\sysS{P\piParal R}} \piRedDecCostPad{\;\;\tau\;\;}{0} \confE{\sys{\sV'}{P'\piParal R'}} 
\end{prooftree}
\label{eq:29:bis}
\end{equation}
By the two top premises of \eqref{eq:29:bis} and Lemma~\ref{lem:transition-structure}
we know
\begin{align}
\label{eq:31a:bis}
\sV\sigma_\env &= \sV'\\
\label{eq:31:bis}
  P\sigma_\env  & \piStruct \piOut{c}{\vec{d}}P_1 \piParal P_2 & 
  P' & \piStruct P_1 \piParal P_2 \\
\label{eq:32:bis}
  R\sigma_\env  & \piStruct \piIn{c}{\vec{x}}R_1 \piParal R_2 & 
  R' & \piStruct R_1 \subC{\vec{d}}{\vec{x}} \piParal R_2 
\end{align}
From $\envv\vdash R$ we obtain $\envv\sigma_\env\vdash R\sigma_\env$, and by \eqref{eq:32:bis},  $\envv\vdash R$ and Inversion we obtain
\begin{align}
  \label{eq:33:bis}
  &\envv\sigma_\env \struct \envv_1,\envv_2,\emap{c}{\chantypA{\tVVlst}} \\
  \label{eq:34:bis}
  &\envv_1,\emap{c}{\chantyp{\tVVlst}{{\aV-1}}},\emap{\vec{x}}{\tVVlst} \vdash R_1 \\
  \label{eq:35:bis}
  &\envv_2 \vdash R_2
\end{align}
Note that through~\eqref{eq:34:bis}  we know that
\begin{equation}\label{eq:30:bis}
\envmap{c}{\chantyp{\tVVlst}{{\aV-1}}}\text{ is
defined.}
\end{equation}
By \eqref{eq:34:bis}, the Substitution Lemma (Lemma 4.4 from
\cite{EFH:uniqueness:journal:12})  and \eqref{eq:35:bis} we obtain 
\begin{align} 
  \label{eq:36}
  &\envv_1,\envv_2, \emap{c}{\chantyp{\tVVlst}{{\aV-1}}},\emap{\vec{d}}{\tVVlst} \vdash  R_1\subC{\vec{d}}{\vec{x}} \piParal R_2 
\end{align}
From the assumption defining $\relR$, and Lemma~\ref{lem:costed-bisim-renaming} we obtain
\begin{equation}\label{eq:37a:bis}
\env,(\envv\sigma_\env) \;\vDash\; (\sysSP)\sigma_\env \;\relR^n\; (\sys{\sVV}{Q})\sigma_\env,
\end{equation}
and
by \eqref{eq:33:bis} and Proposition~\ref{prop:consistency-env-struct} we know that
  $\env,(\envv\sigma_\env)\struct \env,\envv_1,\envv_2, \emap{c}{\chantypA{\tVVlst}}$ 
and also that 
  $\env,\envv_1,\envv_2, \emap{c}{\chantypA{\tVVlst}}$ 
is consistent.  Thus by  \eqref{eq:37a:bis}
and Lemma~\ref{prop:env-struct-rules-bisim} we deduce
\begin{align}
  \label{eq:37b:bis}
  &\env,\envv_1,\envv_2, \emap{c}{\chantypA{\tVVlst}} \;\vDash\;  (\sysSP)\sigma_\env \;\relR^n\; (\sys{\sVV}{Q})\sigma_\env
\end{align}
Now by  \eqref{eq:30:bis}, \eqref{eq:31:bis}, \eqref{eq:31a:bis}, \rtit{lOut}, \rtit{lPar-L}, \rtit{lRen} and
Lemma~\ref{lem:bisim-struct-equiv} we deduce
\begin{align}
  \label{eq:38:bis}
  &\conf{\env,\envv_1,\envv_2, \emap{c}{\chantypA{\tVVlst}}}{(\sysSP)\sigma_\env} \piRedDecCostPad{\actout{c}{\vec{d}}}{0} \conf{\env,\envv'_1,\envv'_2, \emap{c}{\chantyp{\tVVlst}{{\aV-1}}},\emap{\vec{d}}{\tVVlst}}{\sys{\sV'}{P'}}
\end{align}
and hence by \eqref{eq:37b:bis} and I.H. we obtain
\begin{align}
  \label{eq:39:bis}
  &\conf{\env,\envv_1,\envv_2, \emap{c}{\chantypA{\tVVlst}}}{(\sys{\sVV}{Q})\sigma_{ \env}} \piRedWDecCostPad{\;\actout{c}{\vec{d}}\;}{k} \conf{\env,\envv'_1,\envv'_2, \emap{c}{\chantyp{\tVVlst}{{\aV-1}}},\emap{\vec{d}}{\tVVlst}}{\sys{\sVV'}{Q'}} \\
  \label{eq:40:bis}
  &\text{such that } \env,\envv_1,\envv_2, \emap{c}{\chantyp{\tVVlst}{{\aV-1}}},\emap{\vec{d}}{\tVVlst} \vDash (\sys{\sV'}{P'})\;\relR^{n+k-0}\; (\sys{\sVV'}{Q'})
\end{align}
From \eqref{eq:39:bis} and \rtit{lRen} we know
\begin{align}
  \label{eq:92:bis}
  &\conf{\env,\envv_1,\envv_2, \emap{c}{\chantypA{\tVVlst}}}{\bigl((\sys{\sVV}{Q})\sigma_{ \env}\bigr)\sigma'_{\env_4}} \piRedTauDecCostPre{k_1} \conf{\env,\envv_1,\envv_2, \emap{c}{\chantypA{\tVVlst}}}{\sys{\sVV''}{Q''}} \\
  \label{eq:93:bis}
   &\conf{\env,\envv_1,\envv_2, \emap{c}{\chantypA{\tVVlst}}}{\sys{\sVV''}{Q''}} \piRedDecCostPad{\actout{c}{\vec{d}}}{0} \conf{\env,\envv_1,\envv_2, \emap{c}{\chantyp{\tVVlst}{{\aV-1}}},\emap{\vec{d}}{\tVVlst}}{\sys{\sVV''}{Q'''}} \\
   \label{eq:94:bis}
   &\conf{\env,\envv_1,\envv_2, \emap{c}{\chantyp{\tVVlst}{{\aV-1}}},\emap{\vec{d}}{\tVVlst}}{\sys{\sVV''}{Q'''}} \piRedTauDecCostPre{k_2} \conf{\env,\envv_1,\envv_2, \emap{c}{\chantyp{\tVVlst}{{\aV-1}}},\emap{\vec{d}}{\tVVlst}}{\sys{\sVV'}{Q'}} \\
   \label{eq:95:bis}
   &\text{ where } k=k_1 + k_2 \text{ and }\env_4 = \env,\envv_1,\envv_2, \emap{c}{\chantypA{\tVVlst}} \;\text{ ($\env_4$ is used in \eqref{eq:92:bis})}
\end{align}
From \eqref{eq:92:bis},~\eqref{eq:94:bis}, \rtit{lPar-L} and Lemma~\ref{lem:strenghtening} (Strengthening) we obtain:
\begin{align}
  \label{eq:92a:bis}
  &\conf{\env}{\bigl((\sys{\sVV}{Q})\sigma_{ \env}\bigr)\sigma'_{\env_4}\piParal R\sigma_\env} \piRedTauDecCostPre{k_1} \conf{\env}{\sys{\sVV''}{Q''\piParal R\sigma_\env}}\\
  \label{eq:94a:bis}
  &\conf{\env}{\sys{\sVV''}{Q'''\piParal R'}} \piRedTauDecCostPre{k_2} \conf{\env}{\sys{\sVV'}{Q'\piParal R'}}
\end{align}
By \eqref{eq:32:bis}, \rtit{lIn} and \rtit{lPar-L} we can construct (for some $\env_6,\env_7$)
   \begin{align}
     \label{eq:98:bis}
     & \conf{\env_6}{\sys{\sVV''}{R\sigma_\env}} \piRedDecCostPre{\actin{c}{\vec{d}}}{0} \conf{\env_7}{\sys{\sVV''}{R'}}
   \end{align}
and by \eqref{eq:93:bis}, \eqref{eq:98:bis} and \rtit{lCom-L} we obtain
\begin{align}
     \label{eq:99:bis}
     \conf{\env}{\sys{\sVV''}{Q'' \piParal R\sigma_\env}} \piRedDecCostPre{\;\;\tau\;\;}{0} \conf{\env}{\sys{\sVV''}{Q''' \piParal R'}}
\end{align}
By \eqref{eq:33:bis} and \eqref{eq:95:bis}, we know that we can find an alternative renaming function $\sigma''_{\env_5}$, where $\env_5 = \env,(\envv\sigma_\env)$, in a way that, from \eqref{eq:92a:bis}, we can obtain
\begin{equation}
   \label{eq:92b:bis}
  \conf{\env}{\bigl((\sys{\sVV}{Q})\sigma_{ \env}\bigr)\sigma''_{\env_5}\piParal R\sigma_\env} \piRedTauDecCostPre{k_1} \conf{\env}{\sys{\sVV''}{Q''\piParal R\sigma_\env}}
\end{equation}
By \defref{def:renaming},  $\envv\sigma_\env\vdash R\sigma_\env$, \eqref{eq:33:bis}, \eqref{eq:95:bis} we know that $(R\sigma_\env)\sigma''_{\env_5} = R\sigma_\env$, and also that $\sigma''_{\env_5}$ is also a renaming modulo $\env$, so we can denote it as $\sigma''_\env$ and  rewrite $\bigl((\sys{\sVV}{Q})\sigma_{ \env}\bigr)\sigma''_{\env_5}\piParal R\sigma_\env$ as $\bigl((\sys{\sVV}{Q\piParal R})\sigma_{ \env}\bigr)\sigma''_\env$ in \eqref{eq:92b:bis}.  Thus, by  \eqref{eq:92b:bis},~\eqref{eq:99:bis},~\eqref{eq:94a:bis},~\eqref{eq:95:bis} and \rtit{lRen} we obtain the matching move
\begin{equation*}
  \conf{\env}{\sys{\sVV}{Q\piParal R}} \;\piRedWDecCostPad{\;\tau\;}{k}\; \conf{\env}{\sys{\sVV'}{Q'\piParal R'}}
\end{equation*}
since by \eqref{eq:40:bis},~\eqref{eq:36},~\eqref{eq:32:bis} and the definition of \relR\ we  obtain 
\[
  \env \vDash (\sys{\sV'}{P'\piParal R'})\;\relR^{n+k-0}\; (\sys{\sVV'}{Q'\piParal R'}) 
\]
as required. \qedhere
\end{enumerate}
\end{proof}

\begin{exa}[Properties of \piCostBis] \label{eg:Clients-Bisim-continued}
From the proved statements $\env_1 \vDash (\sysS{\ptit{C}_1}) \; \piCostBis \; (\sysS{\ptit{C}_0})$ and   $\env_1 \vDash (\sysS{\ptit{C}_2}) \; \piCostBis \; (\sysS{\ptit{C}_1})$  of Example~\ref{eg:Clients-Bisim}, and by Corollary~\ref{cor:preorder} (Preorder), we may conclude that
\begin{equation}
\label{eq:trans-statement:bis}
\env_1 \vDash (\sysS{\ptit{C}_2}) \; \piCostBis \; (\sysS{\ptit{C}_0})
\end{equation}
 without the need to provide a bisimulation relation justifying \eqref{eq:trans-statement:bis}.  We also note that $\relR'$ of  Example~\ref{eg:Clients-Bisim}, justifying $\env_1 \vDash (\sysS{\ptit{C}_3}) \; \piCostBis \; (\sysS{\ptit{C}_2})$ is a \emph{bounded} amortised typed-bisimulation, and by Lemma~\ref{lem:symmetry-bound} we can also conclude
\begin{equation*}
  \env_1 \vDash (\sysS{\ptit{C}_2}) \; \piCostBis \; (\sysS{\ptit{C}_3})
\end{equation*}
and thus $\env_1 \vDash (\sysS{\ptit{C}_3}) \; \piCostBisEq \; (\sysS{\ptit{C}_2})$.  Finally, by Theorem~\ref{thm:costed-bisim-compositionality}, in order to show that
\begin{equation*}
  \envmap{c}{\chantypW{\tV_1,\tV_2}} \vDash (\sysS{\ptit{S}_1\piParal\ptit{S}_2\piParal\ptit{C}_1}) \; \piCostBiss \; (\sysS{\ptit{S}_1\piParal\ptit{S}_2\piParal\ptit{C}_0})
\end{equation*}
it suffices to abstract away from the common code, $\ptit{S}_1\piParal\ptit{S}_2$, and show  $\env_1 \vDash (\sysS{\ptit{C}_1}) \; \piCostBiss \; (\sysS{\ptit{C}_0}),$ as proved already in  Example~\ref{eg:Clients-Bisim}.
\end{exa}

\section{Characterisation}
\label{sec:characterisation}

In this section we give a sound and complete characterization of bisimilarity
in terms of the reduction semantics of Section~\ref{sec:language}, justifying the bisimulation relation and
the respective LTS as a proof technique for reasoning about the behaviour of $\picr$ processes. Our touchstone behavioural preorder  is based on a costed version of families of reduction-closed barbed
congruences along similar lines to \cite{hennessy:buysell}. In order to limit behaviour to safe computations, these congruences are defined as typed relations (Definition~\ref{def:typed-relation}), where systems are subject to common observers typed by environments.

The observer type-environment delineates the observations that can be made: the observer can only make distinctions for channels that it has a permission for, \ie at least an affine
typing assumption.  
The observations that can be made in our touchstone behavioural preorder are described as \emph{barbs} \cite{HondaTokoro92:AsyncSemantics}  that take into account the permissions owned by the observer.  We require systems related by our behavioral preorder to exhibit the same barbs \wrt a common observer. 

\begin{defi}[Barb]\label{def:barb}
  $(\confE{\sys{M}{P}}) \piBarb{c} \;\deftxt\; (\sys{M}{P})  \piRedAst_{k}\piStruct
(\sys{\sV'}{P'\piParal{\piOut{c}{\vec{d}}{P''}}})  \text{ and } c \in \dom(\env).$ 
\end{defi}

\begin{defi}[Barb Preservation]\label{def:barb-preserving}
   A typed relation $\relR$ is barb preserving if and only if $$\env \vDash \sys{\sV}{P} \; \relR \; \sys{\sVV}{Q}\text{ implies } \left(\confE{\sys{\sV}{P}} \piBarb{c} \text{ iff }\confE{\sys{\sVV}{Q}} \piBarb{c}\right).$$ 
\end{defi}

Our behavioural preorder takes cost into consideration; it is defined in terms of families of amortised typed relations that are closed under costed reductions. 

\begin{defi}[Cost Improving]\label{def:cost-improving}
An amortized type-indexed relation $\relR$ is \emph{cost improving at credit} $n$
iff whenever $\env \vDash (\sys{\sV}{P}) \;\relR^n\; (\sys{\sVV}{Q})$ and 
  \begin{enumerate}
  \item if $\sys{\sV}{P} \piRedCost{k} \sys{\sV'}{P'}$ then $\sys{\sVV}{Q} \piRedCost{l}^\ast \sys{\sVV'}{Q'}$ such that $\env \vDash (\sys{\sV'}{P'}) \; \relR^{n+l-k} \; (\sys{\sVV'}{Q'})$;
  \item if $\sys{\sVV}{Q} \piRedCost{l} \sys{\sVV'}{Q'}$ then $\sys{\sV}{P} \piRedCost{k}^\ast \sys{\sV'}{P'}$  such that $\env \vDash (\sys{\sV'}{P'}) \; \relR^{n+l-k} \; (\sys{\sVV'}{Q'})$.
  \end{enumerate}
\end{defi}

Related processes must be related under arbitrary (parallel) contexts;
moreover, these contexts must be allowed to allocate new channels. 
We note that the second clause of our contextuality definition, Definition~\ref{def:contextuality}, is similar to that discussed earlier in Section~\ref{sec:bisim-results}, where we \emph{transfer} the respective permissions
held by the observer along with the test $R$  placed in parallel with the
processes. This is essential in order to preserve consistency (see \defref{def:consistency}) thus limiting our analysis to safe computations. Definition~\ref{def:contextuality} also requires an additional condition, when compared to the contextuality definition discussed in Section~\ref{sec:bisim-results}, namely that of \emph{resource extensions} where we consider systems in larger resource contexts (owned exclusively by the observer). This is described by the first clause in the definition; we recall the implicit condition for  resource environment representations from \secref{sec:language}, requiring the channel $c$ not to be present (thus allocated) in \sV\ (\resp \sVV) for the resource environment to be well-formed --- $c$ is therefore fresh.        In order to disambiguate between the different contextuality definitions, we refer to Definition~\ref{def:contextuality} as \emph{full contextuality}.

\begin{defi}[Full Contextuality]\label{def:contextuality}
An amortized type-indexed relation $\relR$  is contextual at environment $\env$ and credit $n$ iff whenever
\begin{math}
\env \vDash (\sysS{P}) \; \relR^n \; (\sys{\sVV}{Q}) 
\end{math}:

\begin{enumerate}
\item $\env,\emap{c}{\chantypU{\tVlst}} \vDash (\sys{\sV,c}{P}) \; \relR^n \; (\sys{\sVV,c}{Q}) $
\item If $\env\struct \env_1,\env_2$ where $\env_2 \vdash R$ then 
  \begin{itemize}
  \item $\env_1 \vDash (\sysS{P \piParal R}) \; \relR^n \;
    (\sys{\sVV}{Q \piParal R})$ and
  \item $\env_1 \vDash (\sysS{R \piParal
      P}) \; \relR^n \; (\sys{\sVV}{R \piParal Q})$
  \end{itemize}

\end{enumerate}
\end{defi}


We can now define the preorder defining our notion of observational system efficiency: 

\begin{defi}[Behavioral Contextual Preorder]\label{def:cost-preorder}
  $\piCost^{\env,n} $ is the largest family of amortized typed relations that is: 
  \begin{itemize}
  \item Barb Preserving;
  \item Cost Improving at credit $n$;
  \item Full contextual at environment $\env$.
  \end{itemize}
A system $\sysSP$ is said to be behaviourally as efficient as another system $\sys{\sVV}{Q}$ \wrt an observer \env, denoted as $\env \vDash \sysSP \piCost \sys{\sVV}{Q}$, whenever there exists an amortisation credit $n$ such that  $\env \vDash \sysSP \piCost^n \sys{\sVV}{Q}$.  Similarly, we can lift our preorder to processes: a process $P$ is said to be as efficient as $Q$ \wrt \sV\ and \env  whenever there exists an $n$ such that $\env \vDash \sysSP \piCost^n \sysS{Q}$
\end{defi}

\subsection{Soundness for \,\texorpdfstring{\piCostBis}{piCostBis}}
\label{sec:soundness}
Through Definition~\ref{def:cost-preorder} we are able to articulate why clients $\ptit{C}_2$ and $\ptit{C}'_2$ should be deemed to be behaviourally equally efficient \wrt $\env_1$ of \eqref{eq:env-sys-example:bis}: for an appropriate \sV, it turns out that we cannot differentiate between the two processes under any context allowed by \env.    Unfortunately, the universal quantification of contexts of Definition~\ref{def:contextuality} makes it hard to verify such a statement.    Through Theorem~\ref{thm:soundness} we can however establish that our bisimulation preorder of Definition~\ref{def:amortized-typed-bisim} provides a sound technique for determining behavioural efficiency.   This Theorem, in turn, relies on the lemmas we outline below.  In particular, Lemma~\ref{lem:-bisim-barb} and Lemma~\ref{lem:-bisim-reduc-closure}  prove that bisimulations are barb-preserving and cost-improving, whereas Lemma~\ref{lem:-bisim-weakening} proves that bisimulations are preserved under resource extensions. The required result then follows from Theorem~\ref{thm:costed-bisim-compositionality} of Section~\ref{sec:bisim-results}. 


\begin{lem}[Reductions and Bijective Renaming]\label{lem:reduction-bijective-rename}
  \begin{math}
    \text{For any bijective renaming }\sigma,\\ (\sys{\sV}{P})\sigma \piRed_k (\sys{\sV'}{P'})\sigma \text{ implies } \sys{\sV}{P} \piRed_k \sys{\sV}{P}
  \end{math}
\end{lem}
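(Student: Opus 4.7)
The plan is to derive this as a corollary of a symmetric equivariance property: for every bijective renaming $\sigma$,
\[
\sysSP \piRedCost{k} \sys{\sV'}{P'} \quad\Longleftrightarrow\quad (\sysSP)\sigma \piRedCost{k} (\sys{\sV'}{P'})\sigma.
\]
Given such an equivalence, the stated lemma follows immediately by instantiating the right-to-left direction with $\sigma^{-1}$ (which is well-defined because $\sigma$ is bijective) and using the fact that $(X\sigma)\sigma^{-1} = X$ for systems.

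To obtain the equivalence, I would prove the left-to-right direction by rule induction on the derivation of $\sysSP \piRedCost{k} \sys{\sV'}{P'}$ using the rules in Figure~\ref{fig:reduction-semantics}. The key observations are that bijective renamings (i) commute with process substitution, i.e.\ $(Q\subC{\vec{d}}{\vec{x}})\sigma = (Q\sigma)\subC{\vec{d}\sigma}{\vec{x}}$ and $(P\subC{\piRecX{P}}{w})\sigma = (P\sigma)\subC{\piRec{w}{P\sigma}}{w}$, (ii) preserve and reflect the freshness side conditions implicit in the notation $\sV,c$, since $c \notin \sV$ iff $c\sigma \notin \sV\sigma$, and (iii) preserve structural equivalence, which is routine by rule induction on $\piStruct$. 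With these in hand, each of \rtit{rCom}, \rtit{rThen}, \rtit{rElse}, \rtit{rRec}, \rtit{rAll}, \rtit{rFree}, \rtit{rStr} (and the implicit \rtit{rCtx}) goes through by a direct reapplication of the same rule to the renamed redex, with the cost $k$ inherited verbatim. The reverse direction is then immediate: apply the forward direction to $(\sysSP)\sigma \piRedCost{k} (\sys{\sV'}{P'})\sigma$ using the renaming $\sigma^{-1}$.

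I expect the main subtlety to be the case of \rtit{rStr}, where one has to rename inside a derivation of structural equivalence. This is handled by a small auxiliary claim that $P \piStruct Q$ implies $P\sigma \piStruct Q\sigma$ (and conversely), proved by induction on the structural congruence rules \rtit{sCom}, \rtit{sAss}, \rtit{sNil}; these rules manipulate only parallel composition and $\piNil$ and are evidently equivariant. Once this auxiliary fact is in place, the \rtit{rStr} case reduces directly to the inductive hypothesis on the shorter reduction in the premise.
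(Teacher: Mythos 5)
Your plan is correct and matches the paper's argument in essence: the paper proves this lemma by a direct rule induction on the derivation of $(\sys{\sV}{P})\sigma \piRedCost{k} (\sys{\sV'}{P'})\sigma$, which rests on exactly the equivariance facts you list (renaming commutes with substitution, preserves the freshness side condition in $\sV,c$, and preserves $\piStruct$). Your packaging via a forward preservation lemma followed by instantiation with $\sigma^{-1}$ is only a cosmetic reorganisation of the same rule-induction, so there is nothing substantive to add.
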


\begin{proof}
  By rule induction on $ (\sys{\sV}{P})\sigma \piRed_k (\sys{\sV'}{P'})\sigma$.
\end{proof}

\begin{lem}[Barb Preservation]\label{lem:-bisim-barb}
  \begin{displaymath}
  \env \vDash \sys{\sV}{P} \; \piCostBis \; \sys{\sVV}{Q} 
\text{ and } \confE{\sys{\sV}{P}} \piBarb{c} 
\text{ implies } \confE{\sys{\sVV}{Q}} \piBarb{c}
\end{displaymath}
\end{lem}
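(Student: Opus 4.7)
The plan is to lift the barb from the reduction semantics into the LTS, match it via bisimilarity, and translate back into reductions. The key invariant that makes this translation faithful is that every name in $\dom(\env)$, in particular $c$, is fixed by every renaming modulo $\env$ (Definition~\ref{def:renaming}).

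First, unfolding the hypothesis via Definition~\ref{def:barb} yields $\sys{\sV}{P} \piRedAst_{k} \sys{\sV_0}{P_0}$ with $P_0 \piStruct P_1 \piParal \piOut{c}{\vec{d}}{P_2}$ and $c \in \dom(\env)$. Iterating Lemma~\ref{lem:reduc-and-transitions}(1) converts this reduction chain into a weak silent LTS transition $\confE{\sys{\sV}{P}} \piRedWDecCost{\tau}{k} \confE{\sys{\sV_0}{P_0'}}$ with $P_0' \piStruct P_0$, and hence $P_0' \piStruct P_1 \piParal \piOut{c}{\vec{d}}{P_2}$. A routine induction on the derivation of this weak $\tau$-transition, using the transfer property of Definition~\ref{def:amortized-typed-bisim} together with the fact that $\tau$-transitions preserve the observer environment (Lemma~\ref{lem:transition-structure}), then produces a matching weak silent transition $\confE{\sys{\sVV}{Q}} \piRedWDecCost{\tau}{l_1} \confE{\sys{\sVV_0}{Q_0}}$ with $\env \vDash \sys{\sV_0}{P_0'} \piCostBis \sys{\sVV_0}{Q_0}$. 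Since $c \in \dom(\env)$ supplies some typing assumption $\emap{c}{\chantypA{\tVlst}}$ in $\env$, Lemma~\ref{lem:bisim-struct-equiv} together with rules \rtit{lPar-L} and \rtit{lOut} let me derive a labelled transition $\confE{\sys{\sV_0}{P_0'}} \piRedDecCost{\actout{c}{\vec{d}}}{0} \conf{\env'}{\sys{\sV_0}{P_3}}$. Applying the transfer property once more, the right side must respond with $\confE{\sys{\sVV_0}{Q_0}} \piRedWDecCost{\actout{c}{\vec{d}}}{l_2} \conf{\env'}{\sys{\sVV_2}{Q_2}}$; unfolding this via \rtit{wTra}, \rtit{wLeft} and \rtit{wRight} exposes a middle labelled step $\conf{\envv_a}{\sys{\sVV_a}{Q_a}} \piRedDecCost{\actout{c}{\vec{d}}}{0} \conf{\envv_b}{\sys{\sVV_a}{Q_b}}$ whose source, by the output case of Lemma~\ref{lem:transition-structure}, satisfies $Q_a \piStruct R_1 \piParal \piOut{c}{\vec{d}}{R_2}$.

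It remains to translate the LTS $\tau$-transitions from $\sys{\sVV}{Q}$ to $\sys{\sVV_a}{Q_a}$ back into reduction semantics. Iteratively applying Lemma~\ref{lem:reduc-and-transitions}(2) to each such step and composing the resulting source-renamings yields $(\sys{\sVV}{Q})\sigma_\env \piRedAst_{l_1} \sys{\sVV_a}{Q_a'}$, where $Q_a'$ differs from $Q_a$ only by a renaming modulo $\env$ (absorbed by \rtit{lRen} along the way). Because $c \in \dom(\env)$, Definition~\ref{def:renaming} guarantees both that $\sigma_\env$ and any further modulo-$\env$ renaming fix $c$, so the top-level output on $c$ is preserved; applying $\sigma_\env^{-1}$ uniformly (which is again a modulo-$\env$ bijection) then gives $\sys{\sVV}{Q} \piRedAst_{l_1} \sys{\sVV_a}{Q_a''}$ with $Q_a'' \piStruct R_1' \piParal \piOut{c}{\vec{d}'}{R_2'}$, which together with $c \in \dom(\env)$ is exactly $\confE{\sys{\sVV}{Q}} \piBarb{c}$. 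The main obstacle will be precisely this threading of the renamings $\sigma_\env$ introduced by rule \rtit{lRen} and by Lemma~\ref{lem:reduc-and-transitions}(2) consistently across a chain of single-step conversions, since Lemma~\ref{lem:reduc-and-transitions}(2) renames only the source of each step; the rescuing fact is that every renaming modulo $\env$ fixes names in $\dom(\env)$, so intermediate states may differ from their LTS counterparts by renamings of non-observable names without disturbing the observable output on $c$.
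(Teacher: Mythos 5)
Your proposal is correct and follows essentially the same route as the paper's proof: lift the barb into a weak LTS output via Lemma~\ref{lem:reduc-and-transitions}(1), \rtit{lOut}/\rtit{lPar} and Lemma~\ref{lem:bisim-struct-equiv}, match it through the bisimulation, then push the matching move back into reductions using Lemma~\ref{lem:reduc-and-transitions}(2), Lemma~\ref{lem:transition-structure} and the fact that renamings modulo $\env$ fix $c$ (the paper packages your final inverse-renaming step as Lemma~\ref{lem:reduction-bijective-rename}). The only cosmetic difference is that you match the weak $\tau$-prefix and the output in two bisimulation steps where the paper matches a single weak output action, which amounts to the same standard weak-transfer argument.
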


\begin{proof}
  By Definition~\ref{def:barb} we know 
  $\sys{\sV}{P} \piRedAst_l\piStruct (\sys{\sV'}{P'\piParal{\piOut{c}{\vec{d}}{P''}}})$ where $c\in\dom(\env)$. 
By Lemma~\ref{lem:reduc-and-transitions}(1) we obtain 
\begin{math}
  \confE{\sys{\sV}{P}} \piRedWDecCostPad{\quad}{l} \confE{\sys{\sV'}{P'''}}\;\text{where}\; 
  P''' \piStruct (P'\piParal{\piOut{c}{\vec{d}}{P''}}).
\end{math}
Moreover, by \rtit{lOut}, \rtit{lPar-R} and Lemma~\ref{lem:bisim-struct-equiv}
we deduce
\begin{math}
\confE{\sys{\sV}{P} } \piRedWDecCost{\actout{c}{\vec{d}}}{l} \piStruct\conf{\env'}{\sys{\sV}{P'\piParal P''}}
\end{math}. 
By   $\env \vDash \sys{\sV}{P} \; \piCostBis \; \sys{\sVV}{Q}$ 
we know that there exists a move
  $\confE{\sys{\sVV}{Q}} \piRedWDecCost{\actout{c}{\vec{d}}}{k} \conf{\env'}{\sys{\sVV'}{Q'}}$ 
and from this matching move, Lemma~\ref{lem:reduc-and-transitions}(2) (for the
initial $\tau$ moves of the weak action) and
Lemma~\ref{lem:transition-structure} we obtain 
  $(\sys{\sVV}{Q})\sigma_\env\piRedAst_{k_1}\piStruct (\sys{\sVV''}{Q''\piParal{\piOut{c}{\vec{d}}Q'''}})\sigma_\env$, 
which, together with $c \in \dom(\env)$ and
Lemma~\ref{lem:reduction-bijective-rename},  implies 
  $\sys{\sVV}{Q}\piRedAst_{k_1}\piStruct \sys{\sVV''}{Q''\piParal{\piOut{c}{\vec{d}}Q'''}}$ 
\ie $c$ is unaffected by the renaming $\sigma_\env$, and thus $\confE{\sys{\sVV}{Q}}
\piBarb{c}$.
\end{proof}

\begin{lem}[Cost Improving]\label{lem:-bisim-reduc-closure}
  \begin{math}
  \env \vDash \sys{\sV}{P} \; \piCostBisAmm{n} \; \sys{\sVV}{Q}
\text{ and } \sys{\sV}{P} \piRed_l \sys{\sV'}{P'}
  \end{math} 
 then there exist some  
\begin{math} \sys{\sVV'}{Q'} \text{ such that }
 \sys{\sVV}{Q} \piRedAst_k \sys{\sVV'}{Q'} 
\text{ and } \env \vDash \sys{\sV'}{P'} \; \piCostBisAmm{n+k-l} \; \sys{\sVV'}{Q'}
\end{math}
\end{lem}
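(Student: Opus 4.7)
The plan is to translate between the reduction semantics and the silent transitions of the LTS via Lemma~\ref{lem:reduc-and-transitions}, apply the bisimulation transfer property of Definition~\ref{def:amortized-typed-bisim}, and translate back to reductions. First, lift $\sys{\sV}{P} \piRed_l \sys{\sV'}{P'}$ to a silent LTS transition $\confE{\sys{\sV}{P}} \piRedDecCost{\tau}{l} \confE{\sys{\sV'}{P''}}$ with $P'' \piStruct P'$, via Lemma~\ref{lem:reduc-and-transitions}(1). Since $\hat\tau$ is the empty string and $\tau$-transitions preserve the observer environment (Lemma~\ref{lem:transition-structure}), the transfer property yields a matching weak move $\confE{\sys{\sVV}{Q}} \piRedWDecCost{}{k} \confE{\sys{\sVV'}{Q'''}}$ with $\env \vDash \sys{\sV'}{P''} \piCostBisAmm{n+k-l} \sys{\sVV'}{Q'''}$.

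The main obstacle is turning this weak LTS transition back into a single reduction sequence. The weak $\tau$-transition decomposes into a finite sequence of single $\tau$-steps $\confE{\sys{\sVV_{i-1}}{Q_{i-1}}} \piRedDecCost{\tau}{k_i} \confE{\sys{\sVV_i}{Q_i}}$ summing to cost $k$, and each invocation of Lemma~\ref{lem:reduc-and-transitions}(2) produces a reduction $(\sys{\sVV_{i-1}}{Q_{i-1}})\sigma^i_\env \piRed_{k_i} \sys{\sVV_i}{Q_i}$ modulo a potentially distinct renaming $\sigma^i_\env$ at each step. These renamings must then be composed and pulled through the reduction sequence --- exploiting that reductions are closed under bijective renamings, along the lines of Lemma~\ref{lem:reduction-bijective-rename} --- to ultimately extract a reduction sequence $\sys{\sVV}{Q} \piRedAst_k \sys{\sVV''}{Q'}$ where $\sys{\sVV''}{Q'}$ equals $\sys{\sVV'}{Q'''}$ modulo some composite renaming $\sigma$ (which remains a renaming modulo $\env$ by Definition~\ref{def:renaming}).

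It remains to establish $\env \vDash \sys{\sV'}{P'} \piCostBisAmm{n+k-l} \sys{\sVV''}{Q'}$, which I would do by chaining three bisimilarities via Lemma~\ref{lem:transitivity}: the reflexivity-like statement $\env \vDash \sys{\sV'}{P'} \piCostBisAmm{0} \sys{\sV'}{P''}$ obtained from $P' \piStruct P''$ via Corollary~\ref{cor:Struct-eq-implies-bisim}; the amortised bisimilarity $\env \vDash \sys{\sV'}{P''} \piCostBisAmm{n+k-l} \sys{\sVV'}{Q'''}$ already secured from the transfer property; and $\env \vDash \sys{\sVV'}{Q'''} \piCostBisAmm{0} \sys{\sVV''}{Q'}$ obtained from Lemma~\ref{lem:reflexivity} applied to $\sigma^{-1}$. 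The three credits sum to $0 + (n+k-l) + 0 = n+k-l$, yielding exactly the required bisimilarity.
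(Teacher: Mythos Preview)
Your proposal is correct and follows essentially the same route as the paper: lift the reduction to a $\tau$-transition via Lemma~\ref{lem:reduc-and-transitions}(1), invoke the transfer property, descend back to reductions via Lemma~\ref{lem:reduc-and-transitions}(2) and Lemma~\ref{lem:reduction-bijective-rename}, and then chain Corollary~\ref{cor:Struct-eq-implies-bisim}, the obtained bisimilarity, and Lemma~\ref{lem:reflexivity} using Lemma~\ref{lem:transitivity}. The paper applies Lemma~\ref{lem:reduc-and-transitions}(2) to the whole weak move as if a single $\sigma_\env$ suffices, whereas you spell out the per-step decomposition and the composition of the resulting renamings modulo~$\env$; this extra care is warranted but does not change the argument.
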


\begin{proof}
  By $\sys{\sV}{P} \piRed_l \sys{\sV'}{P'}$ and Lemma~\ref{lem:reduc-and-transitions}(1) we know
  \begin{math}
    \confE{\sys{\sV}{P}} \piRedDecCost{\;\tau\;}{l} \confE{\sys{\sV}{P''}}\text{ where }P''\piStruct P'
  \end{math}.
  By \defref{def:amortized-typed-bisim} and assumption $\env \vDash \sys{\sV}{P} \; \piCostBisAmm{n} \; \sys{\sVV}{Q} $, this implies that
  \begin{math}
    {\confE{\sys{\sVV}{Q}}} \piRedWDecCost{\quad}{k} {\confE{\sys{\sV'}{Q'}}} 
  \end{math}  where
  \begin{equation}\label{eq:1:char}
   {\env \vDash \sys{\sV'}{P''}\piCostBisAmm{n+k-l} \sys{\sVV'}{Q'}}.
  \end{equation}  By Lemma~\ref{lem:reduc-and-transitions}(2) we deduce
  \begin{math}
    {(\sys{\sVV}{Q})\sigma_\env} \piRedAst_k {\sys{\sVV'}{Q'}}
  \end{math} and by Lemma~\ref{lem:reduction-bijective-rename} we obtain 
  \begin{math}
    {\sys{\sVV}{Q}} \piRedAst {\sys{\sVV''}{Q''}}
  \end{math} where $\sys{\sVV''}{Q''} = (\sys{\sVV'}{Q'})\sigma_\env$.
   The required result follows from $\env \vDash \sys{\sV'}{P'}\piCostBisAmm{0}\sys{\sV'}{P''}$, which we obtain from $P'\piStruct P''$ and Corollary~\ref{cor:Struct-eq-implies-bisim} (Structural Equivalence and Bisimilarity), \eqref{eq:1:char}, $\env \vDash \sys{\sVV''}{Q''} \piCostBisAmm{0} \sys{\sVV'}{Q'}$ which we obtain from Lemma~\ref{lem:reflexivity} (Reflexivity upto Renaming) and $\sys{\sVV''}{Q''} = (\sys{\sVV'}{Q'})\sigma_\env$, and Lemma~\ref{lem:transitivity}.
\end{proof}

\begin{lem}[Resource Extensions]\label{lem:-bisim-weakening}
  \begin{displaymath}
  \env \vDash \sys{\sV}{P} \; \piCostBisAmm{n} \; \sys{\sVV}{Q}  
\text{ implies }  \env,\emap{c}{\chantypU{\tVlst}} \vDash \sys{(\sV,c)}{P} \; \piCostBisAmm{n} \; \sys{(\sVV,c)}{Q}  
\end{displaymath}
\end{lem}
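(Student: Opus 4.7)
The plan is to proceed by coinduction. First, I would construct a candidate amortised typed relation \relR\ that extends the hypothesised bisimulation $\piCostBisAmm{}$ with pairs of resource-extended systems; concretely, $\relR^{\env^{*},n^{*}}$ contains every quadruple of the form $\langle (\env', \emap{c}{\chantypU{\tVlst}}), n', \sys{(\sV',c)}{P'}, \sys{(\sVV',c)}{Q'}\rangle$ whenever $\env' \vDash \sys{\sV'}{P'} \piCostBisAmm{n'} \sys{\sVV'}{Q'}$ and $c \notin (\dom(\env') \cup \sV' \cup \sVV')$, together with all the pairs already in $\piCostBisAmm{}$. Both $\conf{(\env,\emap{c}{\chantypU{\tVlst}})}{\sys{(\sV,c)}{P}}$ and $\conf{(\env,\emap{c}{\chantypU{\tVlst}})}{\sys{(\sVV,c)}{Q}}$ are easily seen to be configurations (the original consistency is preserved because $\emap{c}{\chantypU{\tVlst}}$ can be added without conflict, and $c$ is covered by $\sV,c$ and $\sVV,c$). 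The goal is then to show that \relR\ is itself an amortised typed bisimulation.

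The bulk of the verification is a case analysis on a transition $\conf{(\env, \emap{c}{\chantypU{\tVlst}})}{\sys{(\sV,c)}{P}} \piRedDecCostPad{\mu}{k} \conf{\env''}{\sys{M''}{P''}}$ (and its symmetric version). A crucial preliminary observation, justified via Lemma~\ref{lem:transition-structure} together with the consistency requirement of the underlying configuration, is that the process-side typing environment cannot contain any assumption on $c$, since the observer already owns a unique-now permission $\chantypU{\tVlst}$ for $c$; in particular, $P$ can neither perform input/output on $c$, nor can $c$ appear among the values exchanged with the observer on any other channel, nor can $P$ deallocate $c$.

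The key subcase is $\mu = \actfree{c}$: the only applicable rule is \rtit{lFreeE}, which yields $\conf{\env}{\sys{\sV}{P}}$ with cost $-1$. The matching move on the right is the analogous \rtit{lFreeE}, producing $\conf{\env}{\sys{\sVV}{Q}}$, also with cost $-1$; so the new credit is $n + (-1) - (-1) = n$, and the resulting pair lies in $\piCostBisAmm{n} \subseteq \relR$ by the lemma's hypothesis. For every other action $\mu$, the transition can be lifted to the non-extended configuration $\confESP$ via Lemma~\ref{lem:strenghtening} (Strengthening); the original bisimulation supplies a matching move, which is then re-weakened back to the extended setting using Lemma~\ref{lem:weakening}. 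The two resulting extended configurations are related by \relR\ because their non-extended variants are related by $\piCostBisAmm{}$, with the credit tracked faithfully.

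The main obstacle is keeping the renaming in rule \rtit{lRen} compatible with the newly added unique channel: one must ensure that any bijective renaming $\sigma_{(\env, \emap{c}{\chantypU{\tVlst}})}$ fixes $c$ (because $c \in \dom(\env, \emap{c}{\chantypU{\tVlst}})$ by \defref{def:renaming}), and also that observer-initiated allocations (\rtit{lAllE}) must pick fresh channels distinct from $c$ since $c$ is already allocated in $\sV,c$ and $\sVV,c$. A minor subtlety is \rtit{lStr}, which could restructure the observer environment by splitting or combining the $\emap{c}{\chantypU{\tVlst}}$ assumption; however, since $\chantypU{\tVlst}$ only interacts consistently with itself in this setting (no other affine or unique permissions for $c$ can coexist on the process side), such restructuring reduces to analogous restructurings in the non-extended relation, which are accommodated by Lemma~\ref{prop:env-struct-rules-bisim}.
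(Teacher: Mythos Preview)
Your overall strategy---coinduction via a candidate relation that adjoins the resource-extended pairs to $\piCostBisAmm{}$---is exactly what the paper does (its proof is the single phrase ``By coinduction''), but two steps in your execution do not go through as written.

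First, Lemmas~\ref{lem:strenghtening} and~\ref{lem:weakening} act only on the observer \emph{type} environment, not on the \emph{resource} environment $\sV$. To pass from a transition out of $\conf{(\env,\emap{c}{\chantypU{\tVlst}})}{\sys{(\sV,c)}{P}}$ to one out of $\confE{\sys{\sV}{P}}$ (and, dually, to re-extend the matching move on the $Q$ side), you also need that an allocated channel absent from both $\dom(\env)$ and $\fn(P)$ can be removed from, and added to, $\sV$ without disturbing the pre-LTS derivation. That is a separate, easy rule-induction fact, but it is not what the lemmas you cite provide.

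Second, your candidate relation is not closed under \rtit{lStr} moves that touch the $c$-entry, and your appeal to Lemma~\ref{prop:env-struct-rules-bisim} does not rescue this: that lemma concerns restructurings of the \emph{non-extended} environment $\env$, in which $c$ does not occur. Via \rtit{tCon} and \rtit{pUnq} the observer may split $\emap{c}{\chantypU{\tVlst}}$ into $\emap{c}{\chantypO{\tVlst}},\emap{c}{\chantypUU{\tVlst}{1}}$; via \rtit{tSub} it may coarsen it to $\chantypW{\tVlst}$ or $\chantypUU{\tVlst}{i}$; via \rtit{tWeak} it may drop it entirely. None of the resulting environments has the shape $(\env',\emap{c}{\chantypU{\tVlst'}})$, so the pair escapes the first clause of your $\relR$ and you have not shown it lies in $\piCostBisAmm{}$ either. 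The fix is to widen the first clause of $\relR$ to allow an \emph{arbitrary} (possibly empty, possibly multiple) collection of observer assumptions on $c$, rather than a single unique-now one; configuration consistency still forces the process-side typing to carry no $c$-assumption, so the remainder of your case analysis---in particular the $\actfree{c}$ case, which can only fire once the observer has rejoined its $c$-permissions to $\chantypU{\cdot}$---goes through unchanged.
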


\begin{proof}
  By coinduction.
\end{proof}


\begin{thm}[Soundness]\label{thm:soundness}
  \begin{math}
  \env \vDash (\sys{\sV}{P}) \; \piCostBisAmm{n} \; (\sys{\sVV}{Q})
\text{ implies }    \env \vDash (\sys{\sV}{P}) \piCostAmm{n} (\sys{\sVV}{Q})  
\end{math}.
\end{thm}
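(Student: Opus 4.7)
The plan is to show that the family of amortised typed bisimilarity relations, $\piCostBisAmm{n}$, satisfies all three defining closure properties of the behavioural contextual preorder $\piCost^{\env,n}$. Since, by Definition~\ref{def:cost-preorder}, $\piCost^{\env,n}$ is the \emph{largest} such family, establishing closure under these properties will immediately give the desired inclusion $\piCostBisAmm{n} \subseteq \piCost^{\env,n}$, and hence the theorem.

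First I would show that $\piCostBisAmm{n}$ is barb preserving in the sense of Definition~\ref{def:barb-preserving}. This is essentially the content of Lemma~\ref{lem:-bisim-barb}, which relies on the fact that an output barb on a channel $c$ known to the observer (\ie $c \in \dom(\env)$) can be replayed as a weak external output action $\piRedWDecCost{\actout{c}{\vec{d}}}{l}$ in the LTS via Lemma~\ref{lem:reduc-and-transitions}, and then matched by the bisimulation, with the matching transition converted back to a reduction sequence using Lemma~\ref{lem:reduction-bijective-rename} (the observer-aware renaming $\sigma_\env$ leaves $c$ fixed, so the barb is preserved).

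Second, I would show that $\piCostBisAmm{n}$ is cost improving at credit $n$ in the sense of Definition~\ref{def:cost-improving}. This is Lemma~\ref{lem:-bisim-reduc-closure}: a process reduction $\sysSP \piRedCost{l} \sys{\sV'}{P'}$ lifts to a $\tau$-transition via Lemma~\ref{lem:reduc-and-transitions}(1), which is matched by the bisimulation with a weak $\tau$-transition at some cost $k$, updating the credit to $n+k-l$; this matching transition is then reflected back into a reduction sequence using Lemma~\ref{lem:reduc-and-transitions}(2), Lemma~\ref{lem:reduction-bijective-rename}, Corollary~\ref{cor:Struct-eq-implies-bisim} and Lemma~\ref{lem:reflexivity} to discharge the up-to-renaming and up-to-$\piStruct$ residues.

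Third, I would show full contextuality at $\env$, in the sense of Definition~\ref{def:contextuality}. This splits into two parts: resource extensions are handled directly by Lemma~\ref{lem:-bisim-weakening}, which gives closure under $\env,\emap{c}{\chantypU{\tVlst}} \vDash \sys{\sV,c}{P} \,\piCostBisAmm{n}\, \sys{\sVV,c}{Q}$; parallel contextuality, for any $R$ typed under a sub-environment $\env_2$ (with $\env \struct \env_1,\env_2$), is handled by Theorem~\ref{thm:costed-bisim-compositionality}, possibly preceded by an application of Lemma~\ref{prop:env-struct-rules-bisim} to reshape $\env$ into the required $\env_1,\env_2$ form.

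Having assembled these three closure properties, I conclude that the family $\{\piCostBisAmm{n}\}_{n \in \Nats}$ is one of the amortised typed relations being ranged over in Definition~\ref{def:cost-preorder}, and therefore included in the largest such family, namely $\piCost^{\env,n}$. The main conceptual obstacle is not any individual step but rather ensuring the bookkeeping of credits, environments, and renamings lines up with the formulations of Definitions~\ref{def:barb-preserving}, \ref{def:cost-improving} and \ref{def:contextuality}; fortunately the preparatory lemmas have been designed so that the credit $n+k-l$ threads through uniformly, so the proof reduces to a direct invocation of each of these results in turn.
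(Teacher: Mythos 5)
Your proposal is correct and matches the paper's own proof, which likewise derives Theorem~\ref{thm:soundness} directly from Lemma~\ref{lem:-bisim-barb} (Barb Preservation), Lemma~\ref{lem:-bisim-reduc-closure} (Cost Improving), Lemma~\ref{lem:-bisim-weakening} (Resource Extensions) and Theorem~\ref{thm:costed-bisim-compositionality} (Contextuality), concluding by maximality of $\piCost^{\env,n}$ in Definition~\ref{def:cost-preorder}. Your extra remark about using Lemma~\ref{prop:env-struct-rules-bisim} to reconcile the $\env \struct \env_1,\env_2$ formulation of full contextuality with the statement of Theorem~\ref{thm:costed-bisim-compositionality} is a reasonable piece of bookkeeping that the paper leaves implicit.
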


\begin{proof}
  Follows from Lemma~\ref{lem:-bisim-barb} (Barb Preservation), Lemma~\ref{lem:-bisim-reduc-closure} (Cost Improving), Lemma~\ref{lem:-bisim-weakening} (Resource Extensions) and Theorem~\ref{thm:costed-bisim-compositionality} (Contextuality). 
\end{proof}

\begin{cor}[Soundness]\label{cor:soundness}
  \begin{math}
  \env \vDash (\sys{\sV}{P}) \; \piCostBis \; (\sys{\sVV}{Q})
\text{ implies }    \env \vDash (\sys{\sV}{P}) \piCost (\sys{\sVV}{Q})  
\end{math}.
\end{cor}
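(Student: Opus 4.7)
The plan is to observe that this corollary follows almost immediately from Theorem~\ref{thm:soundness} by unfolding the existential quantification hidden inside the notations $\piCostBis$ and $\piCost$. Recall that Definition~\ref{def:amortized-typed-bisim} sets $\env \vDash \sysSP \,\piCostBis\, \sys{\sVV}{Q}$ to mean that there exists some $n \in \Nats$ with $\env \vDash \sysSP \,\piCostBisAmm{n}\, \sys{\sVV}{Q}$, and Definition~\ref{def:cost-preorder} sets $\env \vDash \sysSP \piCost \sys{\sVV}{Q}$ to mean that there exists some $m \in \Nats$ with $\env \vDash \sysSP \piCostAmm{m} \sys{\sVV}{Q}$. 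Given the assumption $\env \vDash (\sysSP) \piCostBis (\sys{\sVV}{Q})$, I would instantiate the witness $n$ from the former definition, so that $\env \vDash (\sysSP) \piCostBisAmm{n} (\sys{\sVV}{Q})$ holds.

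Next, I would apply Theorem~\ref{thm:soundness} directly with this same $n$, yielding $\env \vDash (\sysSP) \piCostAmm{n} (\sys{\sVV}{Q})$. Finally, by the existential definition of $\piCost$, taking $m = n$ suffices to conclude $\env \vDash (\sysSP) \piCost (\sys{\sVV}{Q})$, which is the required statement.

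I do not anticipate any real obstacles in this proof: the entire content is in Theorem~\ref{thm:soundness}, which has already been discharged via the supporting lemmas on barb preservation (Lemma~\ref{lem:-bisim-barb}), cost improvement (Lemma~\ref{lem:-bisim-reduc-closure}), resource extensions (Lemma~\ref{lem:-bisim-weakening}), and full contextuality (Theorem~\ref{thm:costed-bisim-compositionality}). The corollary only packages this result at the level of the unindexed preorder, so the proof can be written in one or two lines.
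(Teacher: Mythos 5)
Your proof is correct and matches the paper's intent exactly: the corollary is stated there without proof precisely because it follows immediately from Theorem~\ref{thm:soundness} by extracting the existential credit witness $n$ from $\piCostBis$ and re-packaging it for $\piCost$, which is what you do.
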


\subsection{Full Abstraction of \,\texorpdfstring{\piCost}{piCost}}
\label{sec:completeness}

To prove completeness, \ie that for every behavioural contextual preorder there exists a corresponding amortised typed-bisimulation,  we rely on the adapted notion of \emph{action definability}  \cite{Hennessy07,hennessy04behavioural}, which intuitively means that every action (label) used by our LTS can, in some sense,  be simulated (observed) by a specific test context.  For our specific case, two important aspects need to be taken into consideration:
\begin{itemize}
\item the \emph{typeability} of the testing context \wrt our substructural
  type system;
\item the \emph{cost} of the action simulation, which has to
  correspond to the cost of the action being observed.
\end{itemize}
These aspects are formalised in Definition~\ref{def:cost-definability}, which relies on the  functions definitions $\domm$ and $\codd$:
\begin{align*}
  \domm(\epsilon) &\deftxt \epsilon  & \codd(\epsilon) &\deftxt \epsilon\\
    \domm(\env,\envmap{c}{\tV}) &\deftxt \domm(\env),c  & \codd(\env,\envmap{c}{\tV}) &\deftxt \codd(\env),\tV
\end{align*}
These two meta-functions take a substructural type environment and returning respectively a \emph{list} of channel names and a \emph{list} of types.  For example, for the environment $\env=\envmap{c}{\chantypO{\tV}},\envmap{d}{\chantypW{\tV'}}, \envmap{c}{\chantypUU{\tV}{1}}$, we have $\domm(\env) = c,d,c$ and $\codd(\env)=\chantypO{\tV},\chantypW{\tV'},\chantypUU{\tV}{1}$.


Before stating cost-definability for actions, Definition~\ref{def:cost-definability}, we prove the technical Lemma~\ref{lem:actions-renaming} which allows us to express transitions in a convenient format for the respective definition without loss of generality.  

\begin{lem}[Transitions and Renaming]\label{lem:actions-renaming}
    $\conf{\env}{\sys{\sV}{P}} \piRedDecCostPad{\mu}{k}  \conf{\env'}{\sys{\sV'}{P'}}$ if and only if $\;\conf{\env}{\sys{\sV}{P}} \piRedDecCostPad{\mu}{k}\bigl(\conf{\env''}{\sys{\sV''}{P''}}\bigr)\sigma_\env$ for some $\sigma_\env,\env'',\sV'',P''$ where $\env'=\env''\sigma_\env$, $\sV'=\sV''\sigma_\env$ and $P'=P''\sigma_\env$.  
\end{lem}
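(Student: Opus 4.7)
The proof is essentially a notational re-expression, and both directions reduce to observing that the identity renaming always satisfies the side-conditions of Definition~\ref{def:renaming}.

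For the forward direction, the plan is to take $\sigma_\env$ to be the identity substitution on $\Names$: every $c \in \dom(\env)$ is fixed (since $c\sigma_\env = c\sigma_\env^{-1} = c$ trivially), so identity is indeed a renaming modulo $\env$. Given the transition $\conf{\env}{\sys{\sV}{P}} \piRedDecCostPad{\mu}{k} \conf{\env'}{\sys{\sV'}{P'}}$, set $\env'' = \env'$, $\sV'' = \sV'$, $P'' = P'$, so that $(\conf{\env''}{\sys{\sV''}{P''}})\sigma_\env = \conf{\env'}{\sys{\sV'}{P'}}$ and the equalities $\env' = \env''\sigma_\env$, $\sV' = \sV''\sigma_\env$, $P' = P''\sigma_\env$ hold on the nose.

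For the backward direction, the plan is even more immediate: given a transition of the form $\conf{\env}{\sys{\sV}{P}} \piRedDecCostPad{\mu}{k}(\conf{\env''}{\sys{\sV''}{P''}})\sigma_\env$ for some $\sigma_\env, \env'', \sV'', P''$, we define $\env' \deftri \env''\sigma_\env$, $\sV' \deftri \sV''\sigma_\env$ and $P' \deftri P''\sigma_\env$. Unfolding the componentwise action of $\sigma_\env$ on the configuration then yields $\conf{\env}{\sys{\sV}{P}} \piRedDecCostPad{\mu}{k} \conf{\env'}{\sys{\sV'}{P'}}$ as required.

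There is no real obstacle: both directions are discharged directly from the definitions, relying only on the fact that the identity is a valid $\sigma_\env$ and that renaming distributes componentwise over configurations. The value of the lemma is not the content of its proof but rather its use downstream—it provides a normal form in which the renaming component of an LTS transition (introduced by rule \rtit{lRen}) is exposed explicitly, which is needed in subsequent arguments (e.g.\ in the contextuality and soundness proofs) whenever one has to interact a pre-LTS derivation with a separately quantified substitution.
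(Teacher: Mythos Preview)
Your proof is correct for the lemma exactly as stated: since the quantification over $\sigma_\env$ is existential, the identity substitution discharges both directions trivially, and you have identified this.

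The paper, however, takes a genuinely different route. It unfolds the transition via \rtit{lRen} to expose the underlying pre-transition $\confE{(\sys{\sV}{P})\sigma_\env} \piRedDecCostPre{\mu}{k} \conf{\env'}{\sys{\sV'}{P'}}$ and then proves a \emph{stronger} sublemma on pre-transitions: the target can be written as $(\conf{\env''}{\sys{\sV''}{P''}})\sigma'_\env$ where the new renaming $\sigma'_\env$ \emph{agrees with the original $\sigma_\env$ on $\dom(\sV)$}. This additional constraint is what makes the allocation cases (\rtit{lAll}, \rtit{lAllE}) non-trivial, since a freshly allocated name need not lie in the range of $\sigma_\env$ restricted to $\sV$, forcing one to patch the substitution at that point. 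Your identity argument bypasses this entirely because it never commits to the $\sigma_\env$ coming from \rtit{lRen}. What the paper's approach buys is a tighter handle on the renaming actually used by the operational semantics---this is the form relied upon in the action-definability arguments (Lemma~\ref{lem:action-def}), where a \emph{specific} $\sigma_\env$ is fixed in advance by Definition~\ref{def:cost-definability} and one needs the transition's renaming to be compatible with it, not merely some renaming. Your proof establishes the lemma as written but does not deliver that stronger intermediate fact.
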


\begin{proof}
  The \emph{if} case is immediate.  The proof for the \emph{only-if} is complicated by actions that perform channel allocation (see \rtit{lAll} and \rtit{lAllE} from \figref{fig:LTS}) because, in such cases, the renaming used in \rtit{lRen}'s premise cannot be used directly.  More precisely, from the premise we know:
\begin{equation*}
\begin{prooftree}
 \confE{\bigl(\sys{\sV}{P}\bigr)\sigma_\env} \;\piRedDecCostPre{\;\mu\;}{k}\; \conf{\env'}{\sys{\sV'}{P'}}
\justifiedBy{lRen}
    \confE{\sys{\sV}{P}} \;\piRedDecCost{\;\mu\;}{k}\; \conf{\env'}{\sys{\sV'}{P'}}
\end{prooftree}
\end{equation*}
and the required result follows if we prove the (slightly more cumbersome) sublemma:
\begin{lemm}[Transition and Renaming]
  $\confE{\bigl(\sys{\sV}{P}\bigr)\sigma_\env} \;\piRedDecCostPre{\;\mu\;}{k}\; \conf{\env'}{\sys{\sV'}{P'}}$ where $\fn(P) \subseteq \sV$ implies $\confE{\bigl(\sys{\sV}{P}\bigr)\sigma_\env} \;\piRedDecCostPre{\;\mu\;}{k}\; \bigl(\conf{\env''}{\sys{\sV''}{P''}}\bigr)\sigma'_\env$ for some $\sigma'_\env,\env'',\sV'',P''$ where
  \begin{itemize}
  \item $\env'=\env''\sigma'_\env$, $\sV'=\sV''\sigma'_\env$ and  $P'=P''\sigma'_\env$;
  \item $c\in\dom(\sV)$ implies $\sigma_\env(c) = \sigma'_\env(c)$ 
  \end{itemize}
\end{lemm}
The above sublemma is proved by rule induction on $\confE{\bigl(\sys{\sV}{P}\bigr)\sigma_\env} \;\piRedDecCostPre{\;\mu\;}{k}\; \conf{\env'}{\sys{\sV'}{P'}}$. We show one of the main cases:
\begin{description}
\item[\rtit{lAll}] We have $\confE{\bigl(\sys{\sV}{\piAll{x}{P}}\bigr)\sigma_\env} \piRedDecCostPre{\;\tau\;}{+1} \conf{\env}{\sys{\bigl((\sV)\sigma_\env,c\bigr)}{\bigl((P)\sigma_\env\subC{c}{x}}\bigr)}$. From the fact that $c\not\in (\sV\sigma_\env)$ --- it follows because $\bigl((\sV)\sigma_\env,c\bigr)$ is defined --- we know that $\sigma^{-1}_\env(c) \not\in \sV$. We thus choose some fresh channel $d$, \ie $d\not\in\bigl(\sV \cup (\sV\sigma_\env) \cup \dom(\env)\bigr)$\footnote{The condition that $d\not\in\dom(\env)$ is required since we do not state whether the triple \confESP\ is a configuration; otherwise, it is redundant --- see comments succeeding Definition~\ref{def:configuration}.}, and define $\sigma'_\env$ as $\sigma_\env$, except that it maps $d$ to $c$ and  also maps $\sigma^{-1}_\env(c)$ (\ie the channel name that mapped to $c$ in $\sigma_\env$) to $\sigma_\env(d)$, since this channel is not mapped to by $d$ anymore (in order to preserve bijectivity):
  \begin{equation*}
    \sigma'_\env(x) \deftxt
    \begin{cases}
     c & \text{if}\; x = d\\
     \sigma_\env(d)  & \text{if}\; x = \sigma^{-1}_\env(c)\\ 
      \sigma_\env(x) & \text{otherwise}
    \end{cases}
  \end{equation*}
We subsequently define
\begin{itemize}
\item $\env''$ as $\env$ since $\env\sigma'_\env = \env\sigma_\env=
  \env$;
\item $\sV''$ as $\sV,d$ since $(\sV,d)\sigma'_\env =
  \bigl((\sV)\sigma_\env,c\bigr)$; and
\item $P''$ as $P\subC{d}{x}$ since $P\subC{d}{x}\sigma'_\env = P\sigma_\env\subC{c}{x}$ \qedhere
\end{itemize}
\end{description}
\end{proof}

\begin{defi}[Cost Definable Actions]\label{def:cost-definability}
An action $\mu$ is cost-definable iff for any pair of type environments\footnote{Cost Definability cannot be defined \wrt the first environment only in the case of action \actall, since it non-deterministically allocates a fresh channel name and adds it to the residual environment - see \rtit{lAllE} in \figref{fig:LTS}.} $\env$ and  $\env'$, a corresponding substitution $\sigma_\env$, a set of channel names $C\in\Chans$,  and channel names $\ctit{succ}, \ctit{fail}\not\in C$, there exists a test $R$ such that ${\tproc{\env,\emap{\ctit{succ}}{\chantypO{\codd(\env')}}, \emap{\ctit{fail}}{\chantypO{}}, \emap{\ctit{fail}}{\chantypO{}}}{R}}$  and whenever $\sV\in C$:
\begin{enumerate}
\item \conf{\env}{\sys{\sV}{P}} \piRedDecCostPad{\mu}{k}  $\bigl(\conf{\env'}{\sys{\sV'}{P'}}\bigr)\sigma_\env$ \quad  implies \\$\sys{\sV,\ctit{succ}, \ctit{fail}}{P \piParal R} \piRedCostAst{k} \sys{\sV',\ctit{succ}, \ctit{fail}}{P'\piParal \piOutA{\ctit{succ}}{\bigl(\domm(\env')\bigr)}}$. 
\item $\sys{\sV,\ctit{succ}, \ctit{fail}}{P\piParal R} \piRedAst_k \sys{\sV''}{P''}$ where $\conf{\envmap{\ctit{succ}}{\chantypA{\codd(\env')}},\envmap{\ctit{fail}}{\chantypA{}}}{\sys{\sV''}{P''}} \piBarbNot{\ctit{fail}}$ and $\conf{\envmap{\ctit{succ}}{\chantypA{\codd(\env')}},\envmap{\ctit{fail}}{\chantypA{}}}{\sys{\sV''}{P''}} \piBarb{\ctit{succ}}$ implies \conf{\env}{\sys{\sV}{P}} \piRedWDecCostPad{\mu}{k}  $\bigl(\conf{\env'}{\sys{\sV'}{P'}}\bigr)\sigma_\env$ where $\sV''=\sV',\ctit{succ}, \ctit{fail}$ and $P'' \piStruct P'''\piParal \piOutA{\ctit{succ}}{\bigl(\domm(\env')\bigr)}$. 
 \end{enumerate}
\end{defi}

\begin{lem}[Action Cost-Definability]\label{lem:action-def} External actions $\mu \in \sset{\actout{c}{\vec{d}},\actin{c}{\vec{d}}, \actall, \actfree{c} \,|\, c,\vec{d} \subset \Chans}$  are cost-definable.
 \end{lem}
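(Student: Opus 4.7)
The plan is to proceed by case analysis on the shape of the external action $\mu$, exhibiting for each case a concrete test $R$ and verifying (a) that $R$ typechecks under the environment $\env, \emap{\ctit{succ}}{\chantypO{\codd(\env')}}, \emap{\ctit{fail}}{\chantypO{}}, \emap{\ctit{fail}}{\chantypO{}}$; (b) the soundness direction of cost-definability, that every LTS action from $\confE{\sysSP}$ is mirrored by a reduction of $\sys{\sV,\ctit{succ},\ctit{fail}}{P \piParal R}$ of the same cost landing in a state with the prescribed $\ctit{succ}$ output; and (c) the completeness direction, that any converging run of $\sys{\sV,\ctit{succ},\ctit{fail}}{P \piParal R}$ exhibiting the $\ctit{succ}$-barb but no $\ctit{fail}$-barb reflects back to the LTS action.

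The two memory actions are straightforward. For $\mu = \actall$ I take $R$ to allocate a fresh channel and then emit it on $\ctit{succ}$, which incurs cost $+1$ and typechecks via \rtit{tAll}; here the existentially quantified $\sigma_\env$ of Definition~\ref{def:cost-definability} realigns the channel allocated by the test with the one chosen in the LTS step, using Lemma~\ref{lem:actions-renaming}. For $\mu = \actfree{c}$ the test deallocates $c$ before signalling on $\ctit{succ}$, typed via \rtit{tFree} since the starting environment $\env$ already carries the unique permission for $c$ demanded by \rtit{lFreeE}. For the input action $\actin{c}{\vec{d}}$, I use $R \deftxt \piOut{c}{\vec{d}}{\piOutA{\ctit{succ}}{\domm(\env')}}$: the test typechecks by inversion of \rtit{lIn} (which supplies the observer-side permissions needed to send $\vec{d}$ on $c$), and the leading output is blocked until $P$ offers a matching input, which by Lemma~\ref{lem:reduc-and-transitions} reflects precisely to the intended pre-LTS step.

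The only delicate case is the output action $\actout{c}{\vec{d}}$, where the test must input on $c$ and certify that the tuple received coincides with $\vec{d}$. For each position $d_i$ already known to the observer, \ie $d_i \in \dom(\env)$, the test inserts a conditional $\piIf{x_i = d_i}{\cdots}{\piOutA{\ctit{fail}}{}}$; for the fresh positions, those $d_i \in \dom(\env') \setminus \dom(\env)$, the test simply binds the received name and forwards it into the payload of the final output on $\ctit{succ}$. Lemma~\ref{lem:actions-renaming} and the existentially quantified $\sigma_\env$ absorb the freedom in the identity of these fresh names, so we may assume canonical representatives inside $R$.

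The main obstacle will be twofold. First, typing this last test under the substructural environment: the input on $c$ consumes a permission whose residual must be threaded through all name-matching tests, and each matched $d_i$ must be split off via \rtit{tCon} and \rtit{pUnq} before the input prefix so as to remain available inside the \rtit{tIf} branches. Second, discharging condition~2 for the output case: I would argue that any reduction of $\sys{\sV,\ctit{succ},\ctit{fail}}{P \piParal R}$ ending in the $\ctit{succ}$-barb and no $\ctit{fail}$-barb must factor through a single internal synchronisation between $P$ and the head input of $R$ (the only route to unlock $\ctit{succ}$), and that synchronisation reflects back to an $\actout{c}{\vec{d}}$ step via Lemma~\ref{lem:reduc-and-transitions}, with the absence of $\ctit{fail}$ forcing the received tuple to coincide with $\vec{d}$ on all observer-known coordinates. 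The enveloping reductions all carry zero cost, so the total matches the cost of $\mu$, and assembling the four cases yields the claimed cost-definability.
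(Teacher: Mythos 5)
Your overall structure (case analysis, concrete Hennessy-style tests, with the memory actions and the input action handled by essentially the same tests as the paper uses) is fine, but there is a genuine gap in the one case you yourself flag as delicate, the output action $\actout{c}{\vec{d}}$, and it is precisely where the paper's test does extra work. For the components $d_i \in \dom(\env')\setminus\dom(\env)$ your test ``simply binds the received name and forwards it'' — it never verifies that the name received from $P$ is actually \emph{fresh}, i.e.\ distinct from every name the observer already knows. Clause $(2)$ of Definition~\ref{def:cost-definability} then fails: a derivative of $P$ may output on $c$ a name already in $\dom(\env)$, your test happily reaches the $\ctit{succ}$-barbed, $\ctit{fail}$-free state, yet the required reflected transition must be $\actout{c}{\vec{d}}$ landing in $\bigl(\conf{\env'}{\sys{\sV'}{P'}}\bigr)\sigma_\env$ with the fresh positions genuinely fresh — and this cannot be repaired by the existential $\sigma_\env$, since renamings modulo $\env$ are the identity on $\dom(\env)$ and so cannot turn an old name into a fresh one. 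The paper's test closes exactly this hole: it receives $x$ on $c$, performs a sequence of \emph{mismatch} checks against all observer-known names ($x\in\domm(\env_1)$, resp.\ $\domm(\env)$), and only in the ``all distinct'' branch consumes the pre-emitted $\piOutA{\ctit{fail}}{}$ and releases $\piOutA{\ctit{succ}}{\domm(\env')}$, so that ``$\ctit{succ}$ up and $\ctit{fail}$ down'' certifies both that the communication happened and that the transmitted name was fresh.

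A second, related omission: which names the test may compare against depends on the attribute of the observer's permission on $c$. When that permission is affine, inputting on $c$ consumes it (implicit permission transfer back to $P$), so by \rtit{tIf} the test can no longer match the received value against $c$ itself; this is why the paper distinguishes the $\aV=\affine$ case and matches only against $\domm(\env_1)$ where $\env=\env_1,\envmap{c}{\chantypO{\tVlst}}$. Your plan of threading residual permissions through the conditionals via \rtit{tCon}/\rtit{pUnq} does not by itself resolve this, because after an affine use there is no residual assumption on $c$ to thread. The remaining cases of your proposal (\actall, \actfree{c}, \actin{c}{\vec{d}}) are essentially the paper's tests minus its $\ctit{fail}$-flag bookkeeping and would need only routine adjustment, but as it stands the output case does not establish cost-definability.
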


\begin{proof}
  The witness tests for  \actout{c}{\vec{d}} and \actin{c}{\vec{d}} are reasonably standard (see \cite{Hennessy07}), but need to take into account permission transfer. For instance, for the specific case of the action \actout{c}{d} where $d \not\in \domm(\env)$, if the transition $\conf{\env}{\sys{\sV}{P}} \piRedDecCostPad{\mu}{k} \bigl(\conf{\env'}{\sys{\sV'}{P'}\bigr)\sigma_\env}$ holds then we  know that, for some $\env_1$ and $\chantypA{\tV}$:     
  \begin{itemize}
  \item $\env = \env_1,\envmap{c}{\chantypA{\tV}}$;
  \item $\env'\sigma_\env = \env_1, \envmap{c}{\chantyp{\tV}{\aV-1}}, \envmap{d}{\tV}$ 
  \end{itemize}
In particular, when $\aV = \affine$ (affine),  using the permission to input on $c$ implicitly transfers the permission to process $P$ (see Section~\ref{sec:LTS}), potentially revoking the test's capability to perform name matching on channel name $c$  (see \rtit{tIf} in \figref{fig:typingrules}) --- this happens if $c\not\in\dom(\env_1)$.  For this reason, when $\aV = \affine$ the test is defined as 
\begin{displaymath}
      \piOutA{\ctit{fail}}{} \piParal \piIn{c}{x}{\piIf{\bigl(x\in \domm(\env_1)\bigr)}{\piNil}{\piIn{\ctit{fail}}{}{\piOutA{\ctit{succ}}{\bigl(\domm(\env')\bigr)}}}}
    \end{displaymath}
where $x  \in \domm(\env_1)$  is shorthand for a sequence of name comparisons as in \cite{Hennessy07}. Otherwise, the respective type assumption is not consumed from the observer environment and the test is defined as
\begin{displaymath}
      \piOutA{\ctit{fail}}{} \piParal \piIn{c}{x}{\piIf{\bigl(x\in \domm(\env)\bigr)}{\piNil}{\piIn{\ctit{fail}}{}{\piOutA{\ctit{succ}}{\bigl(\domm(\env')\bigr)}}}}
    \end{displaymath}
 Note that name comparisons on freshly acquired names are typeable since we also obtain the respective permissions upon input, \ie the explicit permission transfer (see Section~\ref{sec:LTS}).   The reader can verify that these tests typecheck \wrt the environment $\env,\emap{\ctit{succ}}{\chantypO{\codd(\env')}}, \emap{\ctit{fail}}{\chantypO{}}, \emap{\ctit{fail}}{\chantypO{}}$ and that they observe clauses $(1)$ and $(2)$ of Definition~\ref{def:cost-definability}. In the case of clause $(2)$, we note 
 that from the typing of the tests above, we know that $c\in\domm(\env)$ must hold (because both tests use channel $c$ for input); this is is a key requirement for the transition to fire --- see \rtit{lOut} of \figref{fig:LTS}. 

The witness tests for \actall\ and \actfree{c} involve less intricate permission transfer and are respectively defined as:
    \begin{displaymath}
      \piOutA{\ctit{fail}}{} \piParal \piAll{x}{\piIn{\ctit{fail}}{}{\piOutA{\ctit{succ}}{\bigl(\domm(\env),x\bigr)}}}
    \end{displaymath} and 
    \begin{displaymath}
       \piOutA{\ctit{fail}}{} \piParal \piFree{c}{\piIn{\ctit{fail}}{}{\piOutA{\ctit{succ}}{\bigl(\domm(\env')\bigr)}}}
    \end{displaymath}
We here focus on \actall\ and leave the analogous proof for \actfree{c} for the interested reader:
\begin{enumerate}
\item  If $\conf{\env}{\sys{\sV}{P}} \piRedDecCostPad{\actall}{k}  (\conf{\env'}{\sys{\sV'}{P'}})\sigma_\env$ we know that, for some $d\not\in\sV$ and $c\not\in\sV\sigma_\env$  where $\sigma_\env(d)=c$, we have  $(\env')\sigma_\env=(\env,\envmap{d}{\chantypU{\tV}})\sigma_\env  = \env,\envmap{c}{\chantypU{\tV}}$,  $\sV'=(\sV,d)$  and $P'=P$.  We can therefore simulate this action by the following sequence of reductions:
  \begin{align*}
    &\sys{\sV}{P\piParal \piOutA{\ctit{fail}}{} \piParal \piAll{x}{\piIn{\ctit{fail}}{}{\piOutA{\ctit{succ}}{\bigl(\domm(\env),x\bigr)}}}} \piRed\\
    &\quad \sys{\sV,d}{P\piParal \piOutA{\ctit{fail}}{} \piParal \piIn{\ctit{fail}}{}{\piOutA{\ctit{succ}}{\bigl(\domm(\env),d\bigr)}}} \piRed \sys{\sV,d}{P\piParal  \piOutA{\ctit{succ}}{\bigl(\domm(\env),d\bigr)}} 
  \end{align*}
\item From the structure of $R$ and the assumption that $\ctit{fail},\ctit{succ} \not\in\fn(P)$, we conclude that, if ${\conf{\envmap{\ctit{succ}}{\chantypA{\codd(\envv)}},\envmap{\ctit{fail}}{\chantypA{}}}{\sys{\sV'}{P'}} \piBarbNot{\ctit{fail}}}$ and $\conf{\envmap{\ctit{succ}}{\chantypA{\codd(\envv)}},\envmap{\ctit{fail}}{\chantypA{}}}{\sys{\sV'}{P'}} \piBarb{\ctit{succ}}$, then it must be the case that, for some $d\not\in\sV$, $P' = P'' \piParalL \piOutA{\ctit{succ}}{\bigl(\domm(\env),d\bigr)}$  where $\sV'' = (\sV',\ctit{succ},\ctit{fail},d)$ for some $\sV'$. 


 Since $P$ and $R$ do not share common channels there could not have been any interaction between the two processes in the reduction sequence $\sys{\sV,\ctit{succ}, \ctit{fail}}{P\piParal R} \piRedAst_k \sys{\sV'}{P'}$. Within this reduction sequence, from every reduction $\sys{\sV_i}{P_i\piParal R'} \piRed_{k_i} \sys{\sV_{i+1}}{P_{i+1}\piParal R'}$ resulting from derivatives of $P$, \ie $\sys{\sV_i}{P_i} \piRed_{k_i} \sys{\sV_{i+1}}{P_{i+1}}$ that happened before the allocation of channel $d$, we obtain a corresponding silent transition
 \begin{equation}
\conf{\env_i}{\sys{(\sV_i\setminus \sset{\ctit{succ},\ctit{fail}})}{P_{i}}} \piRedDecCostPad{\tau}{k_i}  \conf{\env_i}{\sys{(\sV_{i+1}\setminus \sset{\ctit{succ},\ctit{fail}})}{P_{i+1}}}\label{eq:definab:1}
\end{equation}
by Lemma~\ref{lem:reduc-and-transitions}(1) and an appropriate lemma that uses the fact ${\sset{\ctit{succ},\ctit{fail}} \cap \fn(P) = \emptyset}$ to allows us to shrink the allocated resources from $\sV_i$ to $(\sV_i\setminus \sset{\ctit{succ},\ctit{fail}})$.  A similar procedure can be carried out for reductions that happened after the allocation of $d$ as a result of reductions from $P$ derivatives, and by applying renaming $\sigma_\env$ we can obtain  
 \begin{equation}
\bigl(\conf{\env_i}{\sys{(\sV_i\setminus \sset{\ctit{succ},\ctit{fail}})}{P_{i}}}\bigr)\sigma_\env \piRedDecCostPad{\tau}{k_i}  \bigl(\conf{\env_i}{\sys{(\sV_{i+1}\setminus \sset{\ctit{succ},\ctit{fail}})}{P_{i+1}}\bigr)\sigma_\env}\label{eq:definab:2}
\end{equation}

The reduction
\begin{multline*}
\qquad\qquad\sys{\sV_i,\ctit{succ}, \ctit{fail}}{P_i\piParal \piAll{x}{\piIn{\ctit{fail}}{}{\piOutA{\ctit{succ}}{\bigl(\domm(\env),x\bigr)}}}} \piRed_{+1} \\
\sys{\sV_{i},\ctit{succ}, \ctit{fail},d}{P_{i}\piParal \piIn{\ctit{fail}}{}{\piOutA{\ctit{succ}}{\bigl(\domm(\env),d\bigr)}}}
\end{multline*}
can be substituted by the transition
\begin{equation}
\conf{\env_i}{\sys{\sV_i}{P_{i}}} \piRedWDecCostPad{\actall}{+1}  \conf{\env_i,\envmap{(d)\sigma_\env}{\chantypU{\tV}}}{\sys{\bigl((\sV_{i})\sigma_\env,(d)\sigma_\env\bigr)}{(P_{i})\sigma_\env}}\label{eq:definab:3}
\end{equation}
This follows from the fact that $d\not\in\sV_i$ and the fact that $\sigma_\env$ is a bijection, which implies that  $(d)\sigma_\env\not\in(\sV_i)\sigma_\env$ (necessary for $\bigl((\sV_{i})\sigma_\env,(d)\sigma_\env\bigr)$ to be a valid resource environment).  By joining together the transitions from \eqref{eq:definab:1},~\eqref{eq:definab:3} and \eqref{eq:definab:2} in the appropriate sequence we obtain the required weak transition. \qedhere
\end{enumerate}
\end{proof}

The proof of Theorem~\ref{thm:completeness} (Completeness) relies on Lemma~\ref{lem:action-def} to simulate a costed action by the appropriate test and is, for the most part, standard. As stated already, one novel aspect is that the cost semantics requires  the simulation  to incur the same cost as that of the costed action.  Through Reduction Closure, Lemma~\ref{lem:action-def} again, and then finally the Extrusion Lemma~\ref{lem:extrusion} we then obtain the matching bisimulation move which preserves the relative credit index.  Another novel aspect of the proof for Theorem~\ref{thm:completeness} is that the name matching in the presence of our substructural type environment requires a reformulation of the Extrusion Lemma.  More precisely, in the case of the output actions, the simulating test requires all of the environment permissions to perform all the necessary name comparisons.  We then make sure that these permissions are not lost by communicating them all again on \ctit{succ}; this passing on of permissions then allows us to show contextuality in Lemma~\ref{lem:extrusion}.

\begin{lem}[Extrusion]\label{lem:extrusion} Whenever $\confESP$ and $\conf{\env}{\sys{\sVV}{Q}}$ are configurations and $\vec{d} \not\in\dom(\env)$:
  \begin{displaymath}
    \emap{\ctit{succ}}{\chantyp{\codd(\env)}{\unique{1}}}
    \vDash \sys{\bigl(\sV,\ctit{succ},\vec{d}\bigr)}{P \piParal \piOutA{\ctit{succ}}{(\domm(\env))}} \piCostAmm{n} \sys{\bigl(\sVV,\ctit{succ},\vec{d}\bigr)}{Q \piParal \piOutA{\ctit{succ}}{(\domm(\env))}}
  \end{displaymath}   implies
  \begin{math}
    \env \vdash \sys{\sV}{P} \;\piCostAmm{n}\; \sys{\sVV}{Q}
  \end{math}
\end{lem}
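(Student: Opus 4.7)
My plan is to show the conclusion by exhibiting an amortised typed family $\relR$ that is barb preserving, cost improving at credit $n$, and fully contextual at $\env$ (per Definition~\ref{def:cost-preorder}), and under which $\sysSP$ and $\sys{\sVV}{Q}$ are related at credit $n$.  Since $\piCost^{\env,n}$ is the largest such family, the desired conclusion follows immediately.  The candidate family is defined by setting $\env' \vDash \sys{\sV'}{P'} \relR^m \sys{\sVV'}{Q'}$ iff there exist $\ctit{succ}$ and $\vec{d}$ fresh for $\env', \sV', \sVV', P', Q'$ such that
\begin{equation*}
\emap{\ctit{succ}}{\chantyp{\codd(\env')}{\unique{1}}} \vDash \sys{(\sV',\ctit{succ},\vec{d})}{P' \piParal \piOutA{\ctit{succ}}{(\domm(\env'))}} \piCostAmm{m} \sys{(\sVV',\ctit{succ},\vec{d})}{Q' \piParal \piOutA{\ctit{succ}}{(\domm(\env'))}}.
\end{equation*}
The hypothesis of the lemma places the required pair $\langle \env, n, \sysSP, \sys{\sVV}{Q}\rangle$ into $\relR$.

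Cost improvement and resource extension come almost for free.  Any $\sys{\sV}{P} \piRedCost{k} \sys{\sV'}{P'}$ lifts by \rtit{rCtx} to a reduction of the left-hand packaged system without perturbing the solitary $\piOutA{\ctit{succ}}{(\domm(\env))}$ output (there is no local input-partner on $\ctit{succ}$); cost improvement of the hypothesis preorder thus supplies a matching reduction on the right side whose $\ctit{succ}$-output is likewise preserved, and stripping the packaging off both sides recovers the required instance of $\relR^{n+\ell-k}$ at $\env$.  Resource extension (clause~(1) of Full Contextuality) follows by appending the fresh unique $c$ to both $\domm(\env)$ and $\codd(\env)$, and invoking the same clause on the hypothesis preorder.

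The substantive step is Full Contextuality under a test $R$ given $\env \struct \env_1, \env_2$ with $\env_2 \vdash R$.  The idea is to wrap $R$ into a test usable by the packaged systems that first acquires the $\env$-permissions via $\ctit{succ}$, simulates $R$ using the $\env_2$-portion, and re-packages the $\env_1$-portion into a new $\ctit{succ}$-output: concretely, $R' \deftxt \piIn{\ctit{succ}}{\vec{x}}{(R\sigma \piParal \piOutA{\ctit{succ}}{\vec{y}})}$, where $\sigma$ renames the names of $\env_2$ to their corresponding binders in $\vec{x}$ and $\vec{y}$ is the sublist of $\vec{x}$ corresponding to $\domm(\env_1)$.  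The two affine uses of $\ctit{succ}$ in $R'$ can be extracted from the observer's $\chantyp{\codd(\env)}{\unique{1}}$ via repeated application of the splitting rule \rtit{pUnq}, while the residual $\chantyp{\codd(\env_1)}{\unique{1}}$ retained by the outer observer is precisely what is needed for the target relation at $\env_1$.  Invoking Full Contextuality of the hypothesis with $R'$ and then reducing the two internal $\ctit{succ}$-synchronisations (cost-$0$ $\tau$-steps absorbed by cost improvement of the hypothesis) collapses the left-hand composite to $\sys{(\sV,\ctit{succ},\vec{d})}{P \piParal R \piParal \piOutA{\ctit{succ}}{(\domm(\env_1))}}$ and its right-hand analogue, which is exactly the shape placing $(\sys{\sV}{P \piParal R}, \sys{\sVV}{Q \piParal R})$ into $\relR^n$ at $\env_1$.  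Barb preservation follows by the same wrapping trick using the discriminating test $\piIn{\ctit{succ}}{\vec{x}}{\piIn{x_c}{\vec{y}}{\piOutA{\ctit{obs}}{}}}$: a $c$-barb on $\sysSP$ is transformed into an $\ctit{obs}$-barb of the wrapped composition, which the hypothesis preserves.

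The main obstacle is the bookkeeping of $\ctit{succ}$-permissions: the wrapper's two affine uses must be consistently extracted from the outer unique-after-$1$ attribute, and the environment surviving the two internal synchronisations must be massaged back into the shape required for $\relR^n$ at $\env_1$.  This is a routine but delicate case analysis over the structural rules \rtit{pUnq}, \rtit{tJoin}, and \rtit{tWeak} of Figure~\ref{fig:typingrules}, together with Lemma~\ref{prop:env-struct-rules-bisim} to propagate the resulting environment equivalences through the hypothesis preorder.
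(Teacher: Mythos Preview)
Your overall strategy---define the candidate family $\relR$ via the packaging condition and verify barb preservation, cost improvement, and full contextuality---is exactly the paper's approach, and your barb-preservation argument is essentially identical to the one the paper spells out (the paper's test is $\piIn{\ctit{succ}}{\vec{x}}{\piIn{x_i}{\vec{y}}{\piOutA{\textit{ok}}{}}}$).  Cost improvement also goes through as you describe.

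There is, however, a real gap in your contextuality step.  Your wrapper $R' = \piIn{\ctit{succ}}{\vec{x}}{(R\sigma \piParal \piOutA{\ctit{succ}}{\vec{y}})}$ outputs on $\ctit{succ}$ a tuple of shape $\codd(\env_1)$, but every permission on $\ctit{succ}$ available from the observer has object type $\codd(\env)$.  Repeated \rtit{pUnq} splits of $\chantyp{\codd(\env)}{\unique{1}}$ yield affine pieces and a residual \emph{all at object type} $\codd(\env)$; the rule \rtit{tRev} that would let you change the object type applies only at $\uniqueNow$, and that grade is reached only \emph{inside} $R'$ after the input fires---i.e.\ on the process side, not in the observer's residual.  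So your claim that ``the residual $\chantyp{\codd(\env_1)}{\unique{1}}$ [is] retained by the outer observer'' is not justified, and $R'$ as written does not typecheck against the split you describe.  (Incidentally, there is only \emph{one} internal $\ctit{succ}$-synchronisation, not two: the packaging contributes one output, $R'$ contributes one input.)  The same object-type mismatch bites your resource-extension argument: appending $c$ to $\domm(\env)$ changes the arity of the $\ctit{succ}$-output, which cannot be obtained from clause~(1) applied to the hypothesis alone.

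The standard repair is to route the re-output through a \emph{fresh} channel.  First invoke clause~(1) of full contextuality on the hypothesis to introduce a fresh $\ctit{succ}'$ at type $\chantypU{\codd(\env_1)}$; split it as $\chantypO{\codd(\env_1)}$ (for $R'$) and $\chantyp{\codd(\env_1)}{\unique{1}}$ (retained by the observer).  Take $R' \deftri \piIn{\ctit{succ}}{\vec{x}}{(R\sigma \piParal \piOutA{\ctit{succ}'}{\vec{y}})}$, typed using the \emph{entire} $\emap{\ctit{succ}}{\chantyp{\codd(\env)}{\unique{1}}}$ together with the affine $\ctit{succ}'$ piece.  After the single $\ctit{succ}$-synchronisation you land on $P \piParal R \piParal \piOutA{\ctit{succ}'}{(\domm(\env_1))}$ under observer $\emap{\ctit{succ}'}{\chantyp{\codd(\env_1)}{\unique{1}}}$, with the stale $\ctit{succ}$ absorbed into the fresh $\vec{d}$ of the target $\relR$-instance.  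The same fresh-channel manoeuvre handles resource extension.  With this correction your argument goes through and coincides with the (unelaborated) ``standard techniques'' the paper cites.
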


\begin{proof}
  By coinduction we show that a family of  amortized typed relations \begin{math}
    {\env \vdash \sys{\sV}{P} \;\relR^n\; \sys{\sVV}{Q}}
  \end{math} observes the required properties of \defref{def:cost-preorder}.  Note that the environment $\emap{\ctit{succ}}{\chantyp{\codd(\env)}{\unique{1}}}$ ensures that $\ctit{succ}\not\in \names(P,Q)$ since both $P \piParal \piOutA{\ctit{succ}}{(\domm(\env))}$ and $Q \piParal \piOutA{\ctit{succ}}{(\domm(\env))}$ must typecheck \wrt a type environment that is consistent with $\emap{\ctit{succ}}{\chantyp{\codd(\env)}{\unique{1}}}$. Cost improving is straightforward and Barb Preserving and Contextuality follow standard techniques; see \cite{Hennessy07}.  

For instance, for barb preservation we are required to show that $\confESP \piBarb{c}$ implies  $\conf{\env}{\sys{\sVV}{Q}}\piBarb{c}$ (and viceversa).  From  $\confESP \piBarb{c}$ and Definition~\ref{def:barb} we know that $\envmap{c}{\chantypA{\vec{\tV}}}\in\env$ at some index $i$.  We can therefore define the process $R\deftri\piIn{\ctit{succ}}{\vec{x}}{\piIn{x_i}{\vec{y}}{\piOutA{\textit{ok}}{}}}$ where $|\vec{\tV}| = |\vec{y}|$; this test process typechecks \wrt $\emap{\ctit{succ}}{\chantyp{\codd(\env)}{\unique{1}}}, \emap{\textit{ok}}{\chantypO{}}$.  Now by Definition~\ref{def:contextuality}$(1)$ we know 
\begin{multline*}
    \emap{\ctit{succ}}{\chantyp{\codd(\env)}{\unique{1}}}, \emap{\textit{ok}}{\chantypU{}}
    \vDash
 \sys{\bigl(\sV,\ctit{succ},\vec{d},\textit{ok}\bigr)}{P \piParal \piOutA{\ctit{succ}}{(\domm(\env))}} \\\piCostAmm{n} \sys{\bigl(\sVV,\ctit{succ},\vec{d},\textit{ok}\bigr)}{Q \piParal \piOutA{\ctit{succ}}{(\domm(\env))}}
  \end{multline*}
and thus, by Definition~\ref{def:contextuality}$(2)$ and $\tproc{\emap{\ctit{succ}}{\chantyp{\codd(\env)}{\unique{1}}}, \emap{\textit{ok}}{\chantypO{}}}{R}$
\begin{equation}\label{eq:7:char}
  \begin{split}
    &\emap{\textit{ok}}{\chantypUU{}{1}}
    \vDash \sys{\bigl(\sV,\ctit{succ},\vec{d},\textit{ok}\bigr)}{P \piParal \piOutA{\ctit{succ}}{(\domm(\env))}\piParal R} \\
    &\qquad\qquad \qquad\qquad\qquad\qquad \qquad\qquad\piCostAmm{n}
    \sys{\bigl(\sVV,\ctit{succ},\vec{d},\textit{ok}\bigr)}{Q
      \piParal \piOutA{\ctit{succ}}{(\domm(\env))}\piParal R}
  \end{split}
  \end{equation}
Clearly, if $\confESP \piBarb{c}$ then $\bigl(\conf{\emap{\textit{ok}}{\chantypUU{}{1}}}{\sys{\bigl(\sV,\ctit{succ},\vec{d},\textit{ok}\bigr)}{(P \piParal \piOutA{\ctit{succ}}{(\domm(\env))}\piParal R)}}\bigr) \piBarb{\textit{ok}}$.  By \eqref{eq:7:char} and Definition~\ref{def:barb-preserving} we must have $\bigl(\conf{\emap{\textit{ok}}{\chantypUU{}{1}}}{\sys{\bigl(\sVV,\ctit{succ},\vec{d},\textit{ok}\bigr)}{(Q \piParal \piOutA{\ctit{succ}}{(\domm(\env))}\piParal R)}}\bigr) \piBarb{\textit{ok}}$ as well, which can only happen if $\sys{\sVV}{Q}\piRedAst\piStruct Q'\piParal \piOut{c}{\vec{d}}{Q''}$.  This means that $\confE{\sys{\sVV}{Q}} \piBarb{c}$.
\end{proof}


\begin{lem}\label{lem:environment-strenghten}
  $\env \vDash \sysSP \piCostAmm{n} \sys{\sVV}{Q}$ and $\env \struct \env'$ implies $\env' \vDash \sysSP \piCostAmm{n} \sys{\sVV}{Q}$
\end{lem}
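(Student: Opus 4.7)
The plan is to prove this by coinduction, exhibiting a candidate family of amortised typed relations that contains the pair and is closed under the three defining properties of \piCost (barb preservation, cost improving, and full contextuality), so that by maximality of \piCost the pair lies in the preorder at $\env'$. Concretely, I would define for each environment $\env_0$ and credit $m$ the family
\begin{equation*}
  \relS^{\env_0, m} \;\deftxt\; \sset{\,\langle \sysSP, \sys{\sVV}{Q}\rangle \;\mid\; \exists \env_1.\; \env_1 \vDash \sysSP \piCostAmm{m} \sys{\sVV}{Q}\text{ and } \env_1 \struct \env_0\,}
\end{equation*}
and show it satisfies the three closure properties. The starting pair belongs to $\relS^{\env', n}$ (taking $\env_1 = \env$ as the witness), so the conclusion follows once \relS\ is shown to be contained in $\piCost^{\env',n}$.

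First I would handle well-formedness of the underlying triples as configurations at $\env_0$: this amounts to invoking Proposition~\ref{prop:consistency-env-struct} and the fact that $\dom(\env_0) \subseteq \dom(\env_1)$ whenever $\env_1 \struct \env_0$ (the only structural rule that alters the domain is \rtit{tWeak}, which removes assumptions going left-to-right). Then \emph{barb preservation} is straightforward because a barb $\confEE{\env_0}{\sysSP}\piBarb{c}$ requires $c \in \dom(\env_0) \subseteq \dom(\env_1)$, lifting it to a barb at $\env_1$; applying barb preservation of $\piCostAmm{m}$ at $\env_1$ then yields the matching barb on the right-hand system, which descends back to $\env_0$ since $c$ still lies in $\dom(\env_0)$. \emph{Cost improving} is immediate because reductions in Definition~\ref{def:cost-improving} are on systems and do not mention the environment, so the matching costed reduction produced at $\env_1$ carries over unchanged, with the resulting pair belonging to $\relS^{\env_0, m+l-k}$ (still witnessed by $\env_1$).

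The interesting clause is \emph{full contextuality at $\env_0$}. For the parallel-composition subcase (clause $(2)$ of Definition~\ref{def:contextuality}), if $\env_0 \struct \env_1^{\ast}, \env_2^{\ast}$ with $\env_2^{\ast} \vdash R$, then transitivity of $\struct$ gives $\env_1 \struct \env_1^{\ast}, \env_2^{\ast}$, and full contextuality of $\piCostAmm{m}$ at $\env_1$ delivers $\env_1^{\ast} \vDash \sysS{P \piParal R} \piCostAmm{m} \sys{\sVV}{Q \piParal R}$; taking reflexivity of $\struct$ on $\env_1^{\ast}$ places this pair in $\relS^{\env_1^{\ast}, m}$, as required (and symmetrically for $R \piParal {-}$). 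For the resource-extension subcase (clause $(1)$), I use full contextuality at $\env_1$ to obtain ${\env_1, \emap{c}{\chantypU{\tVlst}} \vDash \sys{\sV,c}{P} \piCostAmm{m} \sys{\sVV,c}{Q}}$, and then need the congruence property that $\env_1 \struct \env_0$ entails $\env_1, \emap{c}{\chantypU{\tVlst}} \struct \env_0, \emap{c}{\chantypU{\tVlst}}$.

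The main obstacle I anticipate is this congruence of $\struct$ under extension by a fresh assumption, since the structural rules in \figref{fig:typingrules} are stated schematically with an ambient context and require an inductive argument on the derivation of $\env_1 \struct \env_0$ to verify that each rule application still goes through when $\emap{c}{\chantypU{\tVlst}}$ is tacked on. This is the sort of bookkeeping lemma already implicit in the developments of \cite{EFH:uniqueness:journal:12}, but it should be isolated and proved (or cited) separately to keep the coinductive argument clean. Once that congruence is in hand, the three closure checks combine to give $\relS \subseteq \piCost$ and hence the statement.
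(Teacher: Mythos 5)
Your proposal is correct and follows essentially the same route as the paper, whose proof of this lemma is simply ``by coinduction'': your family of pairs related under some witness $\env_1$ with $\env_1 \struct \env_0$, checked against barb preservation, cost improving and full contextuality and then closed off by maximality of $\piCost$, is exactly the elaboration of that argument. The congruence fact you flag as the main obstacle ($\env_1 \struct \env_0$ implies $\env_1,\emap{c}{\chantypU{\tVlst}} \struct \env_0,\emap{c}{\chantypU{\tVlst}}$, with consistency preserved) is already available as Lemma~\ref{prop:consistency-env-struct}, recalled from \cite{EFH:uniqueness:journal:12}, so it needs only to be cited rather than proved afresh.
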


\begin{proof}
  By coinduction.
\end{proof}

\begin{lem}\label{lem:rbc-renaming}
  $\env \vDash \sysSP \piCostAmm{n} \sys{\sVV}{Q}$ and $\sigma$ is a bijective renaming implies $\env\sigma \vDash \bigl(\sysSP\bigr)\sigma \piCostAmm{n} \bigl(\sys{\sVV}{Q}\bigr)\sigma$
\end{lem}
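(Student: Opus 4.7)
The plan is to proceed by coinduction, exploiting the fact that $\piCost^{\env,n}$ is defined as the \emph{largest} family of amortised typed relations satisfying the three closure properties of Definition~\ref{def:cost-preorder}. Concretely, I would define the family of relations
\begin{equation*}
  \relR^{\env',n} \;\deftxt\; \sset{\langle \env\sigma, n, (\sysSP)\sigma, (\sys{\sVV}{Q})\sigma\rangle \;\Big|\; \env \vDash \sysSP \piCostAmm{n} \sys{\sVV}{Q},\; \sigma \text{ bijective}},
\end{equation*}
and argue that $\relR$ is barb-preserving, cost-improving at credit $n$, and fully contextual. The required statement then follows since $\piCost^{\env\sigma,n}$ is the largest such family, so $\relR \subseteq \piCost$. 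Note that well-formedness of the quadruples in $\relR$ (\ie that they involve configurations) transports through $\sigma$ in the obvious way, using that bijective renaming preserves consistency, typing and $\dom(\Gamma)\subseteq \sV$.

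Barb preservation and cost improvement are essentially immediate. For barbs: if $\conf{\env\sigma}{(\sysSP)\sigma}\piBarb{c}$ then $c\in\dom(\env\sigma)$ and $(\sysSP)\sigma \piRedAst \piStruct \sys{\sV''}{P''\piParal \piOut{c}{\vec{d}}{P'''}}$; by Lemma~\ref{lem:reduction-bijective-rename} applied with $\sigma^{-1}$ we obtain a matching reduction sequence for $\sysSP$ witnessing $\confESP\piBarb{\sigma^{-1}(c)}$; the hypothesis gives the same barb for $\sys{\sVV}{Q}$, which we rename back by $\sigma$ to obtain $\conf{\env\sigma}{(\sys{\sVV}{Q})\sigma}\piBarb{c}$. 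Cost improvement uses the same back-and-forth renaming argument via Lemma~\ref{lem:reduction-bijective-rename}, observing that reduction costs are invariant under bijective renaming.

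The one delicate step is full contextuality of Definition~\ref{def:contextuality}. Clause (1), resource extension, is easy: to add a fresh channel $c$ to $\env\sigma$, pick $c'$ fresh for $\env,\sV,\sVV$, apply the resource-extension clause of $\piCost$ to get $\env,\emap{c'}{\chantypU{\tVlst}} \vDash \sys{\sV,c'}{P}\piCostAmm{n} \sys{\sVV,c'}{Q}$, and then rename using the bijection $\sigma[c'\mapsto c]$. Clause (2), parallel composition, is where the main obstacle lies: given a decomposition $\env\sigma \struct \env'_1,\env'_2$ and a test $R$ with $\env'_2 \vdash R$, we must transport $R$ through $\sigma^{-1}$. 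My plan is to set $\env_2 \deftxt \env'_2\sigma^{-1}$ and note that $\env_2\vdash R\sigma^{-1}$ (by bijective-renaming preservation of typing), then use a decomposition $\env \struct \env_1,\env_2$ compatible with $\sigma$ --- this uses Lemma~\ref{lem:environment-strenghten} plus the fact that $\sigma$ is a bijection so structural manipulations commute with $\sigma$. Applying the contextuality clause of the original $\piCost$ yields $\env_1 \vDash \sysS{P\piParal R\sigma^{-1}}\piCostAmm{n} \sys{\sVV}{Q\piParal R\sigma^{-1}}$, and a final application of $\sigma$ (through the defining clause of $\relR$) delivers the required $\env'_1 \vDash \sysS{P\sigma\piParal R}\piCostAmm{n}\sys{\sVV}{Q\sigma\piParal R}$, since $(R\sigma^{-1})\sigma = R$. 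The symmetric $R\piParal -$ clause is identical. Once these three properties are established, $\relR \subseteq \piCost$ yields the lemma.
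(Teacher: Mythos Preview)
Your proposal is correct and follows the same approach as the paper, which simply states ``By coinduction.'' You have fleshed out the coinductive argument that the paper leaves implicit: defining the renamed relation and verifying the three closure properties of Definition~\ref{def:cost-preorder}, with the only mildly non-trivial step being contextuality, where you correctly transport the test $R$ through $\sigma^{-1}$ and back.
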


\begin{proof}
  By coinduction.
\end{proof}

\begin{thm}[Completeness]\label{thm:completeness}  
  \begin{math}
    \env\vDash (\sys{\sV}{P}) \piCostAmm{n} (\sys{\sVV}{Q})
  \end{math}
  implies 
  \begin{math}
    \env\vDash (\sys{\sV}{P}) \; \piCostBisAmm{n} \; (\sys{\sVV}{Q})
  \end{math}.
\end{thm}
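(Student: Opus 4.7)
The plan is to show directly that the behavioural preorder $\piCost$, viewed as a family of amortised typed relations indexed by $\env$ and $n$, is itself an amortised typed bisimulation; since $\piCostBis$ is the largest such relation (\defref{def:amortized-typed-bisim}), the theorem follows. Thus, assuming $\env \vDash \sysSP \piCostAmm{n} \sys{\sVV}{Q}$, I take an arbitrary transition $\confESP \piRedDecCostPad{\mu}{k} \conf{\env'}{\sys{\sV'}{P'}}$ and construct a matching weak transition from $\conf{\env}{\sys{\sVV}{Q}}$ that lands in $\piCostAmm{n+l-k}$. By Lemma~\ref{lem:actions-renaming}, I may assume without loss of generality that $\conf{\env'}{\sys{\sV'}{P'}} = \bigl(\conf{\env''}{\sys{\sV''}{P''}}\bigr)\sigma_\env$ for some $\sigma_\env, \env'', \sV'', P''$, which lets me align the action with a canonical residual before invoking cost-definability.

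The case $\mu = \tau$ is straightforward: Lemma~\ref{lem:reduc-and-transitions}(2) converts the silent transition into a reduction $(\sysSP)\sigma_\env \piRed_k \sys{\sV'}{P'}$; by Lemma~\ref{lem:rbc-renaming} the relation survives renaming, so Cost Improving (applied to $(\sysSP)\sigma_\env$ and $(\sys{\sVV}{Q})\sigma_\env$) yields a matching reduction sequence from $(\sys{\sVV}{Q})\sigma_\env$; Lemma~\ref{lem:reduc-and-transitions}(1) with \rtit{lRen} then lifts this back to a weak $\tau$-transition $\confE{\sys{\sVV}{Q}} \piRedWDecCost{\tau}{l} \cdot$ with the pair still in $\piCostAmm{n+l-k}$ thanks to Corollary~\ref{cor:Struct-eq-implies-bisim} (to absorb the $\piStruct$ slack) and Lemma~\ref{lem:transitivity}.

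The interesting case is $\mu$ external. Here I invoke Lemma~\ref{lem:action-def} (Action Cost-Definability) to obtain a test $R$, typeable under $\env, \emap{\ctit{succ}}{\chantypO{\codd(\env'')}}, \emap{\ctit{fail}}{\chantypO{}}, \emap{\ctit{fail}}{\chantypO{}}$, whose clause~(1) gives a reduction $\sys{\sV,\ctit{succ},\ctit{fail}}{P \piParal R} \piRedCostAst{k} \sys{\sV'',\ctit{succ},\ctit{fail}}{P'' \piParal \piOutA{\ctit{succ}}{(\domm(\env''))}}$ with the resulting $\ctit{succ}$-barb but no $\ctit{fail}$-barb. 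Using clause~(1) of \defref{def:contextuality} (Full Contextuality) twice to extend the resource context with $\ctit{succ}$ and $\ctit{fail}$ (introducing unique permissions that can be split via \rtit{pUnq} and \rtit{tCon} to furnish $R$'s affine permissions), followed by clause~(2) to place $R$ in parallel, I obtain $\emap{\ctit{succ}}{\chantypUU{\codd(\env'')}{1}} \vDash \sys{\sV,\ctit{succ},\ctit{fail}}{P \piParal R} \piCostAmm{n} \sys{\sVV,\ctit{succ},\ctit{fail}}{Q \piParal R}$. Cost Improving on this relation turns the $k$-cost reduction from $P \piParal R$ into a matching $l$-cost reduction sequence from $Q \piParal R$; barb preservation forces the endpoint to also exhibit the $\ctit{succ}$-barb without $\ctit{fail}$. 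Applying clause~(2) of cost-definability then translates this endpoint back into a weak external action $\confE{\sys{\sVV}{Q}} \piRedWDecCost{\mu}{l} \bigl(\conf{\env''}{\sys{\sVV''}{Q''}}\bigr)\sigma_\env$ whose residual, when composed with $\piOutA{\ctit{succ}}{(\domm(\env''))}$, matches the $P$-side up to the Cost-Improving-derived relation at credit $n+l-k$.

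Finally, I apply the Extrusion Lemma~\ref{lem:extrusion} to strip away the witness output $\piOutA{\ctit{succ}}{(\domm(\env''))}$ and the auxiliary $\ctit{succ},\ctit{fail}$ resources, recovering $\env'' \vDash \sys{\sV''}{P''} \piCostAmm{n+l-k} \sys{\sVV''}{Q''}$, and therefore (by Lemma~\ref{lem:rbc-renaming} and Lemma~\ref{lem:environment-strenghten}) the relation at the renamed configurations $\bigl(\conf{\env''}{\sys{\sV''}{P''}}\bigr)\sigma_\env$ and $\bigl(\conf{\env''}{\sys{\sVV''}{Q''}}\bigr)\sigma_\env$. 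I expect the main obstacle to be the careful bookkeeping in this external-action case: matching the $\sigma_\env$ produced by Lemma~\ref{lem:actions-renaming} with the one delivered by the second clause of cost-definability, splitting the unique-now permissions for $\ctit{succ}$ so that $R$'s affine obligations are exactly covered (leaving a unique-after-$1$ remainder for Extrusion), and threading the amortisation credit $n{+}l{-}k$ unchanged through cost-definability, contextuality, and extrusion, noting that the $\piOutA{\ctit{succ}}{\cdot}$ output is inert and thus contributes no cost. The dual direction, where $\sys{\sVV}{Q}$ moves first, is symmetric modulo interchanging $P$ and $Q$ in the argument.
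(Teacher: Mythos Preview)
Your proposal is correct and follows essentially the same route as the paper: show that $\piCost$ itself satisfies the transfer property of \defref{def:amortized-typed-bisim}, handling external actions via action cost-definability (Lemma~\ref{lem:action-def}), full contextuality to install the test $R$ together with fresh $\ctit{succ},\ctit{fail}$, Cost Improving plus Barb Preservation to force the $Q$-side through the same pattern, clause~(2) of \defref{def:cost-definability} to recover the weak $\mu$-transition, and finally Lemma~\ref{lem:environment-strenghten} and the Extrusion Lemma~\ref{lem:extrusion} together with Lemma~\ref{lem:rbc-renaming} to land back in $\piCostAmm{n+l-k}$ at the renamed residuals. One small slip: in the $\tau$ case you invoke Corollary~\ref{cor:Struct-eq-implies-bisim} and Lemma~\ref{lem:transitivity}, which are statements about $\piCostBis$, whereas what you actually need is that $\piCost$ is closed under $\piStruct$ --- this follows either directly from the defining clauses of \defref{def:cost-preorder} or, since soundness is already available, by composing Corollary~\ref{cor:Struct-eq-implies-bisim} with Theorem~\ref{thm:soundness}.
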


\begin{proof}
  By coinduction, we show that for arbitrary $\env,n$, the family of relations included in \begin{math}
    \env\vDash \sys{\sV}{P} \piCostAmm{n} \sys{\sVV}{Q}
  \end{math}  observes the transfer properties of \defref{def:amortized-typed-bisim} at $\env,n$.  Assume
  \begin{equation}
\confE{\sysSP} \piRedDecCostPad{\mu}{k} \bigl(\conf{\env'}{\sys{\sV'}{P'}}\bigr)\sigma_\env\label{eq:6:char}
\end{equation}
If $\mu = \tau$, the matching move follows from Lemma~\ref{lem:reduc-and-transitions}, Definition~\ref{def:cost-improving} 
and Definition~\ref{def:cost-preorder}. 

 If $\mu \in \sset{\actout{c}{\vec{d}},\actin{c}{\vec{d}}, \actall, \actfree{c} \;|\; c,\vec{d} \in \Chans}$, by Lemma~\ref{lem:action-def} we know that there exists a test process that can simulate it; we choose one such test $R$ with channel names $\ctit{succ}, \ctit{fail} \not\in\sV,\sVV$.  By Definition~\ref{def:contextuality}(1) we know
\begin{equation*}
  \env, \envmap{\ctit{succ}}{\chantypU{\codd(\env)}}, \envmap{\ctit{fail}}{\chantypU{}}\vDash \sys{\sV,\ctit{succ},\ctit{fail}}{P} \piCostAmm{n} \sys{\sVV,\ctit{succ},\ctit{fail}}{Q}
\end{equation*}
and 
by Definition~\ref{def:contextuality}(2) and ${\tproc{\env,\emap{\ctit{succ}}{\chantypO{\codd(\envv)}}, \emap{\ctit{fail}}{\chantypO{}}, \emap{\ctit{fail}}{\chantypO{}}}{R}}$ 
(Definition~\ref{def:cost-definability}) we obtain
\begin{equation} \label{eq:2:char}
  \envmap{\ctit{succ}}{\chantypUU{\codd(\env)}{1}}, \envmap{\ctit{fail}}{\chantypUU{}{2}}\vDash \sys{(\sV,\ctit{succ},\ctit{fail})}{P\piParal R} \piCostAmm{n} \sys{(\sVV,\ctit{succ},\ctit{fail})}{Q\piParal R}
\end{equation}
From \eqref{eq:6:char} and Definition~\ref{def:cost-definability}$(1)$, we know
\begin{equation*}
   \sys{(\sV,\ctit{succ},\ctit{fail})}{P\piParal R} \piRedCostAst{k} \sys{(\sV',\ctit{succ},\ctit{fail})}{P' \piParal \piOutA{\ctit{succ}}{\domm(\env')}} 
\end{equation*}
By \eqref{eq:2:char} and Definition~\ref{def:cost-improving} (Cost Improving) we know
\begin{equation*}
   \sys{(\sVV,\ctit{succ},\ctit{fail})}{Q\piParal R} \piRedCostAst{l} \sys{\sVV''}{Q''}  
\end{equation*}
where
\begin{equation}\label{eq:5:char}
  \envmap{\ctit{succ}}{\chantypUU{\codd(\env)}{1}}, \envmap{\ctit{fail}}{\chantypUU{}{2}}\vDash \sys{(\sV',\ctit{succ},\ctit{fail})}{P' \piParal \piOutA{\ctit{succ}}{\domm(\env')}} \piCostAmm{n+l-k} \sys{\sVV''}{Q''}
\end{equation}
By Definition~\ref{def:barb-preserving} (Barb Preservation), this means that $\conf{\envmap{\ctit{succ}}{\chantypUU{\codd(\env)}{1}}, \envmap{\ctit{fail}}{\chantypUU{}{2}}}{\sys{\sVV'}{Q'}} \piBarbNot{\ctit{fail}}$ and also that $\conf{\envmap{\ctit{succ}}{\chantypUU{\codd(\env)}{1}}, \envmap{\ctit{fail}}{\chantypUU{}{2}}}{\sys{\sVV'}{Q'}} \piBarb{\ctit{succ}}$. By Definition~\ref{def:cost-definability}$(2)$ we obtain 
\begin{eqnarray}
  \label{eq:3:char}
    && Q''\piStruct Q'\piParal \piOutA{\ctit{succ}}{\domm(\env')} \text{ and } \sVV'' = (\sVV',\ctit{succ},\ctit{fail})  \\
  \label{eq:4:char}
    && \conf{\env}{\sys{\sVV}{Q}} \piRedWDecCostPad{\mu}{l} \bigl(\conf{\env'}{\sys{\sVV'}{Q'}}\bigr)\sigma_\env  
\end{eqnarray}
Transition \eqref{eq:4:char} is the matching move because by \eqref{eq:5:char} and Lemma~\ref{lem:environment-strenghten} we obtain
\begin{equation*}
    \envmap{\ctit{succ}}{\chantypUU{\codd(\env)}{1}}\vDash \sys{(\sV',\ctit{succ},\ctit{fail})}{P' \piParal \piOutA{\ctit{succ}}{\domm(\env')}} \piCostAmm{n+l-k} \sys{\sVV''}{Q''}
\end{equation*}
By \eqref{eq:3:char},  and Lemma~\ref{lem:extrusion} we obtain 
$\env' \vDash \sys{\sV'}{P'} \piCostAmm{n+l-k} \sys{\sVV'}{Q'}$ and subsequently by Lemma~\ref{lem:rbc-renaming} we obtain
\[
  \env'\sigma_\env \vDash \bigl(\sys{\sV'}{P'}\bigr)\sigma_\env \piCostAmm{n+l-k} \bigl(\sys{\sVV'}{Q'}\bigr)\sigma_\env 
\]
as required.\qedhere
\end{proof}



\section{Revisiting the Case Study}
\label{sec:proofs-relat-effic}

We can formally express that \pBufE is (strictly) more efficient than \pBuf in terms of the reduction semantics outlined in Section~\ref{sec:language} through the following statements:
\begin{eqnarray}
  \label{eq:11cs}
  \envExt \;\vDash\;& \sysS{\pBufE} \,\;\piCost\; \sysS{\pBuf}\\
 \label{eq:12cs}
    \envExt \;\vDash\;& \sysS{\pBuf}\; \not\!\piCost\; \sysS{\pBufE}
\end{eqnarray}

In order to show that the second statement \eqref{eq:12cs} holds, we need to prove that\emph{ there is no amortisation credit} $n$ for which   $\envExt \;\vDash\; \sysS{\pBuf}\; \piCost^{n}\; \sysS{\pBufE}$.  By choosing the set of inductively defined contexts $R_n$  where:\footnote{Note that \tproc{\envExt}{R_n} for any $n$.}
\begin{align*}
  R_0 & \deftri \piNil & R_{n+1} & \deftri \piOut{\ctit{in}}{v}
  {\piIn{\ctit{out}}{x}{R_{n}}} 
\end{align*}
we can argue by analysing the reduction graph of the respective systems that, for any $n\geq 0$:
$$\envExt \;\vDash\; \sysS{(\pBuf\piParal R_{n+1})}\; \not\piCost^{n}\; \sysS{(\pBufE\piParal R_{n+1})}$$
since it violates the Cost Improving property of Definition~\ref{def:cost-preorder}.

Another way how to prove \eqref{eq:12cs} is by exploiting completeness of our bisimulation proof technique \wrt our behavioural preorder, Theorem~\ref{thm:completeness}, and work at the level of the transition system of Section~\ref{sec:cost-bisim} showing that, for all $n \geq 0$,  the following holds: 
\begin{equation}
\envExt \;\vDash\; \sysS{\pBuf}\;\not\!\piCostBisAmm{n}\; \sysS{\pBufE} \label{eq:24:cs}
\end{equation}
We prove the above statement as Theorem~\ref{thm:strictly-ineffiscient} of Section~\ref{sec:prov-strict-ineff}.  

Property \eqref{eq:11cs}, \emph{prima facie}, seems even harder to prove than \eqref{eq:12cs}, because we are required to show that Barb Preservation and Cost Improving hold under every possible valid context interacting with the two buffer implementations.  Once again, we use the transition system of Section~\ref{sec:cost-bisim} and show instead that:
\begin{equation}
\envExt \;\vDash\; \sysS{\pBufE} \,\;\piCostBisAmm{0}\; \sysS{\pBuf}\label{eq:25:cs}
\end{equation}

The required result then follows from Theorem~\ref{thm:soundness}.  The proof for this statement is presented in Section~\ref{sec:prov-relat-effc}.

\medskip
 
In order to make the presentation of these proofs more manageable, we define the following macro definitions for sub-processes making up the derivatives of \conf{\envExt}{\sysS{\pBuf}} and \conf{\envExt}{\sysS{\pBufE}}.
\begin{align*}
  \small \pFf & \small \deftxt \piIn{b}{x}{ \piIn{\ctit{in}}{y}{\piAll{z}{ \bigl(\pF\piParal \piOutA{b}{z} \piParal \piOutA{x}{(y,z)}\bigr)}}} & \!\!\!\!\!\!\small \pBb & \small \deftxt  \piIn{d}{x}{ \piIn{x}{(y,z)}{ \piOut{\ctit{out}}{y}{ \bigl(\pB \piParal \piOutA{d}{z}\bigr)}}}\\
  \small \pFff{x} & \small\deftxt \piIn{\ctit{in}}{y}{\piAll{z}{ \bigl(\pF\piParal \piOutA{b}{z} \piParal \piOutA{x}{(y,z)}\bigr)}} & \!\!\!\!\!\!\small\pBbb{x} &\small \deftxt \piIn{x}{(y,z)}{ \piOut{\ctit{out}}{y}{ \bigl(\pB \piParal \piOutA{d}{z}\bigr)}}\\
  \small\pFfff{x}{y} & \small\deftxt  \piAll{z}{ \bigl(\pF\piParal \piOutA{b}{z} \piParal \piOutA{x}{(y,z)}\bigr)} & \!\!\!\!\!\!\small\pBbbb{y}{z} & \small\deftxt \piOut{\ctit{out}}{y}{ \bigl(\pB \piParal \piOutA{d}{z}\bigr)}\\[5pt]
   \small \pBEe & \small \deftxt \piIn{d}{x}{ \piIn{x}{(y,z)}{ \piFree{x}{\piOut{\ctit{out}}{y}{ \bigl(\pBE \piParal \piOutA{d}{z}\bigr)}}}} &
   \!\!\!\!\!\!\small \pBEee{x} & \small \deftxt \piIn{x}{(y,z)}{\piFree{x}{\piOut{\ctit{out}}{y}{ \bigl(\pBE \piParal \piOutA{d}{z}\bigr)}}}\\
\small  \pBEeee{x}{y}{z} & \small\deftxt \piFree{x}{\piOut{\ctit{out}}{y}{ \bigl(\pBE \piParal \piOutA{d}{z}\bigr)}} & 
 \!\!\!\!\!\!\small \pBEeeee{y}{z} &\small \deftxt \piOut{\ctit{out}}{y}{ \bigl(\pBE \piParal \piOutA{d}{z}\bigr)}
\end{align*}
We can thus express the definitions for \pBuf and \pBufE as:
\begin{align}\label{eq:24cs} 
  \pBuf & \deftxt \pFff{c_1} \piParal \pBbb{c_1} & \pBufE & \deftxt \pFff{c_1} \piParal \pBEee{c_1} 
\end{align}

\subsection{Proving Strict Inefficiency}
\label{sec:prov-strict-ineff}

In order to prove \eqref{eq:24:cs}, we do not need to explore the entire state space for \conf{\envExt}{\sysS{\pBuf}} and \conf{\envExt}{\sysS{\pBufE}}.  Instead, it suffices to limit external interactions with the observer to traces of the form $\bigl(\piRedWDec{\actin{\ctit{in}}{v}\,\cdot\,\actout{\ctit{out}}{v}}\bigr)^\ast$, which simulate interactions with the observing processes $R_n$ discussed in Section~\ref{sec:proofs-relat-effic}.  It is instructive to visualise the transition graphs for both \conf{\envExt}{\sys{\sV}{\pBuf}} and \conf{\envExt}{\sys{\sV}{\pBufE}} for a single iteration \piRedWDec{\actin{\ctit{in}}{v}\,\cdot\,\actout{\ctit{out}}{v}} as depicted in \figref{fig:TrnGrphBuff} and \figref{fig:TrnGrphBuff2}: due to lack of space, the nodes in these graphs  abstract away from the environment \envExt and appropriate resource environments $\sV, \sVV,\ldots$ containing internal channels $c_1, c_2, \ldots$  as required.\footnote{The transition graph also abstracts away from environment moves.}  For instance the first node of the graph in \figref{fig:TrnGrphBuff}, $\pFff{c_1} \piParalL \pBbb{c_1}$, \ie \pBuf, stands for $\conf{\envExt}{\sys{\sV}{(\pFff{c_1} \piParalL \pBbb{c_1})}}$,  where $c_1 \in \sV$, whereas the third node in the same graph, $\pF\piParalL \piOutA{b}{c_2} \piParalL \piOutA{c_1}{(v,c_2)}\piParalL \pBbb{c_1}$, stands for $\conf{\envExt}{\sys{\sVV}{\bigl(\pF\piParalL \piOutA{b}{c_2} \piParalL \piOutA{c_1}{(v,c_2)}\piParalL \pBbb{c_1}\bigr)}}$,  where $c_1,c_2\in\sVV$.

For instance, the graph in \figref{fig:TrnGrphBuff} shows that after the input action and the channel allocation for $c_2$ $\tau$-action (with a cost of $+1$) the inefficient buffer implementation reaches a state where it can perform a number of internal transitions: either the subcomponent \pF may take a recursion unfold step (the first right $\tau$-action) followed by an input on channel $b$ that instantiates the continuation with channel $c_2$ (the second right $\tau$-action), or else the subcomponent $\pBbb{c_1}$ reads from the head of the buffer $\piOutA{c_1}{(v,c_2)}$ (the first downwards $\tau$-action).    These $\tau$-actions may be interleaved, but no other silent transitions are possible until an output action is performed, after which the backend subcomponent can perform an unfold $\tau$-action  (the first downwards $\tau$-action following action $\actout{\ctit{out}}{v}$) followed by an instantiation communication on channel $d$ (the first downwards $\tau$-action following action $\actout{\ctit{out}}{v}$),  When all of these actions are completed we reach again the starting process, instantiated with channel $c_2$ instead.  The transitions in \figref{fig:TrnGrphBuff2} are analogous, but include a deallocation transition with a cost of $-1$. 

\begin{display}{Transition graph for \conf{\envExt}{\sys{\sV}{\pBuf}}  restricted to $\piRedWDecPad{\actin{\ctit{in}}{v}\,\cdot\,\actout{\ctit{out}}{v}}$}{fig:TrnGrphBuff}
\begin{tikzpicture}
  \node at (0,0) (1) {\scriptsize \pFff{c_1} \piParal \pBbb{c_1} = \pBuf};
  \node at (-5.5,0) (2) {\scriptsize \pFfff{c_1}{v}\piParal \pBbb{c_1}};
  \node at (-5.5,-1.5) (3) {\scriptsize\pF\piParal \piOutA{b}{c_2} \piParal \piOutA{c_1}{(v,c_2)}\piParal \pBbb{c_1}};
   \node at (0,-1.5) (4) {\scriptsize \pFf\piParal \piOutA{b}{c_2} \piParal \piOutA{c_1}{(v,c_2)}\piParal \pBbb{c_1}};
  \node at (5.5,-1.5) (5) {\scriptsize \pFff{c_2} \piParal \piOutA{c_1}{(v,c_2)}\piParal \pBbb{c_1}};
  \node at (-5.5,-3) (6) {\scriptsize\pF\piParal \piOutA{b}{c_2} \piParal  \pBbbb{v}{c_2}};
   \node at (0,-3) (7) {\scriptsize \pFf\piParal \piOutA{b}{c_2} \piParal \pBbbb{v}{c_2}};
  \node at (5.5,-3) (8) {\scriptsize \pFff{c_2} \piParal  \pBbbb{v}{c_2}};
   \node at (-5.5,-4.5) (9) {\scriptsize\pF\piParal \piOutA{b}{c_2} \piParal  \pB \piParal \piOutA{d}{c_2}};
   \node at (0,-4.5) (10) {\scriptsize \pFf\piParal \piOutA{b}{c_2} \piParal \pB \piParal \piOutA{d}{c_2}};
  \node at (5.5,-4.5) (11) {\scriptsize \pFff{c_2} \piParal  \pB \piParal \piOutA{d}{c_2}};
  \node at (-5.5,-6) (12) {\scriptsize\pF\piParal \piOutA{b}{c_2} \piParal  \pBb \piParal \piOutA{d}{c_2}};
   \node at (0,-6) (13) {\scriptsize \pFf\piParal \piOutA{b}{c_2} \piParal \pBb \piParal \piOutA{d}{c_2}};
  \node at (5.5,-6) (14) {\scriptsize \pFff{c_2} \piParal  \pBb \piParal \piOutA{d}{c_2}};
  \node at (-5.5,-7.5) (15) {\scriptsize\pF\piParal \piOutA{b}{c_2} \piParal  \pBbb{c_2}};
   \node at (0,-7.5) (16) {\scriptsize \pFf\piParal \piOutA{b}{c_2} \piParal \pBbb{c_2}};
  \node at (5.5,-7.5) (17) {\scriptsize \pFff{c_2} \piParal  \pBbb{c_2} };
  
  \draw[->] (1)  to node[above] {\scriptsize \actin{\ctit{in}}{v}} (2);
  \draw[->] (2)  to node[right] {\scriptsize \acttau} node[left] {\scriptsize $+1$} (3);
  \draw[->] (3)  to node[right] {\scriptsize \acttau}  (6); \draw[->] (3)  to node[above] {\scriptsize \acttau}  (4);
  \draw[->] (4)  to node[right] {\scriptsize \acttau}  (7); \draw[->] (4)  to node[above] {\scriptsize \acttau}  (5);
  \draw[->] (5)  to node[right] {\scriptsize \acttau}  (8);
  \draw[->] (6)  to node[right] {\scriptsize \actout{\ctit{out}}{v}}  (9); \draw[->] (6)  to node[above] {\scriptsize \acttau}  (7);
  \draw[->] (7)  to node[right] {\scriptsize \actout{\ctit{out}}{v}}  (10); \draw[->] (7)  to node[above] {\scriptsize \acttau}  (8);
  \draw[->] (8)  to node[right] {\scriptsize \actout{\ctit{out}}{v}}  (11);
  \draw[->] (9)  to node[right] {\scriptsize \acttau}  (12); \draw[->] (9)  to node[above] {\scriptsize \acttau}  (10);
  \draw[->] (10)  to node[right] {\scriptsize \acttau}  (13); \draw[->] (10)  to node[above] {\scriptsize \acttau}  (11);
  \draw[->] (11)  to node[right] {\scriptsize \acttau}  (14); 
 \draw[->] (12)  to node[right] {\scriptsize \acttau}  (15); \draw[->] (12)  to node[above] {\scriptsize \acttau}  (13);
  \draw[->] (13)  to node[right] {\scriptsize \acttau}  (16); \draw[->] (13)  to node[above] {\scriptsize \acttau}  (14);
  \draw[->] (14)  to node[right] {\scriptsize \acttau}  (17); 
 \draw[->] (15)  to node[above] {\scriptsize \acttau}  (16);
 \draw[->] (16)  to node[above] {\scriptsize \acttau}  (17);
\end{tikzpicture}
\end{display}

\begin{display}{Transition graph for \conf{\envExt}{\sys{\sV}{\pBufE}} restricted to $\piRedWDecPad{\actin{\ctit{in}}{v}\,\cdot\,\actout{\ctit{out}}{v}}$}{fig:TrnGrphBuff2}
\begin{tikzpicture}
  \node at (0,0) (1) {\scriptsize \pFff{c_1} \piParal \pBEee{c_1} = \pBufE};
  \node at (-5.5,0) (2) {\scriptsize \pFfff{c_1}{v}\piParal \pBEee{c_1}};
  \node at (-5.5,-1.5) (3) {\scriptsize\pF\piParal \piOutA{b}{c_2} \piParal \piOutA{c_1}{(v,c_2)}\piParal \pBEee{c_1}};
   \node at (0,-1.5) (4) {\scriptsize \pFf\piParal \piOutA{b}{c_2} \piParal \piOutA{c_1}{(v,c_2)}\piParal \pBEee{c_1}};
  \node at (5.5,-1.5) (5) {\scriptsize \pFff{c_2} \piParal \piOutA{c_1}{(v,c_2)}\piParal \pBEee{c_1}};
  \node at (-5.5,-3) (6) {\scriptsize\pF\piParal \piOutA{b}{c_2} \piParal  \pBEeee{c_1}{v}{c_2}};
   \node at (0,-3) (7) {\scriptsize \pFf\piParal \piOutA{b}{c_2} \piParal \pBEeee{c_1}{v}{c_2}};
  \node at (5.5,-3) (8) {\scriptsize \pFff{c_2} \piParal  \pBEeee{c_1}{v}{c_2}};
    \node at (-5.5,-4.5) (9) {\scriptsize\pF\piParal \piOutA{b}{c_2} \piParal  \pBEeeee{v}{c_2}};
   \node at (0,-4.5) (10) {\scriptsize \pFf\piParal \piOutA{b}{c_2} \piParal \pBEeeee{v}{c_2}};
  \node at (5.5,-4.5) (11) {\scriptsize \pFff{c_2} \piParal  \pBEeeee{v}{c_2}};
   \node at (-5.5,-6) (12) {\scriptsize\pF\piParal \piOutA{b}{c_2} \piParal  \pBE \piParal \piOutA{d}{c_2}};
   \node at (0,-6) (13) {\scriptsize \pFf\piParal \piOutA{b}{c_2} \piParal \pBE \piParal \piOutA{d}{c_2}};
  \node at (5.5,-6) (14) {\scriptsize \pFff{c_2} \piParal  \pBE \piParal \piOutA{d}{c_2}};
  \node at (-5.5,-7.5) (15) {\scriptsize\pF\piParal \piOutA{b}{c_2} \piParal  \pBEe \piParal \piOutA{d}{c_2}};
   \node at (0,-7.5) (16) {\scriptsize \pFf\piParal \piOutA{b}{c_2} \piParal \pBEe \piParal \piOutA{d}{c_2}};
  \node at (5.5,-7.5) (17) {\scriptsize \pFff{c_2} \piParal  \pBEe \piParal \piOutA{d}{c_2}};
  \node at (-5.5,-9) (18) {\scriptsize\pF\piParal \piOutA{b}{c_2} \piParal  \pBEee{c_2}};
   \node at (0,-9) (19) {\scriptsize \pFf\piParal \piOutA{b}{c_2} \piParal \pBEee{c_2}};
  \node at (5.5,-9) (20) {\scriptsize \pFff{c_2} \piParal  \pBEee{c_2} };
  
  \draw[->] (1)  to node[above] {\scriptsize \actin{\ctit{in}}{v}} (2);
  \draw[->] (2)  to node[right] {\scriptsize \acttau} node[left] {\scriptsize $+1$} (3);
  \draw[->] (3)  to node[right] {\scriptsize \acttau}  (6); \draw[->] (3)  to node[above] {\scriptsize \acttau}  (4);
  \draw[->] (4)  to node[right] {\scriptsize \acttau}  (7); \draw[->] (4)  to node[above] {\scriptsize \acttau}  (5);
  \draw[->] (5)  to node[right] {\scriptsize \acttau}  (8);
 \draw[->] (6)  to node[right] {\scriptsize \acttau} node[left] {\scriptsize $-1$} (9); \draw[->] (6)  to node[above] {\scriptsize \acttau}  (7);
  \draw[->] (7)  to node[right] {\scriptsize \acttau} node[left] {\scriptsize $-1$} (10); \draw[->] (7)  to node[above] {\scriptsize \acttau}  (8);
  \draw[->] (8)  to node[right] {\scriptsize \acttau} node[left] {\scriptsize $-1$} (11);
  \draw[->] (9)  to node[right] {\scriptsize \actout{\ctit{out}}{v}}  (12); \draw[->] (9)  to node[above] {\scriptsize \acttau}  (10);
  \draw[->] (10)  to node[right] {\scriptsize \actout{\ctit{out}}{v}}  (13); \draw[->] (10)  to node[above] {\scriptsize \acttau}  (11);
  \draw[->] (11)  to node[right] {\scriptsize \actout{\ctit{out}}{v}}  (14);
  \draw[->] (12)  to node[right] {\scriptsize \acttau}  (15); \draw[->] (12)  to node[above] {\scriptsize \acttau}  (13);
  \draw[->] (13)  to node[right] {\scriptsize \acttau}  (16); \draw[->] (13)  to node[above] {\scriptsize \acttau}  (14);
  \draw[->] (14)  to node[right] {\scriptsize \acttau}  (17); 
 \draw[->] (15)  to node[right] {\scriptsize \acttau}  (18); \draw[->] (15)  to node[above] {\scriptsize \acttau}  (16);
  \draw[->] (16)  to node[right] {\scriptsize \acttau}  (19); \draw[->] (16)  to node[above] {\scriptsize \acttau}  (17);
  \draw[->] (17)  to node[right] {\scriptsize \acttau}  (20); 
 \draw[->] (18)  to node[above] {\scriptsize \acttau}  (19);
 \draw[->] (19)  to node[above] {\scriptsize \acttau}  (20);
\end{tikzpicture}

\end{display}

\newcommand{\psA}{\ensuremath{\text{Prc}_\text{\rm A}}\xspace}
\newcommand{\psB}{\ensuremath{\text{Prc}_\text{\rm B}}\xspace}
\newcommand{\psC}{\ensuremath{\text{Prc}_\text{\rm C}}\xspace}

Theorem~\ref{thm:strictly-ineffiscient}, which proves \eqref{eq:24:cs}, relies on two lemmas.  The main one is Lemma~\ref{lem:negative-induction}, which establishes that a number of derivatives from the configurations \conf{\envExt}{\sysS{\pBuf}} and \conf{\envExt}{\sysS{\pBufE}} cannot be related for \emph{any} amortisation credit.  This Lemma, in turn, relies on Lemma~\ref{lem:Neg:implications}, which establishes that, for a particular amortisation credit $n$, if some  pair of derivatives of the configurations \conf{\envExt}{\sysS{\pBuf}} and \conf{\envExt}{\sysS{\pBufE}} \resp cannot be related, then other pairs of derivatives cannot be related either.  Lemma~\ref{lem:Neg:implications} is used again by Theorem~\ref{thm:strictly-ineffiscient} to derive that, from the unrelated pairs identified by Lemma~\ref{lem:negative-induction}, the required pair of configurations \conf{\envExt}{\sysS{\pBuf}} and \conf{\envExt}{\sysS{\pBufE}} cannot be related for any amortisation credit.  Upon first reading, the reader who is only interested in the eventual result may safely skip to the statement of Theorem~\ref{thm:strictly-ineffiscient} and treat Lemma~\ref{lem:negative-induction} and Lemma~\ref{lem:Neg:implications} as black-boxes.


In order to be able to state Lemma~\ref{lem:Neg:implications} and Lemma~\ref{lem:negative-induction} more succinctly, we find it convenient to delineate groups of processes relating to derivatives of \pBuf and \pBufE.  For instance, we can partition the processes depicted in the transition graph of \figref{fig:TrnGrphBuff2} (derivatives of \pBufE) into three sets:
\begin{align*}
\psA & \deftxt \sset{
     \begin{array}{l|l}
\bigl(\pF\piParalL \piOutA{b}{c_2} \piParalL \piOutA{c_1}{(v,c_2)}\piParalL \pBEee{c_1}\bigr),\\
  \bigl(\pFf\piParalL \piOutA{b}{c_2} \piParalL \piOutA{c_1}{(v,c_2)}\piParalL \pBEee{c_1}\bigr),& c_1 \neq c_2 \in  \\
  \bigl(\pFff{c_2} \piParalL \piOutA{c_1}{(v,c_2)}\piParalL \pBEee{c_1}\bigr),\,
  \bigl(\pF\piParalL \piOutA{b}{c_2} \piParalL  \pBEeee{c_1}{v}{c_2}\bigr), &  \Chans \setminus\sset{\ctit{in},\ctit{out},b,d}\\
   \bigl(\pFf\piParal \piOutA{b}{c_2} \piParal \pBEeee{c_1}{v}{c_2}\bigr),\,
  \bigl(\pFff{c_2} \piParal  \pBEeee{c_1}{v}{c_2}\bigr)
\end{array}\!\!}\\[5pt]
\psB & \deftxt  \sset{
  \begin{array}{l|l}
\bigl(\pF\piParalL \piOutA{b}{c_2} \piParalL  \pBEeeee{v}{c_2}\bigr),\,\bigl(\pFf\piParalL \piOutA{b}{c_2} \piParalL \pBEeeee{v}{c_2}\bigr),& c_2 \in\Chans \setminus\sset{\ctit{in},\ctit{out},b,d}\\
\bigl(\pFff{c_2} \piParalL  \pBEeeee{v}{c_2}\bigr)
\end{array}
}\\[5pt]
\psC & \deftxt   \sset{
  \begin{array}{l|l}
     \bigl(\pF\piParal \piOutA{b}{c_2} \piParal  \pBE \piParal \piOutA{d}{c_2}\bigr),\,
     \bigl(\pFf\piParal \piOutA{b}{c_2} \piParal \pBE \piParal \piOutA{d}{c_2}\bigr),\\
  \bigl(\pFff{c_2} \piParal  \pBE \piParal \piOutA{d}{c_2}\bigr),\,
  \bigl(\pF\piParal \piOutA{b}{c_2} \piParal  \pBEe \piParal \piOutA{d}{c_2}\bigr),\\
   \bigl(\pFf\piParal \piOutA{b}{c_2} \piParal \pBEe \piParal \piOutA{d}{c_2}\bigr),\,
  \bigl(\pFff{c_2} \piParal  \pBEe \piParal \piOutA{d}{c_2}\bigr),& c_2 \in \Chans \setminus\sset{\ctit{in},\ctit{out},b,d}\\
  \bigl(\pF\piParal \piOutA{b}{c_2} \piParal  \pBEee{c_2}\bigr),\,
   \bigl(\pFf\piParal \piOutA{b}{c_2} \piParal \pBEee{c_2}\bigr),\\
   \bigl(\pFff{c_2} \piParal  \pBEee{c_2}\bigr)
  \end{array}
}
\end{align*}
With respect to the transition graph of \figref{fig:TrnGrphBuff2}, \psA groups the processes \emph{after the allocation} of an (arbitrary) internal channel $c_2$ but not before any deallocation, \ie the second and third rows of the graph. The set \psB groups the processes \emph{after the deallocation} of the (arbitrary) internal channel $c_1$, \ie the fourth row of the graph. Finally, the set \psC groups processes \emph{after the output action} \actout{\ctit{out}}{v} is performed (before an input action is performed), \ie the last three rows of the graph.

\begin{lem}[Related Negative Results]\label{lem:Neg:implications} \quad
  \begin{enumerate}
  \item For any amortisation credit $n$ and appropriate $\sV,\sVV$,  whenever:
    \begin{itemize}
    \item   $\envExt \vDash \sys{\sV}{\pFfff{c_1}{v}\piParalL
        \pBbb{c_1}} \; \not\!\piCostBisAmm{n}\;
      \sys{\sVV}{\pFfff{c'_1}{v}\piParalL \pBEee{c'_1}}$
    \item For any $Q \in \psA$ we have   $\envExt \vDash \sys{\sV}{\pFfff{c_1}{v}\piParalL
        \pBbb{c_1}} \; \not\!\piCostBisAmm{n+1}\;
      \sys{\sVV}{Q}$\; 
    \item For any $Q \in \psB$ we have  $\envExt \vDash \sys{\sV}{\pFfff{c_1}{v}\piParalL
        \pBbb{c_1}} \; \not\!\piCostBisAmm{n}\;
      \sys{\sVV}{Q}$
    \end{itemize}
then, for any $P \in \psC$, we have\;
$\envExt \vDash \sys{\sV}{\pFff{c_1} \piParalL \pBbb{c_1}} \; \not\!\piCostBisAmm{n}\; \sys{\sVV}{P}$.
\item For any amortisation credit $n$ and appropriate $\sV,\sVV$,  and  for any $Q \in \psC$: 
  \begin{enumerate}
  \item $\envExt \vDash \sys{\sV}{\pFff{c_1} \piParalL \pBbb{c_1}} \;
    \not\!\piCostBisAmm{n}\;
    \sys{\sVV}{Q}$  implies \\
    \qquad for any $P \in \psC$\;  $\envExt \vDash \sys{\sV}{\pFf \piParalL \piOutA{b}{c_1}
      \piParalL \pBbb{c_1}} \; \not\!\piCostBisAmm{n}\; \sys{\sVV}{P}$
  \item 
    $\envExt \vDash \sys{\sV}{\pFf \piParalL  \piOutA{b}{c_1} \piParalL \pBbb{c_1}} \; \not\!\piCostBisAmm{n}\;
      \sys{\sVV}{Q}$  implies \\
    \hspace{1cm}for any $P \in \psC$\; $\envExt \vDash \sys{\sV}{\pF \piParalL  \piOutA{b}{c_1} \piParalL \pBbb{c_1}} \; \not\!\piCostBisAmm{n}\;
      \sys{\sVV}{P}$  
  \item 
    $\envExt \vDash \sys{\sV}{\pF \piParalL  \piOutA{b}{c_1} \piParalL \pBbb{c_1}} \; \not\!\piCostBisAmm{n}\;
      \sys{\sVV}{Q}$  implies \\
     for any $P \in \psC$\; $\envExt \vDash \sys{\sV}{\pF \piParalL  \piOutA{b}{c_1} \piParalL \pBb \piParalL \piOutA{d}{c_1}} \; \not\!\piCostBisAmm{n}\;
      \sys{\sVV}{P}$   
   \item 
    $\envExt \vDash \sys{\sV}{\pF \piParalL  \piOutA{b}{c_1} \piParalL \pBb \piParalL \piOutA{d}{c_1}} \; \not\!\piCostBisAmm{n}\;
      \sys{\sVV}{Q}$  implies \\
     for any $P \in \psC$\; $\envExt \vDash \sys{\sV}{\pF \piParalL  \piOutA{b}{c_1} \piParalL \pB \piParalL \piOutA{d}{c_1}} \; \not\!\piCostBisAmm{n}\;
      \sys{\sVV}{P}$  
  \end{enumerate}
  \item For any amortisation credit $n$ and appropriate $\sV,\sVV$,  and  for any $R\in \psB$, $Q \in \psC$:
  \begin{enumerate}
    \item 
    $\envExt \vDash \sys{\sV}{\pF \piParalL  \piOutA{b}{c_2} \piParalL \pB \piParalL \piOutA{d}{c_2}} \; \not\!\piCostBisAmm{n}\;
      \sys{\sVV}{Q}$  implies \\
     for any $P \in \psB$\; $\envExt \vDash \sys{\sV}{\pF \piParalL  \piOutA{b}{c_2} \piParalL \pBbbb{v}{c_2}} \; \not\!\piCostBisAmm{n}\;
      \sys{\sVV}{P}$  
    \item 
    $\envExt \vDash \sys{\sV}{\pF \piParalL  \piOutA{b}{c_2} \piParalL \pBbbb{v}{c_2}} \; \not\!\piCostBisAmm{n}\;
      \sys{\sVV}{R}$
       implies \\
     for any $P \in \psB$\;$\envExt \vDash \sys{\sV}{\pF \piParalL  \piOutA{b}{c_2} \piParalL \piOutA{c_1}{(v,c_2)}\piParalL \pBbb{c_1}} \; \not\!\piCostBisAmm{n}\;
      \sys{\sVV}{P}$  
  \end{enumerate}
  \item  For any amortisation credit $n$ and appropriate $\sV,\sVV$,  and  for any  $Q \in \psC$:
    \begin{enumerate}
    \item $\envExt \vDash \sys{\sV}{\pF \piParalL  \piOutA{b}{c_1} \piParalL \pB \piParalL \piOutA{d}{c_1}} \; \not\!\piCostBisAmm{n}\;
      \sys{\sVV}{Q}$  implies \\
      for any $P\in \psA$\; $\envExt \vDash \sys{\sV}{\pF \piParalL  \piOutA{b}{c_1} \piParalL \pBbbb{v}{c_1}} \; \not\!\piCostBisAmm{n+1}\;
      \sys{\sVV}{P}$
     \item $\envExt \vDash \sys{\sV}{\pF \piParalL  \piOutA{b}{c_1} \piParalL \pB \piParalL \piOutA{d}{c_1}} \; \not\!\piCostBisAmm{n}\;
      \sys{\sVV}{Q}$  implies \\
      $\envExt \vDash \sys{\sV}{\pF \piParalL  \piOutA{b}{c_1} \piParalL \pBbbb{v}{c_1}} \; \not\!\piCostBisAmm{n}\;
      \sys{\sVV}{\pFfff{c'_1}{v}\piParal \pBEee{c'_1}}$
     \item $\envExt \vDash \sys{\sV}{\pF \piParalL  \piOutA{b}{c_1} \piParalL \pB \piParalL \piOutA{d}{c_1}} \; \not\!\piCostBisAmm{n}\;
      \sys{\sVV}{Q}$  implies \\
      $\envExt \vDash \sys{\sV}{\pF \piParalL  \piOutA{b}{c_2} \piParalL \piOutA{c_1}{(v,c_2)}\piParalL \pBbb{c_1}} \; \not\!\piCostBisAmm{n}\;
      \sys{\sVV}{\pFfff{c'_1}{v}\piParal \pBEee{c'_1}}$
    \end{enumerate}

  \end{enumerate}
\end{lem}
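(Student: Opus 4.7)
The plan is to prove each clause by contradiction: assume the source pair \emph{is} related at the stated credit $n$, exhibit a single specific transition on one side, and show that every candidate weak match on the other side lands in a configuration ruled out by the hypotheses at the computed credit. The two transition diagrams in Figures~\ref{fig:TrnGrphBuff} and~\ref{fig:TrnGrphBuff2} give the complete picture of all reachable one-iteration derivatives, so each clause reduces to a finite case analysis of outgoing edges, with bookkeeping of the amortisation credit across the cost annotations on those edges. Throughout, renaming of freshly-allocated internal channels is absorbed by \rtit{lRen} together with Lemma~\ref{lem:costed-bisim-renaming}.

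Clauses (2) and (3) share a uniform pattern. For (2a)--(2d) the left-hand source can fire a cost-$0$ tau transition (an unfolding of $\pF$ or $\pB$, or a synchronisation on one of the internal channels $b$ or $d$) that lands precisely at the configuration declared unrelatable by the hypothesis. On the right, any weak tau-response by a $\psC$-process remains within $\psC$ at cumulative cost $0$, since no internal tau edge inside $\psC$ allocates or deallocates; hence the credit stays at $n$, and the universal form of the hypothesis (it covers \emph{every} element of $\psC$) rules out all candidate matches. Clause (3b) is identical in spirit, with $\psB$ in place of $\psC$, while clauses (3a), (4b), and (4c) invoke the same skeleton but use an \actout{\ctit{out}}{v} edge rather than a tau: in each of these cases the net cost along the weak matching transition is exactly $+1 - 1 = 0$ (one allocation balanced by one deallocation along the chain from the pre-allocation or $\psA$ state into $\psC$), so the credit again remains $n$ and the $\psC$-hypothesis excludes the endpoint.

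Clause (4a) is where the allocation/deallocation asymmetry shows up explicitly and explains the $+1$ shift in its conclusion. The left-hand side fires \actout{\ctit{out}}{v} at cost $0$; any $\psA$-state on the right must however first traverse the internal communication on $c_1$ and the mandatory deallocation-tau inside $\pBEeee{c_1}{v}{c_2}$ (cost $-1$) to reach $\psB$ before it can emit the output, then possibly drift through further cost-$0$ taus into $\psC$. The aggregate RHS cost is therefore at most $-1$, so the post-match credit is at most $(n{+}1)+(-1)-0 = n$; the $\psC$-hypothesis applies and closes the case.

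The principal clause is (1). We fire \actin{\ctit{in}}{v} from the left at cost $0$, landing in $\pFfff{c_1}{v}\piParal \pBbb{c_1}$, and enumerate the possible endpoints of a matching weak \actin{\ctit{in}}{v} from an arbitrary $P \in \psC$. After the cost-$0$ housekeeping taus that drive $P$ into $\pFff{c_2} \piParal \pBEee{c_2}$ and consume the input, the continuation either (i)~stops immediately at the pre-allocation pair at cost $0$ (excluded by hypothesis~1 at credit $n$), (ii)~absorbs one allocation tau into $\psA$ at cost $+1$ (excluded by hypothesis~2 at credit $n{+}1$), or (iii)~pushes further through the communication on the allocated channel and its deallocation into $\psB$ at cost $0$ (excluded by hypothesis~3 at credit $n$). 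The main obstacle is ensuring the case split is \emph{exhaustive}: within a single weak external action no second allocation/deallocation cycle can fit after the input, and no intermediate state along the weak transition can escape these three classes. This exhaustiveness requirement is precisely why the three endpoint classes ($\psA$, $\psB$, and the pre-allocation pair) appear explicitly as separate hypotheses in (1).
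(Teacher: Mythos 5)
Your overall strategy is the paper's: argue each clause by contradiction, fire one specific strong challenger transition, enumerate the defender's weak responses from the transition graphs, and track the amortisation credit along the cost annotations. Your handling of clauses (1), (2), (3a)--(3b) and 4(a)--(b) essentially reproduces the paper's proof. However, there is a genuine gap in clause 4(c). You claim it follows the same single-challenge skeleton as 3(a) and 4(b) with an $\actout{\ctit{out}}{v}$ challenge of net cost $+1-1=0$, but the left-hand configuration of 4(c)'s conclusion, $\sys{\sV}{\pF \piParalL \piOutA{b}{c_2} \piParalL \piOutA{c_1}{(v,c_2)} \piParalL \pBbb{c_1}}$, has \emph{no strong output transition}: its back-end is still the input prefix $\pBbb{c_1}$ and its front-end is the recursion $\pF$, and the right-hand pre-allocation configuration $\sys{\sVV}{\pFfff{c'_1}{v}\piParal \pBEee{c'_1}}$ cannot strongly output either. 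Since challenger moves in Definition~\ref{def:amortized-typed-bisim} are single strong transitions, the move your argument hinges on does not exist.

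Moreover, the case cannot be closed from the $\psC$-hypothesis alone even with the correct challenge. Firing the cost-$0$ $\acttau$ (the communication on $c_1$) from the left lands at $\pF \piParalL \piOutA{b}{c_2} \piParalL \pBbbb{v}{c_2}$, and the defender's weak $\tau$-responses from $\pFfff{c'_1}{v}\piParal \pBEee{c'_1}$ end either at that same pre-allocation pair (cost $0$), in $\psA$ (cost $+1$), or in $\psB$ (cost $0$); none of these lie in $\psC$, because reaching $\psC$ requires the external action $\actout{\ctit{out}}{v}$. This is precisely why the paper first applies the already-established clauses 3(a), 4(a) and 4(b) to the premise to obtain three auxiliary non-relatedness facts (against $\psB$ at credit $n$, against $\psA$ at credit $n+1$, and against the pre-allocation pair at credit $n$), and only then closes 4(c) by a three-way case analysis in the style of clause (1). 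Your proposal omits this bootstrapping step entirely. A minor further inaccuracy: in 3(a) the defender is already in $\psB$, so no allocation or deallocation occurs at all in the matching weak output (the net cost is $0$ outright, not $+1-1$); this does not affect the conclusion, but the parenthetical cost account is wrong there too.
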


\begin{proof} Each case is proved by contradiction:
  \begin{enumerate}
  \item Assume the premises together with the inverse of the conclusion, \ie $$\envExt \vDash \sys{\sV}{\pFff{c_1} \piParalL \pBbb{c_1}} \; \piCostBisAmm{n}\; \sys{\sVV}{P}.$$ Consider the transition from the left-hand configuration:
    $$\conf{\envExt}{\sys{\sV}{\pFff{c_1} \piParalL \pBbb{c_1}}} \piRedDecCostPad{\actin{\ctit{in}}{v}}{0}  \conf{\envExt}{\sys{\sV}{\pFfff{c_1}{v}\piParalL
        \pBbb{c_1}}}.$$  
   For any $P \in \psC$, this can only be matched by the right-hand configuration, $\conf{\envExt}{\sys{\sVV}{P}}$, through  either of the following cases:
    \begin{enumerate}
    \item $\conf{\envExt}{\sys{\sVV}{P}} \piRedWDecCostPad{\actin{\ctit{in}}{v}}{0} \conf{\envExt}{\sys{\sVV}{\pFfff{c'_1}{v}\piParalL \pBEee{c'_1}}}$, \ie a weak input action without trailing $\tau$-moves after the external action $\actin{\ctit{in}}{v}$ --- see first row of the  graph in \figref{fig:TrnGrphBuff2}.  But we know  $\envExt \vDash \sys{\sV}{\pFfff{c_1}{v}\piParalL
        \pBbb{c_1}} \; \not\!\piCostBisAmm{n}\;
      \sys{\sVV}{\pFfff{c'_1}{v}\piParalL \pBEee{c'_1}}$ from the first premise. 
    \item $\conf{\envExt}{\sys{\sVV}{P}} \piRedWDecCostPad{\actin{\ctit{in}}{v}}{+1} \conf{\envExt}{\sys{\sVV}{Q}}$ for some $Q \in \psA$.  However, from the second premise we know that $\envExt \vDash \sys{\sV}{\pFfff{c_1}{v}\piParalL
        \pBbb{c_1}} \; \not\!\piCostBisAmm{n+1}\;
      \sys{\sVV}{Q}$
    \item $\conf{\envExt}{\sys{\sVV}{P}} \piRedWDecCostPad{\actin{\ctit{in}}{v}}{0} \conf{\envExt}{\sys{\sVV}{Q}}$ for some $Q \in \psB$. Again, from the third premise we know that $\envExt \vDash \sys{\sV}{\pFfff{c_1}{v}\piParalL
        \pBbb{c_1}} \; \not\!\piCostBisAmm{n}\;
      \sys{\sVV}{Q}$
    \end{enumerate}
    Since \conf{\envExt}{\sys{\sVV}{P}} cannot perform a matching move, we obtain a contradiction.
  \item We here prove case $(a)$. The other cases are analogous. 

  Assume  $\envExt \vDash \sys{\sV}{\pFf \piParalL  \piOutA{b}{c_1} \piParalL \pBbb{c_1}} \; \piCostBisAmm{n}\;
      \sys{\sVV}{P}$ and consider the action
      \begin{equation*}
       \conf{\envExt}{\sys{\sV}{\pFf \piParalL  \piOutA{b}{c_1} \piParalL \pBbb{c_1}}} \piRedDecCostPad{\acttau}{0}  \conf{\envExt}{\sys{\sV}{\pFff{c_1} \piParalL \pBbb{c_1}}}.
     \end{equation*}
     For our assumption to hold,  $\conf{\envExt}{\sys{\sVV}{P}}$ would need to match this move 
     by a (weak) silent action leading to a configuration that can match  $\conf{\envExt}{\sys{\sV}{\pFff{c_1} \piParalL \pBbb{c_1}}}$. 
     The only matching move can be
     \begin{equation*}
       \conf{\envExt}{\sys{\sVV}{P}}  \piRedWDecCostPad{\phantom{\acttau}}{0}  \conf{\envExt}{\sys{\sVV}{Q}} \qquad\text{for some } Q \in \psC.
     \end{equation*}
       However, from our premise we know $\envExt \vDash \sys{\sV}{\pFff{c_1} \piParalL \pBbb{c_1}} \; \not\!\piCostBisAmm{n}\;
      \sys{\sVV}{Q'}$ for any amortisation credit $n$   and  $Q' \in \psC$ and therefore conclude that the move cannot be matched, thereby obtaining a contradiction.
    \item We here prove case $(a)$. Case $(b)$ is analogous. 

  Assume  $\envExt \vDash \sys{\sV}{\pF \piParalL  \piOutA{b}{c_2} \piParalL \pBbbb{v}{c_2}} \; \piCostBisAmm{n}\;
      \sys{\sVV}{P}$ and consider the action 
      \begin{equation*}
        \conf{\envExt}{\sys{\sV}{\pF \piParalL  \piOutA{b}{c_2} \piParalL \pBbbb{v}{c_2}}} \piRedDecCostPad{\actout{\ctit{out}}{v}}{0} \conf{\envExt}{\sys{\sV}{\pF \piParalL  \piOutA{b}{c_2} \piParalL \pB \piParalL \piOutA{d}{c_2}}}
      \end{equation*}
     This action can only be matched by a transition of the form
     \begin{equation*}
        \conf{\envExt}{\sys{\sVV}{P}}  \piRedWDecCostPad{\actout{\ctit{out}}{v}}{0}  \conf{\envExt}{\sys{\sVV}{Q}}\qquad\text{for some } Q \in \psC.
     \end{equation*}
      However, from our premise we know $\envExt \vDash \sys{\sV}{\pF \piParalL  \piOutA{b}{c_2} \piParalL \pB \piParalL \piOutA{d}{c_2}} \; \not\!\piCostBisAmm{n}\;
      \sys{\sVV}{Q}$ for any  amortisation credit $n$   and  $Q\in \psC$. Thus we conclude that the move cannot be matched, thereby obtaining a contradiction.
    \item Cases $(a)$ and $(b)$ are analogous to $3(a)$ and $3(b)$.  We here outline the proof for case $(c)$.

     First we note that from the premise $\envExt \vDash \sys{\sV}{\pF \piParalL  \piOutA{b}{c_1} \piParalL \pB \piParalL \piOutA{d}{c_1}} \; \not\!\piCostBisAmm{n}\;
      \sys{\sVV}{Q}$ (for any $Q \in \psC$) and Lemma~$\ref{lem:Neg:implications}.3(a)$, Lemma~$\ref{lem:Neg:implications}.4(a)$ and Lemma~$\ref{lem:Neg:implications}.4(b)$ \resp we obtain:
      \begin{align}\label{eq:13cs}
         & \envExt \vDash \sys{\sV}{\pF \piParalL  \piOutA{b}{c_2} \piParalL \pBbbb{v}{c_2}} \; \not\!\piCostBisAmm{n}\;
      \sys{\sVV}{P}  \qquad\text{for any } P\in \psB \\
         \label{eq:14cs}
         & \envExt \vDash \sys{\sV}{\pF \piParalL  \piOutA{b}{c_1} \piParalL \pBbbb{v}{c_1}} \; \not\!\piCostBisAmm{n+1}\;
      \sys{\sVV}{P} \qquad\text{for any }  P\in \psA\\
      \label{eq:15cs}
        & \envExt \vDash \sys{\sV}{\pF \piParalL  \piOutA{b}{c_1} \piParalL \pBbbb{v}{c_1}} \; \not\!\piCostBisAmm{n}\;
      \sys{\sVV}{\pFfff{c'_1}{v}\piParal \pBEee{c'_1}}
      \end{align}
      
      We  assume $\envExt \vDash \sys{\sV}{\pF \piParalL  \piOutA{b}{c_2} \piParalL \piOutA{c_1}{(v,c_2)}\piParalL \pBbb{c_1}} \; \piCostBisAmm{n}\;
      \sys{\sVV}{\pFfff{c'_1}{v}\piParal \pBEee{c'_1}}$ and then showing that this leads to a contradiction.  Consider the move
      \begin{equation*}
        \conf{\envExt}{\sys{\sV}{\pF \piParalL  \piOutA{b}{c_2} \piParalL \piOutA{c_1}{(v,c_2)}\piParalL \pBbb{c_1}}} \piRedDecCostPad{\acttau}{0} \conf{\envExt}{\sys{\sV}{\pF \piParalL  \piOutA{b}{c_2} \piParalL \pBbbb{v}{c_2}}}
      \end{equation*}
      This can be matched by $\conf{\envExt}{\sys{\sVV}{\pFfff{c'_1}{v}\piParal \pBEee{c'_1}}}$ using either of the following moves:
      \begin{itemize}
      \item $\conf{\envExt}{\sys{\sVV}{\pFfff{c'_1}{v}\piParal \pBEee{c'_1}}} \piRedWDecCostPad{\phantom{\acttau}}{0} \conf{\envExt}{\sys{\sVV}{\pFfff{c'_1}{v}\piParal \pBEee{c'_1}}}$. But \eqref{eq:15cs} prohibits this from being the matching move. 
      \item $\conf{\envExt}{\sys{\sVV}{\pFfff{c'_1}{v}\piParal \pBEee{c'_1}}} \piRedWDecCostPad{\phantom{\acttau}}{+1} \conf{\envExt}{\sys{\sVV}{Q}}$ for some $Q \in \psA$. But \eqref{eq:14cs} prohibits this from being the matching move.
      \item $\conf{\envExt}{\sys{\sVV}{\pFfff{c'_1}{v}\piParal \pBEee{c'_1}}} \piRedWDecCostPad{\phantom{\acttau}}{0} \conf{\envExt}{\sys{\sVV}{Q}}$ for some $Q \in \psB$. But \eqref{eq:13cs}  prohibits this from being the matching move.
      \end{itemize}
      This contradicts our earlier assumption. \qedhere
  \end{enumerate}
\end{proof}

\begin{lem}\label{lem:negative-induction}  For all $n\in\Nats$ and appropriate $\sV,\sVV$:
  \begin{enumerate}
  \item For any $Q \in \psA$ we have $\;\envExt \vDash \sys{\sV}{\pF\piParal \piOutA{b}{c_2} \piParal  \pBbbb{v}{c_2}}  \; \not\!\piCostBisAmm{n}\; \sys{\sVV}{Q}$ 
  \item For any $Q \in \psA$ we have $\;\envExt \vDash \sys{\sV}{\pF\piParal \piOutA{b}{c_2} \piParal \piOutA{c_1}{(v,c_2)}\piParal \pBbb{c_1}}  \; \not\!\piCostBisAmm{n}\; \sys{\sVV}{Q}$
  \item For any $Q \in \psA$ we have $\;\envExt \vDash \sys{\sV}{\pFfff{c_1}{v}\piParal \pBbb{c_1}}  \; \not\!\piCostBisAmm{{n+1}}\; \sys{\sVV}{Q}$
  \item For any $Q \in \psB$ we have $\;\envExt \vDash \sys{\sV}{\pFfff{c_1}{v}\piParal \pBbb{c_1}}  \; \not\!\piCostBisAmm{{n}}\; \sys{\sVV}{Q}$
  \item $\envExt \vDash \sys{\sV}{\pFfff{c_1}{v}\piParal \pBbb{c_1}}  \; \not\!\piCostBisAmm{{n}}\; \sys{\sVV}{\pFfff{c'_1}{v}\piParal \pBEee{c'_1}}$
  \end{enumerate}
\end{lem}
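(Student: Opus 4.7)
The plan is to prove all five statements simultaneously by strong induction on $n$, leveraging Lemma~\ref{lem:Neg:implications} to propagate the non-relatedness results through the transition structure visualised in \figref{fig:TrnGrphBuff} and \figref{fig:TrnGrphBuff2}. The overarching intuition is that every input--output cycle of \pBuf incurs a net cost of $+1$ (the unmatched allocation) while \pBufE has net cost $0$ (allocation and deallocation cancel), so any finite amortisation credit must eventually be exhausted. Since our bisimulation game is pointwise rather than trace-based, the lemmas of the previous \emph{Related Negative Results} sublemma let us ``slice'' this global argument into step-local credit violations.

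For the base case $n = 0$, I would prove statement~(1) directly from the transition graph: the LHS can perform $\actout{\ctit{out}}{v}$ at cost $0$, but every weak output path from any $Q \in \psA$ must traverse the deallocation transition of cost $-1$, yielding credit change $-1-0=-1$ and driving the credit strictly negative. Statement~(2) then follows from the LHS's internal $\tau$ (reading the head of the chain, cost $0$): any matching weak $\tau$ from $Q\in\psA$ that remains in $\psA$ reduces the task to statement~(1) at $n=0$, while any matching weak $\tau$ that crosses into $\psB$ pays the deallocation cost $-1$ and violates the credit outright. Statements~(3), (4), and~(5) at $n=0$ reduce to (1) and~(2) by considering the LHS's allocation $\tau$ of cost $+1$: the matching RHS weak $\tau$ either stays put (credit change $-1$, contradiction) or performs its own allocation (credit change $0$) landing in some $Q'\in \psA$, reducing to statement~(2).

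For the inductive step, assuming all five statements hold at the appropriate credits up to $n$, the plan is first to invoke Lemma~\ref{lem:Neg:implications}.1 to combine statement~(5) at $n$, statement~(3) at $n+1$, and statement~(4) at $n$ into a result about $\pFff{c_1}\piParal\pBbb{c_1}$ versus any $P \in \psC$ at credit $n$. Cascading this through Lemma~\ref{lem:Neg:implications}.2(a)--(d) walks the conclusion leftward across the internal $\tau$-moves of \figref{fig:TrnGrphBuff}, ultimately pairing $\pF\piParal\piOutA{b}{c_1}\piParal\pB\piParal\piOutA{d}{c_1}$ against every $P\in \psC$ at credit $n$. Then Lemma~\ref{lem:Neg:implications}.3(a),(b) pushes this backwards through the output action to obtain the analogous results against every $P \in \psB$ at credit $n$. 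Finally Lemma~\ref{lem:Neg:implications}.4(a) delivers statement~(1) at credit $n+1$, Lemma~\ref{lem:Neg:implications}.4(b) delivers statement~(5) at $n+1$, and Lemma~\ref{lem:Neg:implications}.4(c) delivers statement~(2) at $n+1$; statements~(3) and (4) at $n+1$ follow by the same input-step contradiction argument used in the base case, now applied to the $(n+1)$-level results just derived.

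The main obstacle is the bookkeeping for the credit offsets: statement~(3) is stated at credit $n+1$ while the others are at $n$, and Lemma~\ref{lem:Neg:implications}.4(a) itself bumps the credit by one. So the simultaneous induction must be phrased so that the step from ``level $n$'' to ``level $n+1$'' uses results at \emph{both} $n$ and $n+1$ for different statements; in practice this means organising the proof as an induction on a single parameter that enumerates the entire collection of instances in the order in which they are consumed by the cascade above. A second subtle point is that, unlike a standard behavioural-inequivalence proof where a single discriminating action suffices, here every base-case argument can succeed in matching the offending move at the cost of landing in a configuration covered by a \emph{different} statement of the lemma, so the base cases are genuinely mutually recursive and must be discharged in the order (1), (2), then (3)--(5) as sketched above.
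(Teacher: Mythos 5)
Your overall strategy is the paper's own: prove the five clauses simultaneously by induction on $n$, discharge the base case by direct credit-exhaustion arguments over the transition graphs, and drive the inductive step by cascading through Lemma~\ref{lem:Neg:implications}.1, then 2(a)--(d), then 3(a)--(b), then 4, exactly as the paper does to obtain its auxiliary facts \eqref{eq:20cs}, \eqref{eq:16cs}, \eqref{eq:17cs}, \eqref{eq:18cs}. Your use of Lemma~\ref{lem:Neg:implications}.4(a) to get clause (1) at credit $n+1$ is also legitimate --- it packages the very output-versus-deallocation game that the paper spells out directly for that clause.

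The induction does not close as you have written it, however, because the claim that Lemma~\ref{lem:Neg:implications}.4(b) ``delivers statement (5) at $n+1$'' and that 4(c) ``delivers statement (2) at $n+1$'' is false. The conclusions of 4(b) and 4(c) concern different pairs from clauses (5) and (2): their left-hand systems are the mid-cycle derivatives $\pF\piParal\piOutA{b}{c_1}\piParal\pBbbb{v}{c_1}$ and $\pF\piParal\piOutA{b}{c_2}\piParal\piOutA{c_1}{(v,c_2)}\piParal\pBbb{c_1}$ rather than $\pFfff{c_1}{v}\piParal\pBbb{c_1}$, and their right-hand system is the \emph{specific} configuration $\pFfff{c'_1}{v}\piParal\pBEee{c'_1}$, not an arbitrary $Q\in\psA$. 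They are only auxiliary ammunition (4(c) is the paper's \eqref{eq:18cs}); they do not subsume clauses (2) or (5). Consequently your plan never actually establishes clause (2) at $n+1$, and since your arguments for clause (3) --- and any correct argument for clause (5) --- at $n+1$ depend on clause (2) at $n+1$, the step collapses. The repair is what the paper does: prove clause (2) at $n+1$ by its own game (the left system's cost-$0$ $\tau$ reading the chain head is matched either within $\psA$, excluded by the freshly proved clause (1) at $n+1$, or by a $-1$ deallocation into $\psB$, excluded by \eqref{eq:16cs}), and prove clause (5) at $n+1$ by the three-way case analysis against the right system doing nothing, allocating into $\psA$, or allocating-then-deallocating into $\psB$, excluded respectively by \eqref{eq:18cs}, clause (2) at $n+1$, and \eqref{eq:17cs}. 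A smaller slip of the same kind occurs in your base case: for clause (3) the right-hand $Q$ already lies in $\psA$, so ``stays put'' leaves credit $0$ rather than driving it negative --- the contradiction there comes from clause (2), not from the non-negativity constraint; your blanket two-case description literally fits only clause (5).
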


\begin{proof} We prove statements $(1)$  to $(5)$ simultaneously, by induction on $n$.
  \begin{description}
  \item[$n=0$] We prove each clause by contradiction:
    \begin{enumerate}
    \item Assume $\envExt \vDash \sys{\sV}{\pF\piParal \piOutA{b}{c_2} \piParal  \pBbbb{v}{c_2}}  \; \piCostBisAmm{0}\; \sys{\sVV}{Q}$ for some $Q \in \psA$ and consider the transition
      \begin{equation*}
         \conf{\envExt}{\sys{\sV}{\pF\piParal \piOutA{b}{c_2} \piParal  \pBbbb{v}{c_2}}} \piRedDecCostPad{\actout{\ctit{out}}{v}}{0} \conf{\envExt}{\sys{\sV}{\pF\piParal \piOutA{b}{c_2} \piParal  \pB \piParal \piOutA{d}{c_2}}}
       \end{equation*}
         For any $Q \in \psA$, this cannot be matched by any move from  $\conf{\envExt}{\sys{\sVV}{Q}}$ since output actions must be preceded by a channel deallocation, which incurs a \emph{negative} cost --- see second and third rows of the  graph in \figref{fig:TrnGrphBuff2}.  Stated otherwise, every matching move can only be of the form 
$$\conf{\envExt}{\sys{\sVV}{Q}} \piRedWDecCostPad{\actout{\ctit{out}}{v}}{-1} \conf{\envExt}{\sys{\sVV'}{Q'}}$$
where $\sVV=\bigl(\sVV',c'_1\bigr)$ for some $c'_1$ and $Q'\in \psC$.  However, since the amortisation credit cannot be negative,  we can never have  $\envExt \vDash \sys{\sV}{\pF\piParal \piOutA{b}{c_2} \piParal  \pB \piParal \piOutA{d}{c_2}}  \; \piCostBisAmm{-1}\; \sys{\sVV'}{Q'}$.  We therefore obtain a contradiction.

\item Assume $\envExt \vDash \sys{\sV}{\pF\piParal \piOutA{b}{c_2} \piParal \piOutA{c_1}{(v,c_2)}\piParal \pBbb{c_1}}  \; \piCostBisAmm{0}\; \sys{\sVV}{Q}$ for some  $Q \in \psA$ and consider the transition
  \begin{align*}
 \qquad\qquad&\conf{\envExt}{\sys{\sV}{\pF\piParal \piOutA{b}{c_2} \piParal  \piOutA{c_1}{(v,c_2)}\piParal \pBbb{c_1}}}   \piRedDecCostPad{\acttau}{0} \conf{\envExt}{\sys{\sV}{\pF\piParal \piOutA{b}{c_2} \piParal  \pBbbb{v}{c_2}}}
\end{align*}
 Since the amortisation credit can never be negative, the matching move can only be of the form 
$$\conf{\envExt}{\sys{\sVV}{Q}} \piRedWDecCostPad{\phantom{\acttau}}{0} \conf{\envExt}{\sys{\sVV}{Q'}}$$
for some  $Q'\in \psA$.  But then we get a contradiction since, from the previous clause, we know that $\envExt \vDash \sys{\sV}{\pF\piParal \piOutA{b}{c_2} \piParal  \pBbbb{v}{c_2}}  \; \not\!\piCostBisAmm{0}\; \sys{\sVV}{Q'}$.

\item Assume $\envExt \vDash \sys{\sV}{\pFfff{c_1}{v}\piParal \pBbb{c_1}}  \; \piCostBisAmm{{1}}\; \sys{\sVV}{Q}$ for some $Q \in \psA$ and consider the transition
  \begin{align*}
   \qquad\qquad&\conf{\envExt}{\sys{\sV}{\pFfff{c_1}{v}\piParal \pBbb{c_1}}}    \piRedDecCostPad{\acttau}{+1} \conf{\envExt}{\sys{\sV,c_2}{\pF\piParal \piOutA{b}{c_2} \piParal  \piOutA{c_1}{(v,c_2)}\piParal \pBbb{c_1}}}
  \end{align*}
  for some newly allocated channel $c_2$. As in the previous case, since the amortisation credit can never be negative, the matching move can only be of the form 
$$\conf{\envExt}{\sys{\sVV}{Q}} \piRedWDecCostPad{\phantom{\acttau}}{0} \conf{\envExt}{\sys{\sVV}{Q'}}$$
for some  $Q'\in \psA$. But then we get a contradiction since, from the previous clause, we know that $\;\envExt \vDash \sys{(\sV,c_2)}{\pF\piParal \piOutA{b}{c_2} \piParal \piOutA{c_1}{(v,c_2)}\piParal \pBbb{c_1}}  \; \not\!\piCostBisAmm{0}\; \sys{\sVV}{Q'}$.

\item Analogous to the previous case.

\item Assume $\envExt \vDash \sys{\sV}{\pFfff{c_1}{v}\piParal \pBbb{c_1}}  \; \piCostBisAmm{{0}}\; \sys{\sVV}{\pFfff{c'_1}{v}\piParal \pBEee{c'_1}}$ and consider the transition
     \begin{align*}
   \qquad\qquad&\conf{\envExt}{\sys{\sV}{\pFfff{c_1}{v}\piParal \pBbb{c_1}}}    \piRedDecCostPad{\acttau}{+1} \conf{\envExt}{\sys{\sV,c_2}{\pF\piParal \piOutA{b}{c_2} \piParal  \piOutA{c_1}{(v,c_2)}\piParal \pBbb{c_1}}}
  \end{align*}
  Since the transition incurred a cost of $+1$ and the current amortisation credit is $0$, the matching weak transition must also incur a  cost of $+1$ and thus $\conf{\envExt}{\sys{\sVV}{\pFfff{c'_1}{v}\piParal \pBEee{c'_1}}}$ can only match this by the move
  \begin{equation*}
    \conf{\envExt}{\sys{\sVV}{\pFfff{c'_1}{v}\piParal \pBEee{c'_1}}} \piRedWDecCostPad{\acttau}{+1} \conf{\envExt}{\sys{\sVV,c'_2}{Q}}
  \end{equation*}
  for some $Q \in \psA$.  But then we still get a contradiction since, from clause $(2)$, we know  $\envExt \vDash \sys{\sV}{\pF\piParal \piOutA{b}{c_2} \piParal \piOutA{c_1}{(v,c_2)}\piParal \pBbb{c_1}}  \; \not\!\piCostBisAmm{0}\; \sys{\sVV}{Q}$.\\
    \end{enumerate}

  \item[$n=k+1$] We prove each clause by contradiction.  However before we tackle each individual clause, we note that from clauses $(3)$, $(4)$ and $(5)$ of the I.H. we know
      \begin{align*}
        &  \text{For any  }Q \in \psA \text{ we have }\envExt \vDash \sys{\sV}{\pFfff{c_1}{v}\piParal \pBbb{c_1}}  \; \not\!\piCostBisAmm{{k+1}}\; \sys{\sVV}{Q}\\
        & \text{For any } Q \in \psB \text{ we have }\envExt \vDash \sys{\sV}{\pFfff{c_1}{v}\piParal \pBbb{c_1}}  \; \not\!\piCostBisAmm{{k}}\; \sys{\sVV}{Q} \\
        & \envExt \vDash \sys{\sV}{\pFfff{c_1}{v}\piParal \pBbb{c_1}}  \; \not\!\piCostBisAmm{{k}}\; \sys{\sVV}{\pFfff{c_1}{v}\piParal \pBEee{c_1}}
      \end{align*}
      By Lemma~$\ref{lem:Neg:implications}.1$ we obtain, for any $Q'\in\psC$  and appropriate $\sVV'$:
      \begin{align*}
        & \envExt \vDash \sys{\sV}{\pFff{c_1} \piParalL \pBbb{c_1}} \; \not\!\piCostBisAmm{k}\; \sys{\sVV'}{Q'} 
      \end{align*}
      and by Lemma~$\ref{lem:Neg:implications}.2(a)$, Lemma~$\ref{lem:Neg:implications}.2(b)$, Lemma~$\ref{lem:Neg:implications}.2(c)$ and Lemma~$\ref{lem:Neg:implications}.2(d)$  we obtain, for any $Q'\in\psC$  and appropriate $\sVV'$:
     \begin{align}\label{eq:20cs}
        & \envExt \vDash \sys{\sV}{\pF\piParal \piOutA{b}{c_2} \piParal  \pB \piParal \piOutA{d}{c_2}} \; \not\!\piCostBisAmm{k}\;  \sys{\sVV}{Q'} 
      \end{align}  
      Also, by \eqref{eq:20cs}, Lemma~$\ref{lem:Neg:implications}.3(a)$ and Lemma~$\ref{lem:Neg:implications}.3(b)$ we obtain, for any $Q''\in \psB$:
      \begin{align}
         \label{eq:16cs}
         &\envExt \vDash \sys{\sV}{\pF \piParalL  \piOutA{b}{c_2} \piParalL \pBbbb{v}{c_2}} \; \not\!\piCostBisAmm{k}\;     \sys{\sVV}{Q''}\\
         \label{eq:17cs}
         & \envExt \vDash \sys{\sV}{\pF \piParalL  \piOutA{b}{c_2} \piParalL \piOutA{c_1}{(v,c_2)}\piParalL \pBbb{c_1}} \; \not\!\piCostBisAmm{k}\;\sys{\sVV}{Q''}   
      \end{align}
  Moreover, by \eqref{eq:20cs}, Lemma~$\ref{lem:Neg:implications}.4(a)$, Lemma~$\ref{lem:Neg:implications}.4(b)$ and Lemma~$\ref{lem:Neg:implications}.4(c)$ we obtain:
  \begin{align}
    \label{eq:18cs}
    & \envExt \vDash \sys{\sV}{\pF \piParalL  \piOutA{b}{c_2} \piParalL \piOutA{c_1}{(v,c_2)}\piParalL \pBbb{c_1}} \; \not\!\piCostBisAmm{k}\;
      \sys{\sVV}{\pFfff{c'_1}{v}\piParal \pBEee{c'_1}}
  \end{align}

    The proofs for each clause are as follows:
    \begin{enumerate}
    \item Assume $\envExt \vDash \sys{\sV}{\pF\piParal \piOutA{b}{c_2} \piParal  \pBbbb{v}{c_2}}  \; \piCostBisAmm{{k+1}}\; \sys{\sVV}{Q}$ for some $Q \in \psA$ and consider the transition
      \begin{equation*}
         \conf{\envExt}{\sys{\sV}{\pF\piParal \piOutA{b}{c_2} \piParal  \pBbbb{v}{c_2}}} \piRedDecCostPad{\actout{\ctit{out}}{v}}{0} \conf{\envExt}{\sys{\sV}{\pF\piParal \piOutA{b}{c_2} \piParal  \pB \piParal \piOutA{d}{c_2}}}
       \end{equation*}
       For any $Q \in \psA$, this can (only) be matched by any move of the form  
       \begin{align*}
         \qquad\qquad & \conf{\envExt}{\sys{\sVV}{Q}} \piRedWDecCostPad{\actout{\ctit{out}}{v}}{-1} \conf{\envExt}{\sys{\sVV'}{Q'}}
       \end{align*}
      where $\sVV=\bigl(\sVV',c'_1\bigr)$ for some $c'_1$, $Q'\in \psC$, and the external action \actout{\ctit{out}}{v} is preceded by a $\tau$-move deallocating $c'_1$.  For our initial assumption to hold we need to show that \emph{at least one} of these configurations $\conf{\envExt}{\sys{\sVV'}{Q'}}$ satisfies the property $$\envExt \vDash \sys{\sV}{\pF\piParal \piOutA{b}{c_2} \piParal  \pB \piParal \piOutA{d}{c_2}} \; \piCostBisAmm{k}\;  \sys{\sVV'}{Q'}. $$
 But by \eqref{eq:20cs} we know that no such configuration exists,  thereby contradicting our initial assumption.

    \item Assume $\envExt \vDash \sys{\sV}{\pF\piParal \piOutA{b}{c_2} \piParal \piOutA{c_1}{(v,c_2)}\piParal \pBbb{c_1}}  \; \piCostBisAmm{{k+1}}\; \sys{\sVV}{Q}$ for some $Q \in \psA$ and consider the transition
      \begin{align*}
 \qquad\qquad&\conf{\envExt}{\sys{\sV}{\pF\piParal \piOutA{b}{c_2} \piParal  \piOutA{c_1}{(v,c_2)}\piParal \pBbb{c_1}}}   \piRedDecCostPad{\acttau}{0} \conf{\envExt}{\sys{\sV}{\pF\piParal \piOutA{b}{c_2} \piParal  \pBbbb{v}{c_2}}}
\end{align*}
     This transition can be matched by \conf{\envExt}{\sys{\sVV}{Q}} through either of the following moves:
     \begin{enumerate}
     \item $\conf{\envExt}{\sys{\sVV}{Q}} \piRedWDecCostPad{\phantom{\acttau}}{0} \conf{\envExt}{\sys{\sVV}{Q'}}$ for some $Q'\in\psA$.  However, from the previous clause, \ie clause $(1)$ when $n=k+1$, we know that this cannot be the matching move since $\envExt \vDash \sys{\sV}{\pF\piParal \piOutA{b}{c_2} \piParal  \pBbbb{v}{c_2}}  \; \not\!\piCostBisAmm{{k+1}}\; \sys{\sVV}{Q'}$.
     \item $\conf{\envExt}{\sys{\sVV}{Q}} \piRedWDecCostPad{\phantom{\acttau}}{-1} \conf{\envExt}{\sys{\sVV'}{Q'}}$ for some $Q'\in\psB$ and $\sVV=\bigl(\sVV',c'_1\bigr)$. However, from \eqref{eq:16cs},  we know that this cannot be the matching move since $\envExt \vDash \sys{\sV}{\pF \piParalL  \piOutA{b}{c_2} \piParalL \pBbbb{v}{c_2}} \; \not\!\piCostBisAmm{k}\;     \sys{\sVV'}{Q'}$.
     \end{enumerate}
     Thus, we obtain a contradiction.

   \item Assume  $\envExt \vDash \sys{\sV}{\pFfff{c_1}{v}\piParal \pBbb{c_1}}  \; \piCostBisAmm{{k+2}}\; \sys{\sVV}{Q}$, where $Q \in \psA$, and consider the transition:
       \begin{align*}
   \qquad\qquad&\conf{\envExt}{\sys{\sV}{\pFfff{c_1}{v}\piParal \pBbb{c_1}}}    \piRedDecCostPad{\acttau}{+1} \conf{\envExt}{\sys{\sV,c_2}{\pF\piParal \piOutA{b}{c_2} \piParal  \piOutA{c_1}{(v,c_2)}\piParal \pBbb{c_1}}}
  \end{align*}
  for some newly allocated channel $c_2$.  This can be matched by \conf{\envExt}{\sys{\sVV}{Q}} through either of the following moves:
  \begin{enumerate}
  \item $\conf{\envExt}{\sys{\sVV}{Q}} \piRedWDecCostPad{\phantom{\acttau}}{0} \conf{\envExt}{\sys{\sVV}{Q'}}$ for some $Q'\in\psA$. However, from the previous clause, \ie clause $(2)$ when $n=k+1$, we know that this cannot be the matching move since $\envExt \vDash \sys{\sV,c_2}{\pF\piParal \piOutA{b}{c_2} \piParal  \piOutA{c_1}{(v,c_2)}\piParal \pBbb{c_1}}  \; \not\!\piCostBisAmm{{k+1}}\; \sys{\sVV}{Q'}$.
  \item $\conf{\envExt}{\sys{\sVV}{Q}} \piRedWDecCostPad{\phantom{\acttau}}{-1} \conf{\envExt}{\sys{\sVV'}{Q'}}$ for some $Q'\in\psB$ and $\sVV=\bigl(\sVV',c'_1\bigr)$. However, from \eqref{eq:17cs},  we know that this cannot be the matching move since $\envExt \vDash \sys{\sV}{\pF \piParalL  \piOutA{b}{c_2} \piParalL \piOutA{c_1}{(v,c_2)}\piParalL \pBbb{c_1}} \; \not\!\piCostBisAmm{k}\;\sys{\sVV'}{Q'}$.
  \end{enumerate}
     Thus, we obtain a contradiction.
 
  \item Analogous to the proof for the previous clause and relies on \eqref{eq:17cs} again.

  \item Assume $\envExt \vDash \sys{\sV}{\pFfff{c_1}{v}\piParal \pBbb{c_1}}  \; \piCostBisAmm{{k+1}}\; \sys{\sVV}{\pFfff{c'_1}{v}\piParal \pBEee{c'_1}}$ and consider the transition
     \begin{align*}
   \qquad\qquad&\conf{\envExt}{\sys{\sV}{\pFfff{c_1}{v}\piParal \pBbb{c_1}}}    \piRedDecCostPad{\acttau}{+1} \conf{\envExt}{\sys{\sV,c_2}{\pF\piParal \piOutA{b}{c_2} \piParal  \piOutA{c_1}{(v,c_2)}\piParal \pBbb{c_1}}}
  \end{align*}
   for some newly allocated channel $c_2$.  This can be matched by the right-hand configuration \conf{\envExt}{\sys{\sVV}{\pFfff{c'_1}{v}\piParal \pBEee{c'_1}}} through either of the following moves:
   \begin{enumerate}
   \item  $\conf{\envExt}{\sys{\sVV}{\pFfff{c'_1}{v}\piParal \pBEee{c'_1}}} \piRedWDecCostPad{\phantom{\acttau}}{0} \conf{\envExt}{\sys{\sVV}{\pFfff{c'_1}{v}\piParal \pBEee{c'_1}}}$, \ie no transitions.  However, from \eqref{eq:18cs},   this cannot be the matching move since $\envExt \vDash \sys{\sV}{\pF \piParalL  \piOutA{b}{c_2} \piParalL \piOutA{c_1}{(v,c_2)}\piParalL \pBbb{c_1}} \; \not\!\piCostBisAmm{k}\;\sys{\sVV}{\pFfff{c'_1}{v}\piParal \pBEee{c'_1}}$.
   \item $\conf{\envExt}{\sys{\sVV}{\pFfff{c'_1}{v}\piParal \pBEee{c'_1}}} \piRedWDecCostPad{\acttau}{+1} \conf{\envExt}{\sys{\sVV,c'_2}{Q'}}$ for some $Q'\in\psA$ and $c'_2 \not\in \sVV$. However, from  clause $(2)$ when $n=k+1$, this cannot be the matching move since $\envExt \vDash \sys{\sV,c_2}{\pF\piParal \piOutA{b}{c_2} \piParal  \piOutA{c_1}{(v,c_2)}\piParal \pBbb{c_1}}  \; \not\!\piCostBisAmm{{k+1}}\; \sys{\sVV,c'_2}{Q'}$.
   \item $\conf{\envExt}{\sys{\sVV}{\pFfff{c'_1}{v}\piParal \pBEee{c'_1}}} \piRedWDecCostPad{\acttau}{0} \conf{\envExt}{\sys{\bigl(\sVV',c'_2\bigr)}{Q'}}$ for some $Q'\in\psB$, $\sVV = \bigl(\sVV',c'_1\bigr)$ and $c'_2 \not\in \sVV$.   However, from \eqref{eq:17cs},   this cannot be the matching move since $\envExt \vDash \sys{\sV}{\pF \piParalL  \piOutA{b}{c_2} \piParalL \piOutA{c_1}{(v,c_2)}\piParalL \pBbb{c_1}} \; \not\!\piCostBisAmm{k}\;\sys{\bigl(\sVV',c'_2\bigr)}{Q'}$. \qedhere
   \end{enumerate}

    \end{enumerate}

  \end{description}
\end{proof}

\smallskip

\begin{thm}[Strict Inefficiency]\label{thm:strictly-ineffiscient} For all $ n \geq 0 \text{ and appropriate }\sV\text{ we have }$ $$\envExt \;\vDash\; \sysS{\pBuf}\;\not\!\piCostBisAmm{n}\; \sysS{\pBufE}$$
\end{thm}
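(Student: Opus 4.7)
The plan is to prove the theorem by contradiction, leveraging the inductive negative results already established in Lemma~\ref{lem:negative-induction}. The key observation is that the buffer implementations \pBuf and \pBufE differ only in whether they deallocate consumed internal channels, but they are indistinguishable to the observer until the observer drives them through at least one full input/output cycle. Since \pBuf never deallocates the internal channels whereas \pBufE always does, no fixed amortisation credit can ever cover the unbounded memory leak of \pBuf relative to \pBufE.

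First, I will assume, towards a contradiction, that there exists some $n \geq 0$ such that $\envExt \vDash \sysS{\pBuf} \piCostBisAmm{n} \sysS{\pBufE}$. I will then consider the input transition instigated by the left-hand configuration,
\begin{equation*}
\conf{\envExt}{\sysS{\pBuf}} \;\piRedDecCostPad{\actin{\ctit{in}}{v}}{0}\; \conf{\envExt}{\sysS{\pFfff{c_1}{v} \piParalL \pBbb{c_1}}},
\end{equation*}
which has cost $0$. By Definition~\ref{def:amortized-typed-bisim}, the right-hand configuration \conf{\envExt}{\sysS{\pBufE}} must produce a matching weak input action, and by inspection of the transition graph in \figref{fig:TrnGrphBuff2} (noting that only input and internal actions are enabled, and that the amortisation credit can never become negative) there are precisely three possible weak matching moves: (i) a bare $\actin{\ctit{in}}{v}$ of cost $0$ leading to $\conf{\envExt}{\sysS{\pFfff{c'_1}{v}\piParalL \pBEee{c'_1}}}$, yielding new credit $n$; (ii) a weak input of cost $+1$ (including the channel allocation) leading to some configuration $\conf{\envExt}{\sysS{Q}}$ with $Q \in \psA$, yielding new credit $n+1$; or (iii) a weak input of cost $0$ (including allocation and deallocation) leading to some $\conf{\envExt}{\sysS{Q}}$ with $Q \in \psB$, yielding new credit $n$.

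For each of these three candidate matchings, I will invoke the corresponding clause of Lemma~\ref{lem:negative-induction}: clause $(5)$ rules out case (i) at credit $n$, clause $(3)$ rules out case (ii) at credit $n+1$, and clause $(4)$ rules out case (iii) at credit $n$. Since no weak matching move exists, the assumption $\envExt \vDash \sysS{\pBuf} \piCostBisAmm{n} \sysS{\pBufE}$ leads to a contradiction, which establishes the theorem.

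The main obstacle in this argument is not the theorem itself, which is essentially a one-step case analysis off the input action, but rather the supporting Lemma~\ref{lem:negative-induction}, whose proof is by simultaneous induction on $n$ across five intertwined clauses, each of which in turn relies on the propagation machinery of Lemma~\ref{lem:Neg:implications} to transfer non-bisimilarity results across $\tau$-successors within the classes \psA, \psB, \psC. Once that inductive scaffolding is in place, however, the final contradiction argument for Theorem~\ref{thm:strictly-ineffiscient} reduces to observing that every potential matching move from \sysS{\pBufE} lands in one of the three classes already ruled out.
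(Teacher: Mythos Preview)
Your proposal is correct and follows essentially the same argument as the paper. The paper packages your three-way case analysis as an invocation of Lemma~\ref{lem:Neg:implications}(1) (noting that $\pBufE = \pFff{c_1}\piParalL\pBEee{c_1} \in \psC$) fed with Lemma~\ref{lem:negative-induction}(3),(4),(5); you simply inline that case analysis directly, which is equally valid.
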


\begin{proof} Since:
\begin{align*}
  \pBuf & \deftxt \pFff{c_1} \piParal \pBbb{c_1} & \pBufE & \deftxt \pFff{c_1} \piParal \pBEee{c_1} 
\end{align*}
we need to show that $$\envExt \;\vDash\; \sysS{\pFff{c_1} \piParal \pBbb{c_1}}\;\not\!\piCostBisAmm{n}\; \sysS{\pFff{c_1} \piParal \pBEee{c_1}}$$ for any arbitrary $n$.  By Lemma~$\ref{lem:negative-induction}.3$, Lemma~$\ref{lem:negative-induction}.4$ and Lemma~$\ref{lem:negative-induction}.5$ we know that for any $n$:
\begin{align}
\label{eq:21cs}
& \text{For any } Q \in \psA \text{ we have }    \envExt \vDash
  \sys{\sV}{\pFfff{c_1}{v}\piParal \pBbb{c_1}} \;
  \not\!\piCostBisAmm{{n+1}}\; \sys{\sV}{Q}\\
\label{eq:22cs}
& \text{For any }Q \in \psB \text{ we have }   \envExt \vDash
  \sys{\sV}{\pFfff{c_1}{v}\piParal \pBbb{c_1}} \;
  \not\!\piCostBisAmm{{n}}\; \sys{\sV}{Q}\\
\label{eq:23cs}
& \envExt \vDash \sys{\sV}{\pFfff{c_1}{v}\piParal \pBbb{c_1}} \;
  \not\!\piCostBisAmm{{n}}\; \sys{\sV}{\pFfff{c_1}{v}\piParal
    \pBEee{c_1}}
\end{align}
Since $\bigl(\pFff{c_1} \piParal \pBEee{c_1}\bigr) \in \psC$, by Lemma~$\ref{lem:Neg:implications}.1$, \eqref{eq:21cs},~\eqref{eq:22cs} and \eqref{eq:23cs} we conclude
\begin{displaymath}
   \envExt \;\vDash\; \sysS{\pFff{c_1} \piParal
    \pBbb{c_1}}\;\not\!\piCostBisAmm{n}\; \sysS{\pFff{c_1} \piParal
    \pBEee{c_1}} 
\end{displaymath}
as required.
\end{proof}

\subsection{Proving Relative Efficiency}
\label{sec:prov-relat-effc}

\newcommand{\envFrn}{\ensuremath{\env_\text{Frn}}}

As opposed to Theorem~\ref{thm:strictly-ineffiscient}, the proof for \eqref{eq:25:cs} 
requires us to consider the entire state-space of 
 \conf{\envExt}{\sysS{\pBuf}} and  \conf{\envExt}{\sysS{\pBufE}}. Fortunately, we can apply the compositionality result of Theorem~\ref{thm:costed-bisim-compositionality}  to prove \eqref{eq:11cs} and focus on a subset of this state-space.  More precisely, we recall from \eqref{eq:24cs} that 
\begin{align*}
  \pBuf & \deftxt \pFff{c_1} \piParal \pBbb{c_1} & \pBufE & \deftxt \pFff{c_1} \piParal \pBEee{c_1} 
\end{align*}
where both buffer implementation share the common sub-process $\pFff{c_1}$.  We also recall from \eqref{eq:9cs} that this common sub-process was typed \wrt the type environment $$\envFrn = \emap{\ctit{in}}{\chantypW{\tV}},\,  \emap{b}{\chantypW{\tVR}},  \,\emap{c_1}{\chantypO{\tV,\tVR}}.$$
Theorem~\ref{thm:costed-bisim-compositionality} thus states that in order to prove \eqref{eq:11cs}, it suffices to abstract away from this common code and prove Theorem~\ref{thm:relat-eff} 

\begin{thm}[Relative Efficiency]\label{thm:relat-eff}
  $\bigl(\envExt,\envFrn\bigr) \vDash \sysS{\pBEee{c_1}} \;\piCostBisAmm{0}\; \sysS{\pBbb{c_1}}$
\end{thm}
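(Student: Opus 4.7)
I plan to prove the statement coinductively, by exhibiting a witness family $\relR$ of amortised type-indexed relations containing the quadruple $\langle (\envExt,\envFrn),\, 0,\, \sysS{\pBEee{c_1}},\, \sysS{\pBbb{c_1}}\rangle$ and then showing that $\relR$ enjoys the transfer property of Definition~\ref{def:amortized-typed-bisim}.  The crucial observation is that the process-originated transitions of each backend form a short cycle, essentially the restriction of Figures~\ref{fig:TrnGrphBuff} and~\ref{fig:TrnGrphBuff2} to just the backend subprocesses: the external input on the current ``head'' channel $c$, possibly a left-only deallocation $\tau$-step at cost $-1$, the external output on \ctit{out}, a $\tau$-step unfolding the recursive \pBE/\pB, and an internal $\tau$-communication on the private channel $d$ (which is not in the domain of $(\envExt,\envFrn)$), bringing both sides back to the initial shape parameterised by the new head channel $c'$.

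Writing $\env_0 \deftxt (\envExt,\envFrn)$, the family $\relR$ I envisage contains, for every natural number $n$ and every appropriate resource environment $\sV$, head channel $c \in \sV$, value $v$ and next-head $c' \in \sV$: the pre-deallocation pairs $\langle \env_0, n, \sys{\sV}{\pBEee{c}}, \sys{\sV}{\pBbb{c}}\rangle$ and $\langle \env_0, n, \sys{\sV}{\pBEeee{c}{v}{c'}}, \sys{\sV}{\pBbbb{v}{c'}}\rangle$, the post-deallocation pair $\langle \env_0, n{+}1, \sys{\sV \setminus \sset{c}}{\pBEeeee{v}{c'}}, \sys{\sV}{\pBbbb{v}{c'}}\rangle$, and the further pairs arising after the external output on \ctit{out}, namely $\pBE \piParal \piOutA{d}{c'}$ versus $\pB \piParal \piOutA{d}{c'}$, then $\pBEe \piParal \piOutA{d}{c'}$ versus $\pBb \piParal \piOutA{d}{c'}$, and finally $\pBEee{c'}$ versus $\pBbb{c'}$, all at credit $n{+}1$ (modulo the obvious variants of $\env_0$ reflecting the affine permission that has been consumed on $c$).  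The credit is thus monotonically non-decreasing along every iteration, which is precisely what is required to keep the amortisation non-negative.

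The key accounting argument is the following.  When the left moves $\pBEeee{c}{v}{c'}$ to $\pBEeeee{v}{c'}$ via the deallocation $\tau$-step at cost $k = -1$, the right matches with the \emph{empty} weak action at cost $l = 0$, yielding new credit $n + 0 - (-1) = n{+}1$.  Dually, when the right outputs on \ctit{out} at cost $l = 0$, the left matches with the weak action composed of the deallocation followed by the output, with cumulative cost $k = -1$, which again yields new credit $n + 0 - (-1) = n{+}1$.  All the remaining transitions in the cycle carry cost $0$ and match strongly one-for-one.  Observer-initiated moves through \rtit{lStr}, \rtit{lAllE} and \rtit{lFreeE} are matched symmetrically with identical cost, while the renaming rule \rtit{lRen} applies for freshly allocated names not in $\dom(\env_0)$, allowing the two systems to agree on representative witnesses where needed.

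The main obstacle I anticipate lies in the careful bookkeeping of the observer environment across each external input action: the affine permission $\emap{c}{\chantypO{\tV,\tVR}}$ is consumed by the input $\actin{c}{(v,c')}$, which also explicitly transfers $\emap{v}{\tV}$ and $\emap{c'}{\tVR}$ from observer to process, so to sustain the bisimulation across iterations the observer must be able to supply these permissions \emph{and} retain an affine permission on $c'$ for the next iteration's input.  The latter can be manufactured by allocating $c'$ via \rtit{lAllE} to obtain $\emap{c'}{\chantypU{\tV,\tVR}}$, then splitting it via \rtit{pUnq} as $\chantypO{\tV,\tVR} \circ \chantypUU{\tV,\tVR}{1}$, retaining the affine component for the next iteration's input and transmitting the unique-after-$1$ component as the next-head channel.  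Once this environment evolution is folded into the definition of $\relR$, verifying the transfer clauses of Definition~\ref{def:amortized-typed-bisim} reduces to a finite case analysis over the LTS rules of Figure~\ref{fig:LTS}, routinely invoking Lemma~\ref{lem:transition-structure} for move inversion.
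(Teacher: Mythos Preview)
Your approach is essentially the paper's: both exhibit a witness family $\relR$ covering the same five syntactic stages of the backend cycle ($\pBEee{c}$/$\pBbb{c}$, $\pBEeee{c}{v}{c'}$/$\pBbbb{v}{c'}$, $\pBEeeee{v}{c'}$/$\pBbbb{v}{c'}$, the post-output pairs, and the unfolded pairs), and both identify the same key accounting move---matching the left's $-1$ deallocation against the right's empty weak step, raising the credit by one.

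There is, however, one concrete gap in your relation as stated.  You fix the \emph{same} resource environment $\sV$ on both sides in the pre-deallocation clauses.  This cannot close under the transfer property: after a single iteration the left system has deallocated $c$ while the right system has not, so the pair you must land in is $\langle\ldots,\, n{+}1,\, \sys{\sV\setminus\sset{c}}{\pBEee{c'}},\, \sys{\sV}{\pBbb{c'}}\rangle$, which is \emph{not} of the shape $\langle\ldots,\, n',\, \sys{\sV'}{\pBEee{c'}},\, \sys{\sV'}{\pBbb{c'}}\rangle$.  The divergence compounds with each iteration.  The paper's relation avoids this by allowing distinct resource environments $\sV'$ and $\sVV'$ under the side condition $\sV' \subseteq \sVV'$ (and $\sV'' \subset \sVV''$ in the post-deallocation clauses); you need the analogous relaxation.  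Once you make this change your argument goes through, and your remark that the credit is monotonically non-decreasing is exactly the point---no upper bound on $n$ is needed precisely because the right-hand system leaks.

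A second, smaller point: rather than fixing the observer environment at $\env_0$ and describing ``obvious variants'', the paper parametrises the relation over all $(\env,\envv)$ with $(\envExt,\envFrn)\struct\env$ and $\envv$ arbitrary.  This absorbs in one stroke the \rtit{lStr}, \rtit{lAllE} and \rtit{lFreeE} moves you discuss, as well as the consumption of the affine permission on $c$ and the explicit transfer of $\emap{v}{\tV},\emap{c'}{\tVR}$ at the input step, without any need to trace how the observer manufactures the next-head permission.
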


\begin{proof}  We prove $\envExt,\envFrn \vDash \sysS{\pBEee{c_1}} \;\piCostBisAmm{0}\; \sysS{\pBbb{c_1}}$  through the family of relations \relR\ defined below, which includes the required quadruple
$\langle (\envExt,\envFrn), 0, \bigl(\sysS{\pBEee{c_1}}\bigr),  \bigl(\sysS{\pBbb{c_1}}\bigr)\rangle$.
  \begin{equation*}
    \relR \deftxt\left\{
      \begin{array}{@{\langle\,}l@{,\;}l@{,\;}l@{,\;}l@{\,\rangle\;}|@{\quad}l}
        \bigl(\env,\envv\bigr) & n & \bigl(\sys{\sV'}{\pBEee{c}}\bigr) & \bigl(\sys{\sVV'}{\pBbb{c}}\bigr)  \\
        \bigl(\env,\envv\bigr) & n & \bigl(\sys{\sV'}{\pBEeee{c}{v}{c'}}\bigr) & \bigl(\sys{\sVV'}{\pBbbb{v}{c'}}\bigr) & \bigl(\envExt,\envFrn\bigr) \struct \env\\
        \bigl(\env,\envv\bigr) & n & \bigl(\sys{\sV''}{\pBEeeee{v}{c'}}\bigr) & \bigl(\sys{\sVV'}{\pBbbb{v}{c'}}\bigr) & n \geq 0,\; \sV' \subseteq \sVV'\\
         \bigl(\env,\envv\bigr) & n & \bigl(\sys{\sV''}{\pBE \piParal \piOutA{d}{c'}}\bigr) & \bigl(\sys{\sVV'}{\pB \piParal \piOutA{d}{c'}}\bigr) & c \not\in \sV'',\; \sV'' \subset \sVV'',\;  c \in \sVV''  \\
        \bigl(\env,\envv\bigr) & n & \bigl(\sys{\sV''}{\pBEe \piParal \piOutA{d}{c'}}\bigr) & \bigl(\sys{\sVV'}{\pBb \piParal \piOutA{d}{c'}}\bigr)
      \end{array}
    \right\}
  \end{equation*}
Note that, in the quadruples of $\relR$ our observer environment  is not limited to derived environments $\env$ obtained from restructurings of  $\envExt,\envFrn$, but may include also additional entries, denoted by the environment \envv; these originate from observer channel allocations and uses through the transition rules \rtit{lAllE} and \rtit{lStr} from \figref{fig:LTS}.
\relR\ observes the transfer property of Definition~\ref{def:amortized-typed-bisim}.  We here go over some  key transitions:
\begin{itemize}
\item Consider a tuple from the first clause of the relation, for some $\env,\envv, n$ and $c$ \ie 
$$ \bigl(\env,\envv\bigr) \vDash \bigl(\sys{\sV'}{\pBEee{c}}\bigr) \;\relR^n\; \bigl(\sys{\sVV'}{\pBbb{c}}\bigr) $$
We recall from the macros introduced in Section~\ref{sec:proofs-relat-effic}  that
\begin{align*}
  \pBEee{c} &= \piIn{c}{(y,z)}{\piFree{c}{\piOut{\ctit{out}}{y}{
        \bigl(\pBE \piParal \piOutA{d}{z}\bigr)}}}\\
  \pBbb{c} &=\piIn{c}{(y,z)}{ \piOut{\ctit{out}}{y}{ \bigl(\pB \piParal
      \piOutA{d}{z}\bigr)}}
\end{align*}
Whenever $\bigl(\env,\envv\bigr)$ allows it, the left hand configuration can perform an input transitions
$$\conf{\bigl(\env,\envv\bigr)}{\sys{\sV'}{\pBEee{c}}} \piRedDecCostPad{\actin{c}{(v,c')}}{0} \conf{\bigl(\env',\envv'\bigr)}{\sys{\sV'}{\pBEeee{c}{v}{c'}}} $$
where $\env = \env',\envmap{c}{\chantypO{\tV,\tVR}}$ and $\envv = \envv', \envmap{v}{\tV},\envmap{c'}{\tVR}$.  This can be matched by the transition
$$\conf{\bigl(\env,\envv\bigr)}{\sys{\sVV'}{\pBbb{c}}} \piRedDecCostPad{\actin{c}{(v,c')}}{0} \conf{\bigl(\env',\envv'\bigr)}{\sys{\sVV'}{\pBbbb{v}{c'}}} $$
where we have $ \bigl(\env',\envv'\bigr) \vDash \bigl(\sys{\sV'}{\pBEeee{c}{v}{c'}}\bigr) \;\relR^n\; \bigl(\sys{\sVV'}{\pBbbb{v}{c'}}\bigr) $  from the second clause of \relR.  The matching move for an input action from the right-hand configuration is dual to this. Matching moves for \actenv, \actall\ and \actfree{c} actions are analogous.
\item Consider a tuple from the first clause of the relation, for some $\env,\envv, n, c, v$ and $c'$ \ie 
$$ \bigl(\env,\envv\bigr) \vDash \bigl(\sys{\sV'}{\pBEeee{c}{v}{c'}}\bigr) \;\relR^n\; \bigl(\sys{\sVV'}{\pBbbb{v}{c'}}\bigr) $$
Since $\pBEeee{c}{v}{c'}= \piFree{c}{\piOut{\ctit{out}}{v}{ \bigl(\pBE \piParal \piOutA{d}{c'}\bigr)}}$, a possible transition by the left-hand configuration is the deallocation of channel $c$:
$$\conf{\bigl(\env,\envv\bigr)}{\sys{\sV'}{\pBEeee{c}{v}{c'}}} \piRedDecCostPad{\acttau}{-1} \conf{\bigl(\env,\envv\bigr)}{\sys{\sV''}{\pBEeeee{v}{c'}}} $$
where $\sV'=\sV'',c$.  In this case, the matching move is the empty (weak) transition, since we have
$ \bigl(\env,\envv\bigr) \vDash \bigl(\sys{\sV''}{\pBEeeee{v}{c'}}\bigr) \;\relR^{n+1}\; \bigl(\sys{\sVV'}{\pBbbb{v}{c'}}\bigr) $ by the third clause of \relR. Dually, if $\bigl(\env,\envv\bigr)$ allows it, the right hand configuration may perform an output action
$$\conf{\bigl(\env,\envv\bigr)}{\sys{\sVV'}{\pBbbb{v}{c'}}} \piRedDecCostPad{\actout{\ctit{out}}{v}}{0} \conf{\bigl(\env,\envv,\envmap{v}{\tV}\bigr)}{\sys{\sVV'}{\pB\piParal\piOutA{d}{c'}}}$$
This can be matched by the weak output action
$$\conf{\bigl(\env,\envv\bigr)}{\sys{\sV'}{\pBEeee{c}{v}{c'}}} \piRedWDecCostPad{\actout{\ctit{out}}{v}}{-1} \conf{\bigl(\env,\envv,\envmap{v}{\tV}\bigr)}{\sys{\sV''}{\pBE\piParal\piOutA{d}{c'}}}$$
where $\sV'=\sV'',c$; by the fourth clause of \relR, we know that this a matching move because $ \bigl(\env,\envv,\envmap{v}{\tV}\bigr) \vDash \bigl(\sys{\sV''}{\pBE\piParal\piOutA{d}{c'}}\bigr) \;\relR^{n+1}\; \bigl(\sys{\sVV'}{\pB\piParal\piOutA{d}{c'}}\bigr)$. \qedhere
\end{itemize}
\end{proof}




 \section{Related Work}
 \label{sec:RelatedWork}

 \emph{A note on terminology:} From a logical perspective, a \emph{linear}
 assumption is one that cannot be weakened nor contracted, while an
 \emph{affine} assumption cannot be contracted but can be weakened. This leads
 to a reading of linear as ``used exactly once'' and of affine as ``used at most
 once''. However, in the presence of divergence or deadlock,  
 most linear type systems do not in fact guarantee
 that a linear resource will be used exactly once. In the discussion below, we
 will classify such type systems as affine instead.

 Linear logic was introduced by Girard \cite{girard:linearlogic}; its use as a
 type system was pioneered by Wadler \cite{wadler:use}. Uniqueness typing was
 introduced by Barendsen and Smetsers \cite{barendsen:functional}; the relation
 to linear logic has since been discussed in a number of papers (see
 \cite{hage:usageanalysis}).  

 Although there are many substructural (linear or affine) type systems for
 process calculi \cite[and
 others]{Acciai:typeabstraction,Acciai:responsiveness,Amadio:receptive,Igarashi:generic,Kobayashi:hybrid,Nobuko:dependent},
 some specifically for resources \cite{Kobayashi:resourceusage}, the literature
 on \emph{behaviour} of processes typed under such type systems is much smaller.

 Kobayashi \emph{et al.} \cite{KobayashiPT:linearity} introduce an affine type
 system for the \pic. Their channels have a polarity (input, output, or
 input/output) as well as a multiplicity (unrestricted or affine), and an affine
 input/output can be split as an affine input and an affine output channel.
 Communication on an affine input/affine output channel is necessarily
 deterministic, like communication on an affine/unique-after-1 channel in our
 calculus; however, both processes lose the right to use the channel after the
 communication, limiting reuse. Although the paper gives a definition of
 reduction closed barbed congruence, no compositional proof methods are
 presented.  

 Yoshida \emph{et al} \cite{Yoshida07:linearity,Honda:processlogic} define a
 linear type system, which uses ``action types'' to rule out deadlock.  The use
 of action types means that the type system can provide some guarantees that we
 cannot;  this is however an orthogonal aspect of the type system and it would
 be interesting to see if similar techniques can be applied in our setting. Their
 type system does not have any type that corresponds to uniqueness; instead, the
 calculus is based on $\pi$I to control dynamic sharing of names syntactically,
 thereby limiting channel reuse.  The authors give compositional proof
 techniques for their behavioural equivalence, but give no complete
 characterization. 

 Teller \cite{teller:resourcespi} introduces a \pic variant with ``finalizers'',
 processes that run when a resource has been deallocated. The deallocation
 itself however is performed by a garbage collector. The calculus comes with a
 type system that provides bounds on the resources that are used, although the
 scope of channel reuse is limited in the absence of some sort of uniqueness
 information. Although the paper defines a bisimulation relation, this relation
 does not take advantage of type information, and no compositionality results or
 characterization is given. 

 Hoare and O'Hearn \cite{Hoare:seplogicCSP} give a trace semantics for a variant
 of CSP with point-to-point communication and explicit allocation and
 deallocation of channels, which relies on separation of permissions. However,
 they do not consider any behavioural theories.  Pym and Tofts
 \cite{Pym:calculus} similarly give a semantics for SCCS with a generic notion
 of resource, based on separation of permissions; they do however consider
 behaviour. They define a bisimulation relation, and show that it can be
 characterized by a modal logic.   These approaches do not use a type system but opt for an operational interpretation of permissions, where actions may block due to lack of permissions.   Nevertheless, our consistency requirements for configurations (\defref{def:configuration}) can be seen as separation criteria for permission environments.   A detailed comparison between this untyped approach and our typed approach would be worthwhile.


 Apart from the Clean programming language \cite{journals/mscs/BarendsenS96}, from where uniqueness types originated, static analysis relating to uniqueness has recently been applied to  (more mainstream) Object-Ori\-en\-ted programming languages \cite{Gordon:unique:12} as well.  In such cases, it would be interesting to investigate whether the techniques developed in this work can be applied to a behavioural setting  such as that in \cite{JeffreyR:JavaJr:05}.      

 Our unique-after-$i$ type is related to fractional permissions, introduced in
 \cite{boyland:03fractions} and used in settings such as separation logic for
 shared-state concurrency \cite{Bornat:05Separation}. A detailed survey of this
 field is however beyond the scope of this paper. 


 The use of substitutions in our LTS (\defref{def:renaming}) is reminiscent of the name-bijections
 carried around in spi-calculus bisimulations \cite{boreale:crypto}. In the
 spi-calculus however this substitution is carried through the bisimulation, and
 must remain a bijection throughout. 
 Since processes may lose the permission
 to use channels in our calculus, this approach is too restrictive for us. 

 Finally, amortisation for coinductive reasoning was originally developed by Keihn \etal, \cite{Kiehn05} and L\"uttgen \etal \cite{LuttgenV06}.  It is investigated further by Hennessy in \cite{hennessy:buysell}, whereby a correspondence with (an adaptation of) reduction-barbed congruences is established. However, neither work considers aspects of resource misuse nor the corresponding use of typed analysis in their behavioural and coinductive equivalences.



\section{Conclusion}
\label{sec:conclusion}

We have presented a compositional behavioural theory for  \picr, a \pic variant with mechanisms for explicit resource management; a preliminary version of the work appeared in \cite{DevFraHen09}.  The theory allows us to compare the efficiency of concurrent channel-passing programs \wrt their resource usage.  We integrate the theory with a substructural type system so as to limit our comparisons to safe programs.  In particular, we interpret the type assertions of the type system as permissions, and use this to model (explicit and implicit) permission transfer between the systems being compared and the observer during compositional reasoning.    Our contributions are as follows:
\begin{enumerate}
\item We define a costed semantic theory that orders systems of safe \picr programs, based on their costed extensional behaviour when deployed in the context of larger systems; Definition~\ref{def:cost-preorder}. Apart from cost, formulations relating to contextuality are different from those of typed congruences such as \cite{hennessy04behavioural}, because of the kind of type system used \ie substructural. 
\item We define a bisimulation-based proof technique that allows us to order \picr programs coinductively, without the need to universally quantify over the possible contexts that these programs may be deployed in; Definition~\ref{def:amortized-typed-bisim}.  As far as we are aware, the combination of actions-in-context and costed semantics, used in unison with 
  implicit and explicit transfer of permissions so as to limit the efficiency analysis to safe programs, is new. 
\item We prove a number of properties for our bisimulation preorder of Definition~\ref{def:amortized-typed-bisim}, facilitating the proof constructions for related programs.  Whereas Corollary~\ref{cor:preorder} follows \cite{Kiehn05,hennessy:buysell},  Theorem~\ref{thm:costed-bisim-compositionality} extends the property of compositionality for amortised bisimulations to a typed setting. Lemma~\ref{lem:symmetry-bound}, together with the concept of bounded amortisation, appears to be novel altogether.
\item We prove that the bisimulation preorder of Definition~\ref{def:amortized-typed-bisim} is a sound and complete proof technique for the costed behavioural preorder of Definition~\ref{def:cost-preorder}; Theorem~\ref{thm:soundness}  and Theorem~\ref{thm:completeness}. In order to obtain completeness, the LTS definitions employ non-standard mechanisms for explicit renaming of channel names not known to the context. Also, the concept of (typed) action definability \cite{hennessy04behavioural,Hennessy07} is different because it needs to take into consideration cost and typeability \wrt a substructural type system; the latter aspect also complicated the respective Extrusion Lemma --- see Lemma~\ref{lem:extrusion}.   
\item We demonstrate the utility of the semantic theory and its respective proof technique by applying them to reason about the client-server systems outlined in the Introduction and a case study, discussed in Section~\ref{sec:case-study}.
\end{enumerate}

\subsection*{Future Work}
\label{sec:future-work}
The extension of our framework to a higher-order and distributed setting seems worthwhile.  Also, the amalgamation of our uniqueness types with modalities for input and output \cite{PierceS96} would give scope for richer notions of subtyping involving covariance and contravariance, affecting the respective behavioural theory; it would be interesting to explore how our notions of permission transfer extend to such a setting.   It is also worth pursuing the applicability of the techniques developed in this work to nominal automata such as Variable Automata \cite{LATA10} and Finite-Memory Automata \cite{Kaminski1994329}. 

\subsection*{Acknowledgements}
\label{sec:acknowledgements}

We would like to thank the referees for their incisive comments.

\small
\bibliographystyle{plain}
\bibliography{bibliography}



\end{document}